\documentclass[a4paper]{article}

\usepackage{graphicx}

\usepackage[total={190mm,257mm},left=10mm,top=20mm]{geometry}
\usepackage{comment}
\usepackage{fancyhdr}
\usepackage{amsmath, adjustbox}
\usepackage{amssymb}
\usepackage{amsthm}
\usepackage{xcolor}
\usepackage{pgfplots}
\usepackage{nameref}
\usepackage{enumerate}
\pgfplotsset{compat=1.11}
\usepackage{dsfont}
\usetikzlibrary{decorations.markings,decorations.pathreplacing, arrows.meta}
\usepackage{subfigure}
\usepackage{bm}
\usepackage{tikz}
\usepackage{float}
\usepackage{tkz-euclide}
\usetikzlibrary{angles}
\usetikzlibrary{quotes}
\usepackage{soul}
\usepackage[hidelinks]{hyperref}
\usepackage{csquotes}
\usepackage{enumitem}
\usepackage{multicol}
\usepackage{currfile}

\usepackage{caption}

\newcommand{\R}{\mathbb{R}}
\newcommand{\N}{\mathbb{N}}

\newcommand{\dom}{\mathsf{D}}
\newcommand{\acos}{\operatorname{arccos}}

\newcommand{\C}[1]{\mathbf{C^{\pmb #1}}}
\newcommand{\Cc}[1]{\mathbf{C_c^{#1}}}
\renewcommand{\L}[1]{\mathbf{L^{\pmb #1}}}

\newcommand{\rss}{\mathcal{RS}_{\rm s}}
\newcommand{\rsc}{\mathcal{RS}_{\rm c}}
\renewcommand{\geq}{\geqslant}
\renewcommand{\ge}{\geqslant}
\renewcommand{\leq}{\leqslant}
\renewcommand{\le}{\leqslant}
\newcommand{\BV}{{\mathbf{BV}}}
\newcommand{\BVloc}{{\mathbf{BV}_{\rm loc}}}
\newcommand{\us}{u_{\rm s}}

\newcommand{\uc}{u_{\rm c}}

\newcommand{\qc}{q_{\rm c}^{0}}
\newcommand{\hc}{h_{\rm c}}
\newcommand{\ucm}{u_{\rm c}^-}
\newcommand{\rcm}{\rho_{\rm c}^-}

\newcommand{\ucp}{u_{\rm c}^+}
\newcommand{\rcp}{\rho_{\rm c}^+}

\newcommand{\qsm}{q_{\rm s}^-}

\newcommand{\qsp}{q_{\rm s}^+}
\newcommand{\qs}{q_{\rm s}^{0}}
\newcommand{\uspm}{u_{\rm s}^\pm}
\newcommand{\rspm}{\rho_{\rm s}^\pm}
\newcommand{\qspm}{q_{\rm s}^\pm}
\newcommand{\ucpm}{u_{\rm c}^\pm}
\newcommand{\rcpm}{\rho_{\rm c}^\pm}

\renewcommand{\d}{{\rm{d}}}

\newtheorem{teo}{Theorem}[section]
\newtheorem{prop}[teo]{Proposition}

\newtheorem{lem}[teo]{Lemma}

\newtheorem{defi}[teo]{Definition}

\theoremstyle{remark}
\newtheorem{rmk}[teo]{Remark}
\newtheorem{ex}[teo]{Example}

\usepackage{multirow}

\title{One-way pressure-independent valves\\with hysteresis and chattering}
\author{
Fabio Bagagiolo\footnote{Department of Mathematics, University of Trento, Italy}
\and
Andrea Corli\footnote{Department of Mathematics and Computer Science, University of Ferrara, Italy}
\and
Stefan Moreti\footnotemark[1]
\and
Massimiliano D. Rosini\footnote{Uniwersytet Marii Curie-Sk\l odowskiej, Plac Marii Curie-Sk\l odowskiej 1 20-031 Lublin, Poland, and Department of Business Administration, University of Chieti-Pescara, 66100 Chieti, Italy}
}

\begin{document}
\maketitle

\begin{multicols}{2}
{\scriptsize\tableofcontents\normalsize}
\end{multicols}

\begin{abstract}
We propose a model for the one-dimensional isothermal gas flow through a valve in a pipe. The valve aims at keeping the outgoing flow in an assigned interval; moreover, its opening and closing depend on a further parameter. The model  is realized by a system of two conservation laws with suitable conditions coupling the ingoing and the outgoing flow. We study the corresponding Riemann problem, where non-classical shocks show up. Although simple, the model allows to study two key oscillatory behaviors of the flow, which are due, respectively, to the chattering of the valve, especially seen in the numerical simulations, and to the stabilizing hysteresis induced by the valve operation. We characterize the states leading to chattering and show how they depend on the parameters of the model. Through several numerical simulations, we study the interplay between the two above phenomena.
\end{abstract}

\vspace{1cm}
\noindent \textbf{AMS Subject Classification:} 35L65, 35L67, 35Q35, 47J40, 65M12, 76B75.

\smallskip
\noindent
\textbf{Keywords:} Conservation laws, shock waves, nonclassical shock waves, gas flows in pipelines, valves, hysteresis, random choice methods.

\section{Introduction}
This paper deals with the modeling and analysis of a gas flow through a valve in a pipeline. More precisely, the flow is assumed to be isothermal, for simplicity; the case of isentropic flows could be modeled in an analogous way. The valve, which only acts one-way, when it is open selects the outgoing flow by aiming at keeping it in a tolerance interval $I_*^- \doteq [q_*^-,q_*]$, with $0<q_*^-<q_*$, possibly truncating it when the maximal allowed flow is larger than $q_*$; the valve closes when the ingoing flow is not sufficient to maintain the outgoing flow in that interval and opens when the ingoing flow is sufficient to keep the maximal allowed outgoing flow in $(q_*^+,+\infty)$, for some fixed $q_{*}^{+} > q_{*}$. In the remaining cases, the opening and closing of the valve depend on its past history by a hysteretic switching rule.
Our model has a last feature: it only allows subsonic or sonic outgoing flows (ingoing supersonic flows are admissible). In spite of the fact that supersonic outgoing flows may occur in specific applications \cite{Gugat-Herty-Muller, Modesti-Pirozzoli}, in most cases this condition is met and, in fact, desirable, in order to avoid malfunctions of the valve due to extreme situations. From a mathematical viewpoint, the avoidance of supersonic outgoing flows simplifies the analysis and excludes non-uniqueness of the solutions \cite{C1, C2}.

In the applied literature the modeling of a valve is often done through one or more ordinary differential equations, see for instance \cite{Bazso-Champneys-Hos, Hos-Bazso-Champneys} and references contained therein; these equations are deduced from a set of partial differential equations  expressing the conservation of mass and, when is the case, energy and momentum. In that case, the discontinuous opening and closing of a valve is sometimes understood in the framework of Filippov solutions \cite{Filippov}, see \cite{AMP, BHJ, TPN}. The mathematical modeling of hysteresis in a valve is less studied; see however  \cite{Krejci-SPP}. 
The basic feature of our approach is instead in the framework of systems of conservation laws, see \cite{C3,C1,C2,DGRR} (see also \cite{Colombo-Garavello_notions, Colombo-Goatin, Gugat-Herty, Herty_compressors} for related papers), and hysteresis is introduced as in \cite{vis}; we also refer to \cite{bagmor, bagmor2, Corli-Fan_2019, goamor} for other recent studies on relationships between conservation laws and hysteresis.

Two important phenomena occur in this problem: 

\begin{itemize}

\item 
in particular regimes, the valve can rapidly alternate opening and closing, the so called {\em chattering}, which results in the instability of the flow; 

\item 
the introduction of the tolerance interval $I_{*}^{-}$ and the particular nonlocal (in time) response of the valve induces a {\em hysteretic} behavior in the flow, which stabilizes the flow.

\end{itemize}
The two above issues may occur independently, as shown in this paper; however, see for instance \cite{C1, C2, DGRR} for a valve with no hysteresis but showing chattering.
The aim of this paper is to understand both phenomena and in particular the relationships between them. More precisely, we first explicitly characterize the initial states of suitable Riemann problems that lead to chattering; these states are called {\em incoherent}, and {\em coherent} states are those where chattering does not show up. Second, we study the effects of hysteresis on the flow. A rigorous analytic treatment of the general Cauchy problem with hysteresis for our framework would instead require the analysis of wave interactions for long times, because these effects are nonlocal in time; this could be done, but only at the cost of great technical difficulties. For this reason, we tackle this problem numerically. Our numerical scheme perfectly captures the evolution of the flow for coherent initial data, and shows interesting chattering patterns otherwise. At last, by letting the parameters $q_*^\pm$ vary, we explain in which way they \lq\lq regularize\rq\rq\ the flow. About the latter topic, one can intuitively imagine that increasing the interval $I_{*} \doteq [q_*^-,q_*^+]$ leads to more stable flows, and thus reduces chattering; however, the more stable the flow is, the more it deviates from the desired value $q_*$. This leads to a control problem, where one searches for the optimal parameters $q_*^\pm$ with respect to a cost functional. We briefly address this issue in the last part of the paper. 

\smallskip

Here follows how the paper is structured. In Section \ref{s:basic} we provide some general information about the isothermal Euler system, its Lax curves and the standard Riemann solver. In Section \ref{s:flowv} we show a general framework for the modeling of gas flow through valves; in particular we introduce here the notion of {\em coupling Riemann solver}. Such a general framework is specialized to the valve under consideration in Section \ref{s:valve-relay}; there, we also show in detail how the valve works and give some preliminary properties of the hysteresis relation. The main analytical results are contained in Section \ref{s:main}: they concern the explicit characterization of the coherent states and give precise information about the size of the incoherent region. The numerical scheme is introduced in the first part of Section \ref{s:num}: it is a Random Choice Method, to better follow the behavior of the flow around the valve. Then we test our scheme on coherent initial data, and verify that it perfectly captures the exact solution. Moreover, we investigate how chattering arises for incoherent initial data. In the subsequent part of Section \ref{s:num} we consider a strongly oscillatory ingoing flow, and check the response of the valve by letting the parameters $q_*^\pm$ vary. A brief mention to a control problem ends this section. The final Section \ref{s:final} draws conclusions and briefly shows how our results can be generalized to slightly different types of valves. 

\section{The free flow along a pipe}
\label{s:basic}
In this section we collect some basic facts about the modeling of an isothermal gas through a pipe.

\subsection{The system}

The flow along a pipe is described by the isothermal Euler equations \cite{LeVeque-book}
\begin{equation}
\label{e:system}
\left\{\begin{array}{@{}l@{}}
\rho_t + q_x = 0, \\
q_t + P_x = 0,
\end{array}\right.
\end{equation}
where $\rho = \rho(t, x)$ represents the \emph{density}, and $q = q(t, x)$ the \emph{momentum} at time $t \geq 0$ and position $x \in \R$. The term
\begin{equation*}
P(\rho, q) \doteq \frac{q^2}{\rho} + p(\rho)
\end{equation*}
is the \emph{dynamic pressure} and $p(\rho) = a^2 \, \rho$ is the \emph{pressure}, where $a > 0$ is the \emph{sound speed}.
The \emph{velocity} is
\begin{equation}
\label{e:v}
v(\rho, q) \doteq q/\rho.
\end{equation}
A state $u \doteq (\rho, q)$ is \emph{subsonic} if $|v(u)| < a$, \emph{sonic} if $|v(u)| = a$, and \emph{supersonic} if $|v(u)| > a$.
The functions $\rho \mapsto \pm a \, \rho$ describe the \emph{sonic curves} in the $(\rho, q)$-plane.
The eigenvalues of the Jacobian matrix corresponding to the flux function $f(\rho, q) \doteq (q, P(\rho, q))^T$ are $\lambda_1(u) \doteq v(u) - a$ and $\lambda_2(u) \doteq v(u) + a$.
System \eqref{e:system} is \emph{strictly hyperbolic} in $\Omega \doteq \left\{ (\rho, q) \in \R^2 : \rho > 0 \right\}$ because the inequality $\lambda_1(u) < \lambda_2(u)$ holds for all $u \in \Omega$. Both characteristic fields are genuinely nonlinear.

For completeness, we give the definitions of weak and entropy solution to the Cauchy problem for \eqref{e:system} with initial condition
\begin{equation}
\label{e:ini}
u(0,x) = u_o(x).
\end{equation}

\begin{defi}
\label{d:weak}
A function $u \in \C0([0, +\infty); \BV_{\rm loc}(\R;\Omega))$ is a weak solution of the Cauchy problem \eqref{e:system}, \eqref{e:ini} in $[0, +\infty) \times \R$ if it satisfies the following conditions, for any test function $\varphi \in \Cc\infty([0, +\infty) \times \R; \R)$:
\begin{equation}
\label{e:wcond}
\begin{aligned}
\int_0^{\infty} \int_{\R} \left(\rho \, \varphi_t + q \, \varphi_x\right) \, \d x \, \d t + \int_{\R} \rho_o(x) \, \varphi(0,x) \d x = 0,
\\
\int_0^{\infty} \int_{\R} \left(q \, \varphi_t + P(\rho,q) \, \varphi_x\right) \, \d x \, \d t + \int_{\R} q_o(x) \, \varphi(0,x) \d x = 0.
\end{aligned}
\end{equation}
\end{defi}

Recall that any piecewise regular weak solution $u = (\rho,q)$ satisfies the Rankine-Hugoniot conditions: if $x=\gamma(t)$ is a smooth curve along which $u$ is discontinuous then 
\begin{align}
\label{e:RH1}
&(\rho^+ - \rho^- ) \, \dot{\gamma} = q^+ - q^- ,\\
\label{e:RH2}
&(q^+ - q^- ) \, \dot{\gamma} = P(\rho^+,q^+) - P(\rho^-,q^-),
\end{align}
where $u^\pm(t) \doteq u(t,\gamma(t)^\pm)$ are the traces of $u$ on the sides $\pm(x-\gamma(t))>0$ along the discontinuity.

Weak solutions are not unique, in general. Here follows the definition of entropy solution; existence and uniqueness of such solutions holds for a large class of systems of conservation laws \cite{Bressan-book}.

\begin{defi}
\label{d:entro}
An entropy solution $u \in \C0([0, +\infty); \BV_{\rm loc}(\R;\Omega))$ of the Cauchy problem \eqref{e:system}, \eqref{e:ini} in $[0, +\infty) \times \R$ is a weak solution such that for any convex entropy pair $(\mathcal{E},\mathcal{Q})$ and test function $\varphi \in \Cc\infty((0, +\infty) \times \R; [0, +\infty))$:
\begin{equation}
\label{e:econd}
\int_0^{\infty} \int_{\R} \left(\mathcal{E}(u) \, \varphi_t + \mathcal{Q}(u) \, \varphi_x\right) \, \d x \, \d t \geq 0.
\end{equation}
\end{defi}
\noindent
Recall that $(\mathcal{E},\mathcal{Q})$ is an entropy pair provided that $D_u\mathcal{Q}=D_u\mathcal{E}\,D_uf$, where $f$ is the flux function introduced above.

\subsection{The Lax curves}

We refer to \cite{C3,C1,C2} for proofs and more details about this section. 
For fixed $u_o \doteq (\rho_o,q_o) \in \Omega$, we define the functions $\mathcal{FL}_i^{u_o}, \mathcal{BL}_i^{u_o} \colon (0, +\infty) \to \R$, $i\in\{1,2\}$ as 
\begin{align*}
\mathcal{FL}_1^{u_o}(\rho) &\doteq
\begin{cases}
\mathcal{R}_1^{u_o}(\rho)&\hbox{if }\rho \in (0,\rho_o],
\\
\mathcal{S}_1^{u_o}(\rho)&\hbox{if }\rho \in (\rho_o, +\infty),
\end{cases}&
\mathcal{FL}_2^{u_o}(\rho) &\doteq
\begin{cases}
\mathcal{S}_2^{u_o}(\rho)&\hbox{if }\rho \in (0,\rho_o),
\\
\mathcal{R}_2^{u_o}(\rho)&\hbox{if }\rho \in [\rho_o, +\infty),
\end{cases}
\\
\mathcal{BL}_1^{u_o}(\rho) &\doteq
\begin{cases}
\mathcal{S}_1^{u_o}(\rho)&\hbox{if }\rho \in (0,\rho_o),
\\
\mathcal{R}_1^{u_o}(\rho)&\hbox{if }\rho \in [\rho_o, +\infty),
\end{cases}&
\mathcal{BL}_2^{u_o}(\rho) &\doteq
\begin{cases}
\mathcal{R}_2^{u_o}(\rho)&\hbox{if }\rho \in (0,\rho_o],
\\
\mathcal{S}_2^{u_o}(\rho)&\hbox{if }\rho \in (\rho_o, +\infty),
\end{cases}
\end{align*}
with
\begin{align}
\mathcal{R}_i^{u_o}(\rho) &\doteq
\left(\frac{q_o}{\rho_o} + (-1)^i a \ln\left(\frac{\rho}{\rho_o}\right) \right) \rho,&
\mathcal{S}_i^{u_o}(\rho) 
&\doteq \left( \frac{q_o}{\rho_o} + (-1)^i a \left(\sqrt{\dfrac{\rho}{\rho_o}} - \sqrt{\dfrac{\rho_o}{\rho}}\right) \right) \rho.
\label{e:laxRS}
\end{align}
The graphs of the functions $\mathcal{FL}_i^{u_o}$ and $\mathcal{BL}_i^{u_o}$ are the \emph{forward} $\mathsf{FL}_i^{u_o}$ and \emph{backward} $\mathsf{BL}_i^{u_o}$ \emph{Lax curves} of the $i$-th family through $u_o$, see \figurename~\ref{f:RVLaxcurves}.
We point out that all figures representing Lax and related curves are obtained numerically.
Analogously, the \emph{shock} $\mathsf{S}_i^{u_o}$ and \emph{rarefaction} $\mathsf{R}_i^{u_o}$ \emph{curves} through $u_o$ are the graphs of the functions $\mathcal{S}_i^{u_o}$ and $\mathcal{R}_i^{u_o}$.
The $i$-shock speed between $(\rho_o,q_o)$ and $(\rho,q)$ is $s_{i}^{u_o}(\rho) \doteq v(\rho_o,q_o) + (-1)^i \, a \, \sqrt{\rho/\rho_o}$, where $v$ is given in \eqref{e:v}.

\begin{figure}[!htb]\centering
\begin{tikzpicture}[every node/.style={anchor=south west,inner sep=0pt},x=1mm, y=1mm]
\node at (4,4) {\includegraphics[width=50mm]{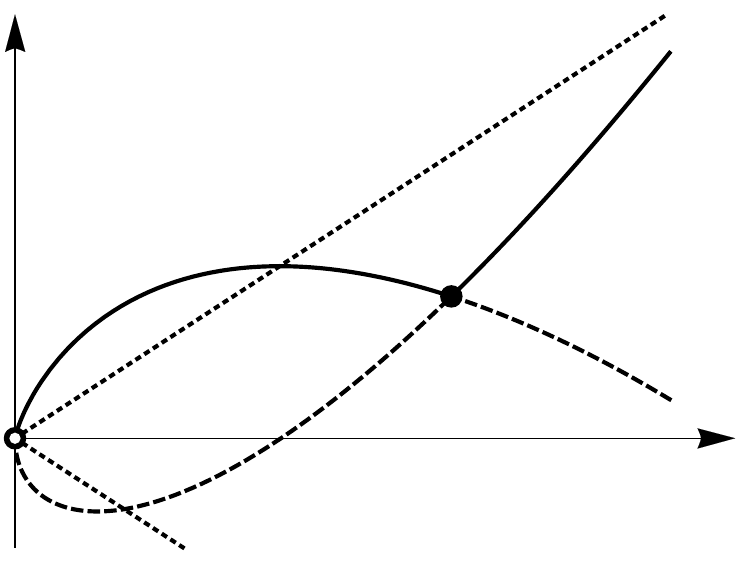}};
\node at (52,5) {\strut $\rho$};
\node at (0,38) {\strut $q$};
\node at (20,38) {\strut $\mathsf{FL}_1^{u_o} \cup \mathsf{FL}_2^{u_o}$};
\node at (15,24) {\strut $\mathsf{R}_1^{u_o}$};
\node at (45,28) {\strut $\mathsf{R}_2^{u_o}$};
\node at (21,6) {\strut $\mathsf{S}_2^{u_o}$};
\node at (42,18) {\strut $\mathsf{S}_1^{u_o}$};
\node at (31,22.5) {\strut $u_o$};

\begin{scope}[shift={(60,0)}]
\node at (4,4) {\includegraphics[width=50mm]{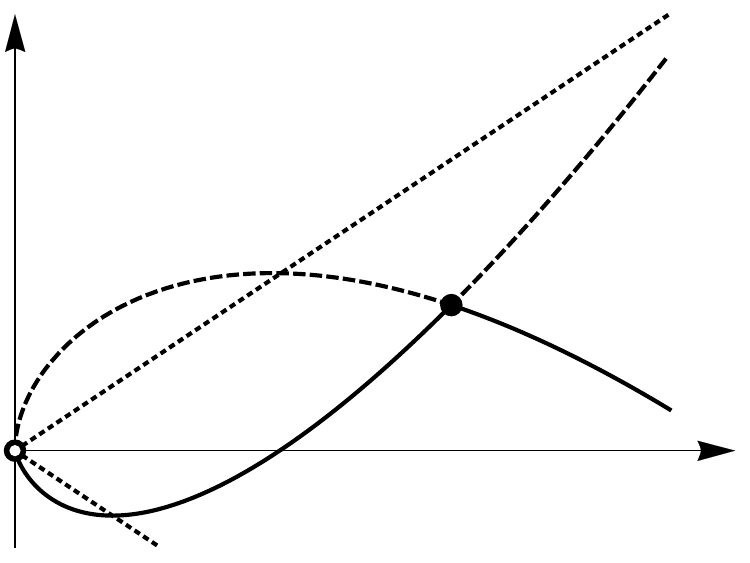}};
\node at (52,5) {\strut $\rho$};
\node at (0,38) {\strut $q$};
\node at (20,38) {\strut $\mathsf{BL}_1^{u_o} \cup \mathsf{BL}_2^{u_o}$};
\node at (15,24) {\strut $\mathsf{S}_1^{u_o}$};
\node at (47,30) {\strut $\mathsf{S}_2^{u_o}$};
\node at (21,5) {\strut $\mathsf{R}_2^{u_o}$};
\node at (41,18) {\strut $\mathsf{R}_1^{u_o}$};
\node at (31,22) {\strut $u_o$};
\end{scope}
\end{tikzpicture}
\caption{Forward (on the left) and backward (on the right) Lax curves through $u_o$.
The solid lines refer to rarefaction curves $\mathsf{R}_i^{u_o}$, the dashed lines to shock curves $\mathsf{S}_i^{u_o}$, $i\in\{1,2\}$, and the dotted lines to the sonic curves.}
\label{f:RVLaxcurves}
\end{figure}

The following lemma collects the main properties of the above functions.

\begin{lem}
\label{l:JSatriani}
Let $u_o = (\rho_o,q_o), u^o = (\rho^o,q^o)\in\Omega$ be two distinct states and $i\in\{1,2\}$.
Then we have:
\begin{enumerate}[leftmargin=*, label={\bf(\alph*)}]\setlength{\itemsep}{0cm}\setlength\itemsep{0em}

\item
$\mathcal{R}_1^{(\rho_o, - q_o)} \equiv - \mathcal{R}_2^{(\rho_o,q_o)}$ and $\mathcal{S}_1^{(\rho_o, - q_o)} \equiv - \mathcal{S}_2^{(\rho_o,q_o)}$.

\item
$\mathcal{R}_i^{u_o}$, $\mathcal{S}_i^{u_o}$, $\mathcal{FL}_i^{u_o}$, $\mathcal{BL}_i^{u_o}$ are $\C2$ functions in $(0,+\infty)$ and
\begin{align*}
&\mathcal{R}_i^{u_o}(0^+) = \mathcal{S}_i^{u_o}(0^+) = 0,&
&\frac{\d\mathcal{R}_i^{u_o}}{\d\rho}(0^+) = \frac{\d\mathcal{S}_i^{u_o}}{\d\rho}(0^+) = (-1)^{i+1} \cdot \infty,
\\
&\mathcal{R}_i^{u_o}(\infty) = \mathcal{S}_i^{u_o}(\infty) = (-1)^i \cdot \infty,&
&\frac{\d\mathcal{R}_i^{u_o}}{\d\rho}(\rho_o) = \frac{\d\mathcal{S}_i^{u_o}}{\d\rho}(\rho_o) = \lambda_i(u_o).
\end{align*}

\item
$\mathcal{R}_1^{u_o}$, $\mathcal{S}_1^{u_o}$, $\mathcal{FL}_1^{u_o}$, $\mathcal{BL}_1^{u_o}$ are strictly concave, while $\mathcal{R}_2^{u_o}$, $\mathcal{S}_2^{u_o}$, $\mathcal{FL}_2^{u_o}$, $\mathcal{BL}_2^{u_o}$ are strictly convex.

\item
\label{i:d}
$\mathcal{S}_2^{u_o} = \mathcal{FL}_2^{u_o} < \mathcal{R}_2^{u_o} = \mathcal{BL}_2^{u_o} < \mathcal{R}_1^{u_o} = \mathcal{FL}_1^{u_o} < \mathcal{S}_1^{u_o} = \mathcal{BL}_1^{u_o}$ in $(0,\rho_o)$, and $\mathcal{S}_1^{u_o} = \mathcal{FL}_1^{u_o} < \mathcal{R}_1^{u_o} = \mathcal{BL}_1^{u_o} < \mathcal{R}_2^{u_o} = \mathcal{FL}_2^{u_o} < \mathcal{S}_2^{u_o} = \mathcal{BL}_2^{u_o}$ in $(\rho_o,+\infty)$.

\item
$\mathsf{R}_i^{u_o} \cap \mathsf{R}_i^{u^o} \neq \emptyset$ if and only if $\mathsf{R}_i^{u_o} = \mathsf{R}_i^{u^o}$, while if $u^o \in \mathsf{S}_i^{u_o}$ then $\mathsf{S}_i^{u_o} \cap \mathsf{S}_i^{u^o} = \{u_o,u^o\}$.
\end{enumerate}
\end{lem}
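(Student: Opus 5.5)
The plan is to reduce everything to the explicit formulas \eqref{e:laxRS} by normalizing. Set $v_o \doteq q_o/\rho_o$ and $x \doteq \rho/\rho_o > 0$. Then
\[
\mathcal{R}_i^{u_o}(\rho) = \rho_o\, x\,\bigl(v_o + (-1)^i a \ln x\bigr), \qquad
\mathcal{S}_i^{u_o}(\rho) = \rho_o\, x\,\bigl(v_o + (-1)^i a\,(\sqrt{x}-1/\sqrt{x})\bigr).
\]
Every item then becomes an elementary statement about the two functions $\ln x$ and $g(x)\doteq \sqrt{x}-1/\sqrt{x}$, both of which vanish at $x=1$.

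\textbf{Items (a) and (b).} Item (a) is immediate: replacing $q_o$ by $-q_o$ flips $v_o$, and $(-1)^1=-(-1)^2$.

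For (b), the functions $\mathcal{R}_i$ and $\mathcal{S}_i$ are smooth on $(0,+\infty)$. To show that $\mathcal{FL}_i$ and $\mathcal{BL}_i$ are $\C2$, we check second-order contact at $x=1$. Since $\ln x$ and $g(x)$ agree to second order there ($\ln x = t - t^2/2+\dots$ and $g(x)=t - t^2/2+\dots$ with $t=x-1$), the products $x\ln x$ and $x g(x)$ have equal values and equal first and second derivatives at $x=1$. This is the classical second-order tangency of shock and rarefaction curves.

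The remaining limits in (b) are direct computations:
\begin{itemize}
\item $x\ln x\to0$ and $xg(x)\to0$ as $x\to0^+$;
\item the derivatives $\rho_o$-scaled $v_o+(-1)^i a(\ln x+1)$ and $v_o+(-1)^i a(\tfrac32\sqrt x-\tfrac12 x^{-1/2})$ tend to $(-1)^{i+1}\infty$ as $x\to0^+$, and both equal $v_o+(-1)^i a=\lambda_i(u_o)$ at $x=1$;
\item the growth at $\infty$ is dominated by $(-1)^i a\,x\ln x$ or $(-1)^i a\,x^{3/2}$.
\end{itemize}

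\textbf{Item (c).} The second derivatives in $\rho$ are $(-1)^i a/\rho$ and $(-1)^i \tfrac{a}{4\rho_o}\bigl(3x^{-1/2}+x^{-3/2}\bigr)$. Both have the sign of $(-1)^i$, giving concavity for $i=1$ and convexity for $i=2$. For $\mathcal{FL}_i$ and $\mathcal{BL}_i$, the $\C2$ gluing from (b) gives a piecewise-$\C2$ function whose second derivative keeps a strict sign, so strict concavity or convexity persists.

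\textbf{Item (d).} First compare $\ln x$ with $g(x)$. Let $h(x)=g(x)-\ln x$. Then $h(1)=0$ and
\[
h'(x)=\tfrac12 x^{-1/2}+\tfrac12 x^{-3/2}-\tfrac1x=\tfrac12 x^{-3/2}\bigl(\sqrt x-1\bigr)^2\ge0,
\]
so $g<\ln$ on $(0,1)$ and $g>\ln$ on $(1,\infty)$.

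Multiplying by $\pm a x\rho_o$ gives, on $(0,1)$, $\mathcal{S}_2<\mathcal{R}_2$ and $\mathcal{R}_1<\mathcal{S}_1$. The middle inequality $\mathcal{R}_2<\mathcal{R}_1$ holds because $\mathcal{R}_1-\mathcal{R}_2 = -2a\rho\ln x>0$. The case $x>1$ is symmetric. The identifications with $\mathcal{FL}$ and $\mathcal{BL}$ are just the definitions.

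\textbf{Item (e).} For rarefactions, note that $\mathsf{R}_i^{u_o}$ is the level set $\{q/\rho - (-1)^i a\ln\rho = \text{const}\}$, i.e.\ a Riemann-invariant curve. Two such curves that share a point therefore have the same constant and coincide.

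For shocks, with $u^o\in\mathsf{S}_i^{u_o}$, I would use the Rankine--Hugoniot locus. The set $\mathsf{S}_1^{u}\cup\mathsf{S}_2^{u}$ is the full Hugoniot locus of $u$, defined by the symmetric relation $(v-v_u)^2=a^2(\sqrt{\rho/\rho_u}-\sqrt{\rho_u/\rho})^2$. Hence $u_o\in\mathsf{S}_i^{u^o}$, so both $u_o$ and $u^o$ lie in the intersection.

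To exclude a third point, subtract the two defining functions. In the variable $y=\sqrt\rho$, their difference $\mathcal{S}_i^{u_o}-\mathcal{S}_i^{u^o}$ is $y^2$ times an expression of the form $\alpha + \beta y+\gamma/y$, i.e.\ $y^{-1}$ times a quadratic in $y$. That quadratic has at most two positive roots, namely $\sqrt{\rho_o}$ and $\sqrt{\rho^o}$.

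\textbf{Main obstacle.} The delicate point is (e) for shocks: one must check the symmetric membership carefully with the correct family index. The sign in the Hugoniot relation selects the family, so one has to verify that the branch through $u_o$ starting from $u^o$ is the $i$-branch and not the other family; I would verify this via the sign of $v-v_u$ relative to $\rho-\rho_u$.
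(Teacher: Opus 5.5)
Your proposal is correct. The paper does not prove this lemma itself; it defers to \cite{C3,C1,C2}. Your route is a self-contained elementary verification instead. The normalization $x=\rho/\rho_o$ reduces every item to comparing $\ln x$ with $g(x)=\sqrt{x}-1/\sqrt{x}$:
\begin{itemize}
\item the second-order agreement of $\ln x$ and $g$ at $x=1$ gives the $\C2$ gluing in (b);
\item the explicit second derivatives give (c);
\item the identity $g'(x)-1/x=\tfrac12 x^{-3/2}(\sqrt{x}-1)^2$ gives (d);
\item Riemann invariants, plus a root count in $y=\sqrt{\rho}$, give (e).
\end{itemize}
The citation is shorter; your argument can be checked on the spot and keeps the section self-contained.

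Two points should be tightened.

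First, what you call the main obstacle is a one-line check. Since $g(1/x)=-g(x)$, the relation $v^o=v_o+(-1)^i a\,g(\rho^o/\rho_o)$ is equivalent to $v_o=v^o+(-1)^i a\,g(\rho_o/\rho^o)$. That is exactly $u_o\in\mathsf{S}_i^{u^o}$ with the same index $i$. This is the argument to write. The symmetric Hugoniot relation by itself only gives $u_o\in\mathsf{S}_1^{u^o}\cup\mathsf{S}_2^{u^o}$, so your \emph{hence} there is premature.

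Second, \emph{at most two roots} requires the quadratic to be nontrivial. Its leading coefficient is $(-1)^i a\bigl(\rho_o^{-1/2}-(\rho^o)^{-1/2}\bigr)$. This is nonzero because $u^o\in\mathsf{S}_i^{u_o}$ and $u^o\neq u_o$ force $\rho^o\neq\rho_o$.

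A minor point: the $\rho$-derivatives in (b) are exactly $v_o+(-1)^i a(\ln x+1)$ and $v_o+(-1)^i a(\tfrac32\sqrt{x}-\tfrac12x^{-1/2})$. There is no extra factor $\rho_o$, so the phrase ``$\rho_o$-scaled'' should go.
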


We introduce some notation; see \figurename~\ref{f:TheMarsVolta}.

\begin{defi}\label{d:notation} 
For $u_\ell,u_r\in\Omega$ we define:
\begin{itemize}[leftmargin=*]\setlength{\itemsep}{0cm}\setlength\itemsep{0em}
\item 
$\Bar{u}(u_\ell) \doteq (\Bar{\rho}(u_\ell),\Bar{q}(u_\ell))$ is the state on $\mathsf{FL}_1^{u_\ell}$ with the maximum $q$-coordinate;
\item 
$\tilde{u}(u_\ell,u_r) \doteq (\tilde{\rho}(u_\ell,u_r),\tilde{q}(u_\ell,u_r))$ is the (unique) state of $\mathsf{FL}_1^{u_\ell} \cap \mathsf{BL}_2^{u_r}$;
\item
$\hat{u}(q_o,u_\ell) \doteq (\hat{\rho}(q_o,u_\ell),q_o)$, for $q_o \leq \Bar{q}(u_\ell)$, is the state with the largest $\rho$-coordinate in the intersection of $\mathsf{FL}_1^{u_\ell}$ and $q=q_o$;
\item
$\check{u}(q_o,u_r) \doteq (\check{\rho}(q_o,u_r),q_o)$, for $q_o \geq 0$, is the state  with the largest $\rho$-coordinate in the intersection of $\mathsf{BL}_2^{u_r}$ and $q=q_o$;
\item
$u^{a}(u_r) \doteq (\rho^{a}(u_r),q^{a}(u_r))$ is the (unique) intersection of $\mathsf{BL}_{2}^{u_r}$ with the sonic line $q=a \, \rho$ in $\Omega$.
\end{itemize}
\end{defi}

We recall that $v\left(\Bar{u}(u_\ell)\right) > a$ if and only if $v(u_\ell) > a$ and $v\left(\Bar{u}(u_\ell)\right) = a$ if and only if $v(u_\ell) \leq a$; furthermore, $\Bar{u}(u_\ell) = u_\ell$ if and only if $v(u_\ell) = a$, see \cite[Lemma~2.6]{C2}.

\begin{figure}[!htb]\centering
\begin{tikzpicture}[every node/.style={anchor=south west,inner sep=0pt},x=1mm, y=1mm]
\node at (4,4) {\includegraphics[width=50mm]{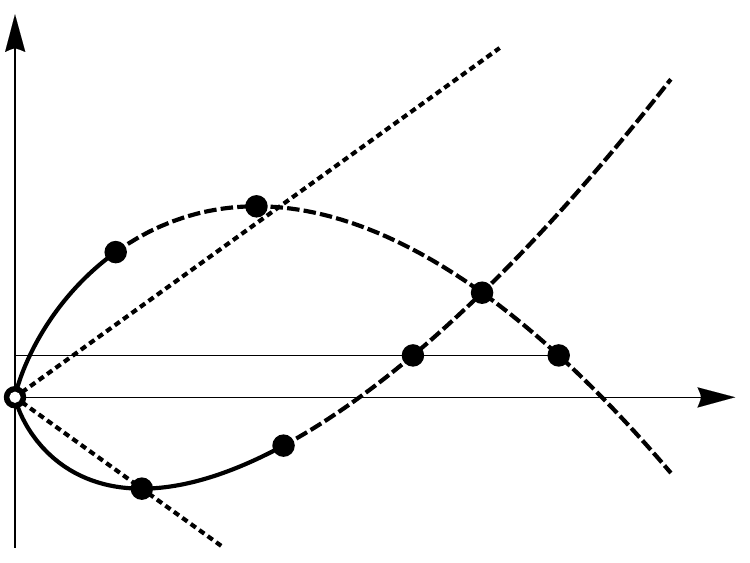}};
\node at (52,10) {\strut $\rho$};
\node at (0,38) {\strut $q$};
\node at (0,16) {\strut $q_o$};
\node at (48,36) {\strut $\mathsf{BL}_2^{u_r}$};
\node at (49,5) {\strut $\mathsf{FL}_1^{u_\ell}$};
\node at (43,16) {\strut $\hat{u}(q_o,u_\ell)$};
\draw[latex-] (37.5,22.2) - - + + (5,2) node[right] {\strut $\tilde{u}(u_\ell,u_r)$};
\node at (17.5,17.5) {\strut $\check{u}(q_o,u_r)$};
\node at (24,8) {\strut $u_r$};
\node at (15,28) {\strut $\Bar{u}(u_\ell)$};
\node at (7.5,24.5) {\strut $u_\ell$};

\begin{scope}[shift={(60,0)}]
\node at (4,4) {\includegraphics[width=50mm]{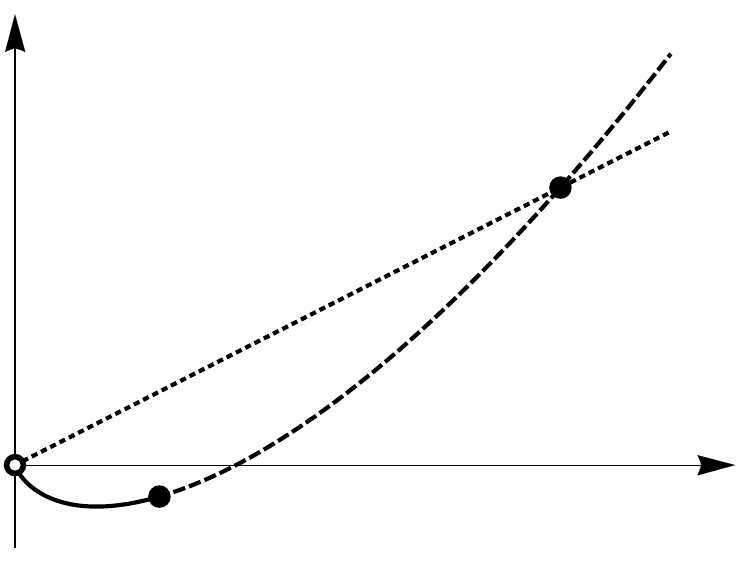}};
\node at (52,3) {\strut $\rho$};
\node at (0,38) {\strut $q$};
\node[above right, fill=white] at (45,37) {\strut $\mathsf{BL}_{2}^{u_r}$};
\node at (15,4) {\strut $u_r$};
\node at (42,25) {\strut $u^a(u_r)$};
\node[rotate=27] at (20,18) {\strut $q= a\rho$};
\end{scope}

\end{tikzpicture}
\caption{Notation in Definition~\ref{d:notation}. Rarefaction, shock and sonic curves are depicted as in \figurename~\ref{f:RVLaxcurves}.}
\label{f:TheMarsVolta}
\end{figure}

\subsection{The Riemann solver \texorpdfstring{$\rss$}{}}

For any pair of constant states $(u_\ell,u_r)\in\dom \doteq \Omega \times \Omega$, the corresponding Riemann problem is the Cauchy problem for system \eqref{e:system} with initial condition
\begin{equation}\label{e:iniRie}
u(0,x) = 
\begin{cases}
u_\ell &\hbox{if } x<0,
\\
u_r &\hbox{if } x\geq0.
\end{cases}
\end{equation}
Let $\rss \colon \dom \to\BV(\R;\Omega)$ be the {\em standard Riemann solver} of \eqref{e:system}, see \cite{Bressan-book}.
Denote then
\begin{align}
\label{e:BMTH}
&\us = (\rho_{\rm s},q_{\rm s}) \doteq \rss[u_\ell,u_r] \in \BV(\R;\Omega),
&\uspm = (\rspm,\qspm) \doteq \us(0^\pm) \in \Omega.
\end{align}
The function $(t,x) \mapsto \us(x/t)$  is the \emph{unique} entropy solution in $[0, +\infty)\times\R$ of the Riemann problem \eqref{e:system}, \eqref{e:iniRie} in the sense of Definition~\ref{d:entro}.
Since functions with bounded variations are defined up to a countable set of points, where the side limits are finite, we agree that $\xi\mapsto\us(\xi)$ is right continuous.
Observe that by \eqref{e:RH1} we have $\qsm = \qsp$ and therefore
\begin{equation}
\label{e:RHp}
\qs \doteq \qspm \in \R
\end{equation}
is well defined.

\subsubsection{Properties of the speed of the waves in the case \texorpdfstring{$\qs \geq 0$}{}}

As we mentioned in the Introduction, in the following we shall deal with one-way valves, where the flow through the valve is positive.
For this reason, and concerning the free flow, in Proposition~\ref{p:Korn} we require that $v(u_r) \ge 0$, and in Proposition~\ref{p:Loathe} that $v(u_\ell) \ge 0$. 

\begin{prop}\label{p:Korn}
Fix $u_\ell\in\Omega$.
A state $u_r\in\Omega$ is such that 
\begin{align*}
&\rss[u_\ell,u_r](0^-) = u_r,&
&v(u_{r}) \geq 0,
\end{align*}
if and only if $u_r \in \mathsf{FL}_1^{u_\ell}$ and one of the following conditions is met:
\begin{enumerate}[label={\bf (\Roman*)}, leftmargin=*]\setlength{\itemsep}{0cm}\setlength\itemsep{0em}
\item\label{i:Korn1}
$v(u_\ell) > a$ and either $u_r=u_\ell$ or $\rho_r \in (\hat{\rho}(q_\ell,u_\ell),\hat{\rho}(0,u_\ell)]$,
\item\label{i:Korn2}
$v(u_\ell) \leq a$ and $\rho_r \in [\Bar{\rho}(u_\ell),\hat{\rho}(0,u_\ell)]$.
\end{enumerate}
In both cases, $\rss[u_\ell,u_r]$ consists at most of one wave (a $1$-wave). If $u_r$ is supersonic, then also $u_\ell$ is supersonic and $u_r=u_\ell$.
\end{prop}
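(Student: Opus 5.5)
Because $\rss$ is built from the Lax curves, the plan is to read the condition $\rss[u_\ell,u_r](0^-)=u_r$ directly off the wave structure of the self-similar solution. The standard solution consists of a 1-wave from $u_\ell$ to $\tilde u(u_\ell,u_r)\in\mathsf{FL}_1^{u_\ell}$, followed by a 2-wave from $\tilde u$ to $u_r$. Its value at $0^-$ therefore belongs to one of three sets: the constant states $u_\ell$, $\tilde u$, $u_r$, or the fans of the 1- and 2-rarefactions.

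\emph{Necessity.} Assume $\us(0^-)=u_r$ with $u_r\neq\tilde u$. Then the whole 2-wave lies in $\xi\le0$. A 2-rarefaction satisfies $\lambda_2\ge0$ at its right end, so if it sat at $\xi<0$ we would need $\lambda_2(u_r)=v_r+a<0$, which contradicts $v(u_r)\ge0$. A 2-shock entering from the right has speed larger than $\lambda_2(u_r)>0$, which is again impossible. Hence $u_r=\tilde u\in\mathsf{FL}_1^{u_\ell}$, there is no 2-wave, and the solution is a single 1-wave (or is constant).

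\emph{Characterization.} Now take $u_r\in\mathsf{FL}_1^{u_\ell}$ with $v_r\ge0$. The requirement $q_r\ge0$ puts $\rho_r\le\hat\rho(0,u_\ell)$, using the concavity in Lemma~\ref{l:JSatriani}(c). The condition $\us(0^-)=u_r$ holds exactly when the 1-wave has speed $\le0$: either $u_r=u_\ell$, or we need the conditions below. For a rarefaction ($\rho_r<\rho_\ell$) the right end must satisfy $\lambda_1(u_r)\le0$, that is $v_r\le a$. For a shock ($\rho_r>\rho_\ell$) we need $s_1^{u_\ell}(\rho_r)=v_\ell-a\sqrt{\rho_r/\rho_\ell}\le0$. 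Here I would use Lemma~\ref{l:JSatriani}(b), which gives $\frac{\d}{\d\rho}\mathcal{R}_1^{u_\ell}=\lambda_1$ along the rarefaction. It follows that $v_r\le a$ on the rarefaction branch iff $\rho_r\ge\Bar\rho(u_\ell)$, the maximum point.

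For the shock branch, the Rankine–Hugoniot relation \eqref{e:RH1} gives $s=(q_r-q_\ell)/(\rho_r-\rho_\ell)$. So $s\le0$ iff $q_r\le q_\ell$, and because the shock branch is concave and comes after the maximum, this is the same as $\rho_r\ge\hat\rho(q_\ell,u_\ell)$.

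\emph{Case split.}
\begin{itemize}
\item \textbf{Case \ref{i:Korn2}} ($v_\ell\le a$). The maximum $\Bar u$ is sonic and $\Bar\rho\le\rho_\ell$. The rarefaction part with $\rho\in[\Bar\rho,\rho_\ell]$ is admissible. On the shock part, since $\mathcal{FL}_1$ is decreasing for $\rho>\rho_\ell\ge\Bar\rho$, we get $q_r<q_\ell$ automatically. This gives $[\Bar\rho,\hat\rho(0,u_\ell)]$.
\item \textbf{Case \ref{i:Korn1}} ($v_\ell>a$). Every rarefaction state has $\rho<\rho_\ell$ and $v>a$ (here I use the recalled fact from \cite[Lemma~2.6]{C2}), so none is admissible. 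Also $\Bar\rho>\rho_\ell$, and the shock states with $q_r\le q_\ell$ are exactly those with $\rho_r\ge\hat\rho(q_\ell,u_\ell)$. The endpoint $\rho_r=\hat\rho(q_\ell,u_\ell)$ gives a stationary shock, so $\us(0^-)=u_\ell\neq u_r$ under the right-continuity convention. This explains the open endpoint in \ref{i:Korn1}.
\end{itemize}

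\emph{Supersonic claim.} In Case \ref{i:Korn2} every admissible $u_r$ has $\rho_r\ge\Bar\rho$, so $v_r\le a$. In Case \ref{i:Korn1}, the states with $\rho_r>\hat\rho(q_\ell,u_\ell)$ have $q_r<q_\ell$ and $\rho_r>\rho_\ell$, so $v_r<v_\ell$. I expect this subsonicity to be the main obstacle: showing that these shock states satisfy $v_r\le a$ needs the Lax entropy inequality $\lambda_1(u_r)<s\le0$, which gives $v_r<a$. Combining the two cases, a supersonic $u_r$ forces $u_r=u_\ell$.
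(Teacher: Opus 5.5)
Your argument is correct. It necessarily takes a different route from the paper, because the paper gives no proof: it only states that the proposition coincides with \cite[Corollary~2.10]{CRR-MMAS-2025} and omits the proof.

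You derive the result from scratch. First you rule out a 2-wave: within a nontrivial 2-wave, $u_r$ is attained only for $\xi\ge\lambda_2(u_r)=v(u_r)+a\ge a>0$, so $u_r=\tilde u(u_\ell,u_r)\in\mathsf{FL}_1^{u_\ell}$. Next you read the sign of the remaining 1-wave speed off the concave curve $\mathsf{FL}_1^{u_\ell}$, using the slope $\lambda_1$ on the rarefaction branch and the Rankine--Hugoniot speed $(q_r-q_\ell)/(\rho_r-\rho_\ell)$ on the shock branch. Finally you get the supersonic claim from the Lax inequality $\lambda_1(u_r)<s<0$. This gives a self-contained, elementary proof that also explains the endpoints: the closed end $\Bar\rho(u_\ell)$ comes from a sonic rarefaction edge, and the open end $\hat\rho(q_\ell,u_\ell)$ from a stationary shock under the right-continuity convention. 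The citation is shorter but sends the reader to \cite{CRR-MMAS-2025}.

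Several steps should be stated more carefully, though none is a real gap.
\begin{itemize}
\item \emph{Necessity.} Say why $u_r$ cannot appear to the left of the 2-wave. Those states lie on $\mathsf{FL}_1^{u_\ell}$, and $u_r\in\mathsf{FL}_1^{u_\ell}$ would give $u_r\in\mathsf{FL}_1^{u_\ell}\cap\mathsf{BL}_2^{u_r}=\{\tilde u\}$.
\item \emph{Shock condition.} The requirement is $s<0$, not $s\le0$. You correct this in Case~\textbf{(I)}; in Case~\textbf{(II)} it is automatic, since $s<v(u_\ell)-a\le0$.
\item \emph{Rarefaction slope.} Item (b) of Lemma~\ref{l:JSatriani} gives the slope $\lambda_1$ only at the base point. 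For the whole branch, use item (e), or differentiate directly: $\frac{\d}{\d\rho}\mathcal{R}_1^{u_\ell}(\rho)=v-a$.
\item \emph{Shock branch.} The phrase ``the shock branch comes after the maximum'' is false for $v(u_\ell)>a$, since then $\Bar\rho(u_\ell)>\rho_\ell$ lies on the shock branch. What you need is this: by strict concavity and $\mathcal{FL}_1^{u_\ell}(\rho_\ell)=q_\ell$, for $\rho>\rho_\ell$ one has $\mathcal{FL}_1^{u_\ell}(\rho)<q_\ell$ if and only if $\rho>\hat\rho(q_\ell,u_\ell)$. Here $\hat\rho(q_\ell,u_\ell)=\rho_\ell$ when $v(u_\ell)\le a$.
\item \emph{Rarefaction states in Case~\textbf{(I)}.} That they are supersonic when $v(u_\ell)>a$ is not what \cite[Lemma~2.6]{C2} says. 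It follows directly from $v=v(u_\ell)-a\ln(\rho/\rho_\ell)>v(u_\ell)$ for $\rho<\rho_\ell$.
\item \emph{Subsonicity in Case~\textbf{(II)}.} The implication $\rho_r\ge\Bar\rho(u_\ell)\Rightarrow v(u_r)\le a$ needs a reason. One is that $v=\mathcal{FL}_1^{u_\ell}(\rho)/\rho$ is the slope of the chord from the origin to a strictly concave function vanishing at $0^+$, hence strictly decreasing in $\rho$.
\end{itemize}
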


\begin{prop}\label{p:Loathe}
Fix $u_r\in\Omega$ and consider $u^a=u^a(u_r)$.
A state $u_\ell\in\Omega$ satisfies
\begin{align*}
&\rss[u_\ell,u_r](0^+) = u_\ell,&
&v(u_{\ell}) \in [0,a],
\end{align*}
if and only if one of the following conditions is met:
\begin{enumerate}[label={\bf (\Roman*)}, leftmargin=*]\setlength{\itemsep}{0cm}\setlength\itemsep{0em}
\item\label{i:Loathe1}
$\rho_{\ell} \in (0,\rho^{a}]$ and $q_\ell \in \mathcal{BL}_{2}^{u_r}(\rho_\ell)$,
\item\label{i:Loathe2}
$\rho_{\ell} > \rho^{a}$ and $q_\ell =a\rho_{\ell}$.
\end{enumerate}
In case~\ref{i:Loathe1}, $\rss[u_\ell,u_r]$ consists  at most of one wave (a $2$-wave); when~\ref{i:Loathe2} is satisfied, then $\rss[u_\ell,u_r]$ consists at most of two waves (a $1$-rarefaction and a $2$-wave).
If $u_\ell$ is subsonic, then it satisfies~\ref{i:Loathe1}.
\end{prop}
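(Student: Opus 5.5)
We prove Proposition~\ref{p:Loathe} by describing the structure of $\rss[u_\ell,u_r]$ when its trace at $0^+$ equals $u_\ell$. We need two standard facts. First, the solution is made of a $1$-wave from $u_\ell$ to $\tilde u \doteq \tilde u(u_\ell,u_r)\in\mathsf{FL}_1^{u_\ell}\cap\mathsf{BL}_2^{u_r}$, followed by a $2$-wave from $\tilde u$ to $u_r$. Second, the trace at $0^+$ is determined by the signs of the wave speeds.

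\emph{Necessity.} Assume $\rss[u_\ell,u_r](0^+)=u_\ell$ with $v(u_\ell)\in[0,a]$. Then $\lambda_1(u_\ell)=v(u_\ell)-a\le 0$. We split according to the $1$-wave.

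\textbf{Case $\tilde u=u_\ell$.} There is no $1$-wave, so $u_\ell\in\mathsf{BL}_2^{u_r}$.

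\textbf{Case $\tilde u\neq u_\ell$.} The $1$-wave is nontrivial. For $u_\ell$ to be the trace at $0^+$, the whole $1$-wave must lie in $x\ge 0$; at a point this is forced only in a degenerate way, and we argue as follows.
\begin{itemize}
\item A $1$-shock would need speed $s_1\ge 0$. By Lax's entropy inequality $s_1<\lambda_1(u_\ell)\le 0$, a contradiction.
\item A $1$-rarefaction would need $\lambda_1(u_\ell)\ge0$. Hence $v(u_\ell)=a$, and the fan occupies $[0,\lambda_1(\tilde u)]$ with $\tilde\rho<\rho_\ell$.
\end{itemize}
So in this case $u_\ell$ is sonic, $q_\ell=a\rho_\ell$, and the solution is a $1$-rarefaction followed by a $2$-wave. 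We still have to locate $\rho_\ell$ relative to $\rho^a$. Using Lemma~\ref{l:JSatriani}\ref{i:d} and monotonicity, we want to show that a sonic $u_\ell$ with $\rho_\ell\le\rho^a$ and $u_\ell\notin\mathsf{BL}_2^{u_r}$ cannot produce a rarefaction from $u_\ell$ to a point of $\mathsf{BL}_2^{u_r}$ with smaller density. Conversely, for $\rho_\ell>\rho^a$ the intersection $\tilde u$ has $\tilde\rho<\rho_\ell$.

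In the first case we must also check that $\rho_\ell\le\rho^a$. Both $q_\ell=\mathcal{BL}_2^{u_r}(\rho_\ell)$ and $v(u_\ell)\le a$ hold. Since $\mathcal{BL}_2^{u_r}$ is convex and crosses the line $q=a\rho$ only once, at $u^a$, the region $\{q\le a\rho\}$ on that curve is exactly $\rho\le\rho^a$. This uses that $\mathsf{BL}_2^{u_r}(\rho)/\rho$ is increasing in $\rho$: the velocity along $\mathsf R_2$ and $\mathsf S_2$ in the form \eqref{e:laxRS} is monotone. We also need the $2$-wave speed to be $\ge0$: it is $\ge\lambda_2$ of the left state or the shock speed $s_2>\lambda_2(u_r)$, or more directly $\ge v(u_\ell)+a\ge a>0$ for a rarefaction, and $>\lambda_2(u_\ell)>0$ for a shock by Lax. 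Finally, $q_\ell\ge0$ restricts the curve to its part with $q\ge0$.

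\emph{Sufficiency.} We reverse these steps. In case~\ref{i:Loathe1}, $\tilde u=u_\ell$ and the only wave is a $2$-wave of positive speed. In case~\ref{i:Loathe2}, $\tilde\rho<\rho_\ell$. Here we use that $\mathsf{R}_1^{u_\ell}$ through a sonic point stays above the sonic line and meets $\mathsf{BL}_2^{u_r}$ at a point of smaller density, by a convexity/concavity comparison of $\mathcal{R}_1^{u_\ell}$ and $\mathcal{BL}_2^{u_r}$. The $1$-rarefaction then has speeds in $[0,\cdot]$, and the $2$-wave is faster. The last claim follows because subsonic $u_\ell$ excludes case~\ref{i:Loathe2}.

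\textbf{Main obstacle.} The intersection geometry in case~\ref{i:Loathe2}: showing that $\tilde\rho<\rho_\ell$ exactly when $\rho_\ell>\rho^a$. We plan to handle it by tracking the sign of $\mathcal{R}_1^{u_\ell}(\rho)-\mathcal{BL}_2^{u_r}(\rho)$ as $u_\ell$ moves along the sonic line past $u^a$.
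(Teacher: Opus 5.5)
Your direct wave-by-wave route is different from the paper's, which gives no independent proof: it imports \cite[Corollary~2.11]{CRR-MMAS-2025} and intersects with $\{v\le a\}$. The necessity skeleton of your argument is sound. Lax's inequality $s_1<\lambda_1(u_\ell)\le0$ excludes a $1$-shock, and a $1$-rarefaction forces $v(u_\ell)=a$. There are, however, two genuine gaps.

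\textbf{Sufficiency in case (I).} This direction fails as the statement is written, and your step ``the only wave is a $2$-wave of positive speed'' is false there. Nothing in \textbf{(I)} bounds $\rho_\ell$ from below. Yet $\mathcal{BL}_2^{u_r}(0^+)=0$ with slope $-\infty$, so $q_\ell<0$, hence $v(u_\ell)<0$, for $\rho_\ell\in(0,\check\rho(0,u_r))$.

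\emph{Counterexample.} Take $a=2$, $u_r=(1,-10)$ and $u_\ell=(4,-28)\in\mathsf{S}_2^{u_r}$. Then $\rho^a=(3+\sqrt{10})^2$, so \textbf{(I)} holds. But $v(u_\ell)=-7$, and the $2$-shock has speed $-6$, so $\rss[u_\ell,u_r](0^+)=u_r$.

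You did note $q_\ell\ge0$ in the necessity part. You must also say that \textbf{(I)} has to be read as $\rho_\ell\in[\check\rho(0,u_r),\rho^a]$, and then invoke $v(u_\ell)\ge0$ in the converse. This reading is what the paper presupposes: the caption of Figure~\ref{f:BHK1} excludes $u_r$ from $\Gamma^+(u_r)$ because $v(u_r)<0$, and \eqref{e:ucpm} only uses $\qc\ge0$.

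Your justification that the $2$-shock speed is positive also inverts Lax's condition. For a $2$-shock from $u_\ell$ to $u_r$ it reads $\lambda_2(u_r)<s<\lambda_2(u_\ell)$, which is an upper bound. The remaining bound $s>\lambda_1(u_\ell)$ is useless here since $\lambda_1(u_\ell)\le0$. Use the explicit speed instead: $s=v(u_\ell)+a\sqrt{\rho_r/\rho_\ell}>v(u_\ell)\ge0$. In case \textbf{(II)} the same formula gives $s=v(\tilde u)+a\sqrt{\rho_r/\tilde\rho}>v(\tilde u)>a$.

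\textbf{The core of case (II).} In both directions this is only announced (``we want to show'', ``we plan to''). Also, item~\ref{i:d} of Lemma~\ref{l:JSatriani} is not the right tool, since it compares curves through a common base point. The missing argument is short:
\begin{itemize}
\item By items \textbf{(b)}--\textbf{(c)} of that lemma, $\mathcal{FL}_1^{u_\ell}-\mathcal{BL}_2^{u_r}$ is strictly concave. It tends to $0$ at $0^+$ with slope $+\infty$, and to $-\infty$ at $+\infty$.
\item Hence it has a unique zero $\tilde\rho$, positive before it and negative after it. So $\tilde\rho<\rho_\ell$ if and only if $q_\ell<\mathcal{BL}_2^{u_r}(\rho_\ell)$.
\item For $q_\ell=a\rho_\ell$ this says $\mathcal{BL}_2^{u_r}(\rho_\ell)/\rho_\ell>a$, i.e.\ $\rho_\ell>\rho^a$. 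This follows from the same monotonicity of the velocity along $\mathsf{BL}_2^{u_r}$ that you already use for \textbf{(I)}.
\end{itemize}
No continuation of $u_\ell$ along the sonic line is needed.

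Finally, when some wave has negative speed, you should state why the trace at $0^+$ cannot coincide with $u_\ell$ by accident. Once $\tilde u\neq u_\ell$ we have $u_r\neq u_\ell$, and $\mathsf{R}_2^{\tilde u}\cap\mathsf{BL}_1^{\tilde u}=\{\tilde u\}$ by item~\ref{i:d}.
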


For $u_\ell \in \Omega$ we define, see \figurename~\ref{f:BHK0},
\begin{align}
\label{e:Gammaum}
\Gamma^-(u_\ell) &\doteq \left\{
u_r \in \mathsf{FL}_1^{u_\ell} : u_r \hbox{ satisfies~\ref{i:Korn1} or~\ref{i:Korn2} in Proposition~\ref{p:Korn}}
\right\},
\intertext{and for any $u_{r} \in \Omega$ we define, see \figurename~\ref{f:BHK1},}
\label{e:Gammaup}
\Gamma^+(u_r) &\doteq \left\{
u_\ell \in \Omega : u_\ell \hbox{ satisfies \ref{i:Loathe1} or \ref{i:Loathe2} in Proposition~\ref{p:Loathe}}
\right\}.
\end{align}
Proposition~\ref{p:Korn} coincides with \cite[Corollary~2.10]{CRR-MMAS-2025}; hence, its proof is omitted.
The conditions on $u_{\ell}$ in Proposition~\ref{p:Loathe} are obtained by augmenting those of \cite[Corollary~2.11]{CRR-MMAS-2025} with the additional requirement that $u_{\ell}$ is not supersonic.
Accordingly, items \ref{i:Loathe1} and \ref{i:Loathe2} follow from items {\bf (I)}-{\bf (III)} in \cite[Corollary~2.11]{CRR-MMAS-2025} upon further imposing the constraint $v(u_{\ell}) \leq a$. 
Graphically, $\Gamma^+(u_r)$ is obtained by intersecting the region depicted in \cite[\figurename~7]{CRR-MMAS-2025} with the half-plane $\{\,u \in \Omega : v(u) \leq a\,\}$.
The proof of Proposition~\ref{p:Loathe} is therefore omitted.

\begin{figure}[!htb]\centering
\begin{tikzpicture}[every node/.style={anchor=south west,inner sep=0pt},x=1mm, y=1mm]

\node at (4,4) {\includegraphics[width=50mm]{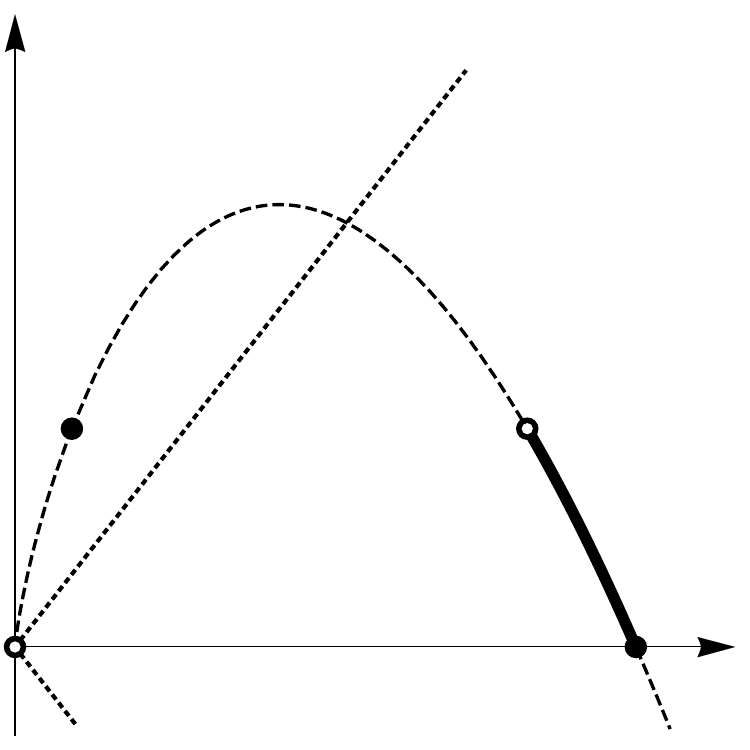}};
\node at (55,9) {\strut $\rho$};
\node at (0,50) {\strut $q$};
\node[right] at (50,4) {\strut $\mathsf{FL}_1^{u_\ell}$};
\node at (10,23) {\strut $u_\ell$};
\node at (41,23) {\strut $\hat{u}(q_\ell,u_\ell)$};
\node[right] at (37,50) {\strut $q=a\rho$};
\node at (34,5) {\strut $\hat{u}(0,u_\ell)$};

\begin{scope}[shift={(70,0)}]
\node at (4,4) {\includegraphics[width=50mm]{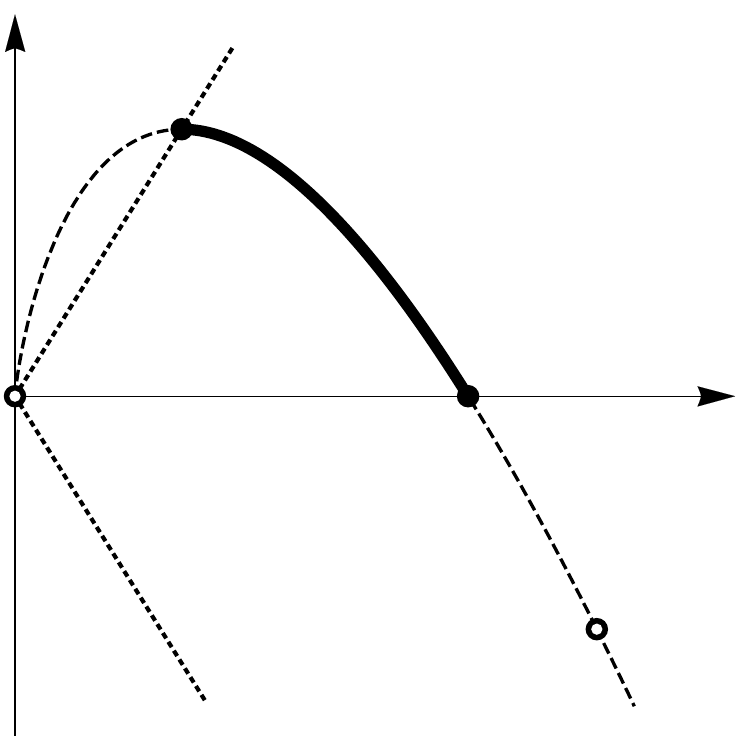}};
\node at (55,25) {\strut $\rho$};
\node at (0,50) {\strut $q$};
\node[right] at (48,4) {\strut $\mathsf{FL}_1^{u_\ell}$};
\node at (8,46) {\strut $\Bar{u}(u_\ell)$};
\node at (46,11) {\strut $u_\ell$};
\node at (21,50) {\strut $q=a\rho$};
\node at (23,23) {\strut $\hat{u}(0,u_\ell)$};
\end{scope}

\end{tikzpicture}
\caption{
The set $\Gamma^-(u_\ell)$ (solid lines and the full points) in \eqref{e:Gammaum}.
On the left, case~\ref{i:Korn1} in Proposition~\ref{p:Korn}; on the right, case~\ref{i:Korn2} in Proposition~\ref{p:Korn}. The state $\hat{u}(q_\ell,u_\ell)$ on the left does not belong to the solid line.
In the case under consideration on the right, the state $u_{\ell}$ does not belong to $\Gamma^-(u_\ell)$ because $v(u_{\ell})<0$.
}
\label{f:BHK0}
\end{figure}

\begin{figure}[!htb]\centering
\begin{tikzpicture}[every node/.style={anchor=south west,inner sep=0pt},x=1mm, y=1mm]
\node at (4,4) {\includegraphics[width=50mm]{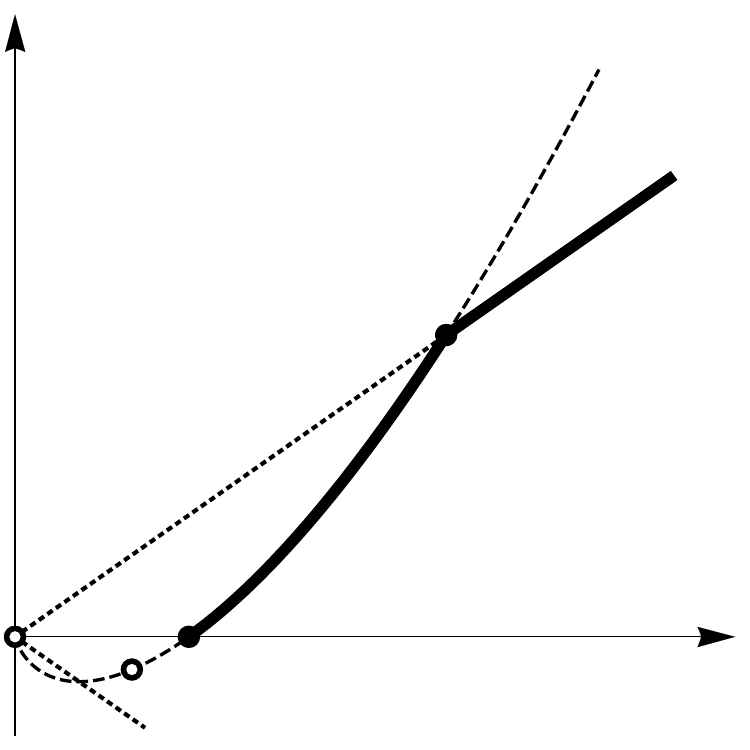}};
\node at (55,9) {\strut $\rho$};
\node at (0,50) {\strut $q$};
\node[right] at (40,52) {\strut $\mathsf{BL}_{2}^{u_r}$};
\node at (31,32) {\strut $u^a$};
\draw[latex-] (17,11) --++ (5,-3) node[right, inner sep=2pt] {\strut $\check{u}(0,u_r)$};
\node at (14,6) {\strut $u_r$};
\node[rotate=33] at (40,31) {\strut $q=a\rho$};

\node[rotate=35] at (43,39) {\footnotesize\bf (II)};
\node[rotate=52] at (27,16) {\footnotesize\bf (I)};
\end{tikzpicture}
\caption{The set $\Gamma^+(u_r)$ (solid lines and the full points) in \eqref{e:Gammaup}.
In the case under consideration, the state $u_{r}$ does not belong to $\Gamma^+(u_r)$ because $v(u_{r})<0$.}
\label{f:BHK1}
\end{figure}

\subsubsection{Demand function}

The \emph{demand} function $\overline{Q} \colon \Omega\to\R$ is defined as follows:
\begin{equation}
\label{e:SleepToken}
\overline{Q}(\rho,q) \doteq
\begin{cases}
q&\text{if } v(u) \geq a,
\\
\overline{q}(\rho,q) = \frac{a\,\rho}{e} \, \exp\left(\frac{v}{a}\right)&\text{if } v(u) < a,
\end{cases}
\end{equation}
where $e \approx 2.71828$ is Euler's number.

\begin{lem}
We have $\overline{Q}\in\C1(\Omega)$ and $0 < \overline{Q}(u) \in [q, \overline{q}(u)]$ for any $u \in \Omega$.
\end{lem}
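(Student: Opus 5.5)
Set $v = q/\rho$ and $s = v/a$, and write $\overline{q}(u) = a\rho\,e^{s-1}$. The proof reduces to the elementary inequality $e^{s-1}\ge s$, which holds with equality only at $s=1$ because $x\mapsto e^{x-1}$ is convex and tangent to $x\mapsto x$ at $x=1$. I would treat positivity and the bounds first, then regularity.

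\emph{Positivity and the interval statement.} On $\{v<a\}$ we have $\overline{Q}=\overline{q}=\frac{a\rho}{e}e^{v/a}>0$, since $\rho>0$. The inequality $e^{s-1}\ge s$ gives $\overline{q}\ge a\rho s=q$, so $\overline{Q}\in[q,\overline{q}]$. On $\{v\ge a\}$ we have $\overline{Q}=q=v\rho\ge a\rho>0$. The same inequality, now with $s\ge1$, gives $\overline{q}(u)\ge q$, so again $\overline{Q}=q\in[q,\overline{q}]$. Here I read $\overline{q}(u)$ as the expression $\frac{a\rho}{e}\exp(v/a)$, defined on all of $\Omega$.

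\emph{Regularity.} Both branches are $\C\infty$ on their open regions $\{v<a\}$ and $\{v>a\}$. So only the sonic line $q=a\rho$, $\rho>0$, needs checking.
\begin{itemize}
\item Continuity: at $v=a$ we get $\overline{q}=\frac{a\rho}{e}e=a\rho=q$.
\item Gradient of the subsonic branch: $\partial_q\overline{q}=\frac{a\rho}{e}e^{v/a}\cdot\frac{1}{a\rho}=\frac{1}{e}e^{v/a}$, which equals $1$ at $v=a$. Also $\partial_\rho\overline{q}=\frac{a}{e}e^{v/a}+\frac{a\rho}{e}e^{v/a}\cdot\left(-\frac{q}{a\rho^2}\right)=\frac{1}{e}e^{v/a}(a-v)$, which equals $0$ at $v=a$.
\item Gradient of the supersonic branch: $\nabla q=(0,1)$.
\end{itemize}
The one-sided gradients therefore agree on the sonic line. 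Both branches extend as smooth functions across the line, so a standard patching argument applies: the function is continuous, and its one-sided gradients extend continuously and agree on the common boundary, which is a smooth curve. Hence $\overline{Q}$ is differentiable on the line with continuous gradient, i.e. $\overline{Q}\in\C1(\Omega)$.

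The only point needing care is justifying differentiability across the interface rather than merely matching the one-sided derivatives. The cleanest route is to write $\overline{Q}=q+\max\{0,\overline{q}-q\}\,\mathds{1}_{\{v<a\}}$. Then $g\doteq\overline{q}-q\ge0$ is smooth and vanishes together with its gradient on $\{v=a\}$. Consequently $g\,\mathds{1}_{\{v<a\}}$ is $\C1$: it is $o(|h|)$ near the line, and its gradient tends to $0$ there.
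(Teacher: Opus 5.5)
Your proof is correct and complete. The paper states this lemma without proof, so there is no argument of the authors to compare against. Your two ingredients are exactly what is needed: the inequality $e^{s-1}\ge s$ gives the bounds, and the decomposition $\overline{Q}=q+g\,\mathds{1}_{\{v<a\}}$ gives the regularity, because $g=\overline{q}-q\ge0$ vanishes together with its gradient on the sonic line. The decomposition is what makes the otherwise heuristic ``patching'' step rigorous. The $\max\{0,\cdot\}$ in it is superfluous, since $g\ge0$.

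One remark concerns how the upper bound is used later, since it matters which $\overline{q}(u)$ is meant in the supersonic region. In the proof of Theorem~\ref{t:coherence} the authors invoke the lemma as $[0,\overline{Q}(u_\ell)]\subseteq[0,\Bar{q}(u_\ell)]$. Here $\Bar{q}$ is the maximal flow on $\mathsf{FL}_1^{u_\ell}$ from Definition~\ref{d:notation}.

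\begin{itemize}
\item For $v(u)\le a$, this $\Bar{q}(u)$ coincides with your $\frac{a\rho}{e}e^{v/a}$: it is the maximum of the rarefaction branch $\mathcal{R}_1^{u}$, attained on the sonic line.
\item For $v(u)>a$, the maximum is attained on the shock branch to the right of $u$. It is strictly smaller than $\frac{a\rho}{e}e^{v/a}$, which is the global maximum of $\mathcal{R}_1^{u}$, because $\mathcal{S}_1^{u}<\mathcal{R}_1^{u}$ there by Lemma~\ref{l:JSatriani}. So your formula-based upper bound does not by itself yield the inclusion the authors use.
\end{itemize}

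The inclusion follows at once from $u\in\mathsf{FL}_1^{u}$, which gives $\Bar{q}(u)\ge q=\overline{Q}(u)$. Adding this one line makes your argument cover both readings of $\overline{q}(u)$.
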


At last we provide a motivation for the introduction of $\overline{Q}(u_\ell)$.

\begin{prop}
\label{p:Leprous}
Fix $u_\ell \in \Omega$. 
Then, for any $u_r\in\Omega$, the flow attained by $\uspm = \rss[u_\ell,u_r]$ at $\xi=0$ is bounded from above by $\overline{Q}(u_\ell)$.
More precisely, we have
\[
\overline{Q}(u_\ell)=\max\left\{ \qs : u_r\in\Omega \right\},
\]
where $\qs$ is the common value of the traces $\qspm$ of the flow at $\xi=0$, see \eqref{e:RHp}.
\end{prop}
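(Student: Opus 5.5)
Fix $u_\ell$ and split according to $v(u_\ell)$. First I would describe $\qs$ for an arbitrary $u_r$. In $\rss[u_\ell,u_r]$ the state $u_\ell$ is joined by a $1$-wave to the intermediate state $\tilde u(u_\ell,u_r)\in\mathsf{FL}_1^{u_\ell}$, which is then joined to $u_r$ by a $2$-wave. The trace at $\xi=0$ is one of the following:
\begin{itemize}
\item $u_\ell$ itself, when all waves have positive speed;
\item a point of the $1$-wave, possibly inside a $1$-rarefaction, in particular a sonic point where $\lambda_1=0$;
\item the middle state $\tilde u$;
\item a point of the $2$-wave or $u_r$, when the $2$-wave has negative speed.
\end{itemize}
The key claim is that whenever $\qs>0$ the trace $u_{\rm s}^-$ can be read as a state of $\mathsf{FL}_1^{u_\ell}$ that is reachable from $u_\ell$ with nonnegative $1$-wave speed. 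In the cases where the $2$-wave has negative speed, the $2$-wave part lies in $v<-a$ or otherwise gives $\qs<0$, so these cases only produce nonpositive flux. This needs a short check via the $2$-wave speed $v+a<0$, which forces $v<-a$ and hence $q<0$.

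The upper bound is therefore reduced to maximizing $q$ over the relevant part of $\mathsf{FL}_1^{u_\ell}$.
\begin{itemize}
\item \textbf{Case $v(u_\ell)<a$.} By Definition~\ref{d:notation} and the recalled fact $v(\Bar u(u_\ell))=a$, the maximum of $q$ along $\mathsf{FL}_1^{u_\ell}$ is $\Bar q(u_\ell)$. This maximum is attained on the rarefaction branch $\mathcal R_1^{u_\ell}$ at the sonic point. Differentiating $\rho\mapsto(v_\ell-a\ln(\rho/\rho_\ell))\rho$ gives the critical point $\ln(\rho/\rho_\ell)=v_\ell/a-1$. This yields $\rho=\rho_\ell e^{v_\ell/a-1}$ and $q=a\rho=\frac{a\rho_\ell}{e}\exp(v_\ell/a)$, which is exactly $\overline q(u_\ell)$.
\item \textbf{Case $v(u_\ell)\ge a$.} Here $u_\ell$ is sonic or supersonic, so all $1$-waves issued from $u_\ell$ with nonnegative speed leave the trace equal to $u_\ell$. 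Alternatively, the trace moves along $\mathsf{FL}_1^{u_\ell}$ to states with $\rho\ge\hat\rho(q_\ell,u_\ell)$, and on that branch $q\le q_\ell$ by concavity (Lemma~\ref{l:JSatriani}(c)) and Proposition~\ref{p:Korn}~\ref{i:Korn1}. Hence $\qs\le q_\ell=\overline Q(u_\ell)$.
\end{itemize}

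For attainment, in the case $v(u_\ell)<a$ I would take $u_r=\Bar u(u_\ell)$. It lies on $\mathsf{FL}_1^{u_\ell}$, so by Proposition~\ref{p:Korn}~\ref{i:Korn2} the solution is a single $1$-wave with $u_{\rm s}^-=u_r$, giving $\qs=\Bar q(u_\ell)$. For $v(u_\ell)\ge a$ I would take $u_r=u_\ell$, so that $\qs=q_\ell$.

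The main obstacle is the first step: a clean case analysis showing that, whatever $u_r$ is, the trace flux never exceeds $\max q$ on the admissible part of $\mathsf{FL}_1^{u_\ell}$. The delicate subcase is a $1$-shock with negative speed followed by a $2$-wave of positive speed. There $\qs=\tilde q$, and $\tilde u$ lies on the shock branch $\mathcal S_1^{u_\ell}$ at some $\rho>\rho_\ell$. I would use Lemma~\ref{l:JSatriani}(d), which gives $\mathcal S_1<\mathcal R_1$ for $\rho>\rho_\ell$, so $\tilde q\le\max\mathcal R_1^{u_\ell}=\overline q(u_\ell)$. In the supersonic case, concavity together with $\tilde\rho\ge\hat\rho(q_\ell,u_\ell)$ gives $\tilde q\le q_\ell$, because a negative shock speed forces $\tilde\rho>\hat\rho(q_\ell,u_\ell)$ via Rankine--Hugoniot.
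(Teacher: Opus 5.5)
The paper gives no proof of this proposition; it is stated as motivation for $\overline{Q}$, presumably relying on the earlier valve papers. So there is no in-paper argument to compare against, and your proposal has to stand on its own. It does.

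Your argument is correct and complete:
\begin{itemize}
\item The case split on where the trace sits: at $u_\ell$, at the sonic point of a transonic $1$-rarefaction, at $\tilde u$, or inside or behind a $2$-wave of negative speed, where $q<0$.
\item The computation showing that $\max\mathcal{R}_1^{u_\ell}$ equals $\overline{q}(u_\ell)$.
\item The choices $u_r=\Bar u(u_\ell)$ when $v(u_\ell)<a$ and $u_r=u_\ell$ when $v(u_\ell)\geq a$, which give attainment.
\end{itemize}

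Two small corrections.

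First, your \emph{key claim} has the direction of the $1$-wave speed reversed. When the trace differs from $u_\ell$, it is reached from $u_\ell$ by a $1$-wave of \emph{nonpositive} speed. In the paper's notation, $u_{\rm s}^-\in\Gamma^-(u_\ell)$ whenever $\qs>0$. Your subsequent cases use the correct direction, so nothing breaks.

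Second, the subcase you flag as delicate is immediate. For a $1$-shock of speed $s\le0$ from $u_\ell$ to $\tilde u$, \eqref{e:RH1} gives $\tilde q-q_\ell=s\,(\tilde\rho-\rho_\ell)\le0$, since $\tilde\rho>\rho_\ell$. Hence $\tilde q\le q_\ell\le\overline{Q}(u_\ell)$ whatever $v(u_\ell)$ is, and the concavity-plus-$\hat\rho(q_\ell,u_\ell)$ detour is unnecessary. When $v(u_\ell)<a$ you do not even need this step. There $\tilde u\in\mathsf{FL}_1^{u_\ell}$, and $\Bar u(u_\ell)$ maximizes $q$ on $\mathsf{FL}_1^{u_\ell}$ by definition, so Lemma~\ref{l:JSatriani}\ref{i:d} is superfluous.

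Relatedly, for a $2$-shock of negative speed, the sign $q_r<0$ follows from Lax's inequality $\lambda_2(u_r)<s$, or from the explicit speed $s=v(u_r)+a\sqrt{\tilde\rho/\rho_r}$. It does not follow from a characteristic speed being negative.
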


\begin{rmk}
\label{r:Leprous}
Fix $u_\ell \in \Omega$. Proposition \ref{p:Leprous} shows that
$\overline{Q}(u_\ell)$ represents the {\em maximum achievable flow} at $x=0$ of a solution to the Riemann problem \eqref{e:system}, \eqref{e:iniRie} with $u_r \in \Omega$.
This means that, for any $u_r \in \Omega$, the flow at $x=0$ of the solution to the Riemann problem \eqref{e:system}, \eqref{e:iniRie} with initial state $(u_\ell,u_r)$ is less or equal to $\overline{Q}(u_\ell)$, i.e., $q_{\rm s}^0 \leq \overline{Q}(u_\ell)$, where $q_{\rm s}^0$ is defined in \eqref{e:RHp}.
Moreover, there exists $u_r \in \Omega$ such that $q_{\rm s}^0 = \overline{Q}(u_\ell)$.    
\end{rmk}

We observe that
\begin{equation}
\label{e:Periphery}
u \in \Gamma^-(u_\ell) \Longrightarrow q \leq \overline{Q}(u_\ell).
\end{equation}

\section{The flow through a valve}
\label{s:flowv}
\setcounter{equation}{0}

In this section we model the flow along a pipe-valve-pipe system;
for definiteness, the valve is placed at $x=0$. 
The valves under consideration are assumed to be either closed or \lq\lq fully\rq\rq\ open as time passes;
their opening or closing is encoded by the function $h \colon [0, +\infty) \to \{0,1\}$, so that
\[h(t) \doteq \begin{cases}
1&\hbox{if the valve is open at time }t,
\\
0&\hbox{if the valve is closed at time }t.
\end{cases}\]
We first generalize Definition~\ref{d:entro} of entropy solution to a Cauchy problem to the case a valve is present at $x=0$; then we give the definition of c-solution (\lq\lq c\rq\rq\ stands for coupling).
Subsequently, we consider the Riemann problem and generalize the Riemann solver $\rss$ to the case a valve is present at $x=0$ and give the definition of c-Riemann solver $\rsc$.

\subsection{The Cauchy problem}
We now make explicit what we mean by a solution of a Cauchy problem. In the general Definition \ref{d:csol} below, the function $h$ plays a different role according to the model under consideration. On the one hand, $h$ can be barely computed from the solution $u$, which then does not depend on it; this is the case of the non-hysteretic valves examined in \cite{C3, C1, C2}. On the other hand, the determination of a solution $u$ may depend on the initial value $h_o$ of $h$; this is the case of the hysteretic valves considered below. The same remark holds for the Riemann problem in Subsection~\ref{ss:RP}.

\begin{defi}
\label{d:csol}
For a given initial datum $(u_o,h_o) \in \BV_{\rm loc}(\R;\Omega) \times \{0,1\}$, a function $(u,h)\in\BV_{\rm loc}([0, +\infty) \times \R;\Omega) \times \L\infty([0, +\infty);\{0,1\})$ is a global c-solution of the Cauchy problem
\begin{equation}
\label{e:csystem}
\left\{\begin{array}{@{}l@{}}
\rho_t + q_x = 0, \\
q_t + P_x = 0, \\
\left(u(0,x),h(0)\right) = (u_o,h_o),
\end{array}\right.
\end{equation}
if away from $x=0$ the function $u$ is an entropy solution in the sense of Definition~\ref{d:entro}, and at $x=0$ it satisfies the following coupling conditions for a.e.\ $t>0$:
\begin{gather}
\label{e:csol-q}
q(t,0^-) = q(t,0^+) = q^0(t),
\\
\label{e:csol-h}
h(t) = 0 \Longrightarrow q^0(t) = 0.
\end{gather}
\end{defi}

\begin{rmk}
Some comments on Definition~\ref{d:csol} are in order.
\begin{itemize}[leftmargin=*]
\item 
Condition \eqref{e:csol-q} ensures that a c-solution satisfies the first Rankine-Hugoniot condition \eqref{e:RH1} at $x=0$, which expresses the conservation of gas through the valve.  
Condition \eqref{e:csol-h} ensures that if the valve is closed, then no flow occurs through $x=0$.

\item 
Away from $x=0$, the $u$-component of a c-solution is an entropy solution of the Cauchy problem \eqref{e:csystem} in the sense of Definition~\ref{d:entro}; namely, it satisfies condition \eqref{e:wcond} for any test function $\varphi \in \Cc\infty([0, +\infty) \times (\R\setminus\{0\}); \R)$, and condition \eqref{e:econd} for any entropy pair $(\mathcal{E},\mathcal{Q})$ and test function $\varphi \in \Cc\infty((0, +\infty) \times (\R\setminus\{0\}); [0, +\infty))$.
In particular, along the two pipes $\pm x>0$ the gas flow is described by \eqref{e:system}.
Across $x=0$, non-entropic stationary waves may arise; we refer to such waves as c-waves.
\end{itemize}
\end{rmk}

\subsection{The Riemann problem}
\label{ss:RP}
In this section we introduce the general definition of the one-way c-Riemann solver with non-supersonic outflow considered in this paper.
In fact, the simplest way to model the gas flow through a valve is to describe the associated c-Riemann solver.

\begin{defi}\label{d:RSV} 
A function
\begin{equation}
\label{e:C}
\begin{array}{cccc@{\;}l}
C \colon& \dom \times \{0,1\}& \to &\dom \times \{0,1\}
\\
&(u_\ell,u_r,h_o)&\mapsto& \left(\ucm(u_\ell,h_o), \ucp(u_\ell,u_r,h_o), \hc(u_\ell,h_o) \right)
\end{array}
\end{equation}
is a \emph{c-function} if there exist $\rcm\colon \Omega \times \{0,1\} \to (0, +\infty)$, $\rcp \colon \dom \times \{0,1\} \to (0, +\infty)$ and $\qc \colon \Omega \times \{0,1\} \to [0, +\infty)$ such that $\ucm =(\rcm ,\qc )$ and $\ucp = (\rcp ,\qc )$ satisfy, for any $(u_\ell,u_r,h_o)\in\dom\times\{0,1\}$,
\begin{align}
\label{e:RSV1}
&\ucm(u_\ell,h_o) \in \Gamma^-(u_\ell),&
&\qc(u_\ell,h_o) \in \left[ 0 , \overline{Q}\left(u_\ell\right) \right],
\\
\label{e:RSV2}
&\ucp(u_\ell,u_r,h_o) \in \Gamma^+(u_r),&
&\hc(u_\ell,h_o) = 0 \Longrightarrow
\qc(u_\ell,h_o) = 0.
\end{align}
The corresponding \emph{c-Riemann solver} $\rsc \colon \dom \times \{0,1\} \to \BVloc(\R;\Omega) \times \{0,1\}$ is defined as
\begin{align*}
\rsc[u_\ell,u_r,h_o](\xi) &\doteq
\begin{cases}
\left(\rss\left[u_\ell,\ucm \right](\xi),\hc\right) &\hbox{if }\xi<0,\\[5pt]
\left(\rss\left[\ucp ,u_r\right](\xi),\hc\right) &\hbox{if }\xi\geq 0,
\end{cases}&
(\ucm ,\ucp,\hc) &= C(u_\ell,u_r,h_o).
\end{align*}
\end{defi}

In analogy with \eqref{e:BMTH}, we denote
\begin{align*}
&(\uc,\hc) = (\rho_{\rm c},q_{\rm c},\hc) \doteq \rsc[u_\ell,u_r,h_o],
&(\ucpm,\hc) = (\rcpm,\qc,\hc) \doteq (\uc(0^\pm),\hc).
\end{align*}

\begin{rmk}
We now comment on Definition~\ref{d:RSV}.

\begin{enumerate}[label={(\roman*)},leftmargin=*,nolistsep]\setlength{\itemsep}{0cm}\setlength\itemsep{0em}

\item 
The map $(t,x) \mapsto \uc(x/t)$ satisfies the first Rankine-Hugoniot condition \eqref{e:RH1} in the whole of $\R$, the second Rankine-Hugoniot condition \eqref{e:RH2} in $\R\setminus\{0\}$, but it does not necessarily satisfy \eqref{e:RH2} at $x=0$.
Hence, mass is conserved, whereas momentum may fail to be conserved through $x=0$.
As a consequence, $\uc$ \emph{may fail} to be a weak solution of \eqref{e:system} in the whole of $\R$ in the sense of Definition~\ref{d:weak}.
On the other hand, $(\uc,\hc)$ is a c-solution in the sense of Definition~\ref{d:csol}.

\item 
A c-Riemann solver $\rsc$ is uniquely characterized by the function $C$, which associates to any initial value $(u_\ell,u_r,h_o)$ the traces $\ucpm$ of $\uc$ at $\xi=0$ and the corresponding value of $\hc$. 
For brevity we omitted the dependence of $\rsc$ on $C$.

\item
By the definitions \eqref{e:Gammaum}, \eqref{e:Gammaup} of $\Gamma^-(u_\ell)$ and $\Gamma^+(u_r)$, conditions \eqref{e:RSV1}\textsubscript{$1$} and \eqref{e:RSV2}\textsubscript{$1$} imply that for any $(u_\ell,u_r,h_o) \in \dom \times \{0,1\}$ the following conditions are satisfied by $\ucm=\ucm(u_{\ell},h_o)$ and $\ucp=\ucp(u_{\ell},u_{r},h_o)$:
\begin{enumerate}[label={\bf\alph*)}]\setlength{\itemsep}{0cm}\setlength\itemsep{0em}
\item
$\rss\left[u_\ell,\ucm\right](0^-) = \ucm$, hence $\rss\left[u_\ell,\ucm\right]$ has no waves with positive speed;
\item
$\rss\left[\ucp,u_r\right](0^+) = \ucp$, hence $\rss\left[\ucp,u_r\right]$ has no waves with negative speed.
\end{enumerate}
\item
Again by \eqref{e:Gammaum}, \eqref{e:Gammaup}, for any $\qc \in \left[ 0 , \overline{Q}\left(u_\ell\right) \right]$ we have that $\ucm=\ucm(u_{\ell},h_o)$ and $\ucp=\ucp(u_{\ell},u_{r},h_o)$ are given by:
\begin{align}
\label{e:ucpm}
\ucm &= \begin{cases}
u_{\ell}&\hbox{if } v(u_{\ell})>a \hbox{ and } \qc=q_{\ell},
\\
\hat{u}(\qc,u_{\ell})&\hbox{otherwise},
\end{cases}&
\ucp &= \begin{cases}
\check{u}(\qc,u_r)&\hbox{if } \qc \in [0,q^{a}(u_{r})],
\\
(\qc/a,\qc)&\hbox{if } \qc > q^{a}(u_{r}).
\end{cases}
\end{align}
As a consequence, $v(\ucm)>a$ if and only if $v(u_{\ell})>a$ and $\qc=q_{\ell}$, whereas $v(\ucp) \leq a$ holds in any case.
\end{enumerate}
\end{rmk}

We now give the definition of a coherent c-Riemann solver.

\begin{defi}
\label{d:coherent}
A c-Riemann solver $\rsc \colon \dom\times\{0,1\} \to \BV(\R;\Omega)\times\{0,1\}$ is \emph{coherent} at $(u_\ell,u_r,h_o)\in\dom\times\{0,1\}$ if the traces $(\ucpm,\hc) \doteq \rsc[u_\ell,u_r,h_o](0^\pm)$ satisfy
\begin{align}
(\ucm,\ucp) \in \dom
\qquad 
\hbox{ and }
\qquad
C(\ucm,\ucp,\hc) = (\ucm,\ucp,\hc).
\label{e:coherent}
\end{align}
The \emph{coherence domain} $\mathsf{CH}$ of $\rsc$ is the set of all triples $(u_\ell,u_r,h_o)\in\dom\times\{0,1\}$ where $\rsc$ is coherent. The set $\mathsf{CH}^{\scriptscriptstyle\complement}\doteq(\dom\times\{0,1\}) \setminus \mathsf{CH}$ is the \emph{incoherence domain}.
\end{defi}

Notice that a c-Riemann solver $\rsc$ is coherent at an initial datum $(u_\ell, u_r,h_o) \in \dom\times\{0,1\}$ if the ordered pair of the traces of the corresponding solution $(\ucpm,\hc)$ are such that $(\ucm,\ucp) \in \dom$ and $(\ucm,\ucp,\hc)$ is a fixed point of $C$; in this case
\[\rsc[\ucm,\ucp,\hc](\xi) = 
\begin{cases}
(\ucm,\hc)&\hbox{if } \xi<0,\\
(\ucp,\hc)&\hbox{if } \xi\geq0.
\end{cases}
\]
Hence, coherence may be thought as a stability property. On the contrary, for instance if the coupling represents a valve, the incoherence of a c-Riemann solver is understood as modeling the chattering and may yield analytical and numerical instabilities; see \cite{C1,C2,DGRR} and references there.

We extend the notion of coherence to solutions of the Cauchy problem as follows.  
A c-solution $(u,h)$ of \eqref{e:csystem} in the sense of Definition~\ref{d:csol} is said to be coherent if
\begin{equation}
\label{e:coCau}
\left(u(t,0^-),u(t,0^+)\right) \in \dom
\qquad 
\hbox{ and }
\qquad
C\left(u(t,0^-),u(t,0^+),h(t)\right) = \left(u(t,0^-),u(t,0^+),h(t)\right) \qquad \text{for a.e.\ } t>0.
\end{equation}

\section{A pressure-independent valve with relay}
\label{s:valve-relay}

In this section we first describe the operating behavior of the valve under consideration and then give the corresponding c-Riemann solver.

\subsection{Description of the valve}
\label{s:descr}
\setcounter{equation}{0}

In this section we model the dynamic behavior of a \emph{pressure-independent one-way valve with relay}. 
The valve is designed to ensure a subsonic outgoing flow and to maintain a fixed outgoing flow $q_*$, independently from the pressure; if it cannot, then it closes. Since the valve is one-way, the outgoing flow has a fixed sign: for definiteness we assume $q_*>0$. 
This behavior is made more realistic by a relay-type behavior, namely, by the presence of two thresholds $q_*^\pm$ satisfying
\[
0 < q_*^-<q_*<q_*^+.
\]

Roughly speaking, if the maximum achievable {\em free} flow at $x=0$ is strictly less than $q_*^-$, then the valve is closed and the flow at $x=0$ is $0$; if it is strictly larger than $q_*^+$, then the valve is open and the flow at $x=0$ is $q_*$. On the other hand, if the maximum achievable free flow at $x=0$ is in the intermediate region $[q_*^-,q_*^+]$, then the flow at $x=0$ is in $\{0\} \cup [q_*^-,q_*]$ and depends on the past history of the flow and on $h_{o}$.
The last behavior introduces an hysteresis process in the modeling.

Observe that $q_{*}^{+}$ and $q_{*}^{-}$ do not play symmetric roles; this is because we deal with a one-way valve and admit outflows less but not greater than $q_*$.

\begin{rmk}
An analogous valve is considered in \cite[Definition~4.2]{C2}. 
The key difference is that in this paper we introduce thresholds, $q_*^\pm$, along with hysteresis processes.  
Let us also recall that the c-Riemann solver in \cite[Definition~4.2]{C2} was modified in \cite[Section~3]{C1} to obtain a coherent c-Riemann solver.  
\end{rmk}

We now rigorously describe the operating behavior of our valve by steps. 
Let $u(t,x)=\left(\rho(t,x),q(t,x)\right) \in \Omega$ be the state of the gas in the pipe-valve-pipe system at time $t \geq 0$ and position $x \in \R$, and $h(t)$ be the operating mode of the valve at time $t \geq 0$.
We assume that
\begin{align}\label{e:uhreg}
&u\in\BV_{\rm loc}([0, +\infty) \times \R;\Omega),&
&h\in\L\infty([0, +\infty);\{0,1\}),
\end{align}
so that the spatial traces $u(t,x^\pm)$ at fixed $x\in\R$ are well defined for a.e.\ $t\ge0$; the time traces $u(t^\pm,x)$ at fixed $t>0$ exist for a.e.\ $x\in\R$, while if $t=0$ then the same occurs for the right trace $u(0^+,x)$. 
In particular, the spatial traces at $x=0$, 
\begin{align*}
&u_\ell^0(t) = \left(\rho_\ell^0(t),q_\ell^0(t)\right) \doteq u(t,0^-),&
&u_r^0(t) = \left(\rho_r^0(t),q_r^0(t)\right) \doteq u(t,0^+),
\end{align*}
are well defined for a.e.\ $t \geq 0$ and belong to $\Omega$. 
Moreover, by imposing the first Rankine-Hugoniot condition \eqref{e:RH1}, we have $q_\ell^0(t)=q_r^0(t)$. As a consequence, the flow at $x=0$
\[
q^0(t) \doteq q(t,0)
\]
is well defined for a.e.\ $t\ge0$. 
Since the flow at $x=0$ is in one direction only, then $q^0(t)\ge0$; indeed, we shall prove that $q^0(t) \in \{0\} \cup [q_*^-,q_*]$. Anyway, inside the pipes the flow may well be negative.

\smallskip

In order to give an idea of how the valve works, we assume for the moment
\begin{equation}\label{e:VOLA}
t \mapsto u_\ell^0(t) \hbox{ is {\em continuous}.}
\end{equation}
Then also $t \mapsto \overline{Q}(u_\ell^0(t))$ is continuous, see \eqref{e:SleepToken}. We recall that, for a fixed $t\geq0$, $\overline{Q}(u_\ell^0(t))$ represents the {\em maximum achievable free flow} at $x=0$ of a solution to the \lq\lq free\rq\rq\ (i.e., without the presence of the valve at $x=0$) Riemann problem with initial state $\left(u_\ell,u_r\right)$, for $u_\ell = u_\ell^0(t)$ and $u_r \in \Omega$, see Proposition~\ref{p:Leprous} and Remark~\ref{r:Leprous}.

For fixed $t\geq0$, we define the sets, see \figurename~\ref{f:Xtpm},
\begin{align}
\mathcal{T}_t^-\left[\overline{Q}\left(u_\ell^0\right)\right] &\doteq \left\{\tau\in[0,t] : \overline{Q}\left(u_\ell^0(\tau)\right) < q_*^-\right\},&
\mathcal{T}_t^+\left[\overline{Q}\left(u_\ell^0\right)\right] &\doteq \left\{\tau\in[0,t] : \overline{Q}\left(u_\ell^0(\tau)\right) > q_*^+ \right\}.
\label{e:TT}
\end{align}
\begin{figure}[!htb]
\centering
\begin{tikzpicture}[x=10mm, y=7mm]
\coordinate (A) at (0,1);
\coordinate (B) at (1,.8);
\coordinate (C) at (2,2);
\coordinate (D) at (2.7,3);
\coordinate (E) at (4.2,4);
\coordinate (F) at (5,3);
\coordinate (G) at (6,2);
\coordinate (H) at (7,.5);
\coordinate (I) at (8,1);
\draw[-latex] (0,0) -- (8.1,0) node[below] {\strut$t$}; 
\draw[-latex] (0,0) -- (0,4.1) node[left] {$\overline{Q}\left(u_\ell^0\right)$};
\draw[dashed] (0,2) node[left] {\strut$q_*^-$} --(8,2);
\draw[dashed] (0,2.5) node[left] {\strut$q_*$} --(8,2.5);
\draw[dashed] (0,3) node[left] {\strut$q_*^+$} --(8,3);
\draw[thick] plot [smooth, tension=.6] coordinates { (A) (B) (C) (D) (E) (F) (G) (H) (I) };

\filldraw [fill=black, thick] (0,0) circle (2pt);

\draw[dotted] (2,0)--(C);
\filldraw [fill=black, thick] (C) circle (2pt);
\draw[very thick] (0,0)--(2,0) node[midway,below]{\strut$\mathcal{T}_{t_0}^-\left[\overline{Q}\left(u_\ell^0\right)\right]$};
\filldraw [fill=white, thick] (2,0) circle (2pt);

\draw[dotted] (2.7,0)--(D);
\filldraw [fill=black, thick] (D) circle (2pt);
\draw[dotted] (5,0)--(F);
\filldraw [fill=black, thick] (F) circle (2pt);
\draw[very thick] (2.7,0)--(5,0) node[midway,below]{\strut$\mathcal{T}_{t_0}^+\left[\overline{Q}\left(u_\ell^0\right)\right]$};
\filldraw [fill=white, thick] (2.7,0) circle (2pt);
\filldraw [fill=white, thick] (5,0) circle (2pt);

\draw[dotted] (5.7,0) node[below]{\strut$t_0$} --(5.7,4);

\end{tikzpicture}
\caption{The sets $\mathcal{T}_{t_0}^\pm\left[\overline{Q}\left(u_\ell^0\right)\right]$ corresponding to an hypothetical maximum achievable free flow $\overline{Q}\left(u_\ell^0(t)\right)$, in the case \eqref{e:VOLA}, for some $t_0>0$.}
\label{f:Xtpm}
\end{figure}
We used square brackets, the usual operator notation, to emphasize that $\mathcal{T}^\pm$ are operators mapping the function $\overline{Q}\left(u_\ell^0\right)$ to the $t$-dependent time sets $\mathcal{T}_t^\pm\left[\overline{Q}\left(u_\ell^0\right)\right]$.
The sets $\mathcal{T}_t^\pm\left[\overline{Q}\left(u_\ell^0\right)\right]$ do not solely depend on the value of $u_\ell^0$ at time $t$, but they keep trace of its past history in the interval $[0, t]$. Under \eqref{e:VOLA} the sets $\mathcal{T}_t^\pm\left[\overline{Q}\left(u_\ell^0\right)\right]$ are separated, i.e., $\overline{\mathcal{T}_t^-\left[\overline{Q}\left(u_\ell^0\right)\right]}\cap \overline{\mathcal{T}_t^+\left[\overline{Q}\left(u_\ell^0\right)\right]}=\emptyset$.

The operating mode of the valve now follows, for any $t_0\geq0$. Notice that in the cases below we emphasize the consequences of the assumptions characterizing each of them by involving the function $h$, which shall be made explicit only in \eqref{e:h=H}.

\begin{enumerate}[label={\alph*)}]
\item 
\label{i:v-i}
$\overline{Q}\left(u_\ell^0(t_0)\right) \in (0,q_*^-)$. The valve is closed for small times $t \geq t_0$ and so 
\[
h(t)=0, \qquad q^0(t) = 0.
\]

\item
\label{i:v-ii} 
$\overline{Q}\left(u_\ell^0(t_0)\right) \in (q_*^+, +\infty)$. The valve is open for small times $t \geq t_0$ and
\[
h(t)=1, \qquad q^0(t) = q_*.
\]

\item
\label{i:v-iii} 
$\overline{Q}\left(u_\ell^0(t_0)\right) \in [q_*^-,q_*^+]$ and $\sup \mathcal{T}_{t_0}^-\left[\overline{Q}\left(u_\ell^0\right)\right] < \sup \mathcal{T}_{t_0}^+\left[\overline{Q}\left(u_\ell^0\right)\right] \doteq \tau^+$ for some $\tau^+\in[0,t_0]$, where we mean $\sup\emptyset=-\infty$. The valve is open for small times $t \geq t_0$; for such times we have
\[
h(t)=1, \qquad  q^0(t) = \min\left\{q_*,\overline{Q}\left(u_\ell^0(t)\right) \right\}.
\] 
Observe that $h(s) = 1$ and $q^0(s) = q_*$ for large enough $s<\tau^+$, by item \ref{i:v-ii}.
Moreover, for any $t\in (\tau^+,t_0]$ we still have $\overline{Q}\left(u_\ell^0(t)\right) \in [q_*^-,q_*^+]$, $h(t) = 1$ and $q^0(t) = \min\left\{q_*,\overline{Q}\left(u_\ell^0(t)\right) \right\}$, see \figurename~\ref{f:Vola}.
\begin{figure}[!htb]
\centering
\begin{tikzpicture}[x=10mm, y=7mm]
\coordinate (A) at (0,1);
\coordinate (B) at (1,.8);
\coordinate (C) at (2,2);
\coordinate (D) at (2.7,3);
\coordinate (E) at (4.2,4);
\coordinate (F) at (4.7,3.1);
\coordinate (G) at (5,3);
\coordinate (H) at (6,3);
\coordinate (I) at (6.3,2.9);
\coordinate (J) at (6.8,2.1);
\coordinate (K) at (7,2);
\coordinate (L) at (8,2);
\coordinate (M) at (8.8,2.1);
\coordinate (N) at (9.5,2.8);
\coordinate (O) at (10,3.5);
\draw[-latex] (0,0) -- (10.1,0) node[below] {\strut$t$}; 
\draw[-latex] (0,0) -- (0,4.1) node[left] {$\overline{Q}\left(u_\ell^0\right)$};
\draw[dashed] (0,2) node[left] {\strut$q_*^-$} --(10,2);
\draw[dashed] (0,2.5) node[left] {\strut$q_*$} --(10,2.5);
\draw[dashed] (0,3) node[left] {\strut$q_*^+$} --(10,3);
\draw[thick] plot [smooth, tension=.6] coordinates { (A) (B) (C) (D) (E) (F) (G) };
\draw[thick] (G) -- (H);
\draw[thick] plot [smooth, tension=.6] coordinates { (H) (I) (J) (K) };
\draw[thick] (K) -- (L);
\draw[thick] plot [smooth, tension=.6] coordinates { (L) (M) (N) (O) };

\filldraw [fill=black, thick] (0,0) circle (2pt);

\draw[dotted] (2,0)--(C);
\filldraw [fill=black, thick] (C) circle (2pt);
\draw[very thick] (0,0)--(2,0) node[midway,below]{\strut$\mathcal{T}_{t_0}^-\left[\overline{Q}\left(u_\ell^0\right)\right]$};
\filldraw [fill=white, thick] (2,0) circle (2pt);

\draw[dotted] (2.7,0)--(D);
\filldraw [fill=black, thick] (D) circle (2pt);
\draw[dotted] (5,0)--(G);
\filldraw [fill=black, thick] (G) circle (2pt);
\draw[very thick] (2.7,0)--(5,0) node[midway,below]{\strut$\mathcal{T}_{t_0}^+\left[\overline{Q}\left(u_\ell^0\right)\right]$};
\node[below] at (5,0) {\strut$\tau^{+}$};
\filldraw [fill=white, thick] (2.7,0) circle (2pt);
\filldraw [fill=white, thick] (5,0) circle (2pt);

\draw[dotted] (9,0) node[below]{\strut$t_0$} --(9,4);

\end{tikzpicture}
\caption{The sets $\mathcal{T}_{t_0}^\pm\left[\overline{Q}\left(u_\ell^0\right)\right]$ corresponding to an hypothetical maximum achievable free flow $\overline{Q}\left(u_\ell^0(t)\right)$, in the case \eqref{e:VOLA}, for some $t_0>0$, see item \ref{i:v-iii}.}
\label{f:Vola}
\end{figure}

\item
\label{i:v-iv} 
$\overline{Q}\left(u_\ell^0(t_0)\right) \in [q_*^-,q_*^+]$ and $\sup \mathcal{T}_{t_0}^-\left[\overline{Q}\left(u_\ell^0\right)\right] > \sup \mathcal{T}_{t_0}^+\left[\overline{Q}\left(u_\ell^0\right)\right]$. In this case the valve is closed for small times $t \geq t_0$ and so 
\[
h(t)=0, \qquad q^0(t) = 0.
\] 
This case can be discussed as the previous one. 

\item 
\label{i:v-v}
$\overline{Q}\left(u_\ell^0(t_0)\right) \in [q_*^-,q_*^+]$ and $\mathcal{T}_{t_0}^-\left[\overline{Q}\left(u_\ell^0\right)\right] = \mathcal{T}_{t_0}^+\left[\overline{Q}\left(u_\ell^0\right)\right] = \emptyset$.
In this case, for small enough $t > t_0$, we have
\[
h(t) = h_o,\qquad  q^0(t) = h_o \, \min\left\{q_*,\overline{Q}\left(u_\ell^0(t)\right) \right\}.
\]
\end{enumerate}

The opening and closing of the valve described above in items~\ref{i:v-i}-\ref{i:v-v} can be resumed by following the approach in \cite{vis}. We define the set
\begin{equation}\label{e:L}
\mathcal{L} \doteq\left([0,q_*^+]\times\{0\}\right)\cup\left([q_*^-, +\infty)\times\{1\}\right)\subseteq\R^2,
\end{equation}
see \figurename~\ref{f:LacunaCoil0}.
\begin{figure}[!htb]
\centering
\begin{tikzpicture}[x=15mm, y=15mm]
\draw[-latex] (0,0) -- (4,0) node[below] {\strut$q$}; 
\draw[-latex] (0,0) -- (0,1.5) node[left] {$h$};
\draw[dashed] (1,0) node[below] {\strut$q_*^-$} --(1,1);
\draw[dashed] (2,0) node[below] {\strut$q_*$}   --(2,1);
\draw[dashed] (3,0) node[below] {\strut$q_*^+$} --(3,1);

\draw[->] (.9,.9)--(.9,.1);
\draw[<-] (3.1,.9)--(3.1,.1);

\draw[ultra thick] (0,0)--(3,0);
\draw[ultra thick] (1,1)--(4,1);
\draw[dashed] (0,1) node[left] {\strut$1$} -- (1,1);
\filldraw [fill=black, ultra thick] (3,0) circle (2pt);
\filldraw [fill=black, ultra thick] (1,1) circle (2pt);
\filldraw [fill=black, ultra thick] (0,0) circle (2pt);
\end{tikzpicture}
\caption{The set $\mathcal{L}$ in \eqref{e:L}. If the maximum achievable free flow $\overline{Q}\left(u_\ell^0\right)$ increases from $0$, then the valve remains closed until $\overline{Q}\left(u_\ell^0\right)$ reaches the value $q_*^+$, when the valve opens. Analogously, if $\overline{Q}\left(u_\ell^0\right)$ decreases from values larger than $q_*^+$, then the valve remains open until the threshold $q_*^-$ is reached, when it closes.}
\label{f:LacunaCoil0}
\end{figure}
Assume $\left(\overline Q(u_\ell^0(0)),h_o\right)\in{\cal L}$. 
For $t>0$ we define
\begin{equation}
\label{e:hrelay}
H\left[\overline{Q}\left(u_\ell^0\right),h_o\right](t)\doteq
\left\{
\begin{array}{@{}l@{\quad}l@{}}
\displaystyle
h_o&\mbox{ if } \mathcal{T}_t^-\left[\overline{Q}\left(u_\ell^0\right)\right] = \emptyset = \mathcal{T}_t^+\left[\overline{Q}\left(u_\ell^0\right)\right],
\\[7pt]
\displaystyle
1&\mbox{ if } \sup \mathcal{T}_t^-\left[\overline{Q}\left(u_\ell^0\right)\right]<\sup \mathcal{T}_t^+\left[\overline{Q}\left(u_\ell^0\right)\right],
\\[7pt]
\displaystyle
0&\mbox{ if } \sup \mathcal{T}_t^-\left[\overline{Q}\left(u_\ell^0\right)\right]>\sup \mathcal{T}_t^+\left[\overline{Q}\left(u_\ell^0\right)\right].
\end{array}
\right.
\end{equation}
Notice that the cases where only one of $\mathcal{T}_t^-\left[\overline{Q}\left(u_\ell^0\right)\right]$ and $\mathcal{T}_t^+\left[\overline{Q}\left(u_\ell^0\right)\right]$ is the empty set enter in the last two items of \eqref{e:hrelay} because of the convention $\sup\emptyset=-\infty$.

Consequently, we have
\begin{equation}\label{e:h=H}
h(t) = H\left[\overline{Q}\left(u_\ell^0\right),h_o\right](t).
\end{equation}
As a consequence, we refer to $\overline{Q}\left(u_\ell^0\right)$ and $h$ as the {\em input} and {\em output} functions, respectively. 
Recalling \eqref{e:TT} and \eqref{e:hrelay}, if $\mathcal{T}_t^-\left[\overline{Q}\left(u_\ell^0\right)\right]$ and $\mathcal{T}_t^+\left[\overline{Q}\left(u_\ell^0\right)\right]$ are not both empty, then $h(t)$ given in \eqref{e:h=H} is determined solely by $\overline{Q}\left(u_\ell^0\right)$.
It is straightforward to check that 
\[
\left(\overline{Q}\left(u_\ell^0(t)\right),h(t)\right)\in\mathcal{L}, \quad \hbox{for every $t>0$.}
\]

\begin{rmk}\label{rmk:leftcont}
Let us note that to assume \eqref{e:hrelay} as definition for the case of left-continuous inputs is anyway coherent with the  standard way of performing such extension by a continuous approximation which \lq\lq monotonically fills the discontinuity jump\rq\rq, see \cite{bagmor, brospre, Recupero}.
\end{rmk}

\begin{rmk}
\label{r:caus}
The evolution described by \eqref{e:hrelay}, \eqref{e:h=H} is an example of causal-memory relation between the input $\overline{Q}\left(u_\ell^0\right)$ and the output $h$: $H\left[\overline{Q}\left(u_\ell^0\right),h_o\right](t)$ depends on the whole past history in $[0,t]$ of $\overline{Q}\left(u_\ell^0\right)$ (but not on the future).
\end{rmk}

We now show that relation \eqref{e:h=H} is independent from positive time-scaling (rate-independence) and satisfies a semi-group property, and hence it is an hysteresis relation in the sense of \cite[pages 59-60]{vis}.
\begin{prop}
The hysteresis relation \eqref{e:h=H} satisfies
\begin{equation}
\label{e:Landmvrks}
H\left[\overline{Q}\left(u_\ell^0\circ\varphi\right),h_o\right] = H\left[\overline{Q}\left(u_\ell^0\right),h_o\right]\circ\varphi,
\end{equation}
for all increasing homeomorphism $\varphi \colon [0, +\infty)\to[0, +\infty)$. Moreover, for all $\tau\in[0,t]$ we have
\begin{equation}
\label{e:Zytecki}
H\left[\overline{Q}\left(u_\ell^0\right), h_o\right](t) = H\left[\overline{Q}\left(u_\ell^0(\,\cdot\,+\tau)\right), H\left[\overline{Q}\left(u_\ell^0\right), h_o\right](\tau)\right](t-\tau).
\end{equation}
\end{prop}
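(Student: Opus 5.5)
The plan is to argue directly from the definition \eqref{e:hrelay}. The key observation is that $H[\overline{Q}(u_\ell^0),h_o](t)$ depends on the input only through the sets $\mathcal{T}_t^\pm[\overline{Q}(u_\ell^0)]$ in \eqref{e:TT}, and in fact only through two pieces of information: whether each set is empty, and the order of their suprema. So both identities reduce to understanding how these sets transform.

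\emph{Rate independence \eqref{e:Landmvrks}.} Write $g \doteq \overline{Q}(u_\ell^0)$, so that $\overline{Q}(u_\ell^0\circ\varphi) = g\circ\varphi$. Since $\varphi$ is an increasing homeomorphism of $[0,+\infty)$, we have $\varphi(0)=0$ and $\varphi([0,t]) = [0,\varphi(t)]$. Hence
\[
\mathcal{T}_t^\pm[g\circ\varphi] = \varphi^{-1}\left(\mathcal{T}_{\varphi(t)}^\pm[g]\right).
\]
From this:
\begin{itemize}
\item emptiness is preserved, since $\varphi$ is a bijection;
\item because $\varphi^{-1}$ is increasing and continuous, $\sup \varphi^{-1}(A) = \varphi^{-1}(\sup A)$ for nonempty $A$, with the convention $\sup\emptyset=-\infty$ preserved;
\item the strict order of the two suprema is therefore preserved.
\end{itemize}
Comparing the three cases of \eqref{e:hrelay} gives $H[g\circ\varphi,h_o](t) = H[g,h_o](\varphi(t))$.

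\emph{Semigroup property \eqref{e:Zytecki}.} Set $g_\tau \doteq g(\cdot+\tau)$ and $h_\tau \doteq H[g,h_o](\tau)$. Then
\[
\mathcal{T}_{t-\tau}^\pm[g_\tau] = \left(\mathcal{T}_t^\pm[g]\cap[\tau,t]\right) - \tau,
\]
and $\mathcal{T}_t^\pm[g] = \mathcal{T}_\tau^\pm[g] \cup \left(\mathcal{T}_t^\pm[g]\cap[\tau,t]\right)$. The argument splits into cases.
\begin{itemize}
\item \emph{Both shifted sets are empty.} The right-hand side equals $h_\tau$. Also $\mathcal{T}_t^\pm[g] = \mathcal{T}_\tau^\pm[g]$, so the left-hand side is $H[g,h_o](\tau) = h_\tau$.
\item \emph{At least one shifted set is nonempty.} Its supremum, shifted back by $+\tau$, is at least $\tau$. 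The earlier sets $\mathcal{T}_\tau^\pm[g]$ lie in $[0,\tau]$, so they do not change the order of the suprema in $[0,t]$, except possibly when a supremum equals $\tau$ on both sides.
\end{itemize}

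The main obstacle is this tie case: both suprema equal to $\tau$, or one shifted set empty while the other set's supremum is $\tau$. Here I would exploit that $\mathcal{T}^-$ and $\mathcal{T}^+$ are disjoint, because $q_*^-<q_*^+$. They cannot share the value $\tau$ as an element, so a tie can only occur as a limit. Two sub-cases remain:
\begin{itemize}
\item if the supremum is attained at $\tau$, it belongs to the shifted set, which resolves the order;
\item otherwise I would invoke the separation of the closures, valid under \eqref{e:VOLA}, or in general the left-continuity convention of Remark~\ref{rmk:leftcont}, to exclude equal suprema.
\end{itemize}
If ties cannot be fully excluded in the general $\BV$ setting, I would state the identity under the paper's standing assumption that the suprema of $\mathcal{T}^-$ and $\mathcal{T}^+$ never coincide, which is exactly what makes \eqref{e:hrelay} well defined as a three-case definition.
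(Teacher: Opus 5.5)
Your proposal is correct and follows essentially the same route as the paper: transport the sets $\mathcal{T}_t^\pm$ under $\varphi$ and under the time shift, split $\mathcal{T}_t^\pm[g]=\mathcal{T}_\tau^\pm[g]\cup\bigl(\mathcal{T}_t^\pm[g]\cap[\tau,t]\bigr)$, and go through the three cases of \eqref{e:hrelay}. You are more careful than the paper on two points: you write the correct identity $\mathcal{T}_t^\pm[g\circ\varphi]=\varphi^{-1}\bigl(\mathcal{T}_{\varphi(t)}^\pm[g]\bigr)$, and you isolate the tie case that the paper passes over when it concludes a strict inequality from $\sup\mathcal{T}_\tau^-\leq\tau\leq\sup\mathcal{T}_t^+$; in that tie $\tau$ lies in one of $\mathcal{T}_\tau^\pm[g]$ while the other has supremum $\tau$, so $H[g,h_o](\tau)$ is itself undefined by \eqref{e:hrelay}, and the identity holds whenever its right-hand side is defined, without separately invoking \eqref{e:VOLA} or left-continuity.
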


\begin{proof}
Condition \eqref{e:Landmvrks} holds because $\varphi(0)=0$ and $\mathcal{T}_t^\pm\left[\overline{Q}\left(u_\ell^0\circ\varphi\right)\right] = \mathcal{T}_{\varphi(t)}^\pm\left[\overline{Q}\left(u_\ell^0\right)\right]$, see \eqref{e:TT}.
We prove now \eqref{e:Zytecki}.
Fix $\tau\in[0,t]$.
By \eqref{e:TT} we have
\[\mathcal{T}_{t-\tau}^\pm\left[\overline{Q}\left(u_\ell^0(\,\cdot\,+\tau)\right)\right] = 
\left\{s\in[\tau,t] : \pm\left(\overline{Q}\left(u_\ell^0(s)\right) - q_*^-\right) > 0 \right\},\]
and therefore
\begin{equation}\label{e:aldd}
\mathcal{T}_{t}^\pm\left[\overline{Q}\left(u_\ell^0\right)\right] = \mathcal{T}_{\tau}^\pm\left[\overline{Q}\left(u_\ell^0\right)\right] \cup \mathcal{T}_{t-\tau}^\pm\left[\overline{Q}\left(u_\ell^0(\,\cdot\,+\tau)\right)\right].
\end{equation}
Observe that if $s_{1} \in \mathcal{T}_{\tau}^\pm\left[\overline{Q}\left(u_\ell^0\right)\right]$ and $ s_{2} \in \mathcal{T}_{t-\tau}^\pm\left[\overline{Q}\left(u_\ell^0(\,\cdot\,+\tau)\right)\right]$, then $s_{1} \leq \tau \leq s_{2}$.
This implies 
\begin{align*}
\mathcal{T}_{t-\tau}^\pm\left[\overline{Q}\left(u_\ell^0(\,\cdot\,+\tau)\right)\right] \neq \emptyset
&\Longrightarrow
\sup \mathcal{T}_{\tau}^\pm\left[\overline{Q}\left(u_\ell^0\right)\right] \leq \tau \leq\sup \mathcal{T}_{t}^\pm\left[\overline{Q}\left(u_\ell^0\right)\right] = \sup \mathcal{T}_{t-\tau}^\pm\left[\overline{Q}\left(u_\ell^0(\,\cdot\,+\tau)\right)\right] \leq t,
\\
\mathcal{T}_{t-\tau}^\pm\left[\overline{Q}\left(u_\ell^0(\,\cdot\,+\tau)\right)\right] = \emptyset
&\Longrightarrow
\sup \mathcal{T}_{t-\tau}^\pm\left[\overline{Q}\left(u_\ell^0(\,\cdot\,+\tau)\right)\right] \leq \sup \mathcal{T}_{t}^\pm\left[\overline{Q}\left(u_\ell^0\right)\right] = \sup \mathcal{T}_{\tau}^\pm\left[\overline{Q}\left(u_\ell^0\right)\right] \leq \tau.
\end{align*}

Based on  \eqref{e:hrelay}, we distinguish the following cases.
\begin{itemize}[leftmargin=*]

\item
If $\mathcal{T}_{t-\tau}^\pm\left[\overline{Q}\left(u_\ell^0(\,\cdot\,+\tau)\right)\right] = \emptyset$, then by \eqref{e:aldd} we deduce $\mathcal{T}_{t}^\pm\left[\overline{Q}\left(u_\ell^0\right)\right] = \mathcal{T}_{\tau}^\pm\left[\overline{Q}\left(u_\ell^0\right)\right]$. Then \eqref{e:Zytecki} holds true because of \eqref{e:hrelay}: indeed
\[H\left[\overline{Q}\left(u_\ell^0(\,\cdot\,+\tau)\right), H\left[\overline{Q}\left(u_\ell^0\right), h_o\right](\tau)\right](t-\tau) = H\left[\overline{Q}\left(u_\ell^0\right), h_o\right](\tau) = H\left[\overline{Q}\left(u_\ell^0\right), h_o\right](t).\]

\item
Assume that $\sup \mathcal{T}_{t-\tau}^-\left[\overline{Q}\left(u_\ell^0(\,\cdot\,+\tau)\right)\right] < \sup \mathcal{T}_{t-\tau}^+\left[\overline{Q}\left(u_\ell^0(\,\cdot\,+\tau)\right)\right]$.
In this case we have $\mathcal{T}_{t-\tau}^+\left[\overline{Q}\left(u_\ell^0(\,\cdot\,+\tau)\right)\right] \neq \emptyset$ and $\sup \mathcal{T}_t^+\left[\overline{Q}\left(u_\ell^0\right)\right] = \sup\mathcal{T}_{t-\tau}^+\left[\overline{Q}\left(u_\ell^0(\,\cdot\,+\tau)\right)\right] \geq \tau$.
Moreover
\begin{align*}
\mathcal{T}_{t-\tau}^-\left[\overline{Q}\left(u_\ell^0(\,\cdot\,+\tau)\right)\right] \neq \emptyset
&\Longrightarrow
\sup \mathcal{T}_t^-\left[\overline{Q}\left(u_\ell^0\right)\right] = \sup \mathcal{T}_{t-\tau}^-\left[\overline{Q}\left(u_\ell^0(\,\cdot\,+\tau)\right)\right] < \sup \mathcal{T}_{t-\tau}^+\left[\overline{Q}\left(u_\ell^0(\,\cdot\,+\tau)\right)\right],
\\
\mathcal{T}_{t-\tau}^-\left[\overline{Q}\left(u_\ell^0(\,\cdot\,+\tau)\right)\right] = \emptyset
&\Longrightarrow
\sup \mathcal{T}_t^-\left[\overline{Q}\left(u_\ell^0\right)\right] = \sup \mathcal{T}_\tau^-\left[\overline{Q}\left(u_\ell^0\right)\right] \leq \tau \leq \sup\mathcal{T}_{t-\tau}^+\left[\overline{Q}\left(u_\ell^0(\,\cdot\,+\tau)\right)\right].
\end{align*}
As a consequence, we have
\[\sup \mathcal{T}_t^-\left[\overline{Q}\left(u_\ell^0\right)\right] < \sup \mathcal{T}_{t-\tau}^+\left[\overline{Q}\left(u_\ell^0(\,\cdot\,+\tau)\right)\right] = \sup \mathcal{T}_t^+\left[\overline{Q}\left(u_\ell^0\right)\right].
\]
Therefore \eqref{e:Zytecki} holds true because by \eqref{e:hrelay}
\[H\left[\overline{Q}\left(u_\ell^0(\,\cdot\,+\tau)\right), H\left[\overline{Q}\left(u_\ell^0\right), h_o\right](\tau)\right](t-\tau) = 1 = H\left[\overline{Q}\left(u_\ell^0\right), h_o\right](t).\]

\item
Similarly to the previous case, if $\sup \mathcal{T}_{t-\tau}^+\left[\overline{Q}\left(u_\ell^0(\,\cdot\,+\tau)\right)\right] < \sup \mathcal{T}_{t-\tau}^-\left[\overline{Q}\left(u_\ell^0(\,\cdot\,+\tau)\right)\right]$, then
\[\sup \mathcal{T}_t^+\left[\overline{Q}\left(u_\ell^0\right)\right] < \sup \mathcal{T}_{t-\tau}^-\left[\overline{Q}\left(u_\ell^0(\,\cdot\,+\tau)\right)\right] = \sup \mathcal{T}_t^-\left[\overline{Q}\left(u_\ell^0\right)\right]\]
and \eqref{e:Zytecki} holds true because by \eqref{e:hrelay}
\[H\left[\overline{Q}\left(u_\ell^0(\,\cdot\,+\tau)\right), H\left[\overline{Q}\left(u_\ell^0\right), h_o\right](\tau)\right](t-\tau) = 0 = H\left[\overline{Q}\left(u_\ell^0\right), h_o\right](t).\]
\end{itemize}
This concludes the proof.
\end{proof}

\begin{rmk}
As stated in \cite{vis}, the causality (see Remark~\ref{r:caus}) and the rate-independence \eqref{e:Landmvrks} are necessary conditions to have hysteresis; on the other hand, the semi-group property \eqref{e:Zytecki} is not, but it is indeed a natural property to be required, especially in the context of evolutionary equations. Moreover, several analytical models representing real-life situations of hysteresis satisfy the semi-group property (see for instance the Relay, Play, Stop, Preisach, Duhem models in \cite{vis}). 
\end{rmk}

\begin{prop}
The dynamic behavior of $h(t)$ in \eqref{e:hrelay}-\eqref{e:h=H} coincides with that described in items \ref{i:v-i}-\ref{i:v-v}.
\end{prop}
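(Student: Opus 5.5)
The plan is to verify, item by item, that the formula \eqref{e:hrelay} together with \eqref{e:h=H} reproduces the prescriptions in items \ref{i:v-i}--\ref{i:v-v}. This is done under the standing assumption \eqref{e:VOLA}, so that $g(t)\doteq\overline{Q}\left(u_\ell^0(t)\right)$ is continuous. The basic tool is the semigroup property \eqref{e:Zytecki} with $\tau=t_0$: for $t\ge t_0$,
\[
H[g,h_o](t)=H\left[g(\,\cdot\,+t_0),H[g,h_o](t_0)\right](t-t_0).
\]
It therefore suffices to determine $H[g,h_o](t_0)$ and then the behaviour of $H$ on a short interval $[t_0,t_0+\delta)$ for the shifted input. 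The flow values $q^0(t)$ listed in the items then follow from $h(t)$ and the description of the valve: $q^0=0$ when $h=0$ by \eqref{e:csol-h}, and $q^0=\min\{q_*,g\}$ when $h=1$, which equals $q_*$ when $g>q_*^+$. So the real content concerns $h$ only.

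For items \ref{i:v-i} and \ref{i:v-ii} I will use continuity. If $g(t_0)<q_*^-$, there is $\delta>0$ with $g<q_*^-$ on $[t_0,t_0+\delta)$. Hence for such $t$ we have $t\in\mathcal{T}_t^-[g]$, so $\sup\mathcal{T}_t^-[g]=t$. On the other hand $\sup\mathcal{T}_t^+[g]<t$: either the set is empty, or, by separation of the closures of $\mathcal{T}_t^\pm[g]$, its supremum lies strictly before the current run of $\mathcal{T}^-$. Thus $h(t)=0$ by the third line of \eqref{e:hrelay}. The case $g(t_0)>q_*^+$ is symmetric and gives $h(t)=1$.

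For items \ref{i:v-iii}--\ref{i:v-v}, $g(t_0)\in[q_*^-,q_*^+]$. Take $t$ slightly larger than $t_0$. By \eqref{e:aldd}, $\mathcal{T}_t^\pm[g]=\mathcal{T}_{t_0}^\pm[g]\cup\mathcal{T}_{t-t_0}^\pm[g(\,\cdot\,+t_0)]$. In item \ref{i:v-iii} the hypothesis $\sup\mathcal{T}_{t_0}^-<\sup\mathcal{T}_{t_0}^+$ gives $h(t_0)=1$. I then argue that $h$ stays $1$ for small $t>t_0$ as long as $g$ does not drop below $q_*^-$. Indeed, the new part of $\mathcal{T}^-$ is empty on that interval, and a new part of $\mathcal{T}^+$ only increases $\sup\mathcal{T}^+$; in both cases the second line of \eqref{e:hrelay} applies.

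The step I expect to be most delicate is the meaning of "for small times $t\ge t_0$" at the boundary values $g(t_0)=q_*^\pm$. When $g(t_0)=q_*^-$ and $g$ immediately dips below, one has $h(t)=0$ for every $t>t_0$. The statement must then be read with $h=1$ holding at $t_0$ and in the limit $t\to t_0^+$ only in the sense of item~\ref{i:v-iii}'s limiting flow $\min\{q_*,g\}\to q_*^-$. I would handle this by showing that $H[g,h_o]$ is right-continuous at $t_0$ whenever $g(t_0)\in(q_*^-,q_*^+)$, and by treating the endpoint cases via the convention in Remark~\ref{rmk:leftcont}. The remaining statements of item \ref{i:v-iii} are immediate: for $t\in(\tau^+,t_0]$ we have $h(t)=1$ because $\sup\mathcal{T}_t^+=\tau^+>\sup\mathcal{T}_t^-$, and $g(t)\in[q_*^-,q_*^+]$ by the definition of $\tau^+$. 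Item \ref{i:v-iv} is the mirror argument. In item \ref{i:v-v} both sets are empty at $t_0$, hence $h(t_0)=h_o$ by the first line of \eqref{e:hrelay}, and they remain empty on a short interval as long as $g$ stays in $[q_*^-,q_*^+]$, which gives $h=h_o$.
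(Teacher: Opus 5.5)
Your argument is sound in its main lines, but it follows a different route from the paper.

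\textbf{The paper's route.} The paper never looks at a right neighbourhood of $t_0$. It fixes $t$ and splits according to the three lines of \eqref{e:hrelay}:
\begin{itemize}
\item If both sets are empty, then $\overline{Q}(u_\ell^0)\in[q_*^-,q_*^+]$ on $[0,t]$ and $h(t)=h_o$, which is item \ref{i:v-v}.
\item If $\sup\mathcal{T}_t^-<\sup\mathcal{T}_t^+$, then necessarily $\overline{Q}(u_\ell^0(t))\ge q_*^-$ (otherwise $t\in\mathcal{T}_t^-$), and $h(t)=1$; this is item \ref{i:v-ii} or \ref{i:v-iii}.
\item Symmetrically, the third line gives $h(t)=0$ with $\overline{Q}(u_\ell^0(t))\le q_*^+$, which is item \ref{i:v-i} or \ref{i:v-iv}.
\end{itemize}
This is a pointwise-in-time consistency check. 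It needs neither \eqref{e:Zytecki} nor any continuity-based persistence.

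\textbf{Your route.} Using \eqref{e:Zytecki} with $\tau=t_0$ and continuity, you prove more: $h$ is constant on a right neighbourhood of $t_0$ in items \ref{i:v-i} and \ref{i:v-ii}, and in items \ref{i:v-iii}--\ref{i:v-v} as long as $\overline{Q}(u_\ell^0)$ stays in $[q_*^-,q_*^+]$. This is closer to the wording ``for small times $t\ge t_0$''. It is also exactly what exposes the threshold cases, which the paper's pointwise reading never has to face.

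\textbf{The step that fails.} Your proposed repair of those threshold cases does not work. You correctly observe that if $\overline{Q}(u_\ell^0(t_0))=q_*^-$ in item \ref{i:v-iii}, and $t_0$ is a right accumulation point of $\{\overline{Q}(u_\ell^0)<q_*^-\}$, then $h=0$ on $(t_0,t_0+\delta)$. But then $q^0(t)=0$ there by \eqref{e:valveA02}. The flow jumps from $q_*^-$ to $0$, so there is no sense, ``in the limit $t\to t_0^+$'' or otherwise, in which item \ref{i:v-iii} holds after $t_0$.

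Right-continuity of $H$ genuinely fails at such times. The same happens in item \ref{i:v-iv} at $q_*^+$, and in item \ref{i:v-v} at either threshold for the appropriate $h_o$. Remark~\ref{rmk:leftcont} does not help: it concerns extending $H$ to discontinuous, left-continuous inputs. It supplies no convention for a continuous input touching a threshold, and that case is already settled by the strict inequalities in \eqref{e:TT}. So the literal ``for small times'' claim is false in these cases and cannot be proved.

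\textbf{How to fix it.} Drop the repair. Either:
\begin{itemize}
\item read the items pointwise, as the paper does: for $t>t_0$ after the crossing, the item that applies at time $t$ (with $t_0$ replaced by $t$) is a different one, e.g.\ \ref{i:v-i} or \ref{i:v-iv} after a downward crossing of $q_*^-$, and it prescribes exactly the value given by $H$; or
\item keep your persistence statements with the explicit proviso ``as long as $\overline{Q}(u_\ell^0)$ stays in $[q_*^-,q_*^+]$'', which you have already established.
\end{itemize}
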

\begin{proof}
If both $\mathcal{T}_{t}^-\left[\overline{Q}\left(u_\ell^0\right)\right]$ and $\mathcal{T}_{t}^+\left[\overline{Q}\left(u_\ell^0\right)\right]$ are empty, then $\overline{Q}\left(u_\ell^0(s)\right) \in [q_*^-,q_*^+]$ for any $s\in[0,t]$, and by \eqref{e:hrelay} we have $h(t)=0$ if $h_o=0$, while $h(t)=1$ if $h_o=1$, and this corresponds to the case described in \ref{i:v-v}.
If either $\mathcal{T}_{t}^-\left[\overline{Q}\left(u_\ell^0\right)\right]$ or $\mathcal{T}_{t}^+\left[\overline{Q}\left(u_\ell^0\right)\right]$ is a non-empty set, then either $\sup \mathcal{T}_{t}^-\left[\overline{Q}\left(u_\ell^0\right)\right] < \sup \mathcal{T}_{t}^+\left[\overline{Q}\left(u_\ell^0\right)\right]$ or $\sup \mathcal{T}_{t}^-\left[\overline{Q}\left(u_\ell^0\right)\right] > \sup \mathcal{T}_{t}^+\left[\overline{Q}\left(u_\ell^0\right)\right]$. In the former case $\overline{Q}\left(u_\ell^0(t)\right)\geq q_*^-$ and by \eqref{e:hrelay} we have $h(t) = 1$, and this corresponds to one of the cases described in \ref{i:v-ii} or \ref{i:v-iii}. At last, in the latter case $\overline{Q}\left(u_\ell^0(t)\right) \in (0,q_*^+]$ and by \eqref{e:hrelay} we have $h(t) = 0$, and this corresponds to one of the cases described in \ref{i:v-i} or \ref{i:v-iv}.
\end{proof}

If \eqref{e:VOLA} is not satisfied, but $u_\ell^0\in\BV_{\rm loc}([0, +\infty);\Omega)$ and is left-continuous, then also $\overline{Q}\left(u_\ell^0\right)\in\BV_{\rm loc}([0, +\infty);[0, +\infty))$ and is left-continuous by \eqref{e:SleepToken}, see \figurename~\ref{f:oQdisco}, and $H\left[\overline{Q}\left(u_\ell^0\right),h_o\right](t)$ given in \eqref{e:hrelay} is still well defined, see also Remark~\ref{rmk:leftcont}. We also note that the left-continuous feature is somehow suitable to treat evolutionary behavior with causal-memory relation. 
\begin{figure}[!htb]
\centering
\begin{tikzpicture}[x=12mm, y=7mm]
\coordinate (A) at (0,1);
\coordinate (B) at (1,.8);
\coordinate (C) at (2.5,1.5);
\coordinate (D) at (2.5,2.7);
\coordinate (E) at (3,3);
\coordinate (F) at (4.7,4.1);
\coordinate (G) at (5,3);
\coordinate (H) at (5,1);
\coordinate (H1) at (5.3,.5);
\coordinate (H2) at (5.8,1.7);
\coordinate (I0) at (5.8,3.5);
\coordinate (I) at (6.2,3.2);
\coordinate (J) at (6.6,2.1);
\coordinate (K0) at (6.8,2.1);
\coordinate (K) at (7,2);
\coordinate (L) at (8,2);
\coordinate (M) at (8.8,2.1);
\coordinate (N) at (9.5,2.8);
\coordinate (O) at (10,3.5);
\draw[-latex] (0,0) -- (10.1,0) node[below] {\strut$t$}; 
\draw[-latex] (0,0) -- (0,4.1) node[left] {$\overline{Q}\left(u_\ell^0\right)$};
\draw[dashed] (0,2) node[left] {\strut$q_*^-$} --(10,2);
\draw[dashed] (0,2.5) node[left] {\strut$q_*$} --(10,2.5);
\draw[dashed] (0,3) node[left] {\strut$q_*^+$} --(10,3);
\draw[thick] plot [smooth, tension=.6] coordinates { (A) (B) (C) };
\draw[thick] plot [smooth, tension=.6] coordinates { (D) (E) (F) (G) };
\draw[thick] plot [smooth, tension=.6] coordinates { (H) (H1) (H2) };
\draw[thick] plot [smooth, tension=.6] coordinates { (I0) (I) (J) (K0) (K) };
\draw[thick] (K) -- (L);
\draw[thick] plot [smooth, tension=.6] coordinates { (L) (M) (N) (O) };

\draw[dotted] (2.5,0)--(C);
\draw[very thick] (0,0)--(2.5,0) node[midway,below]{\strut$\mathcal{T}_{t_0}^-\left[\overline{Q}\left(u_\ell^0\right)\right]$};
\filldraw [fill=white, thick] (2.5,0) circle (2pt);

\draw[dotted] (3,0)--(E);
\draw[very thick] (3,0)--(5,0) node[midway,below]{\strut$\mathcal{T}_{t_0}^+\left[\overline{Q}\left(u_\ell^0\right)\right]$};

\draw[dotted] (5,-1) node[below, inner sep=0pt]{\strut$t_0$} --(G);

\filldraw [fill=black, thick] (0,0) circle (2pt);
\filldraw [fill=black, thick] (A) circle (2pt);
\filldraw [fill=black, thick] (C) circle (2pt);
\filldraw [fill=white, thick] (D) circle (2pt);
\filldraw [fill=black, thick] (E) circle (2pt);
\filldraw [fill=white, thick] (3,0) circle (2pt);
\filldraw [fill=black, thick] (G) circle (2pt);
\filldraw [fill=white, thick] (H) circle (2pt);
\filldraw [fill=black, thick] (H2) circle (2pt);
\filldraw [fill=white, thick] (I0) circle (2pt);
\filldraw [fill=black, thick] (2.5,0) circle (2pt);
\filldraw [fill=white, thick] (5,0) circle (2pt);

\end{tikzpicture}
\caption{The sets $\mathcal{T}_{t_0}^\pm\left[\overline{Q}\left(u_\ell^0\right)\right]$ corresponding to an hypothetical maximum achievable free flow $\overline{Q}\left(u_\ell^0\right)$ in $\BV_{\rm loc}([0, +\infty);[0, +\infty))$ and left-continuous, for some $t_0>0$.}
\label{f:oQdisco}
\end{figure}
At last, the flow through the valve is
\begin{equation}
\label{e:valveA02}
q^0(t) = h(t) \min\left\{q_*,\overline{Q}\left(u(t,0^-)\right)\right\},
\end{equation}
where $h$ is defined in \eqref{e:h=H}.

Finally, the case when \eqref{e:uhreg} does not hold true corresponds to the chattering of the valve.
We will discuss this case later, see Remark~\ref{r:PorcupineTree}.

\subsection{The c-Riemann solver associated to the valve}\label{subsec:cRiemannValve}

In this section we introduce the c-Riemann solver corresponding to the valve described in Section~\ref{s:descr} in the case \eqref{e:uhreg} holds true, namely, in the case without the chattering of the valve or, equivalently, for coherent Riemann initial data.
By this we mean that $\hc(u_\ell,h_o)$ and $\qc(u_\ell,h_o)$, given below in \eqref{e:hc} and \eqref{e:qc}, coincide with $h(t)$ and $q^{0}(t)$, $t>0$, with $h$ defined in \eqref{e:h=H} and $q^{0}$ in \eqref{e:valveA02}, and correspond to the Riemann initial datum $(u_{\ell},u_{r},h_{o})\in \mathsf{CH}$.

The c-function $C \colon \dom \times \{0,1\} \to \dom \times \{0,1\}$ is defined as follows. For any $(u_\ell,u_r,h_o) \in \dom \times \{0,1\}$, we define
\begin{align}
\label{e:hc}
\hc(u_\ell,h_o) &\doteq 
\left\{
\begin{array}{@{}l@{\quad}l@{}}
h_o &\mbox{ if } \overline{Q}\left(u_\ell\right) \in [q_*^-,q_*^+],
\\[2pt]
1&\mbox{ if } \overline{Q}\left(u_\ell\right) > q_*^+,
\\[2pt]
0&\mbox{ if } \overline{Q}\left(u_\ell\right) < q_*^-,
\end{array}
\right.
\\
\label{e:qc}
\qc(u_\ell,h_o) &\doteq \hc(u_\ell,h_o) \, \min\left\{ q_* , \overline{Q}(u_\ell) \right\}.
\end{align}
Then, the functions $\ucm(u_\ell,h_o) \in \Gamma^{-}(u_{\ell})$, $\ucp(u_\ell,u_r,h_o) \in \Gamma^{+}(u_{r})$ are uniquely determined by \eqref{e:ucpm} and so $C$, as in Definition~\ref{d:RSV}, see \eqref{e:C}, is well defined.
We then introduce the corresponding c-Riemann solver
\begin{equation}
\label{e:InFlames}
\rsc[u_\ell,u_r,h_o](\xi) =
\begin{cases}
\left(\rss[u_\ell,\ucm](\xi),\hc\right) &\hbox{if } \xi<0,
\\
\left(\rss[\ucp,u_r](\xi),\hc\right) &\hbox{if } \xi\geq0.
\end{cases}
\end{equation}
Observe that $\qc(u_\ell,h_o)\in \left\{0\right\}\cup[q_*^-,q_*]$ by \eqref{e:hc} and \eqref{e:qc}.
Moreover, $\qc(u_\ell,h_o) = 0$ if and only if $\hc(u_\ell,h_o) = 0$, that is, if and only if either $\overline{Q}\left(u_\ell\right) < q_*^-$ or $\overline{Q}\left(u_\ell\right) \in [q_*^-,q_*^+]$ and $h_{o}=0$.

In the following proposition we investigate the link between the valve described in Section~\ref{s:descr}, in the case \eqref{e:uhreg} holds true, and condition \eqref{e:coherent}\textsubscript{$2$}, which is necessary for the coherence of the c-Riemann solver defined in \eqref{e:InFlames}, see Definition~\ref{d:coherent}.
\begin{prop}
\label{p:equi}
Let $(u,h)$ be a global c-solution of the Cauchy problem \eqref{e:csystem}, in the sense of Definition~\ref{d:csol}, corresponding to the initial datum $(u_o,h_o)$.
For a fixed $t>0$, the following conditions are equivalent:
\begin{gather}
\label{e:Moodring1}
h(t) = H\left[\overline{Q}\left(u(\,\cdot\,,0^{-})\right),h_o\right](t),
\qquad
q(t,0) = h(t) \min\left\{q_*,\overline{Q}\left(u(t,0^-)\right)\right\},
\qquad
v\left(u(t,0^+)\right) \leq a,
\\
\label{e:Moodring2}
C\left(u(t,0^-),u(t,0^+),h(t)\right) = \left(u(t,0^-),u(t,0^+),h(t)\right),
\end{gather}
where $H$ is defined in \eqref{e:hrelay}.
\end{prop}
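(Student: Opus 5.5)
The plan is to unfold the fixed-point identity \eqref{e:Moodring2} componentwise, using the explicit formulas \eqref{e:hc}, \eqref{e:qc} and \eqref{e:ucpm}. Write $u^\pm \doteq u(t,0^\pm)$ and $h\doteq h(t)$. By \eqref{e:csol-q} we have $q^- = q^+ = q(t,0) \doteq q^0$. Then \eqref{e:Moodring2} is equivalent to four identities:
\begin{align*}
&\hc(u^-,h)=h, & &\qc(u^-,h)=q^0, & &\ucm(u^-,h)=u^-, & &\ucp(u^-,u^+,h)=u^+.
\end{align*}
I will match these to the three conditions in \eqref{e:Moodring1}. The first identity should correspond to the relay condition. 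The second should correspond to the flow condition. The third and fourth should correspond to the sonic condition $v(u^+)\le a$; the third should also hold automatically once the flow condition holds.

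\emph{The $h$-component.} By \eqref{e:hc}, the identity $\hc(u^-,h)=h$ holds exactly when $(\overline{Q}(u^-),h)\in\mathcal{L}$, with $\mathcal{L}$ as in \eqref{e:L}. If $h=H[\overline{Q}(u(\cdot,0^-)),h_o](t)$, then $(\overline{Q}(u^-),h)\in\mathcal{L}$ by the property stated just after \eqref{e:h=H}. For the converse I would use that for the valve under consideration the output $h$ of a c-solution is prescribed by \eqref{e:h=H}. Membership in $\mathcal{L}$ at time $t$ then leaves only the value compatible with the relay, so the two conditions agree at $t$.

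This converse is the delicate point. A pointwise condition at $t$ cannot by itself reconstruct the history contained in $H$. I would therefore argue inside the class of c-solutions of the valve, where $h$ is already given by the relay. What is really proved at time $t$ is then the equivalence between $(\overline{Q}(u^-),h)\in\mathcal{L}$ and the agreement of $h$ with the relay there. If this route does not close, the fallback is to use the semigroup property \eqref{e:Zytecki} to restart the relay at times just before $t$.

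\emph{The flow component and $\ucm$.} Once $\hc=h$, we have $\qc(u^-,h)=h\min\{q_*,\overline{Q}(u^-)\}$. The identity $\qc=q^0$ is then literally the second condition in \eqref{e:Moodring1}. Next, with $\qc=q^-$, formula \eqref{e:ucpm} gives $\ucm=u^-$ when $v(u^-)>a$. When $v(u^-)\le a$ it gives $\ucm=\hat u(q^-,u^-)$. This equals $u^-$ because $\mathsf{FL}_1^{u^-}$ is strictly concave (Lemma~\ref{l:JSatriani}) and has slope $\lambda_1(u^-)=v(u^-)-a\le0$ at $u^-$. Hence $u^-$ is the intersection with $q=q^-$ having the largest $\rho$-coordinate. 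So the third identity is automatic under the flow condition.

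\emph{The component $\ucp$.} Here I would show that $\ucp(u^-,u^+,h)=u^+$ if and only if $v(u^+)\le a$, given $q^0\ge0$. Suppose $v(u^+)\in[0,a]$. The curve $\mathsf{BL}_2^{u^+}$ is strictly convex and has slope $\lambda_2(u^+)=v(u^+)+a>0$ at $u^+$, which gives $q^+\le q^a(u^+)$ and $\check u(q^+,u^+)=u^+$. Conversely, by \eqref{e:ucpm} we always have $v(\ucp)\le a$, so $\ucp=u^+$ forces $v(u^+)\le a$. If $q^0=0$, I would check that $\check u(0,u^+)=u^+$ directly.

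Combining the four identities gives the equivalence of \eqref{e:Moodring1} and \eqref{e:Moodring2}. I expect the main obstacle to be the relay identification in the $h$-component step, rather than the geometric checks on the Lax curves.
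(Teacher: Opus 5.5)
Your argument for the forward implication, and for the second and third conditions in the converse, is correct and is essentially the paper's. The paper obtains $h(t)=\hc(u(t,0^-),h(t))$ from \eqref{e:Zytecki} with $\tau=t$, which is exactly your membership $(\overline{Q}(u^-),h)\in\mathcal{L}$. It then appeals to \eqref{e:ucpm} and Propositions~\ref{p:Korn}, \ref{p:Loathe}, where you supply the explicit slope and concavity checks: $\hat u(q^-,u^-)=u^-$ since $\lambda_1(u^-)\le0$, and $q^+\le q^a(u^+)$, $\check u(q^+,u^+)=u^+$ since $\lambda_2(u^+)>0$. Those checks are fine.

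The gap is the one you flag yourself: deducing $h(t)=H[\overline{Q}(u(\cdot,0^-)),h_o](t)$ from \eqref{e:Moodring2}. Neither of your routes closes it, and none can. For c-solutions in the sense of Definition~\ref{d:csol}, which constrain $h$ only through $h=0\Rightarrow q^0=0$, this implication is false. Here is a counterexample:
\begin{itemize}
\item Take $\rho_o>0$ with $a\rho_o/e\in[q_*^-,q_*^+]$, set $u_o\doteq(\rho_o,0)$ and $h_o\doteq1$, and let $u\equiv u_o$, $h(0)=1$, $h(t)=0$ for $t>0$. This is a global c-solution with $q^0\equiv0$.
\item Since $\overline{Q}(u_o)=a\rho_o/e\in[q_*^-,q_*^+]$, we have $\hc(u_o,0)=0=\qc(u_o,0)$.
\item $\hat u(0,u_o)=u_o=\check u(0,u_o)$, because $\mathsf{FL}_1^{u_o}$ and $\mathsf{BL}_2^{u_o}$ vanish only at $\rho=\rho_o$.
\item Hence $C(u_o,u_o,0)=(u_o,u_o,0)$, so \eqref{e:Moodring2} holds for every $t>0$, and so do the second and third conditions of \eqref{e:Moodring1}.
\item But $\mathcal{T}_t^\pm[\overline{Q}(u(\cdot,0^-))]=\emptyset$, so $H[\overline{Q}(u(\cdot,0^-)),h_o](t)=h_o=1\neq h(t)$.
\end{itemize}

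This has three consequences for your proposal. First, the equivalence you say is ``really proved'' fails in the direction from $\mathcal{L}$ to the relay: for $\overline{Q}(u^-)\in[q_*^-,q_*^+]$ both values of $h$ lie in $\mathcal{L}$, and only the history selects one. Second, the fallback via \eqref{e:Zytecki} merely moves the unknown to $H[\overline{Q}(u(\cdot,0^-)),h_o](\tau)$. Third, assuming that $h$ is given by \eqref{e:h=H} turns the first condition of \eqref{e:Moodring1} into a hypothesis rather than a conclusion.

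Under that assumption your argument does give the full equivalence. Without it, the correct statement replaces the first condition of \eqref{e:Moodring1} by $\hc(u(t,0^-),h(t))=h(t)$. The paper's own proof does not cover this point either, since it disposes of the converse with ``analogous arguments'', which do not apply to the history-dependent component.
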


\begin{proof}
Assume that $h(t)$ and $u(t,0^\pm)$ satisfy \eqref{e:Moodring1}.
By \eqref{e:Zytecki} with $\tau=t$, \eqref{e:hrelay}, \eqref{e:TT} and \eqref{e:hc} we have
\begin{align*}
h(t) &= 
H\left[\overline{Q}\left(u(\,\cdot\,,0^{-})\right),h_o\right](t) =
H\left[\overline{Q}\left(u(t,0^{-})\right),H\left[\overline{Q}\left(u(\,\cdot\,,0^{-})\right),h_o\right](t)\right](0) 
\\&=
\left\{
\begin{array}{@{}l@{\quad}l@{}}
\displaystyle
H\left[\overline{Q}\left(u(\,\cdot\,,0^{-})\right),h_o\right](t)&\mbox{ if } \mathcal{T}_0^-\left[\overline{Q}\left(u(t,0^{-})\right)\right] = \emptyset = \mathcal{T}_0^+\left[\overline{Q}\left(u(t,0^{-})\right)\right]
\\[7pt]
\displaystyle
1&\mbox{ if } \sup \mathcal{T}_0^-\left[\overline{Q}\left(u(t,0^{-})\right)\right]<\sup \mathcal{T}_0^+\left[\overline{Q}\left(u(t,0^{-})\right)\right]
\\[7pt]
\displaystyle
0&\mbox{ if } \sup \mathcal{T}_0^-\left[\overline{Q}\left(u(t,0^{-})\right)\right]>\sup \mathcal{T}_0^+\left[\overline{Q}\left(u(t,0^{-})\right)\right]
\end{array}
\right.
\\&=
\left\{
\begin{array}{@{}l@{\quad}l@{}}
h(t) &\mbox{ if } \overline{Q}\left(u(t,0^{-})\right) \in [q_*^-,q_*^+]
\\[5pt]
1&\mbox{ if } \overline{Q}\left(u(t,0^{-})\right) > q_*^+
\\[5pt]
0&\mbox{ if } \overline{Q}\left(u(t,0^{-})\right) < q_*^-
\end{array}
\right.
\\&=
\hc(u(t,0^{-}),h(t)).
\end{align*}
We have then
\[q(t,0) = h(t) \min\left\{q_*,\overline{Q}\left(u(t,0^-)\right)\right\} = \hc(u(t,0^{-}),h(t))  \min\left\{q_*,\overline{Q}\left(u(t,0^-)\right)\right\} = \qc\left(u(t,0^-),h(t)\right).\]
By \eqref{e:Moodring1}\textsubscript{$3$}, Propositions~\ref{p:Korn}, \ref{p:Loathe}, \eqref{e:Gammaum} and \eqref{e:Gammaup}, this ensures that \eqref{e:Moodring2} holds, see \eqref{e:ucpm}.
The converse statement follows from analogous arguments; this completes the proof.
\end{proof}

\section{Main results}
\label{s:main}
\setcounter{equation}{0}

In this section we collect our main results.
For any $(u_\ell,u_r,h_o) \in \dom \times \{0,1\}$, the map $(u(t,x),h(t)) = \rsc[u_\ell,u_r,h_o](x/t)$ is by construction a c-solution in the sense of Definition~\ref{d:csol}.
However, in general it may fail to be a coherent c-solution since condition
\eqref{e:coCau} may not be satisfied.
This is a consequence of the lack of coherence of $\rsc$ in $\dom\times\{0,1\}$, as shown in the following proposition.
In fact, we show that for some initial data $(u_\ell,u_r,h_o)$ the valve closes, i.e.\ $\qc(u_\ell,h_o) = 0$, but the resulting traces at the valve position, $\left(\ucm(u_\ell,h_o),\ucp(u_\ell,u_r,h_o)\right) = \left(\hat{u}(0,u_{\ell}),\check{u}(0,u_{r})\right)$, in turn force the valve to open because $\overline{Q}\left(\hat{u}(0,u_{\ell})\right) > q_{*}^{+}$, so that the flow through the valve is $\qc(\hat{u}(0,u_{\ell}),0) = q_{*}$.

We need to introduce some notation, see \figurename~\ref{f:CH}.
First we denote:
\begin{itemize}
\item 
$u_{*a}^{+} \doteq \left(q_{*}^{+}/a,q_{*}^{+}\right)$ as the intersection of the sonic line $q=a\,\rho$ with the horizontal line $q=q_{*}^+$;
\item 
$u_{*0}^+$ as the unique element of $\mathsf{BL}_{1}^{u_{*a}^+}$ with $q=0$;
\item 
$u_{*a0}^{+}$ and $u_{*a0}^{-}$ as the states with the smallest $\rho$-coordinates in the intersections of $\mathsf{BL}_{1}^{u_{*0}^+}$ with $q=q_{*}^{+}$ and $q=q_{*}^{-}$, respectively.
\end{itemize}
At last we define
\begin{equation}
\label{e:NIN}
\begin{aligned}
\mathsf{CH}_{\ell,0}^{\scriptscriptstyle\complement}(q_*^+) &\doteq 
\left\{ (\rho,q) \in \Omega : \rho < \rho_{*a0}^{+},\ \mathcal{BL}_1^{u_{*0}^+}(\rho) < q \leq q_*^+ \right\},
\\
\mathsf{CH}_{\ell,1}^{\scriptscriptstyle\complement}(q_*^-,q_*^+) &\doteq 
\left\{ (\rho,q) \in \Omega : \rho < \rho_{*a0}^{-},\ \mathcal{BL}_1^{u_{*0}^+}(\rho) < q < q_*^- \right\}.
\end{aligned}
\end{equation}
The index $\ell$ recalls that the sets defined in \eqref{e:NIN} will be the $u_\ell$-components of $(u_\ell,u_r,h)$ in $\mathsf{CH}^{\scriptscriptstyle\complement}$, see Definition~\ref{d:coherent} and Theorem~\ref{t:coherence}.
The indices $0$ and $1$ refer to $h_o=0$ and $h_o=1$, respectively, see \figurename~\ref{f:CH} and \eqref{e:incodom2}.
\begin{figure}[!htb]\centering\maxsizebox{.99\linewidth}{!}{
\begin{tikzpicture}[every node/.style={anchor=south west,inner sep=0pt},x=1mm, y=1mm]
\node at (4,4) {\includegraphics[width=50mm]{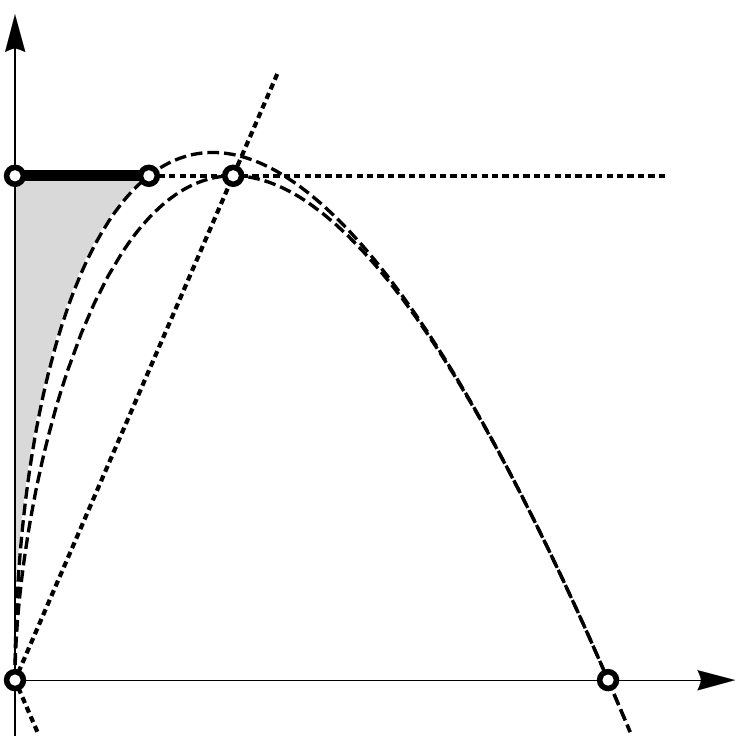}};
\node[style={anchor=center}] at (25,55) {\strut $h_{o}=0$};
\node at (55,7) {\strut $\rho$};
\node at (0,50) {\strut $q$};
\draw[latex-] (11,35) --++ (7,-2) node[right] {\strut $\mathsf{BL}_1^{u_{*a}^{+}}$};
\draw[latex-] (9,36) --++ (-7,2) node[left] {\strut $\mathsf{BL}_1^{u_{*0}^{+}}$};
\node at (19,37.5) {\strut $u_{*a}^{+}$};
\node at (40,4) {\strut $u_{*0}^{+}$};
\node at (0,41) {\strut $q_{*}^{+}$};
\node at (9,43.5) {\strut $u_{*a0}^{+}$};
\begin{scope}[shift={(70,0)}]
\node at (4,4) {\includegraphics[width=50mm]{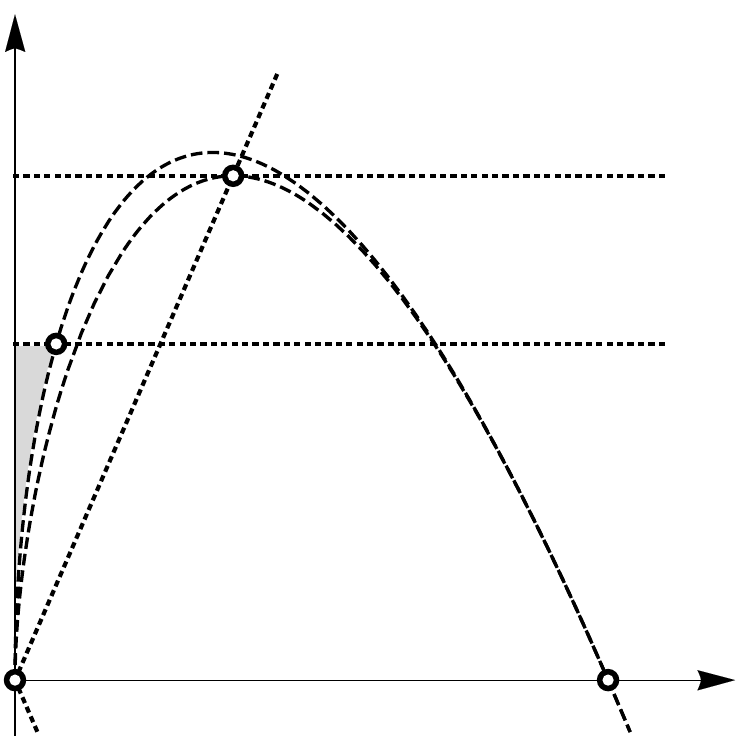}};
\node[style={anchor=center}] at (25,55) {\strut $h_{o}=1$};
\node at (55,7) {\strut $\rho$};
\node at (0,50) {\strut $q$};
\draw[latex-] (14,39) --++ (6,-4) node[right] {\strut $\mathsf{BL}_1^{u_{*a}^{+}}$};
\draw[latex-] (12,41) --++ (-9,6) node[left] {\strut $\mathsf{BL}_1^{u_{*0}^{+}}$};
\node at (19,37.5) {\strut $u_{*a}^{+}$};
\node at (40,4) {\strut $u_{*0}^{+}$};
\node at (0,29) {\strut $q_{*}^{-}$};
\node at (0,41) {\strut $q_{*}^{+}$};
\draw[latex-] (7,32) --++ (-5,4) node[left] {\strut $u_{*a0}^{-}$};
\end{scope}
\end{tikzpicture}}
\caption{Representation of $\mathsf{CH}_{\ell,0}^{\scriptscriptstyle\complement}(q_*^+)$ and $\mathsf{CH}_{\ell,1}^{\scriptscriptstyle\complement}(q_*^-,q_*^+)$, see \eqref{e:NIN}, corresponding to the cases $h_{o}=0$, left, and $h_{o}=1$, right.}
\label{f:CH}
\end{figure}
\begin{teo}
\label{t:coherence}
We have
\begin{subequations}
\label{e:incodom}
\begin{align}
\label{e:incodom1}
\mathsf{CH}^{\scriptscriptstyle\complement} &= \left\{
(u_\ell,u_r,h_o)\in\dom\times\{0,1\} : \qc(u_\ell,h_o) = 0, \ \overline{Q}\left(\hat{u}(0,u_{\ell})\right) > q_{*}^{+} \right\}
\\&=
\left(\mathsf{CH}_{\ell,0}^{\scriptscriptstyle\complement}(q_*^+) \times \Omega \times \{0\}\right)
\cup
\left(\mathsf{CH}_{\ell,1}^{\scriptscriptstyle\complement}(q_*^-,q_*^+) \times \Omega \times \{1\}\right).
\label{e:incodom2}
\end{align}
\end{subequations}
Furthermore, if $(u_{\ell},u_r,h_o) \in \mathsf{CH}^{\scriptscriptstyle\complement}$ then $v(u_\ell)>a$.
\end{teo}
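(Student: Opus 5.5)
We prove \eqref{e:incodom1} first, then turn it into the explicit description \eqref{e:incodom2}, and obtain $v(u_\ell)>a$ along the way.

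\textbf{Step 1: reduction of coherence to a single condition.} Fix $(u_\ell,u_r,h_o)$ and set $(\ucm,\ucp,\hc)=C(u_\ell,u_r,h_o)$. Since $\ucm\in\Gamma^-(u_\ell)$ and $\ucp\in\Gamma^+(u_r)$, the pair $(\ucm,\ucp)$ lies in $\dom$. By \eqref{e:ucpm}, $v(\ucp)\le a$. So coherence amounts to $C(\ucm,\ucp,\hc)=(\ucm,\ucp,\hc)$, that is,
\[
\hc(\ucm,\hc)=\hc \qquad\text{and}\qquad \qc(\ucm,\hc)=\qc .
\]
Once these two hold, the traces are reproduced by \eqref{e:ucpm}. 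Indeed, $\ucm\in\Gamma^-(\ucm)$ with $\ucm$ itself selected, and $\ucp\in\Gamma^+(\ucp)$ with $\check u(\qc,\ucp)=\ucp$ or the sonic state, as appropriate.

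\emph{Case $\hc=1$.} Then $\qc=\min\{q_*,\overline{Q}(u_\ell)\}$ and $\ucm$ lies on $\mathsf{FL}_1^{u_\ell}$ at the level $\qc$. We must check that
\[
\overline{Q}(\ucm)\ge \min\{q_*,\overline{Q}(u_\ell)\}.
\]
This is automatic because $\overline{Q}(u)\ge q$ (lemma after \eqref{e:SleepToken}). It also gives $\overline{Q}(\ucm)\ge q_*^-$, so $\hc(\ucm,1)=1$. We further need $\min\{q_*,\overline{Q}(\ucm)\}=\qc$. If $\qc=q_*$ this is clear. If $\qc=\overline{Q}(u_\ell)<q_*$, we use that the maximum of $q$ along $\mathsf{FL}_1^{u_\ell}$ is attained at the sonic point $\bar u$. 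Then $\ucm=\bar u$, or $\ucm=u_\ell$ in the supersonic case, and a direct computation gives $\overline{Q}(\bar u)=\bar q=\overline{Q}(u_\ell)$. The identity $\overline{Q}$ is constant along $\mathsf{R}_1$ up to the sonic point should be checked from \eqref{e:laxRS}, but this is routine. Hence $\hc=1$ is always coherent.

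\emph{Case $\hc=0$.} Then $\qc=0$ and $\ucm=\hat u(0,u_\ell)$. Coherence requires $\hc(\hat u(0,u_\ell),0)=0$, i.e. $\overline{Q}(\hat u(0,u_\ell))\le q_*^+$. Its failure is exactly the condition in \eqref{e:incodom1}, which proves the first equality.

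\textbf{Step 2: explicit description.} Write $w=\hat u(0,u_\ell)$. This state has $q=0$, hence $v(w)=0<a$, so
\[
\overline{Q}(w)=\frac{a\,\rho_w}{e}.
\]
Thus $\overline{Q}(w)>q_*^+$ if and only if $\rho_w>\rho_{*0}^+$, where $u_{*0}^+$ is defined so that the sonic state at level $q_*^+$ (namely $u_{*a}^+$, since $\overline{Q}$ is constant on $\mathsf{BL}_1^{u_{*a}^+}$ below it) lies on its backward $1$-curve. The key monotonicity is that $\hat u(0,u_\ell)$ has larger density than $u_{*0}^+$ exactly when $u_\ell$ lies strictly above $\mathsf{BL}_1^{u_{*0}^+}$ in the relevant region. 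This uses that forward and backward $1$-curves through a fixed point coincide as sets up to relabeling, together with the ordering of the curves in Lemma~\ref{l:JSatriani}\ref{i:d} and item (e).

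It remains to intersect with $\qc(u_\ell,h_o)=0$:
\begin{itemize}
\item for $h_o=0$, this means $\overline{Q}(u_\ell)\le q_*^+$;
\item for $h_o=1$, this means $\overline{Q}(u_\ell)<q_*^-$.
\end{itemize}
Above $\mathsf{BL}_1^{u_{*0}^+}$ we then show that $u_\ell$ must be supersonic. Indeed, if $v(u_\ell)\le a$, then $\overline{Q}(u_\ell)=\bar q(u_\ell)=\overline{Q}(w)>q_*^+$ by the invariance above, a contradiction. Hence $v(u_\ell)>a$ and $\overline{Q}(u_\ell)=q_\ell$, which yields the bounds $q\le q_*^+$ and $q<q_*^-$ respectively. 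The density restrictions $\rho<\rho_{*a0}^\pm$ come from restricting to the supersonic branch of $\mathsf{BL}_1^{u_{*0}^+}$ (a shock curve in $\rho<\rho_{*0}^+$) cut at the corresponding levels. This gives \eqref{e:NIN} and the final claim $v(u_\ell)>a$.

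\textbf{Main obstacle.} The hard part will be the geometric equivalence ``$\hat\rho(0,u_\ell)>\rho_{*0}^+$ if and only if $u_\ell$ lies above $\mathsf{BL}_1^{u_{*0}^+}$'' for supersonic $u_\ell$. There $\mathsf{FL}_1^{u_\ell}$ reaches $q=0$ through a $1$-shock, so the inverse relation between $\mathsf{FL}_1$ and $\mathsf{BL}_1$ must be used carefully together with non-crossing of shock curves (Lemma~\ref{l:JSatriani}(e)). I would argue by monotonicity of $\hat\rho(0,\cdot)$ along vertical lines.
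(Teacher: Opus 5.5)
Your Step~1 is essentially the paper's proof of \eqref{e:incodom1}, and Step~2 follows the paper's route through $\hat\rho(0,u_\ell)>\rho_{*0}^+=e\,q_*^+/a$. The genuine gap is your argument for $v(u_\ell)>a$. It is load-bearing, because it is what lets you replace $\overline{Q}(u_\ell)$ by $q_\ell$ in \eqref{e:NIN}. You claim that $v(u_\ell)\le a$ implies $\bar q(u_\ell)=\overline{Q}(w)$, with $w=\hat u(0,u_\ell)$, by the invariance of $\overline{Q}$ along $1$-rarefaction curves. That invariance applies only if $w\in\mathsf{R}_1^{u_\ell}$, i.e.\ $\hat\rho(0,u_\ell)\le\rho_\ell$, which holds exactly when $q_\ell\le 0$. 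If $0<v(u_\ell)\le a$, then $\mathcal{FL}_1^{u_\ell}(\rho_\ell)=q_\ell>0$, so $w$ lies on the shock branch $\mathsf{S}_1^{u_\ell}$ with $\rho_w>\rho_\ell$. By item~\ref{i:d} of Lemma~\ref{l:JSatriani}, $w$ is then strictly below $\mathsf{R}_1^{u_\ell}$, and your equality is false. For instance, $a=\rho_\ell=1$ and $q_\ell=1/2$ give $\overline{Q}(w)\approx 0.6035<0.6065\approx\overline{Q}(u_\ell)$.

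What is true, and what the paper uses, is the inequality $\overline{Q}(w)\le\bar q(u_\ell)=\overline{Q}(u_\ell)$ whenever $v(u_\ell)\le a$. To see it, set $c(u)\doteq v(u)+a\ln\rho$, the $1$-Riemann invariant. Then $\overline{q}(u)=\frac{a}{e}e^{c(u)/a}$, and $w$ lying on or below $\mathsf{R}_1^{u_\ell}$ means $c(w)\le c(u_\ell)$. Since the inequality points the right way, $q_*^+\ge\overline{Q}(u_\ell)\ge\overline{Q}(w)>q_*^+$ is still a contradiction.

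Two smaller points in Step~2:
\begin{itemize}
\item Forward and backward $1$-curves through a point are not the same set. What you need is the duality $u\in\mathsf{FL}_1^{w}\iff w\in\mathsf{BL}_1^{u}$. Add to it that $\rho_o\mapsto\mathcal{BL}_1^{(\rho_o,0)}(\rho)$ is strictly increasing and unbounded. Together these give $\hat\rho(0,u_\ell)>\rho_{*0}^+\iff q_\ell>\mathcal{BL}_1^{u_{*0}^+}(\rho_\ell)$ for \emph{every} $u_\ell$, not only supersonic ones. This is the paper's set $\bigcup_{\rho>\rho_{*0}^+}\mathsf{BL}_1^{(\rho,0)}$.
\item The density cut-offs need both inclusions. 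Write $f=\mathcal{BL}_1^{u_{*0}^+}$. Since $u_{*a}^+\in\mathsf{R}_1^{u_{*0}^+}$, item~\ref{i:d} gives $f(\rho_{*a}^+)>q_*^+$, hence $\rho_{*a0}^-<\rho_{*a0}^+<\rho_{*a}^+$. Because $f(\rho)/\rho$ decreases by concavity, every point of \eqref{e:NIN} is supersonic. Conversely, a supersonic point with $q\le q_*^+$ has $\rho<\rho_{*a}^+$, and concavity gives $f\ge q_*^+$ on $[\rho_{*a0}^+,\rho_{*a}^+]$, which forces $\rho<\rho_{*a0}^+$. The same argument works at level $q_*^-$.
\end{itemize}
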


\begin{proof}
We first prove \eqref{e:incodom1}.
Observe that $(\ucm,\ucp)$ given in \eqref{e:ucpm} is well defined because $\qc \in [0,\overline{Q}(u_\ell)] \subseteq [0,\Bar{q}(u_\ell)]$, see Definition~\ref{d:notation}, and $(\ucm,\ucp) \in \dom$.
Hence, by Definition~\ref{d:coherent}, $\rsc$ is coherent at $(u_\ell,u_r,h_o)$ if and only if
\begin{equation*}
C\left(\ucm(u_\ell,h_o),\ucp(u_\ell,u_r,h_o),\hc(u_\ell,h_o)\right) = \left(\ucm(u_\ell,h_o),\ucp(u_\ell,u_r,h_o),\hc(u_\ell,h_o)\right),
\end{equation*}
which, by \eqref{e:ucpm}, is equivalent to require that
\begin{align*}
&\hc\left(\ucm(u_\ell,h_o),\hc(u_\ell,h_o)\right) = \hc(u_\ell,h_o),&
&\qc\left(\ucm(u_\ell,h_o),\hc(u_\ell,h_o)\right) = \qc(u_\ell,h_o).
\end{align*}
The proof is carried out by the following case-by-case analysis.

\begin{itemize}[leftmargin=*]

\item
Assume $\qc(u_\ell,h_o)=q_{*}$.
In this case $\hc(u_\ell,h_o) = 1$ and $\overline{Q}\left(\ucm(u_\ell,h_o)\right) \geq \qc(u_\ell,h_o) = q_{*}$.
By \eqref{e:hc} and \eqref{e:qc} we have then
\begin{align*}
&\hc\left(\ucm(u_\ell,h_o),\hc(u_\ell,h_o)\right) = 1 = \hc(u_\ell,h_o),&
&\qc\left(\ucm(u_\ell,h_o),\hc(u_\ell,h_o)\right) = q_{*} = \qc(u_\ell,h_o).
\end{align*}
Therefore, $\rsc$ is coherent at $(u_\ell,u_r,h_o)$.

\item
Assume $\qc(u_\ell,h_o) \in [q_{*}^{-},q_{*})$.
In this case $\hc(u_\ell,h_o) = 1$ and $\qc(u_\ell,h_o) = \overline{Q}(u_\ell)$.
By \eqref{e:hc} and \eqref{e:qc} we have $\overline{Q}\left(\ucm(u_\ell,h_o)\right) \geq \qc(u_\ell,h_o) \geq q_{*}^{-}$ and then
\begin{align*}
&\hc\left(\ucm(u_\ell,h_o),\hc(u_\ell,h_o)\right) = 1 = \hc(u_\ell,h_o),&
&\qc\left(\ucm(u_\ell,h_o),\hc(u_\ell,h_o)\right) = \min\left\{ q_* , \overline{Q}\left(\ucm(u_\ell,h_o)\right) \right\}.
\end{align*}
We distinguish two subcases:
\begin{itemize}[leftmargin=*]
\item
Assume $v(u_{\ell}) \geq a$.
In this case $\qc(u_{\ell},h_{o}) = \overline{Q}(u_\ell) = q_{\ell} \in [q_{*}^{-},q_{*})$, see \eqref{e:SleepToken}; by \eqref{e:ucpm}\textsubscript{$1$} we have $\ucm(u_{\ell},h_{o}) = u_{\ell}$.
As a consequence, we deduce
\[\qc\left(\ucm(u_\ell,h_o),h(u_\ell,h_o)\right) = \min\left\{ q_* , \overline{Q}(u_\ell) \right\} = q_{\ell} = \qc(u_{\ell},h_{o}).\]

\item
Assume $v(u_{\ell}) < a$.
In this case $\qc(u_{\ell},h_{o}) = \overline{Q}(u_\ell) = \Bar{q}(u_\ell) \in [q_{*}^{-},q_{*})$, hence $\ucm(u_{\ell},h_{o}) = \Bar{u}(u_\ell)$.
This implies $\overline{Q}\left(\ucm(u_{\ell},h_{o})\right) = \Bar{q}(u_\ell) \in [q_{*}^{-},q_{*})$ and therefore
\[\qc\left(\ucm(u_\ell,h_o),\hc(u_\ell,h_o)\right) = \Bar{q}(u_\ell) = \qc(u_{\ell},h_{o}).\]
\end{itemize}
Therefore, in both cases, $\rsc$ is coherent at $(u_\ell,u_r,h_o)$.

\item
Assume $\qc(u_\ell,h_o) = 0$.
In this case $\hc(u_\ell,h_o) = 0$ and $\overline{Q}(u_{\ell}) \leq q_{*}^{+}$.
By \eqref{e:ucpm}\textsubscript{$1$} we have then $\ucm(u_\ell,h_o) = \hat{u}(0,u_{\ell})$.
We distinguish two cases:
\begin{itemize}[leftmargin=*]
\item
Assume $\overline{Q}\left(\hat{u}(0,u_{\ell})\right) > q_{*}^{+}$.
By \eqref{e:hc} we have $\hc\left(\ucm(u_\ell,h_o),\hc(u_\ell,h_o)\right) = 1 \neq 0 = \hc(u_\ell,h_o)$.
Therefore, $\rsc$ is not coherent at $(u_\ell,u_r,h_o)$, see \figurename~\ref{f:coherence}.
Furthermore, in this case $v(u_{\ell}) > a$.
Indeed, if by contradiction $v(u_{\ell}) \leq a$, then by item~\ref{i:d} of Lemma~\ref{l:JSatriani} we have $q_{*}^{+} \geq \overline{Q}(u_{\ell}) = \Bar{q}(u_\ell) \geq \Bar{q}(\hat{\rho}(0,u_{\ell})) = \overline{Q}\left(\hat{u}(0,u_{\ell})\right) > q_{*}^{+}$, which gives a contradiction.

\item
Assume $\overline{Q}\left(\hat{u}(0,u_{\ell})\right) \leq q_{*}^{+}$.
In this case, by \eqref{e:hc} we have $\hc\left(\ucm(u_\ell,h_o),\hc(u_\ell,h_o)\right) = \hc(u_\ell,h_o) = 0$ and therefore $\qc\left(\ucm(u_\ell,h_o),\hc(u_\ell,h_o)\right) = 0 = \qc(u_\ell,h_o)$.
Therefore, $\rsc$ is coherent at $(u_\ell,u_r,h_o)$.
\end{itemize}
\end{itemize}
This concludes the proof of \eqref{e:incodom1}.

\begin{figure}[!htb]\centering\resizebox{.7\linewidth}{!}{
\begin{tikzpicture}[every node/.style={anchor=south west,inner sep=0pt},x=1mm, y=1mm]

\node at (4,4) {\includegraphics[width=50mm]{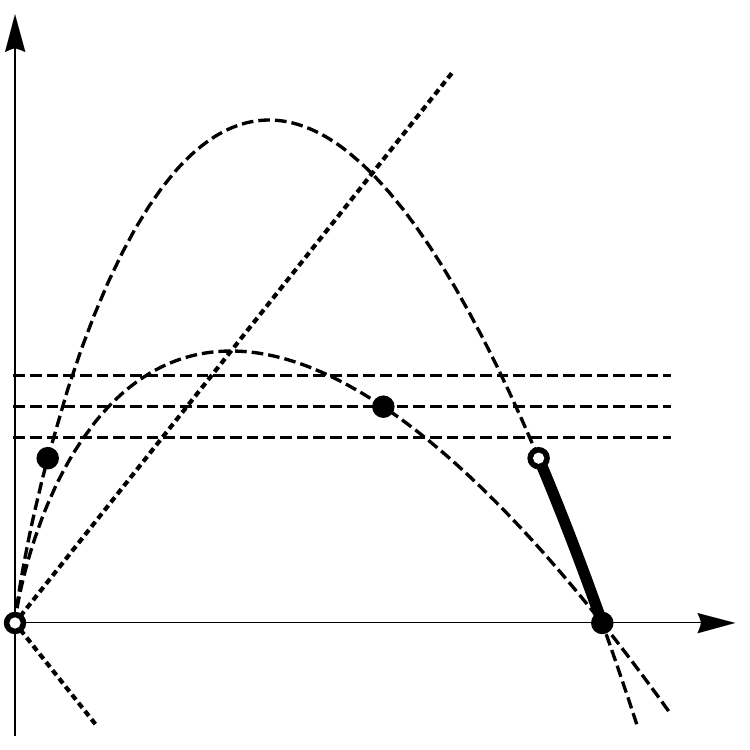}};
\node[style={anchor=center}] at (25,55) {\strut $h_{o}\in\{0,1\}$};
\node at (55,10) {\strut $\rho$};
\node at (0,50) {\strut $q$};
\node at (0,21) {\strut $q_*^-$};
\draw[latex-]  (4,27) --++ (-5,0) node[left] {\strut $q_*$};
\node at (0,28) {\strut $q_*^+$};
\node[right] at (39,4) {\strut $\mathsf{FL}_1^{u_\ell}$};
\node[right] at (49,7) {\strut $\mathsf{FL}_1^{\ucm}$};
\node at (9,21) {\strut $u_\ell$};
\node at (42,20) {\strut $\hat{u}(q_\ell,u_\ell)$};
\node[right] at (35.5,50) {\strut $q=a\rho$};
\node at (45,12.5) {\strut $\ucm$};
\draw[latex-] (30,27) --++ (10,5) node[right] {\strut $(\ucm)_{\rm c}^-$};

\begin{scope}[shift={(70,0)}]
\node at (4,4) {\includegraphics[width=50mm]{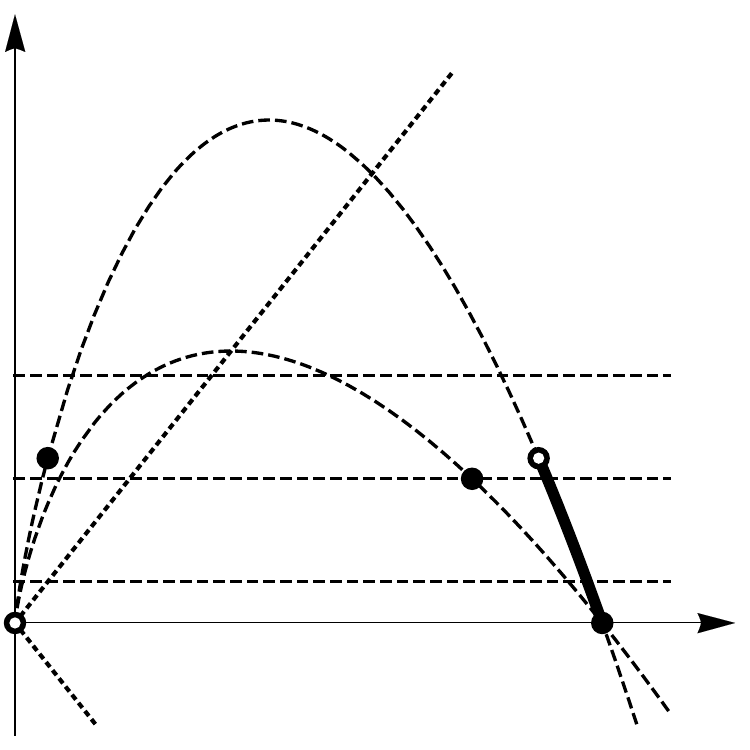}};
\node[style={anchor=center}] at (25,55) {\strut $h_{o}=0$};
\node at (55,10) {\strut $\rho$};
\node at (0,50) {\strut $q$};
\node at (0,14) {\strut $q_*^-$};
\node at (0,20) {\strut $q_*$};
\node at (0,27) {\strut $q_*^+$};
\node[right] at (39,4) {\strut $\mathsf{FL}_1^{u_\ell}$};
\node[right] at (49,7) {\strut $\mathsf{FL}_1^{\ucm}$};
\node at (9.5,22) {\strut $u_\ell$};
\node at (41,22) {\strut $\hat{u}(q_\ell,u_\ell)$};
\node[right] at (35.5,50) {\strut $q=a\rho$};
\node at (39.5,8.5) {\strut $\ucm$};
\node at (25,18) {\strut $(\ucm)_{\rm c}^-$};

\end{scope}

\end{tikzpicture}}
\caption{The c-Riemann solver given in \eqref{e:InFlames} is not coherent, see Theorem~\ref{t:coherence}.
On the left $h_{o}\in\{0,1\}$ and $q_\ell<q_*^-$, whereas on the right $h_{o}=0$ and $q_\ell \in (q_*,q_*^+)$.
In both cases depicted above we take $v(u_{\ell}) > a$ and $\hc(u_\ell,h_o) = 0$.
For brevity, we write $\ucm$ for $\ucm(u_\ell,h_o)$ and $(\ucm)_{\rm c}^-$ for $\ucm(\ucm,0)$.
We also represent $\Gamma^-(u_\ell)$ to emphasize that $(\ucm)_{\rm c}^- \notin\Gamma^-(u_\ell)$ in both cases.}
\label{f:coherence}
\end{figure}

We now prove \eqref{e:incodom2}; we also discuss how the set $\mathsf{CH}^{\scriptscriptstyle\complement}$ depends on the thresholds $q_*^-$ and $q_*^+$.
First, by \eqref{e:incodom1}, \eqref{e:hc} and \eqref{e:qc} we notice that 
\[\mathsf{CH}^{\scriptscriptstyle\complement}(q_*^-,q_*^+) = \left( \mathsf{CH}^{\scriptscriptstyle\complement}_{1}(q_*^-,q_*^+) \cup \mathsf{CH}^{\scriptscriptstyle\complement}_{2}(q_*^-) \right) \cap \mathsf{CH}^{\scriptscriptstyle\complement}_{3}(q_*^+),\]
where
\begin{subequations}
\label{e:CH123}
\begin{align}
\label{e:CH1}
\mathsf{CH}^{\scriptscriptstyle\complement}_{1}(q_*^-,q_*^+) & \doteq \left\{ (u_\ell,u_r,h_o)\in\dom\times\{0\} : \overline{Q}\left(u_\ell\right) \in [q_*^-,q_*^+] \right\},
\\
\label{e:CH2}
\mathsf{CH}^{\scriptscriptstyle\complement}_{2}(q_*^-) & \doteq \left\{ (u_\ell,u_r,h_o)\in\dom\times\{0,1\} : \overline{Q}\left(u_\ell\right) < q_*^- \right\},
\\
\label{e:CH3}
\mathsf{CH}^{\scriptscriptstyle\complement}_{3}(q_{*}^{+}) & \doteq \left\{ (u_\ell,u_r,h_o)\in\dom\times\{0,1\} : \overline{Q}\left(\hat{u}(0,u_{\ell})\right) > q_{*}^{+} \right\}.
\end{align}
\end{subequations}
Observe that $\mathsf{CH}^{\scriptscriptstyle\complement}_{1}(q_*^-,q_*^+) \cup \mathsf{CH}^{\scriptscriptstyle\complement}_{2}(q_*^-) = \left\{ (u_\ell,u_r,h_o)\in\dom\times\{0,1\} : \qc(u_\ell,h_o) = 0 \right\}$.
We also have
\begin{align*}
[q_{*\alpha}^-,q_{*\alpha}^+] \subseteq [q_{*\beta}^-,q_{*\beta}^+]
&\Longrightarrow
\mathsf{CH}^{\scriptscriptstyle\complement}_{1}(q_{*\alpha}^-,q_{*\alpha}^+) \subseteq \mathsf{CH}^{\scriptscriptstyle\complement}_{1}(q_{*\beta}^-,q_{*\beta}^+),
\\
q_{*\alpha}^{-} \leq q_{*\beta}^{-}
&\Longrightarrow
\mathsf{CH}^{\scriptscriptstyle\complement}_{2}(q_{*\alpha}^-) \subseteq \mathsf{CH}^{\scriptscriptstyle\complement}_{2}(q_{*\beta}^-).
\end{align*}

Roughly speaking, $\mathsf{CH}^{\scriptscriptstyle\complement}_{1}(q_*^-,q_*^+)$, $\mathsf{CH}^{\scriptscriptstyle\complement}_{2}(q_*^-)$ and $\mathsf{CH}^{\scriptscriptstyle\complement}_{3}(q_*^+)$ can be represented as in \figurename~\ref{f:CH123}, and $\mathsf{CH}^{\scriptscriptstyle\complement}(q_*^-,q_*^+)$ as in \figurename~\ref{f:CH}.
More precisely, let $\pi_{\ell} \colon \dom \times \{0,1\} \to \Omega$ denote the projection onto the first $u$-component, namely
\[
\pi_{\ell}\left(u_{\ell},u_{r},h_{o}\right) = u_{\ell}.
\]
Then in \figurename~\ref{f:CH123} we represent $\pi_{\ell}\left(\mathsf{CH}^{\scriptscriptstyle\complement}_{1}(q_*^-,q_*^+)\right)$, $\pi_{\ell}\left(\mathsf{CH}^{\scriptscriptstyle\complement}_{2}(q_*^-)\right)$ and $\pi_{\ell}\left(\mathsf{CH}^{\scriptscriptstyle\complement}_{3}(q_*^+)\right)$, while for $\pi_{\ell}\left(\mathsf{CH}^{\scriptscriptstyle\complement}(q_*^-,q_*^+)\right)$ see \figurename~\ref{f:CH}.
\begin{figure}[!htb]\centering\maxsizebox{.99\linewidth}{!}{
\begin{tikzpicture}[every node/.style={anchor=south west,inner sep=0pt},x=1mm, y=1mm]
\node at (4,4) {\includegraphics[width=50mm]{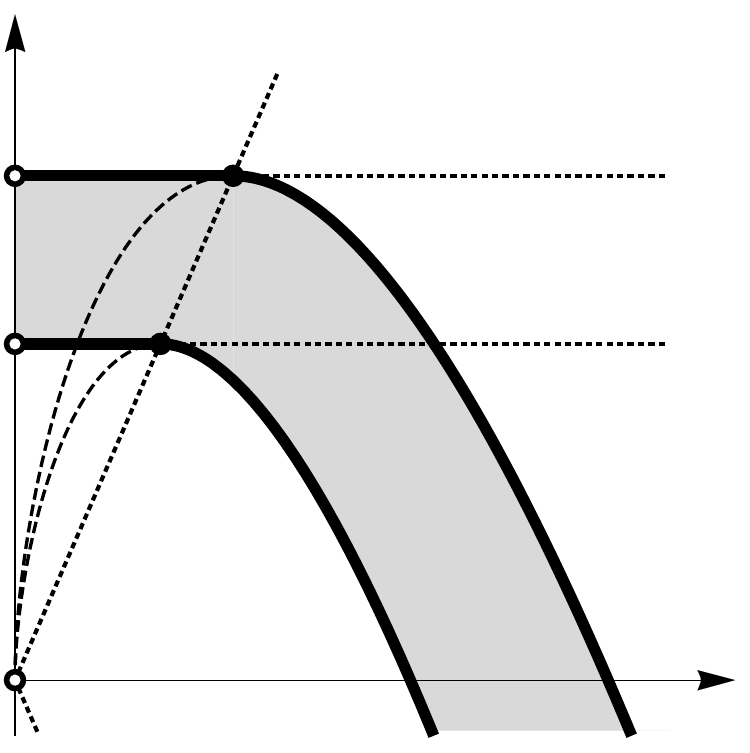}};
\node[style={anchor=center}] at (25,55) {\strut $h_{o}=0$};
\node at (55,7) {\strut $\rho$};
\node at (0,50) {\strut $q$};
\node at (23,2) {\strut $\mathsf{BL}_1^{u_{*a}^{-}}$};
\node at (47,2) {\strut $\mathsf{BL}_1^{u_{*a}^{+}}$};
\node at (14,26) {\strut $u_{*a}^{-}$};
\node at (21.5,42.5) {\strut $u_{*a}^{+}$};
\node at (0,29) {\strut $q_{*}^{-}$};
\node at (0,41) {\strut $q_{*}^{+}$};
\begin{scope}[shift={(70,0)}]
\node at (4,4) {\includegraphics[width=50mm]{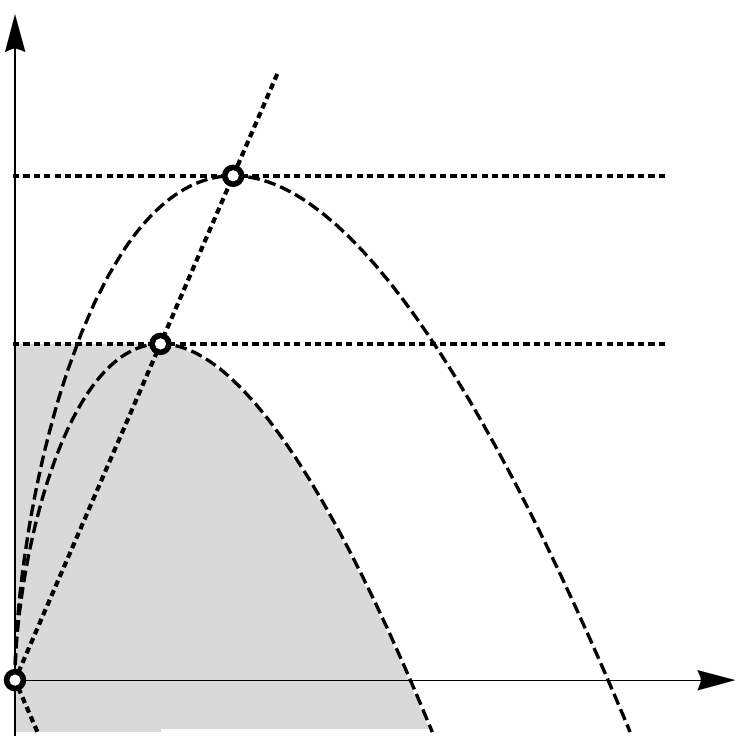}};
\node[style={anchor=center}] at (25,55) {\strut $h_{o}\in\{0,1\}$};
\node at (55,7) {\strut $\rho$};
\node at (0,50) {\strut $q$};
\node at (34,2) {\strut $\mathsf{BL}_1^{u_{*a}^{-}}$};
\node at (47,2) {\strut $\mathsf{BL}_1^{u_{*a}^{+}}$};
\node at (17,31.5) {\strut $u_{*a}^{-}$};
\node at (21.5,42.5) {\strut $u_{*a}^{+}$};
\node at (0,29) {\strut $q_{*}^{-}$};
\node at (0,41) {\strut $q_{*}^{+}$};
\end{scope}
\begin{scope}[shift={(140,0)}]
\node at (4,4) {\includegraphics[width=50mm]{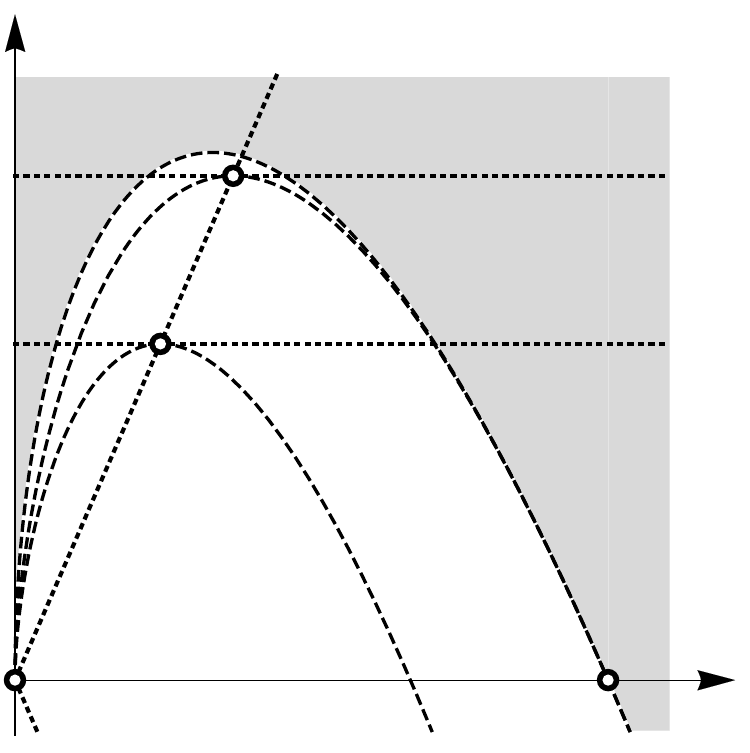}};
\node[style={anchor=center}] at (25,55) {\strut $h_{o}\in\{0,1\}$};
\node at (55,7) {\strut $\rho$};
\node at (0,50) {\strut $q$};
\node at (23,2) {\strut $\mathsf{BL}_1^{u_{*a}^{-}}$};
\draw[latex-] (14,39) --++ (6,-4) node[right] {\strut $\mathsf{BL}_1^{u_{*a}^{+}}$};
\draw[latex-] (12,41) --++ (-9,6) node[left] {\strut $\mathsf{BL}_1^{u_{*0}^{+}}$};
\node at (14,26) {\strut $u_{*a}^{-}$};
\node at (19,37.5) {\strut $u_{*a}^{+}$};
\node at (40,4) {\strut $u_{*0}^{+}$};
\node at (0,29) {\strut $q_{*}^{-}$};
\node at (0,41) {\strut $q_{*}^{+}$};
\end{scope}
\end{tikzpicture}}
\caption{The sets $\pi_{\ell}\left(\mathsf{CH}^{\scriptscriptstyle\complement}_{1}(q_*^-,q_*^+)\right)$, $\pi_{\ell}\left(\mathsf{CH}^{\scriptscriptstyle\complement}_{2}(q_*^-)\right)$, and $\pi_{\ell}\left(\mathsf{CH}^{\scriptscriptstyle\complement}_{3}(q_*^+)\right)$, respectively, given in \eqref{e:CH123}.}
\label{f:CH123}
\end{figure}

Now, we briefly motivate the representations in \figurename s~\ref{f:CH123} and~\ref{f:CH}.
Consider first $\mathsf{CH}^{\scriptscriptstyle\complement}_{1}(q_*^-,q_*^+)$.
By \eqref{e:CH1} we have $\mathsf{CH}^{\scriptscriptstyle\complement}_{1}(q_*^-,q_*^+) \subseteq \dom\times\{0\}$.
By \eqref{e:SleepToken}, we obtain
\[
u_{\ell} \in \pi_{\ell}\left(\mathsf{CH}^{\scriptscriptstyle\complement}_{1}(q_*^-,q_*^+)\right)
\iff
\overline{Q}\left(u_\ell\right) \in [q_*^-,q_*^+]
\iff
\left\{\begin{array}{@{}l@{\quad\hbox{if }}l@{}}
q_{\ell} \in [q_*^-,q_*^+] & v(u_{\ell}) \geq a,
\\
\overline{q}(u_{\ell}) \in [q_*^-,q_*^+] & v(u_{\ell}) < a.
\end{array}\right.
\]
As a consequence, on the one hand we have
\[
\pi_{\ell}\left(\mathsf{CH}^{\scriptscriptstyle\complement}_{1}(q_*^-,q_*^+)\right) \cap \{u \in \Omega : v(u) \geq a\} =
\{u \in \Omega : v(u) \geq a,\ q \in  [q_*^-,q_*^+]\}.
\]
On the other hand, observe that if $v(u_{\ell}) < a$ then the state on $\mathsf{FL}_1^{u_{\ell}}$ with maximum $q$-coordinate lies on the sonic line $q=a\,\rho$.
Hence, if $u_{\ell} \in \pi_{\ell}\left(\mathsf{CH}^{\scriptscriptstyle\complement}_{1}(q_*^-,q_*^+)\right)$ and $v(u_{\ell}) < a$, then $\overline{u}(u_{\ell})$ lies on the portion of the sonic line $q=a\,\rho \in [q_*^-,q_*^+]$, and therefore
\[\pi_{\ell}\left(\mathsf{CH}^{\scriptscriptstyle\complement}_{1}(q_*^-,q_*^+)\right) \cap \{u \in \Omega : v(u) < a\} = \bigcup_{q\in[q_*^-,q_*^+]} \left\{ u\in\mathsf{BL}_{1}^{(q/a,q)} : v(u)<a\right\}.
\]
We deduce that $\pi_{\ell}\left(\mathsf{CH}^{\scriptscriptstyle\complement}_{1}(q_*^-,q_*^+)\right)$ is as in \figurename~\ref{f:CH123}, left, where $u_{*a}^{\pm}$ are the intersections of the sonic line $q=a\,\rho$ with the horizontal lines $q=q_{*}^{\pm}$, respectively, namely $u_{*a}^{\pm} \doteq \left(q_{*}^{\pm}/a,q_{*}^{\pm}\right)$.
The representation of $\mathsf{CH}^{\scriptscriptstyle\complement}_{2}(q_*^-)$ follows from analogous arguments, see \eqref{e:CH2}.

We consider now $\mathsf{CH}^{\scriptscriptstyle\complement}_{3}(q_*^+)$ given in \eqref{e:CH3}.
Clearly, we have $v\left(\hat{u}(0,u_{\ell})\right) = 0 < a$; therefore, by \eqref{e:SleepToken}, if $u_{\ell} \in \pi_{\ell}\left(\mathsf{CH}^{\scriptscriptstyle\complement}_{3}(q_*^+)\right)$ then $\overline{u}(\hat{u}(0,u_{\ell}))$ lies on the portion of the sonic line $q=a\,\rho > q_*^+$, namely,
\[u_{\ell} \in \pi_{\ell}\left(\mathsf{CH}^{\scriptscriptstyle\complement}_{3}(q_*^+)\right)
\iff
\hat{u}(0,u_{\ell}) \in \bigcup_{q>q_*^+} \left\{ (\rho,q)\in\mathsf{BL}_{1}^{(q/a,q)} : q=0\right\}
\iff
u_{\ell} \in \bigcup_{\rho>\rho_{*0}^+} \mathsf{BL}_{1}^{(\rho,0)},
\]
where we recall that $u_{*0}^+$ is the unique element of $\mathsf{BL}_{1}^{u_{*a}^+}$ with $q=0$.
At last, the above considerations lead to the representation of $\pi_{\ell}\left(\mathsf{CH}^{\scriptscriptstyle\complement}(q_*^-,q_*^+)\right)$ as in \figurename~\ref{f:CH} and to \eqref{e:incodom2}.
\end{proof}

\begin{rmk}
\label{r:PorcupineTree}
The valve described in Section~\ref{s:descr} is characterized by the three conditions in \eqref{e:Moodring1} which, by Proposition~\ref{p:equi}, are equivalent to \eqref{e:Moodring2} if $(u,h)\in\BV_{\rm loc}([0, +\infty) \times \R;\Omega) \times \L\infty([0, +\infty);\{0,1\})$ is a global c-solution of the Cauchy problem~\eqref{e:csystem}.
Condition \eqref{e:Moodring2} may fail, for instance, in the case of Riemann initial data $(u_\ell,u_r,h_o) \in \mathsf{CH}^{\scriptscriptstyle\complement}$, see \eqref{e:coherent}\textsubscript{$2$}.
As a consequence, $\left(u(t,x),h(t)\right) \doteq \rsc[u_\ell,u_r,h_o](x/t)$ satisfies \eqref{e:Moodring1} globally in time if and only if $(u_\ell,u_r,h_o) \in \mathsf{CH}$.
In other words, $(u,h)$ corresponds to the valve without chattering described in Section~\ref{s:descr} if and only if $(u_\ell,u_r,h_o) \in \mathsf{CH}$.
On the other hand, if $(u_\ell,u_r,h_o) \in \mathsf{CH}^{\scriptscriptstyle\complement}$, then $(u,h)$ corresponds to such a valve with chattering. In fact, for this choice of initial datum the valve undergoes chattering and the regularity required in \eqref{e:uhreg}\textsubscript{$1$} is no longer satisfied.
For completeness, let us stress that $(u,h)$ is a global c-solution of the Cauchy problem \eqref{e:csystem} in the sense of Definition~\ref{d:csol} for any $(u_\ell,u_r,h_o) \in \dom$.
\end{rmk}

\begin{rmk}
Fix $(u_\ell,u_r,h_o)$ in $\mathsf{CH}^{\scriptscriptstyle\complement}$, see \eqref{e:incodom}.
As already observed, in this case we have $\hc\left(\ucm(u_\ell,h_o),\hc(u_\ell,h_o)\right) = 1$ and then by \eqref{e:hc}, \eqref{e:qc} and \eqref{e:ucpm} we deduce
\begin{align*}
&\ucm(u_\ell,h_o) = \hat{u}(0,u_{\ell}),&
&\qc\left(\ucm(u_\ell,h_o),\hc(u_\ell,h_o)\right) = q_{*},&
&\ucm\left(\ucm(u_\ell,h_o),\hc(u_\ell,h_o)\right) = \hat{u}\left(q_{*},\hat{u}(0,u_{\ell})\right).
\end{align*}
In particular, the second condition above ensures that $\left(\ucm(u_\ell,h_o),\ucp(u_\ell,u_r,h_o),\hc(u_\ell,h_o)\right)$ belongs to $\mathsf{CH}$, see \eqref{e:incodom}.
As a consequence, by Theorem~\ref{t:coherence}, this readily implies that $\rsc$ is coherent at $C\left(C(u_\ell,u_{r},h_o)\right)=(C\circ C)\left(u_\ell,u_{r},h_o\right)$.
This may suggest considering the c-Riemann solver associated with $C\circ C$ and conjecturing that it is coherent.
However, $C\circ C$ is not a c-function.
In fact, if $q_{\ell} < q_{*}$, then $\ucm\left(\ucm(u_\ell,h_o),\hc(u_\ell,h_o)\right)$ belongs to $\Gamma^-(\ucm(u_\ell,h_o))$ but not to $\Gamma^-(u_\ell)$ because $\qc\left(\ucm(u_\ell,h_o),\hc(u_\ell,h_o)\right) = q_{*} > q_{\ell} = \overline{Q}(u_{\ell})$, see \figurename~\ref{f:coherence} and \eqref{e:Periphery}.
\end{rmk}

In the following proposition we consider the two disjoined components $\mathsf{CH}_{\ell,0}^{\scriptscriptstyle\complement}(q_*^+) \times \Omega \times \{0\}$ and $\mathsf{CH}_{\ell,1}^{\scriptscriptstyle\complement}(q_*^-,q_*^+)\times \Omega \times \{1\}$ of $\mathsf{CH}^{\scriptscriptstyle\complement}(q_{*}^-,q_{*}^+)$, see \eqref{e:NIN}, and study their areas as functions of $q_*^-$ and $q_*^+$.
For a given subset $\mathsf{C}_\ell$ of $\Omega$, we denote by $|\mathsf{C}_\ell|$ its area. The proof includes the explicit expressions for the constants $C_1$, $C_2$, and the function $A$, which are given below.

\begin{prop}
\label{p:area}
There exist constants $C_1,C_2>0$, independent of $q_*^\pm$, and
a function $A$ such that
\begin{equation}\label{e:keeel}
\left|\mathsf{CH}_{\ell,0}^{\scriptscriptstyle\complement}(q_*^+)\right| = \frac{C_1}{a} \, (q_*^+)^2 \quad \hbox{ and }\quad
\left|\mathsf{CH}_{\ell,1}^{\scriptscriptstyle\complement}(q_*^-,q_*^+)\right| =
\frac{C_2}{a} \, A\left(\frac{q_*^-}{q_*^+}\right)(q_*^+)^2.
\end{equation}
The function $A$ is strictly increasing, strictly convex and $A(0)=0$, see \eqref{e:A} for its explicit expression; its graph is plotted in \figurename~\ref{f:A}.
In particular, the map $[0,q_*] \ni q_{*}^- \mapsto \left|\mathsf{CH}_{\ell,1}^{\scriptscriptstyle\complement}(q_*^-,q_*^+)\right|$ is strictly increasing, the map $[q_*,+\infty) \ni q_{*}^+ \mapsto \left|\mathsf{CH}_{\ell,1}^{\scriptscriptstyle\complement}(q_*^-,q_*^+)\right|$ is strictly decreasing and
\begin{align}
&\lim_{q_{*}^- \to 0} \left|\mathsf{CH}_{\ell,1}^{\scriptscriptstyle\complement}(q_*^-,q_*^+)\right| = 0,&
&\lim_{q_{*}^+ \to +\infty} \left|\mathsf{CH}_{\ell,1}^{\scriptscriptstyle\complement}(q_*^-,q_*^+)\right| = 0.
\label{e:kkwj}
\end{align}
\end{prop}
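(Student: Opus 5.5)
The plan is to compute both areas explicitly. The boundary curve $\mathsf{BL}_1^{u_{*0}^+}$ turns out to be a rescaled copy of one fixed profile, so both areas reduce to one-dimensional integrals after a change of variables.

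\emph{Locating $u_{*0}^+$.} Since $v(u_{*a}^+)=a$ and $u_{*0}^+$ has $q=0<q_*^+$, it lies on the branch $\rho>q_*^+/a$ of $\mathsf{BL}_1^{u_{*a}^+}$, which is the rarefaction $\mathcal{R}_1$. From \eqref{e:laxRS} the condition $a-a\ln(\rho a/q_*^+)=0$ gives
\[
\rho_{*0}^+=e\,q_*^+/a .
\]

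\emph{Normalising the boundary curve.} For $\rho<\rho_{*0}^+$ the curve $\mathsf{BL}_1^{u_{*0}^+}$ is the shock $\mathcal{S}_1$ with zero base velocity. Writing $\rho=\rho_{*0}^+ s$, it reads
\[
q=a\rho_{*0}^+\,g(s)=e\,q_*^+\,g(s), \qquad g(s)\doteq\sqrt{s}-s^{3/2}.
\]
The function $g$ is strictly concave on $(0,1)$, satisfies $g(0)=0$, and attains its maximum $2/(3\sqrt3)$ at $s=1/3$. Since $e\cdot 2/(3\sqrt3)>1$, for every $r\in(0,1]$ the equation $e\,g(s)=r$ has a smallest root $s(r)\in(0,1/3)$. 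This root is increasing in $r$, and $s(r)\sim(r/e)^2$ as $r\to0$. By definition, $\rho_{*a0}^{\pm}=\rho_{*0}^+\,s(q_*^\pm/q_*^+)$.

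\emph{The area formulas.} Using \eqref{e:NIN} and the change of variables $\rho=\rho_{*0}^+s$, $q=q_*^+\,\theta$, one gets
\[
|\mathsf{CH}_{\ell,0}^{\scriptscriptstyle\complement}|=\frac{e\,(q_*^+)^2}{a}\int_0^{s(1)}\bigl(1-e\,g(s)\bigr)\,\d s,
\qquad
|\mathsf{CH}_{\ell,1}^{\scriptscriptstyle\complement}|=\frac{e\,(q_*^+)^2}{a}\,A\!\left(\frac{q_*^-}{q_*^+}\right),
\]
where
\[
A(r)\doteq\int_0^{s(r)}\bigl(r-e\,g(s)\bigr)\,\d s .
\]
This gives \eqref{e:keeel} with $C_2=e$ and $C_1=e\,A(1)$.

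\emph{Properties of $A$.} Because the integrand vanishes at $s=s(r)$, differentiating under the integral gives $A'(r)=s(r)>0$. Since $s(r)$ is strictly increasing, $A$ is strictly convex, and clearly $A(0)=0$. Monotonicity in $q_*^-$ then follows at once. For monotonicity in $q_*^+$, differentiate $(q_*^+)^2A(q_*^-/q_*^+)$ with respect to $q_*^+$:
\[
\frac{\d}{\d q_*^+}\Bigl[(q_*^+)^2A\bigl(q_*^-/q_*^+\bigr)\Bigr]=q_*^+\bigl(2A(r)-r\,s(r)\bigr).
\]
So strict decrease reduces to the inequality $2A(r)<r\,s(r)$. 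I expect this to be the only delicate point. It follows from the strict concavity of $g$: on $[0,s(r)]$ the graph of $e\,g$ lies strictly above the chord $s\mapsto r s/s(r)$, hence $A(r)<\int_0^{s(r)}\bigl(r-rs/s(r)\bigr)\,\d s=r\,s(r)/2$.

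\emph{The limits \eqref{e:kkwj}.} The first is immediate from $A(0)=0$ and the continuity of $A$. For the second, write $(q_*^+)^2A(r)=(q_*^-)^2A(r)/r^2$ with $r=q_*^-/q_*^+\to0$. The bound $A(r)\le r\,s(r)/2=O(r^3)$ then gives the limit $0$.
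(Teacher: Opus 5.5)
Your proof is correct, and it takes a genuinely different, more elementary route than the paper.

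The paper solves the cubic $r(r-1)^2=c^2$ for $\rho_{*a0}^-$ in closed form via Cardano's trigonometric formula. This gives the explicit expression \eqref{e:A} for $A$ in terms of $S(z)=\sin\bigl(\acos(1-27z^2/(2e^2))/6\bigr)$. The paper then checks monotonicity and convexity of $A$, and the signs of $\partial_{q_*^\pm}$ of the area, by direct (symbolic) computation; the sign in $q_*^+$ rests on the bound $S'\geq\sqrt3/(2e)$.

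You instead keep the root $s(r)$ implicit and reduce everything to the single concave profile $g(s)=\sqrt s-s^{3/2}$.
\begin{itemize}
\item The identity $A'(r)=s(r)$ gives strict monotonicity and strict convexity at once. Equivalently, slicing horizontally gives $A(r)=\int_0^r s(\theta)\,\d\theta$, which also spares you the implicit-function argument for differentiability of $s$.
\item The chord inequality for $e\,g$ gives $2A(r)<r\,s(r)$, hence the decrease in $q_*^+$, with no computer algebra.
\item Your estimate $A(r)\leq r\,s(r)/2=O(r^3)$ makes the limit as $q_*^+\to+\infty$ transparent. Note that $A(0)=0$ alone would not suffice there: $(q_*^+)^2\to+\infty$, so one needs $A(z)=o(z^2)$.
\end{itemize}

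What the paper's route buys is an explicit closed form, which is used for the plot in \figurename~\ref{f:A} and in the subsequent remark on $\mathcal{A}$.

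Your normalization differs from the paper's only by a constant: you have $C_2=e$ instead of $4e^2/45$, and your $A$ equals $\frac{4e}{45}$ times the one in \eqref{e:A}. The two are reconciled by $s(r)=\frac43\,S(r)^2$. With this, your formula $A'(r)=s(r)$ reproduces the paper's expression for $\partial_{q_*^-}$ of the area. Your choice $C_1=e\,A(1)$ also makes the relation $C_1=C_2A(1)$ evident by construction.
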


\begin{figure}[!htb]\centering
\begin{tikzpicture}[every node/.style={anchor=south west,inner sep=0pt},x=1mm, y=1mm]
\node at (4,4) {\includegraphics[width=50mm]{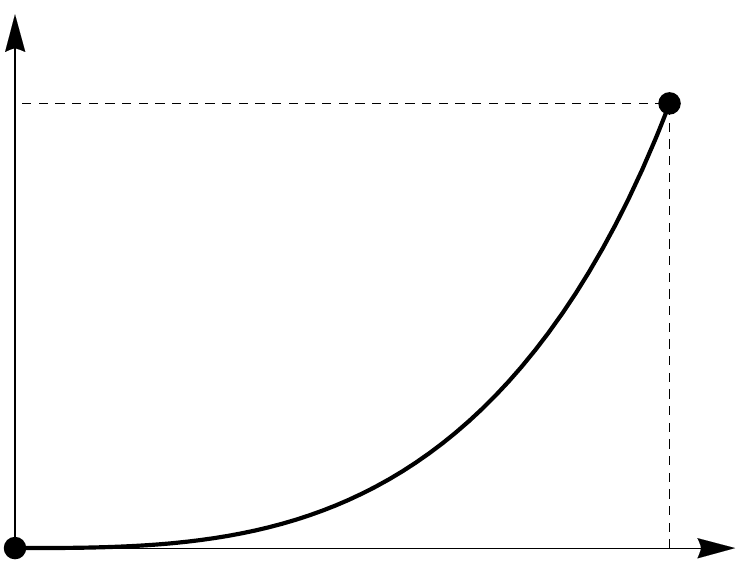}};
\node at (52,0) {\strut $z$};
\node[left] at (3.5,41) {\strut $A$};
\node at (47,0) {\strut $1$};
\node[left] at (3.5,35) {\strut $0.2388$};
\end{tikzpicture}
\caption{The graph of function $A$ in \eqref{e:keeel}\textsubscript{$2$}.}
\label{f:A}
\end{figure}

\begin{proof}
We first consider $\mathsf{CH}_{\ell,0}^{\scriptscriptstyle\complement}(q_*^+)$.
By definition of $u_{*a}^{+}$ and \eqref{e:laxRS}\textsubscript{$1$} we have $u_{*0}^+ = (\frac{e}{a}\,q_{*}^{+},0)$ (see \figurename~\ref{f:CH}), because
\begin{equation*}
\{u_{*0}^+\} = \{u\in \mathsf{BL}_{1}^{u_{*a}^+} : q=0\}
\Longrightarrow
a\left(1 - \ln\left(\frac{a\,\rho_{*0}^+}{q_{*}^{+}}\right) \right) \rho_{*0}^+ = 0
\Longrightarrow
\rho_{*0}^+ = \frac{e}{a}\,q_{*}^{+}.
\end{equation*}
Recall that $u_{*a0}^{+}$ is the state with the smallest $\rho$-coordinate in the intersection of $\mathsf{BL}_{1}^{u_{*0}^+}$ and $q=q_{*}^{+}$, see \figurename~\ref{f:CH}, on the left.
By \eqref{e:laxRS}\textsubscript{$2$} we have then
\[
- a \left(\sqrt{\dfrac{\rho_{*a0}^{+}}{\rho_{*0}^+}} - \sqrt{\dfrac{\rho_{*0}^+}{\rho_{*a0}^{+}}}\right) \rho_{*a0}^{+} = q_{*}^{+}
\Longrightarrow
\sqrt{e} = \sqrt{\dfrac{a}{v(u_{*a0}^{+})}}  \left( e - \frac{a}{v(u_{*a0}^{+})} \right),
\]
because $q_{*a0}^{+} = q_{*}^{+}$ and $\rho_{*0}^+ = \frac{e}{a}\,q_{*}^{+}$.
The rightmost formula above, together with the definition of $u_{*a0}^{+}$, implies $v(u_{*a0}^{+})/a \doteq \kappa \approx 1.6294$.
We compute now the area of $\mathsf{CH}_{\ell,0}^{\scriptscriptstyle\complement}(q_*^+)$:
\[
\left| \mathsf{CH}_{\ell,0}^{\scriptscriptstyle\complement}(q_*^+) \right| =
\rho_{*a0}^{+} \, q_*^+ - \int_{0}^{\rho_{*a0}^{+}} \mathcal{BL}_{1}^{u_{*0}^+}(\rho) \,\d \rho =
\rho_{*a0}^{+} \, q_*^+ +  a\int_{0}^{\rho_{*a0}^{+}} \left(\sqrt{\dfrac{\rho \vphantom{\rho_{*0}^+} }{\rho_{*0}^+}} - \sqrt{\dfrac{\rho_{*0}^+}{\rho}}\right) \rho \,\d \rho =
\frac{C_{1}}{a} \, (q_*^+)^2,
\]
where $C_{1} \doteq \frac{1}{\kappa}-\frac{2 \sqrt{e}}{3\kappa^{3/2}}+\frac{2}{5\sqrt{e} \, \kappa^{5/2}} \approx 0.1568$ because $\rho_{*a0}^{+}=\frac{q_*^+}{a \, \kappa}$. Formula \eqref{e:keeel}\textsubscript{$1$} follows at once.

Consider $\mathsf{CH}_{\ell,1}^{\scriptscriptstyle\complement}(q_*^-,q_*^+)$.
First, formula \eqref{e:kkwj}\textsubscript{$1$} is straightforward, see \figurename~\ref{f:CH} on the right.
We now prove \eqref{e:keeel}\textsubscript{$2$}.
As a first step, we compute $\rho_{*a0}^-$.
By definition of $u_{*a0}^-$ and \eqref{e:laxRS}\textsubscript{$2$}, we have that $\rho_{*a0}^-$ is the smallest solution $\rho$ in $[0,\rho_{*0}^+] = [0,e\,q_*^+/a]$ of the equation
\[q_*^- = \mathcal{S}_1^{u_{*0}^+}(\rho) = - a \left(\sqrt{\frac{a\,\rho}{e\,q_*^+}} - \sqrt{\frac{e\,q_*^+}{a\,\rho}}\right) \rho.\]
By introducing $r \doteq \frac{a\,\rho}{e\,q_*^+} \in [0,1]$ and $c \doteq \frac{q_*^-}{e\,q_*^+} \in \left(0,\frac{1}{e}\right)$, the above equation becomes
\[r\left(\sqrt{1/r}-\sqrt{r}\right) = c.\]
Observe that the left-hand side of the above equation is non-negative since $r\in[0,1]$.
By squaring such equation we get
\[r\left(r-1\right)^2=c^2.\]
By studying the map $\R \ni r \mapsto r\left(r-1\right)^2$, we deduce that the above equation admits three solutions in $[0,+\infty)$ because $c^2 \in (0, 4/27)$.
We set $s\doteq r-\frac{2}{3} \in \left[-\frac{2}{3},\frac{1}{3}\right]$ and get the equation
\begin{equation}
\label{e:eqs}
s^3 - \frac{1}{3} \, s + \frac{2}{27} - c^2 = 0.
\end{equation}
We recall that the solutions of the equation $x^3+\alpha\,x+\beta=0$, for $\alpha,\beta\in\R$, are given by Cardano's formula 
\[
x = \sqrt[3]{-\frac{\beta}{2}+\sqrt{\frac{\beta^2}{4} + \frac{\alpha^3}{27}}} + \sqrt[3]{-\frac{\beta}{2}-\sqrt{\frac{\beta^2}{4} + \frac{\alpha^3}{27}}},
\]
where roots are meant in the complex sense. About \eqref{e:eqs}, this yields
\[s = \sqrt[3]{\frac{c^2}{2}-\frac{1}{27} +  {\rm i} \, \frac{c}{2} \sqrt{\frac{4}{27}-c^2}} + \sqrt[3]{\frac{c^2}{2}-\frac{1}{27} -  {\rm i} \, \frac{c}{2} \sqrt{\frac{4}{27}-c^2}}.\]
Observe that the module of the complex numbers under cubic roots above is
\[\sqrt{\left(\frac{c^2}{2}-\frac{1}{27}\right)^2 + \left(\frac{c}{2} \sqrt{\frac{4}{27}-c^2}\right)^2} = \frac{1}{27},
\]
and therefore their arguments are $\pm\theta$, where
\[\theta \doteq \acos\left(\frac{27}{2} \, c^2 - 1\right).\]
The three real solutions of equation \eqref{e:eqs} are then
\begin{align*}
&s_1 \doteq \frac{2}{3} \cos \left(\frac{\theta}{3}\right),&
&s_2 \doteq \frac{2}{3} \cos \left(\frac{\theta}{3}+\frac{2 \pi }{3}\right),&
&s_3 \doteq \frac{2}{3} \cos \left(\frac{\theta}{3}+\frac{4 \pi }{3}\right).
\end{align*}
By observing that $\theta \in \left(\acos\left(\frac{27}{2e^2}-1\right),\pi\right)$ one checks that $s_2 < s_3 < s_1$.
The above arguments show that
\begin{align*}
\rho_{*a0}^- &= \frac{e\,q_*^+}{a} \, r_2,&
r_2 &\doteq \frac{2}{3} \left( 1 + \cos \left(\frac{\theta}{3}+\frac{2 \pi }{3}\right) \right).
\end{align*}
We are now in the position to compute the area of $\mathsf{CH}_{\ell,1}^{\scriptscriptstyle\complement}(q_*^-,q_*^+)$:
\begin{align*}
&\left|\mathsf{CH}_{\ell,1}^{\scriptscriptstyle\complement}(q_*^-,q_*^+)\right| = 
\rho_{*a0}^- \, q_*^- - \int_0^{\rho_{*a0}^-} \mathcal{S}_1^{u_{*0}^+}(\rho) \, \d \rho =
\frac{e \, q_*^+ r_2 \left(15 \, q_*^- + 2 e \, q_*^+ (3 r_2-5) \sqrt{r_2}\right)}{15 a} 
\\={}&
\frac{4 e^2}{45 a} \, (q_*^+)^2 \sin\left(\frac{\acos\left(1-\frac{27 c^2}{2}\right)}{6}\right)^2 \left[\frac{16}{\sqrt{3}} \sin\left(\frac{\acos\left(1-\frac{27 c^2}{2}\right)}{6}\right)^3 - \frac{20}{\sqrt{3}} \sin \left(\frac{\acos\left(1-\frac{27 c^2}{2}\right)}{6}\right) + 15 c\right].
\end{align*}
The second equality above can be directly proved or by using a symbolic software. Then we prove \eqref{e:keeel}\textsubscript{$2$} for $C_2\doteq\frac{4 e^2}{45}$ and 
\begin{align}
\label{e:A}
A(z) &\doteq S(z)^2 \left[\frac{16}{\sqrt{3}} \, S(z)^3 - \frac{20}{\sqrt{3}} \, S(z) + \frac{15}{e} \, z \right],&
S(z) &\doteq \sin\left(\frac{\acos\left(1-\frac{27 z^2}{2e^2}\right)}{6}\right).
\end{align}
The fact that $A$ is strictly increasing, strictly convex and $A(0)=0$ can be proved by tedious calculations or, better, by exploiting a symbolic software.
Again by direct computations, or by using a symbolic software, we deduce
\begin{align*}
\partial_{q_*^-} \left|\mathsf{CH}_{\ell,1}^{\scriptscriptstyle\complement}(q_*^-,q_*^+)\right| &=
\frac{4 e}{3 a} \, q_*^+ \, S\left(\frac{q_*^-}{q_*^+}\right)^2, \quad \partial_{q_*^+} \left|\mathsf{CH}_{\ell,1}^{\scriptscriptstyle\complement}(q_*^-,q_*^+)\right| =
-\frac{4 e}{135 a} \, q_*^+ \, S\left(\frac{q_*^-}{q_*^+}\right)^2 \left( 16 \sqrt{3} \, e \, S\left(\frac{q_*^-}{q_*^+}\right) - 9 \, \frac{q_*^-}{q_*^+} \right),
\end{align*}
that imply the monotonicity properties of $\left|\mathsf{CH}_{\ell,1}^{\scriptscriptstyle\complement}(q_*^-,q_*^+)\right|$ because $S(x)$ is proved to be strictly increasing, $S(0)=0$ and $S'(x) \geq \frac{\sqrt{3}}{2 e} > \frac{9}{16 \sqrt{3} e}$.
Notice that from \eqref{e:keeel}\textsubscript{$2$} we find again \eqref{e:kkwj}\textsubscript{$2$} because $S(0)=0$.
This concludes the proof.
\end{proof}

\begin{rmk}
The total area of the incoherence domain $\mathsf{CH}^{\scriptscriptstyle\complement}$ is 
\[
\mathcal{A}(q_*^-,q_*^+) \doteq \frac{C_1}{a} \, (q_*^+)^2+
\frac{C_2}{a} \, A\left(\frac{q_*^-}{q_*^+}\right)(q_*^+)^2.
\]
The function $A=A(z)$ can be extended to $0$ as a $C^1$ function and $A'(0)=0$. With this remark we have
\[
a \, \nabla\mathcal{A}(q_*^-,q_*^+)= \left(C_2 A'\left(\frac{q_*^-}{q_*^+}\right)q_*^+, 2C_1q_*^+ - C_2A'\left(\frac{q_*^-}{q_*^+}\right)q_*^- + 2C_2A\left(\frac{q_*^-}{q_*^+}\right)q_*^+\right).
\]
The function $\mathcal{A}$ has no critical points in the interior of $[0,q_*] \times [q_*,+\infty)$. Therefore, the problem of minimizing the incoherence domain with respect to $(q_*^-,q_*^+)$ only has solutions lying on the boundary. 
Also the problem of minimizing $[0,q_{*}]\ni q\mapsto\mathcal{A}(q,q_*^+) $, for $q_*^+ \in [q_*,+\infty)$ fixed, only has solutions on the boundary, more precisely at $q=0$. 
At last, also the problem of minimizing $[q_{*},+\infty)\ni q\mapsto\mathcal{A}(q_*^-,q) $ for $q_*^- \in [0,q_*]$ fixed only has solutions on the boundary, more precisely at $q=q_{*}$.
Indeed, it is sufficient to observe that $C_2A'(z)z < 2C_1+2C_2A(z)$ for any $z=q_*^-/q \in (0,1]$.
\end{rmk}

By Proposition~\ref{p:area}, for any fixed $q_*>0$, the pair $(q_*^-,q_*^+)=(0,q_*)$ minimizes the area of the incoherent domain $\mathsf{CH}^{\scriptscriptstyle\complement}$ among all $(q_*^-,q_*^+) \in [0,q_*] \times [q_*,+\infty)$.
However, the choice $q_*^-=0$ implies that the flow through the valve may take any value in the interval $[0,q_*]$. Consequently, the resulting flow at $x=0$ may, in general, differ significantly from the desired one.

\begin{rmk}
Recall that $\kappa \doteq v(u_{*a0}^{+})/a$, see the proof of Proposition~\ref{p:area}. 
This means that the points $u_{*a0}^{+}$ lie on the straight line $q=a\,\kappa\,\rho$ for every $q_*^+$.
As a consequence, see \figurename~\ref{f:CH}, we have
\[\mathsf{CH}_{\ell,1}^{\scriptscriptstyle\complement}(q_*^-,q_*^+) \subset \mathsf{CH}_{\ell,0}^{\scriptscriptstyle\complement}(q_*^+) \subset \{(\rho,q)\in\Omega : v(\rho,q) > a\, \kappa ,\ q \leq q_*^+\}.\]
Moreover, if $q_*^\pm=q_*$ then $\mathsf{CH}_{\ell,0}^{\scriptscriptstyle\complement}(q_*^+) = \mathsf{CH}_{\ell,1}^{\scriptscriptstyle\complement}(q_*^-,q_*^+)$ and they coincide with the incoherent domain studied in \cite[Theorem~2.10]{C1} for the valve without hysteresis.
\end{rmk}

\section{Numerical simulations}
\label{s:num}
\setcounter{equation}{0}

In this section, first we describe, in Subsection~\ref{s:num1}, the numerical scheme we are going to use in the following.
Subsection~\ref{s:num2} concerns some examples to check the convergence of the scheme by comparing the numerical with the exact solution of some Riemann problems.
We also study the convergence rate of the scheme.
In Subsection~\ref{s:num3} we show some numerical simulations for incoherent Riemann data and discuss the onset of chattering.
In the last set of examples, see Subsection~\ref{s:num4}, we consider a strongly oscillatory inflow and discuss the valve behavior depending on $q_{*}^{-}$ and $q_{*}^{+}$.
There, we also discuss how much the flow of the solution at $x=0$ is close to the desired value $q_*$.

\subsection{Description of the numerical scheme}
\label{s:num1}

We describe the numerical scheme used to approximate the solutions provided by the c-Riemann solver $\rsc[u_\ell,u_r,h_o]$, see Subsection~\ref{subsec:cRiemannValve}. The method is based on the Random Choice Method (RCM) and it is inspired by the one introduced in \cite{C1}.

Let $\Delta x$ and $\Delta t_{n+1}$ denote the space and (adaptive) time steps, respectively. We introduce the grid points $x_{j+1/2} \doteq j\,\Delta x$, the cells $K_j \doteq [x_{j-1/2}, x_{j+1/2}),$ and their centers $ x_j \doteq \left(j-1/2\right)\Delta x,$ for $j \in \mathbb{Z}.$ Moreover, we denote by $j_{\rm c}$ the index such that $x_{j_{\rm c}+1/2}$ is the location of the valve. We define the time grid recursively
\begin{align*}
&t^0=0,&
&t^{n+1}=t^n+\Delta t_{n+1}.
\end{align*}
Above, the time step $\Delta t_{n+1}$ is chosen adaptively so that the following CFL condition is satisfied:
\[
\Delta t_{n+1} = \frac{C_{\rm CFL} \Delta x}{\max\limits_{j \in \mathbb{Z}} \max\limits_{i\in\{1,2\}} |\lambda_i(u_j^{n})|},
\]
with $0 < C_{\rm CFL} < 1/2$; in our simulations we take $C_{\rm CFL} = 0.45$.
For a given time horizon $T>0$, set $N \doteq \min\{n\in\N : t^{n} > T\}$.

We now introduce the piecewise constant approximate solutions $(u_\Delta,h_\Delta)$. 
We compute the c-function $C \colon \dom \times \{0,1\} \to \dom \times \{0,1\}$ numerically. Let $
C_\Delta \doteq \left(u_{c \,\Delta}^-, u_{c \,\Delta}^+, \hc\right)$ be the numerical approximation of $C = (\ucm,\ucp,\hc),$ obtained through the following steps: given $(u_\ell,u_r,h_o)$
\begin{enumerate}
\item Compute $\overline{Q}(u_\ell)$ from \eqref{e:SleepToken};
\item Determine $\hc(u_\ell,h_{o}) \in \{0,1\}$ from \eqref{e:hc};
\item Compute $\qc(u_\ell,h_{o})$ from \eqref{e:qc};
\item Use a numerical approximation of \eqref{e:ucpm} to obtain $u_{c \,\Delta}^-$ and $u_{c\, \Delta}^+$.
\end{enumerate}
Denote by $\left(u_j^n,h^n\right)$ the constant value attained by $(u_\Delta,h_\Delta)$ on each time-space cells $K_j^n \doteq [t^n,t^{n+1}) \times K_j$; it will be defined iteratively below.
Given the initial datum $(u_o,h_o)$, we set
\begin{align*}
u_j^0 &\doteq \frac{1}{\Delta x} \int_{K_j} u_o(x)\,\d x,&
h^0 &\doteq h_o.
\end{align*}
Assuming that $u_j^n$ for all $j \in \mathbb{Z}$ and $h^n$ are known, we compute $u_j^{n+1}$ and $h^{n+1}$ as follows:

\begin{itemize}
\item We first generate a (quasi-)random number $\theta^n \in (0,1)$ using the base-2 van der Corput sequence:
\[
\theta^n \doteq \sum_{k=0}^m i_k\,2^{-(k+1)},
\]
where $i_k\in\{0,1\}$ are the coefficients of the binary expansion of $n$, that is $n = \sum_{k=0}^m i_k 2^{k}$.
\item For $j\notin \{j_{\rm c},j_{c+1}\}$ we set
\[
u_j^{n+1}\doteq\begin{cases}
\rss[u_{j-1}^n,u_j^n]\left(\theta^n\dfrac{\Delta x}{\Delta t_{n+1}} \right) &\hbox{if } 0\leq \theta^n \leq 1/2,
\\[7pt]
\rss[u_{j}^n,u_{j+1}^n]\left((\theta^n-1)\dfrac{\Delta x}{\Delta t_{n+1}} \right) &\hbox{if } 1/2< \theta^n \leq 1.
\end{cases}
\]
\item For $j \in \{j_{\rm c},j_{c+1}\}$ we let
\begin{align*}
u_{j_{\rm c}}^{n+1} &\doteq\begin{cases}
\rss[u_{j_{\rm c}-1}^n,u_{j_{\rm c}}^n]\left(\theta^n\dfrac{\Delta x}{\Delta t_{n+1}} \right) &\hbox{if } 0\leq \theta^n \leq 1/2,
\\[7pt]
\rss[u_{j_{\rm c}}^n,u_{c \,\Delta}^-(u_{j_{\rm c}}^n,h^n)]\left((\theta^n-1)\dfrac{\Delta x}{\Delta t_{n+1}} \right) &\hbox{if } 1/2< \theta^n \leq 1,
\end{cases}
\\
u_{j_{c+1}}^{n+1} &\doteq\begin{cases}
\rss[u_{c \,\Delta}^+(u_{j_{\rm c}}^n,u^n_{j_{\rm c}+1},h^n),u_{j_{\rm c}+1}^n]\left(\theta^n\dfrac{\Delta x}{\Delta t_{n+1}} \right) &\hbox{if }  0\leq \theta^n \leq 1/2,
\\[7pt]
\rss[u_{j_{\rm c}+1}^n,u_{j_{\rm c}+2}^n]\left((\theta^n-1)\dfrac{\Delta x}{\Delta t_{n+1}} \right) &\hbox{if }  1/2< \theta^n \leq 1.
\end{cases}
\end{align*}
\item We update the valve variable as
\[
h^{n+1} \doteq \hc(u_{j_{\rm c}}^n,h^n).
\]
\end{itemize}

\subsection{Numerical simulations with Riemann data in \texorpdfstring{$\mathsf{CH}$}{} and convergence rate estimates} 
\label{s:num2}

In this subsection, we present some numerical simulations in the computational domain $I= [-1,1]$ for the c-Riemann solver $\rsc$ given in \eqref{e:InFlames} by using Riemann data in $\mathsf{CH}$. 

In particular, in Examples~\ref{ex:1} and~\ref{ex:2}, we consider the same left and right states $u_\ell$ and $u_r$, but with $h_o=1$ and $h_o=0$, respectively. For these specific initial data, the resulting solutions are different, highlighting the role played by the initial value $h_o$. In Example~\ref{ex:3}, we choose $u_\ell \in \mathsf{CH}_{\ell,0}^{\scriptscriptstyle\complement}(q_*^+) \setminus \mathsf{CH}_{\ell,1}^{\scriptscriptstyle\complement}(q_*^-,q_*^+)$ and $h_o=1$, so that the initial datum is in $\mathsf{CH}$, to show that also in this case the scheme correctly captures the solution generated by $\rsc$. On the contrary, as we will see in Example~\ref{ex:CHc3}, if we consider the same state $u_\ell$ but $h_o=0$, then the initial datum is not in $\mathsf{CH}$ and then the scheme \emph{fails} to capture the solution generated by $\rsc$. Finally, in Example~\ref{ex:4} we show a case when the solution does not depend on $h_o$. 

These examples cover all possible wave patterns, including rarefaction and shock waves along both the forward and backward Lax curves, as well as the c-shock located at the valve. In these simulations, we use the following parameters:
\begin{align}
\label{e:parameters}
a = 2,&
&q_* = 3,&
&q_*^-=2,&
&q_*^+ = 4,&
&\Delta x = 5 \times 10^{-4},&
&T =0.2.
\end{align}

We remark that, in the following figures, $\rho$ and $q$ denote the first two components of the Riemann solver $\rsc[u_\ell,u_r,h_o]$.

\begin{ex}[Initially open valve]\label{ex:1}
Consider the initial datum $u_\ell =(2.5,2)$, $u_r=(5,-2)$, and $h_o=1$, so that the valve is initially open. In this case $v(u_\ell)<a$, implying that $\overline{Q}(u_\ell) = \overline{q}(u_\ell) \approx 2.7441 \in [q_*^-,q_*^+]$. Therefore $\hc(u_\ell,h_o)=h_o=1$ and $\qc(u_\ell,h_o) = \overline{q}(u_\ell)$. 
It follows then that $\rsc[u_\ell,u_r]$ consists of a $1$-rarefaction connecting the left state $u_\ell$ to $\ucm \approx (1.3729,2.7441)$, a stationary c-wave connecting $\ucm$ to $\ucp \approx (7.3440,2.7441)$, and a $2$-shock connecting $\ucp$ to the right state $u_r$, see \figurename~\ref{fig:ex1}.
\end{ex}

\begin{figure}[!htb]
\centering

\begin{minipage}[c]{0.33\textwidth}
\centering
\includegraphics[trim=0cm 0cm 0cm 0.34cm, clip,width=0.95\linewidth]{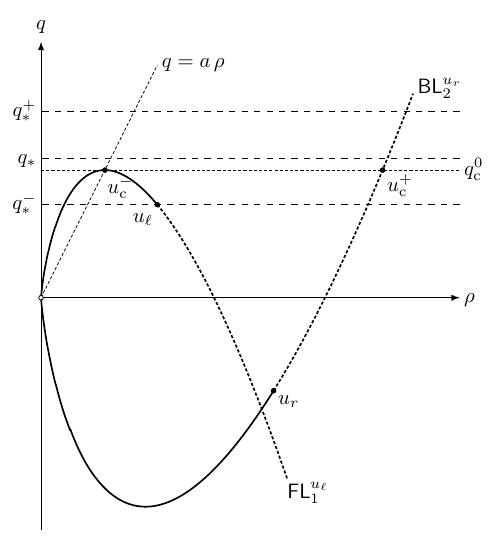}
\end{minipage}%
\hfill
\begin{minipage}[c]{0.33\textwidth}
\centering
\includegraphics[trim=2.5cm 8.3cm 1cm 8.3cm, clip, width=0.95\linewidth]{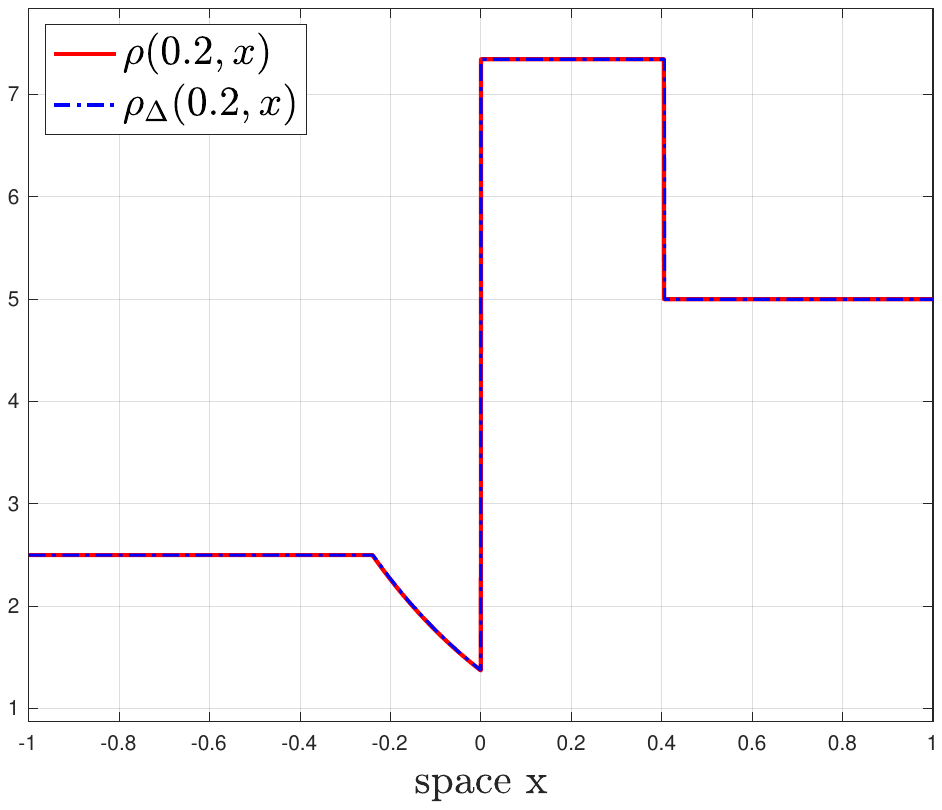}
\end{minipage}%
\hfill
\begin{minipage}[c]{0.33\textwidth}
\centering
\includegraphics[trim=2.5cm 8.3cm 1cm 8.3cm, clip, width=0.95\linewidth]{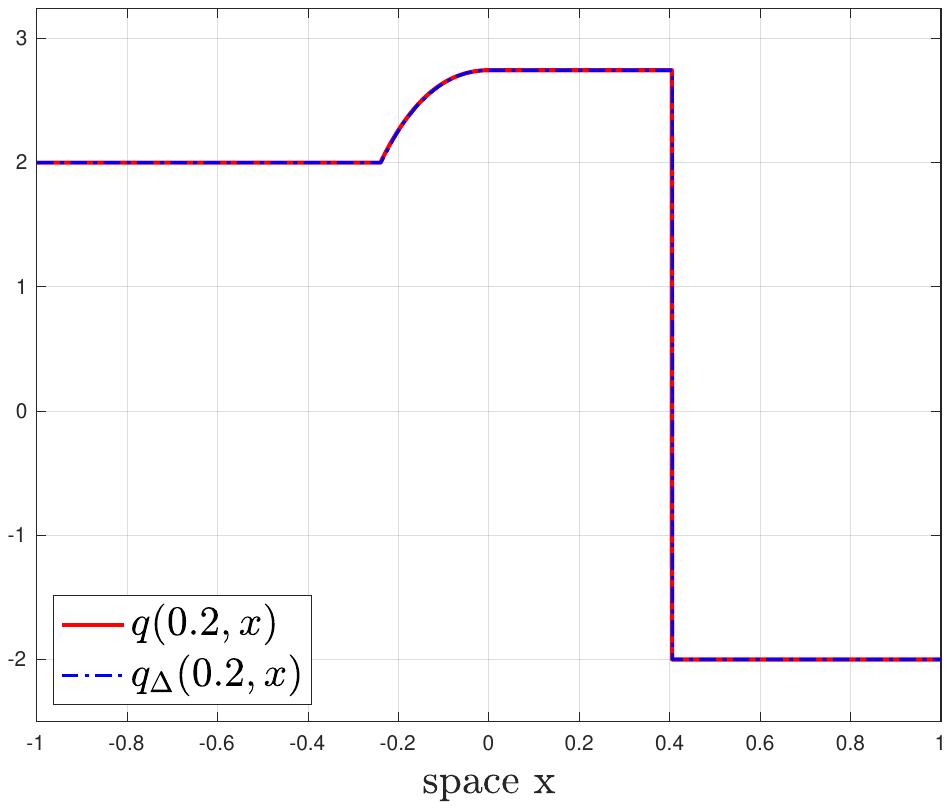} 
\end{minipage}

\caption{A solution with a $1$-rarefaction, a c-wave and a $2$-shock, see Example~\ref{ex:1}. Here and in the following three figures, the construction of the exact solution is depicted on the left, while the comparison between the numerical and exact solution is depicted on the middle and the right.}
\label{fig:ex1}
\end{figure}

\begin{ex}[Initially closed valve]\label{ex:2}
As in Example~\ref{ex:1}, consider the initial states $u_\ell =(2.5,2)$, $u_r=(5,-2)$, but now assume that the valve is initially closed, namely $h_o=0$. As already observed in the previous example, we have $v(u_\ell) < a$, implying that $\overline{Q}(u_\ell) = \overline{q}(u_\ell) \approx 2.7441 \in [q_*^-,q_*^+]$. Therefore, in this case we have $\hc(u_\ell,h_o)=h_o=0$ and $\qc(u_\ell,h_o) = 0$. 
It follows then that $\rsc[u_\ell,u_r]$ consists of a $1$-shock connecting the left state $u_\ell$ to $\ucm \approx (3.7198,0)$, a stationary c-wave connecting $\ucm$ to $\ucp \approx (6.1050,0)$, and a $2$-shock connecting $\ucp$ to the right state $u_r$, see \figurename~\ref{fig:ex2}.
\end{ex}

\begin{figure}[!htb]
\centering

\begin{minipage}[c]{0.33\textwidth}
\centering
\includegraphics[trim=0cm 0cm 0cm 0.34cm, clip,width=0.95\linewidth]{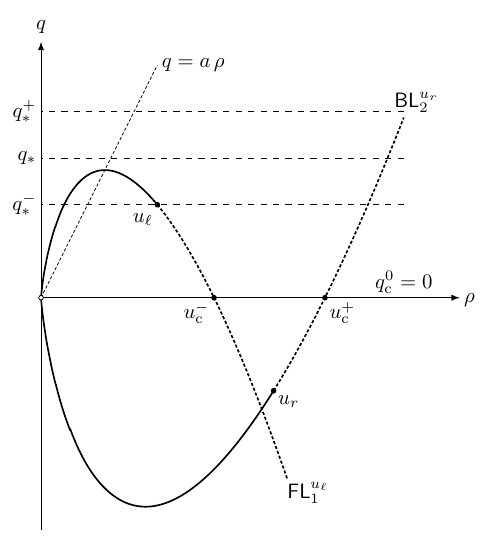}
\end{minipage}%
\hfill
\begin{minipage}{0.33\textwidth}
\centering
\includegraphics[trim=2.5cm 8.3cm 1cm 8.3cm, clip, width=0.95\linewidth]{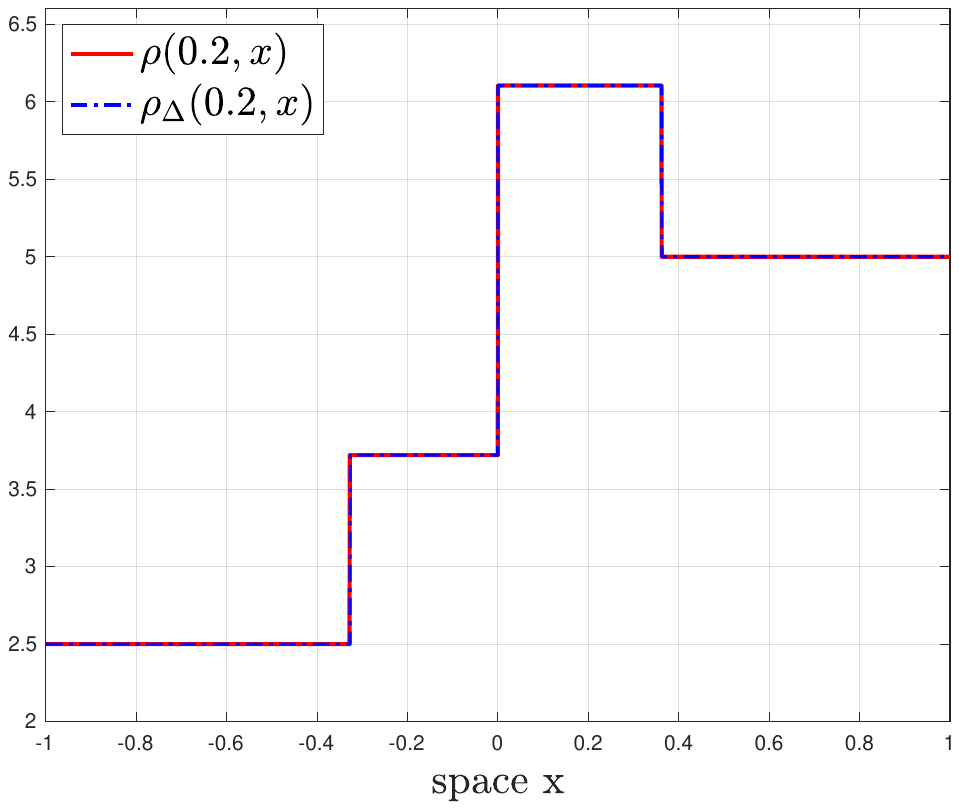}
\end{minipage}%
\hfill
\begin{minipage}{0.33\textwidth}
\includegraphics[trim=2.5cm 8.3cm 1cm 8.3cm, clip, width=0.95\linewidth]{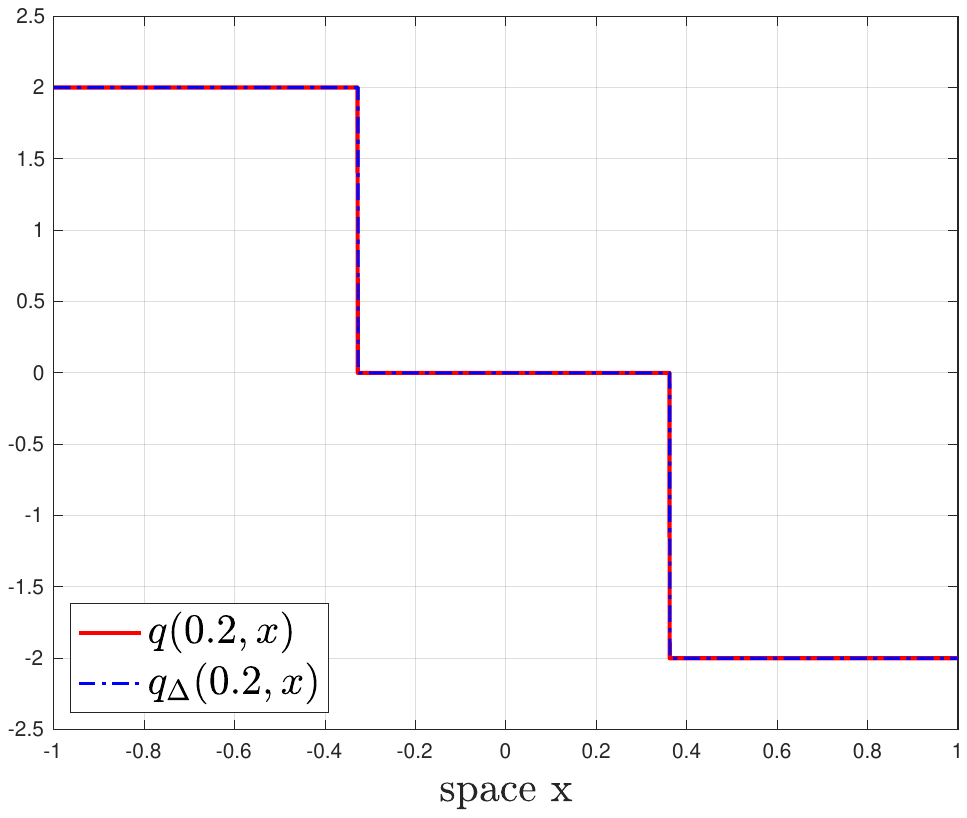}
\end{minipage}
\caption{A solution with a $1$-shock, a c-wave and a $2$-shock, see Example~\ref{ex:2}.}
\label{fig:ex2}
\end{figure}

\begin{ex}\label{ex:3}
Take the initial states $u_\ell=(0.25,2.5)$, $u_r = (2,3.5)$ and $h_o=1$. In this case $v(u_\ell)=10>a$, and then $\overline{Q}(u_\ell) = q_\ell = 2.5 \in [q_*^-,q_*^+].$ Therefore $h_c(u_\ell,h_o)=h_o=1$ and $q_c^0(u_\ell,h_o)= \overline{Q}(u_\ell) = q_\ell = 2.5$. Since again $v(u_\ell)=10>a$, then $\ucm$ coincides with $u_\ell$ because of \eqref{e:ucpm} and it is connected to $\ucp\approx(1.7226, 2.5)$ by a c-wave. In turn, $\ucp$ is connected to $u_r$ with a $2$-rarefaction, see \figurename~\ref{fig:ex3}. 
\end{ex}

\begin{figure}[!htb]
\centering

\begin{minipage}[c]{0.33\textwidth}
\centering
\includegraphics[trim=0cm 0cm 0cm 0cm, clip,width=0.95\linewidth]{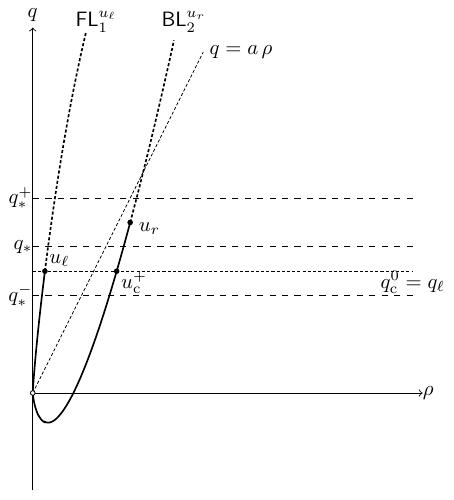}
\end{minipage}%
\hfill
\begin{minipage}{0.33\textwidth}
\centering
\includegraphics[trim=2.5cm 8.3cm 1cm 8.3cm, clip, width=0.95\linewidth]{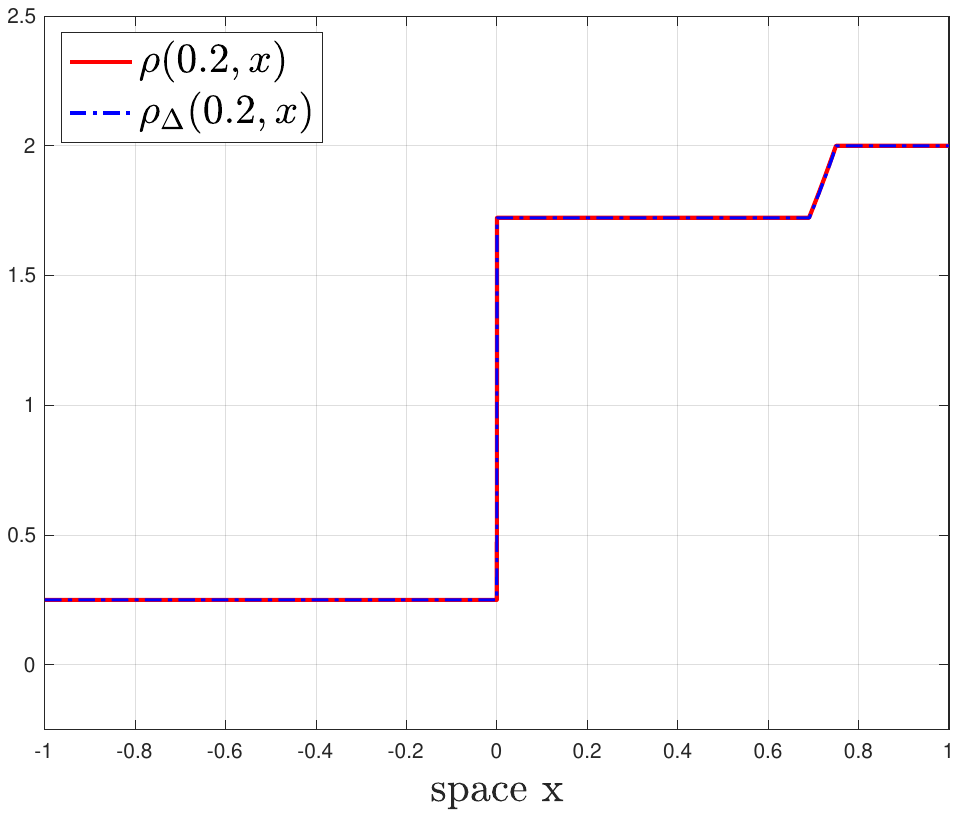}
\end{minipage}%
\hfill
\begin{minipage}{0.33\textwidth}
\includegraphics[trim=2.5cm 8.3cm 1cm 8.3cm, clip, width=0.95\linewidth]{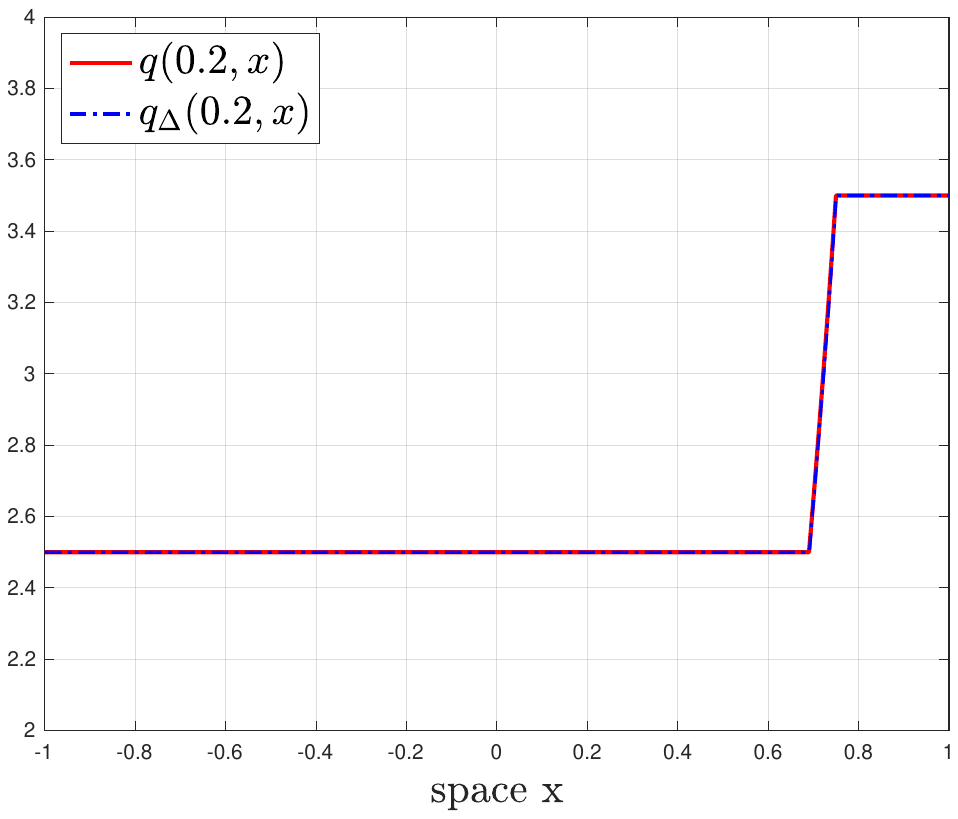}
\end{minipage}
\caption{A solution with a c-wave and a $2$-rarefaction, see Example~\ref{ex:3}.}
\label{fig:ex3}
\end{figure}

\begin{ex}[A solution independent of $h_o$]\label{ex:4}
Consider the initial states $u_\ell=(2.5,-4)$ and $u_r = (3,3)$. In this case $v(u_\ell) < a$, and therefore we have $\overline{Q}(u_\ell) = \overline{q}(u_\ell) \approx 0.8265 <q_*^-$.
As a result, the valve closes for $t>0$, $\hc(u_\ell,h_o)=0$, yielding $\qc(u_\ell,h_o)=0$, independently of the initial value of $h_o$. The left state $u_\ell$ is then connected to $\ucm\approx(1.1233,0)$ with a $1$-rarefaction, $\ucm$ to $\ucp\approx(1.8196,0)$ with a c-wave, and $\ucp$ to $u_r$ with a $2$-rarefaction, see \figurename~\ref{fig:ex4}.
\end{ex}

\begin{figure}[!htb]

\begin{minipage}[c]{0.33\textwidth}
\centering
\includegraphics[trim=0cm 0cm 0cm 0.1cm, clip,width=0.95\linewidth]{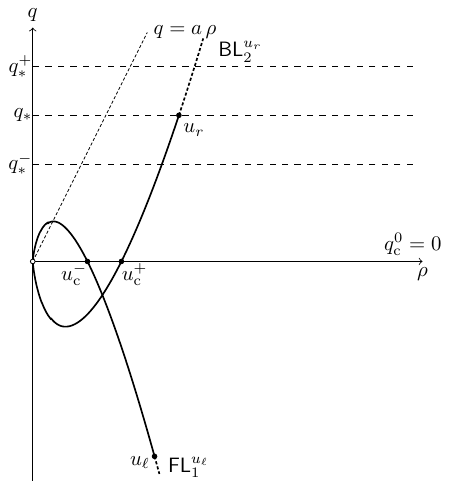}
\end{minipage}%
\hfill
\begin{minipage}{0.33\textwidth}
\centering
\includegraphics[trim=2.5cm 8.3cm 1cm 8.3cm, clip, width=0.95\linewidth]{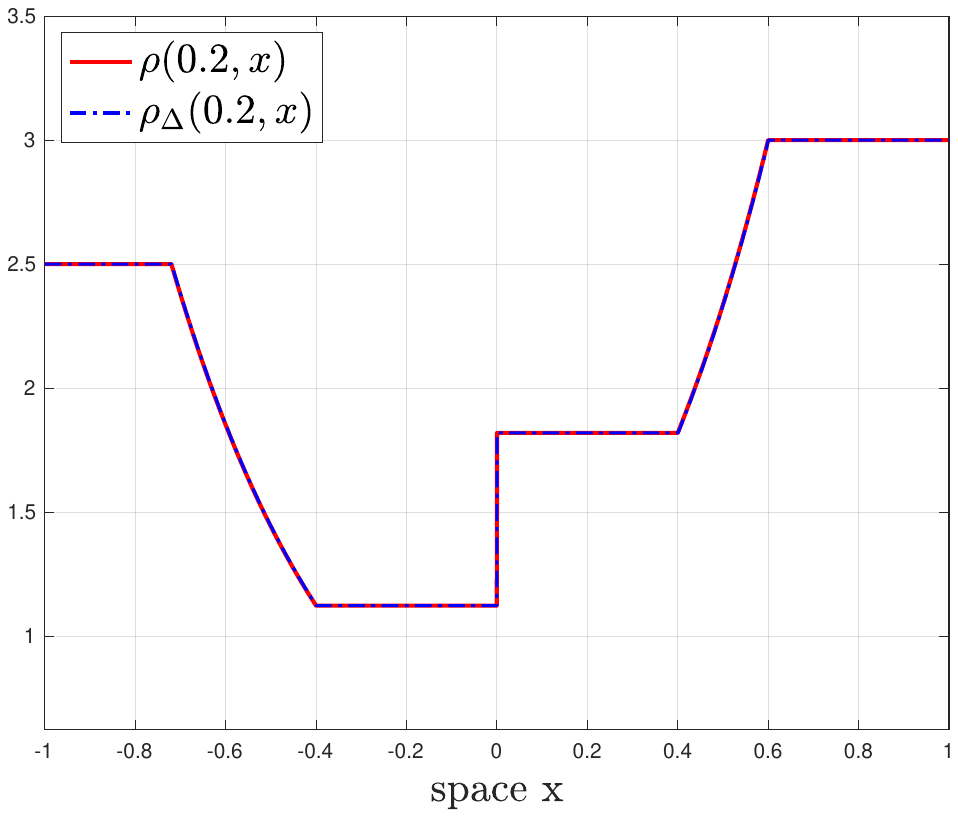}
\end{minipage}%
\hfill
\begin{minipage}{0.33\textwidth}
\includegraphics[trim=2.5cm 8.3cm 1cm 8.3cm, clip, width=0.95\linewidth]{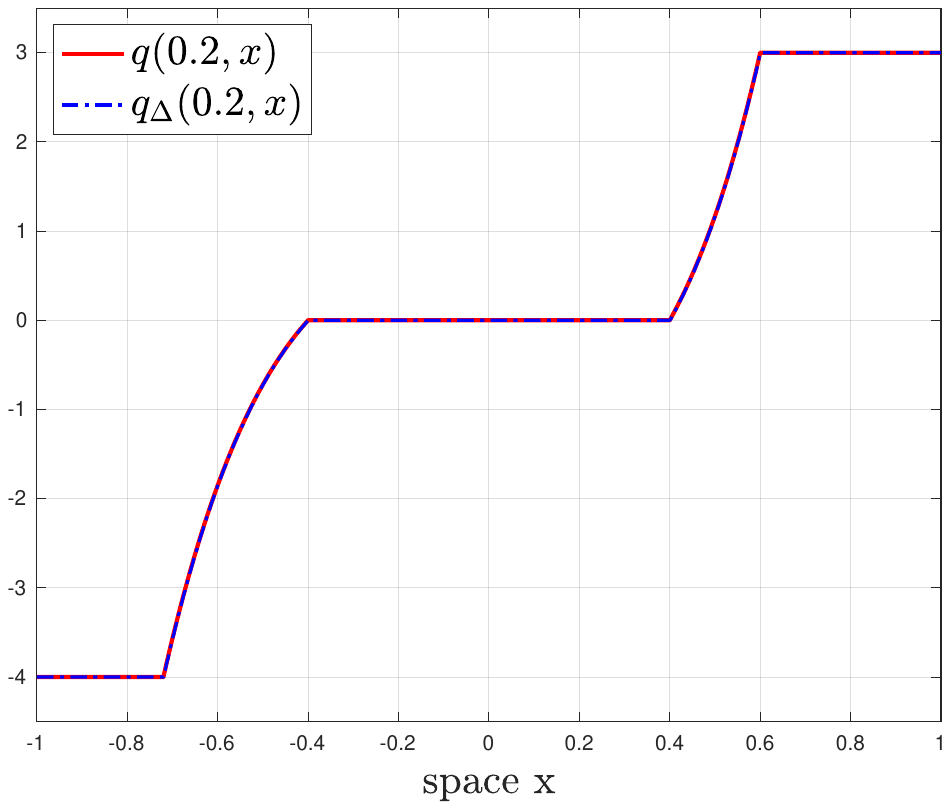}
\end{minipage}

\caption{A solution with a $1$-rarefaction, a c-wave and a $2$-rarefaction, see Example~\ref{ex:4}.}
\label{fig:ex4}
\end{figure}

The \figurename s~\ref{fig:ex1}-\ref{fig:ex4} corresponding to Examples~\ref{ex:1}-\ref{ex:4} demonstrate that the numerical scheme accurately captures the fundamental waves of the solutions to Riemann problems generated by the c-Riemann solver $\rsc$ in the case of initial data in $\mathsf{CH}$. To evaluate the scheme's performance, we introduce the relative $\L1$-error between the numerical solution $u_\Delta$ and the exact solution $u$ at time $t$:
\begin{equation}
\label{e:relerr}
e^t_{\L1}(\Delta x)\doteq \dfrac{\|u_\Delta(t,\,\cdot\,)-u(t,\,\cdot\,)\|_{\L1(I)}}{\|u(t,\,\cdot\,)\|_{\L1(I)}}.
\end{equation}

Figure~\ref{fig:err} plots the relative $\L1$-errors on a log-log scale for the previous examples, confirming the convergence of the scheme with an approximate convergence rate of $1$.

\begin{figure}[!htb]
\centering
\resizebox{\linewidth}{!}{
\begin{minipage}{\linewidth}
\includegraphics[trim=1.5cm 7.5cm 2.2cm 8.3cm, clip, width=0.24\linewidth]{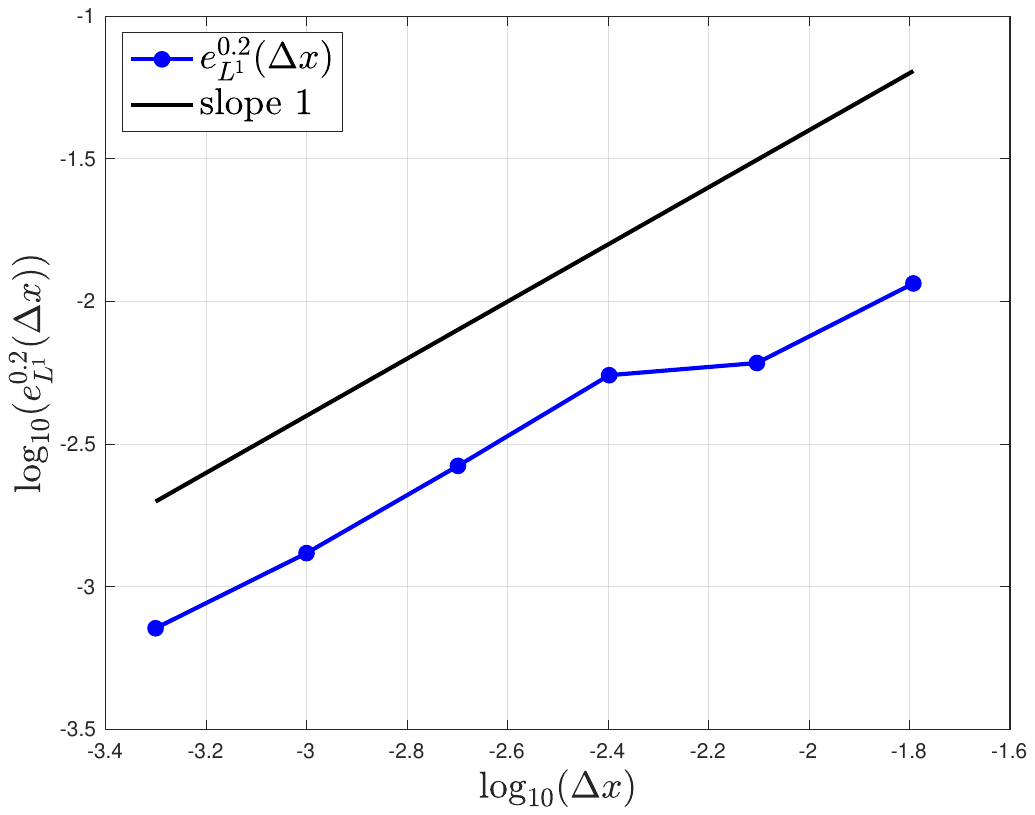}
\includegraphics[trim=1.5cm 7.5cm 2.2cm 8.3cm, clip, width=0.24\linewidth]{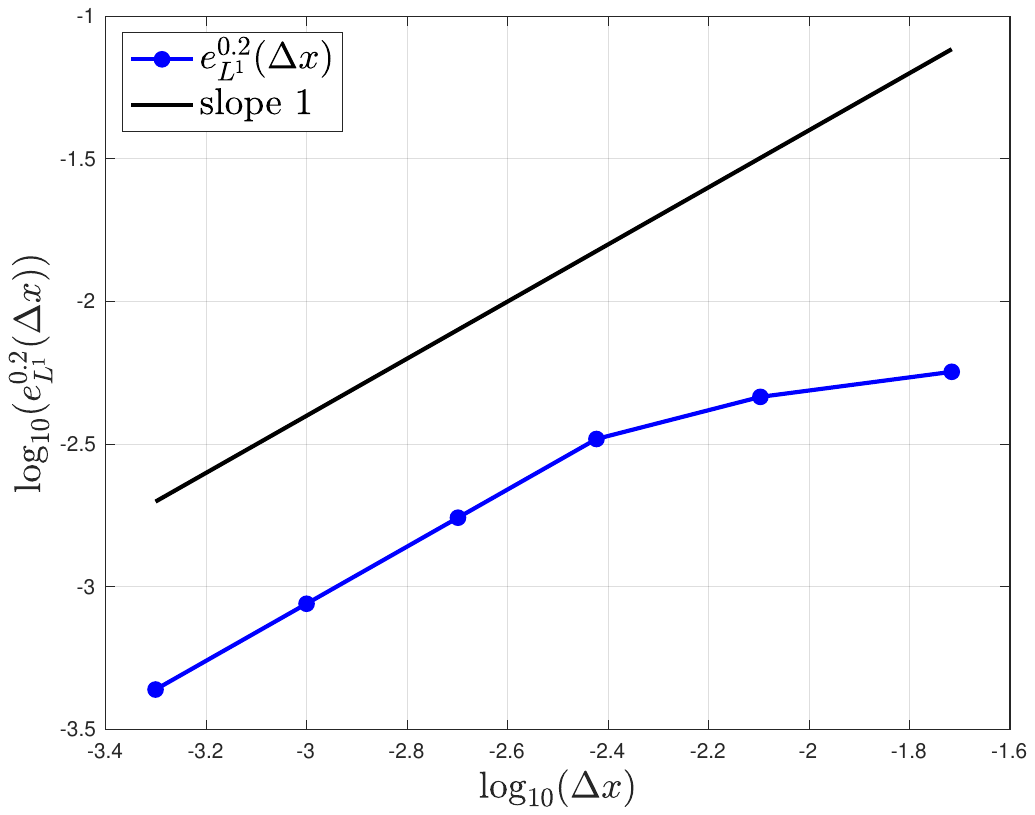}
\includegraphics[trim=1.5cm 7.5cm 2.2cm 8.3cm, clip, width=0.24\linewidth]{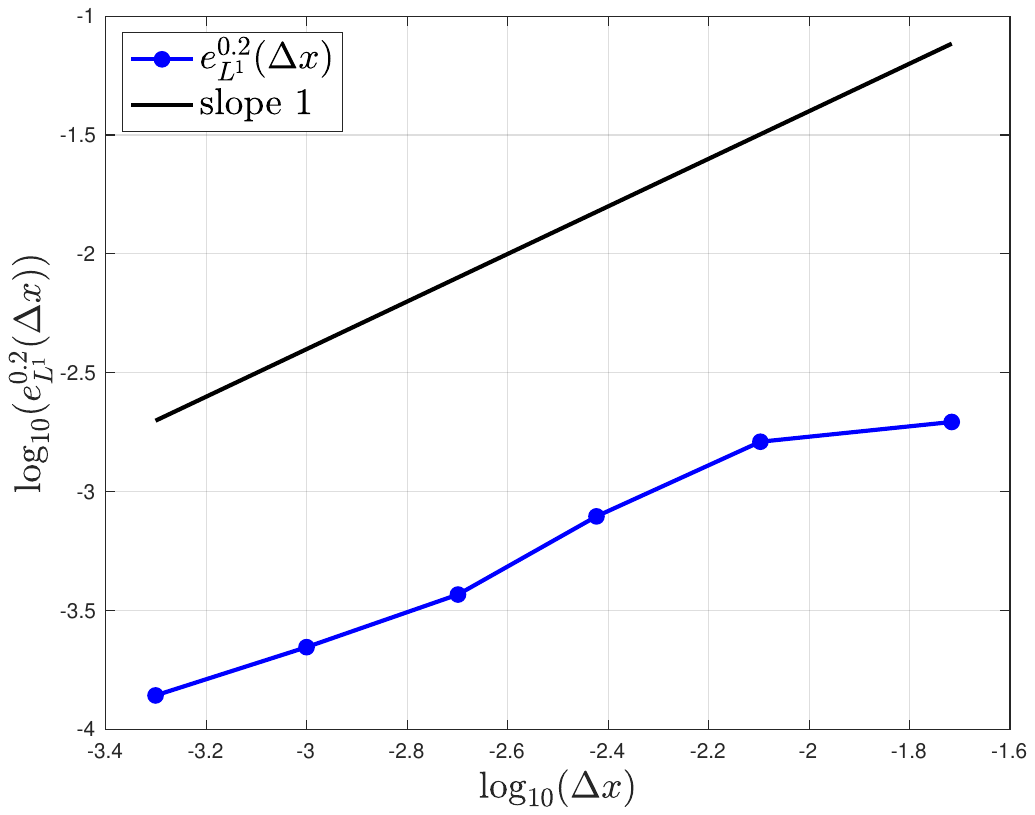}
\includegraphics[trim=1.5cm 7.5cm 2.2cm 8.3cm, clip, width=0.24\linewidth]{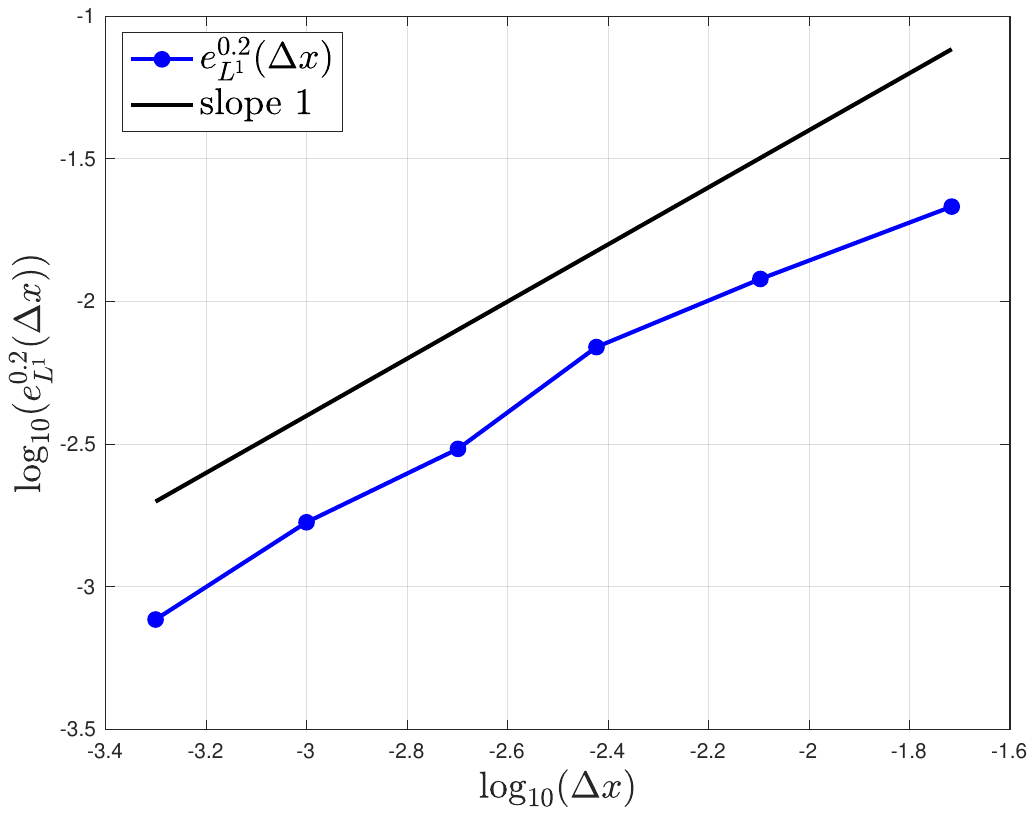}
\end{minipage}}
\caption{Relative error defined in \eqref{e:relerr} in log/log scale for Examples~\ref{ex:1}-\ref{ex:4}.}
\label{fig:err}
\end{figure}

\subsection{Numerical simulations with initial Riemann data in \texorpdfstring{$\mathsf{CH}^{\scriptscriptstyle\complement}$}{}}
\label{s:num3}

Now, we present some examples with Riemann data $(u_\ell,u_r,h_o) \in \mathsf{CH}^{\scriptscriptstyle\complement}$. In these cases,  for some initial time steps, the numerical solution may exhibit instabilities around the valve location $x=0$. For subsequent time steps, such instabilities either can lead to chattering phenomena, producing large oscillations in the $\rho$, $q$, and $h$ components of the approximate solutions $u_\Delta$ for $x>0$, or they may decay, yielding non-oscillatory profiles with, in general, $(u_\Delta(T,0^-),u_\Delta(T,0^+),h_{\Delta}(T)) \not = C_\Delta(u_\ell,u_r,h_o)$, where $T=0.2$. 

More specifically, in Example~\ref{ex:CHc1}, we consider $u_\ell \in \mathsf{CH}_{\ell,1}^{\scriptscriptstyle\complement}(q_*^-,q_*^+) \subset \mathsf{CH}_{\ell,0}^{\scriptscriptstyle\complement}(q_*^+)$, hence with $q_\ell<q_*^-$. Independently of the value of $h_o$, this choice leads to a chattering effect in the numerical solution, which {\em persists} and is {\em amplified} over time, as can still be observed at time $T$. In Example~\ref{ex:CHc2}, we consider the same data $u_r$ as in Example~\ref{ex:CHc1}, but now with $u_\ell \in \mathsf{CH}_{\ell,0}^{\scriptscriptstyle\complement}(q_*^+) \setminus \mathsf{CH}_{\ell,1}^{\scriptscriptstyle\complement}(q_*^-,q_*^+)$, so $q_\ell \in [q_*^-,q_*^+]$, and with $h_o=0$. The instability {\em disappears} by $T=0.2$; however, the numerical scheme {\em fails} to capture the solution generated by $\rsc$. Finally, in Example~\ref{ex:CHc3}, we consider the same data $u_\ell$ and $u_r$ as in Example~\ref{ex:3}, but with $h_o=0$. In this case we have $u_\ell \in \mathsf{CH}_{\ell,0}^{\scriptscriptstyle\complement}(q_*^+) \setminus \mathsf{CH}_{\ell,1}^{\scriptscriptstyle\complement}(q_*^-,q_*^+)$, as in Example \ref{ex:CHc2}, and observe that, although the instability again disappears by $T=0.2$, the numerical scheme {\em fails} to capture the solution generated by $\rsc$, in contrast to the case $h_o=1$ considered in Example~\ref{ex:3}.  See also \figurename~\ref{fig:plotul} for the plots of $u_\ell$ for Examples~\ref{ex:CHc1}-\ref{ex:CHc3}.

In the simulations we again use the parameters given in \eqref{e:parameters}.

\begin{figure}
\centering
\resizebox{.8\linewidth}{!}{
\begin{minipage}{0.3\linewidth}
\centering
\includegraphics[width = 0.99\linewidth]{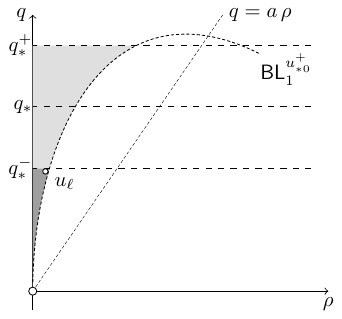}

\end{minipage}
\begin{minipage}{0.3\linewidth}
\centering
\includegraphics[width = 0.99\linewidth]{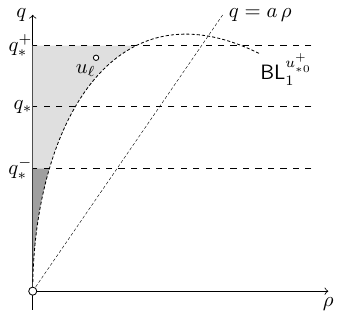}
    
\end{minipage}
\begin{minipage}{0.30\linewidth}
\centering
\includegraphics[width = 0.99\linewidth]{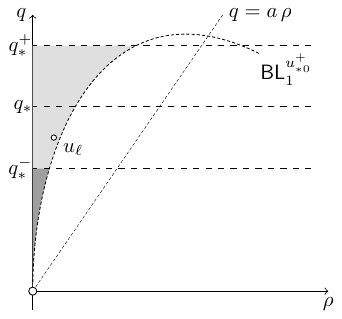}
    
\end{minipage}}
\caption{
The states $u_\ell$ for Examples~\ref{ex:CHc1}-\ref{ex:CHc3}, from left to right. The dark-gray region represents $\mathsf{CH}_{\ell,1}^{\scriptscriptstyle\complement}(q_*^-,q_*^+) \subset \mathsf{CH}_{\ell,0}^{\scriptscriptstyle\complement}(q_*^+)$; the light-gray region represents $\mathsf{CH}_{\ell,0}^{\scriptscriptstyle\complement}(q_*^+) \setminus \mathsf{CH}_{\ell,1}^{\scriptscriptstyle\complement}(q_*^-,q_*^+)$.}
    \label{fig:plotul}
\end{figure}

\begin{ex}
\label{ex:CHc1}
Consider the initial states $u_\ell=(0.18,1.95)$ and $u_r = (1,0.5).$ As $v(u_\ell)>a$, then $\overline{Q}(u_\ell) = q_\ell = 1.95 < q_*^-$ meaning that $\qc(u_\ell,h_o) =0$ both for $h_o=0$ and $h_o=1$; moreover, $u_\ell \in \mathsf{CH}_{\ell,1}^{\scriptscriptstyle\complement}(q_*^-,q_*^+) \subset \mathsf{CH}_{\ell,0}^{\scriptscriptstyle\complement}(q_*^+)$. As a result, {\em independently of the initial value of $h_o$} we get the same numerical solution; such a solution is displayed in \figurename~\ref{fig:chattering1} and exhibits chattering effect. It can be observed that $q_{j_{\rm c}}^n=q_\ell$ after a finite number of time steps, because $q_\ell < q_*^-$. On the other hand, we have $q_{j_{\rm c}+1}^n \neq q_\ell$ because by construction $q_{j_{\rm c}+1}^n$ belongs to $\{0\} \cup [q_*^-,q_*]$ whereas $q_\ell$ does not; therefore the approximate solution does not satisfy the first Rankine-Hugoniot condition \eqref{e:RH1} at $x=0$.
In other words, the scheme does not conserve the mass in this case!
\end{ex}

\begin{figure}[!htb]
\centering
\begin{minipage}{0.32\linewidth}
\centering
\includegraphics[trim=2.2cm 7.5cm 2.2cm 8.3cm, clip, width=\linewidth]{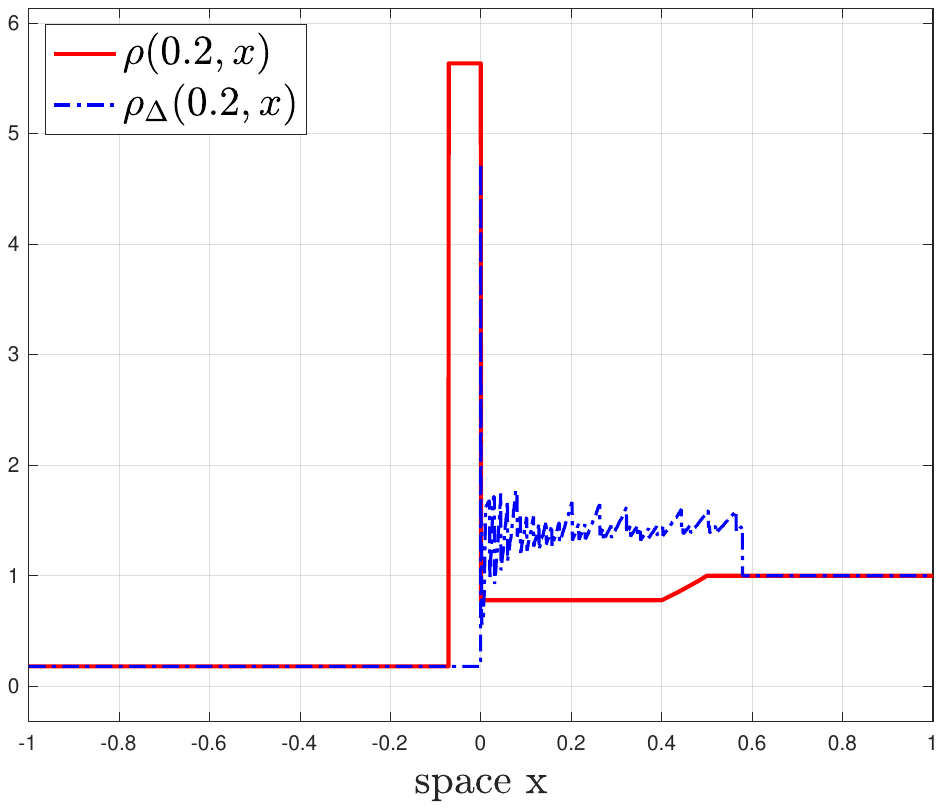}
\end{minipage}\hfill
\begin{minipage}{0.32\linewidth}
\centering
\includegraphics[trim=2.2cm 7.5cm 2.2cm 8.3cm, clip, width=\linewidth]{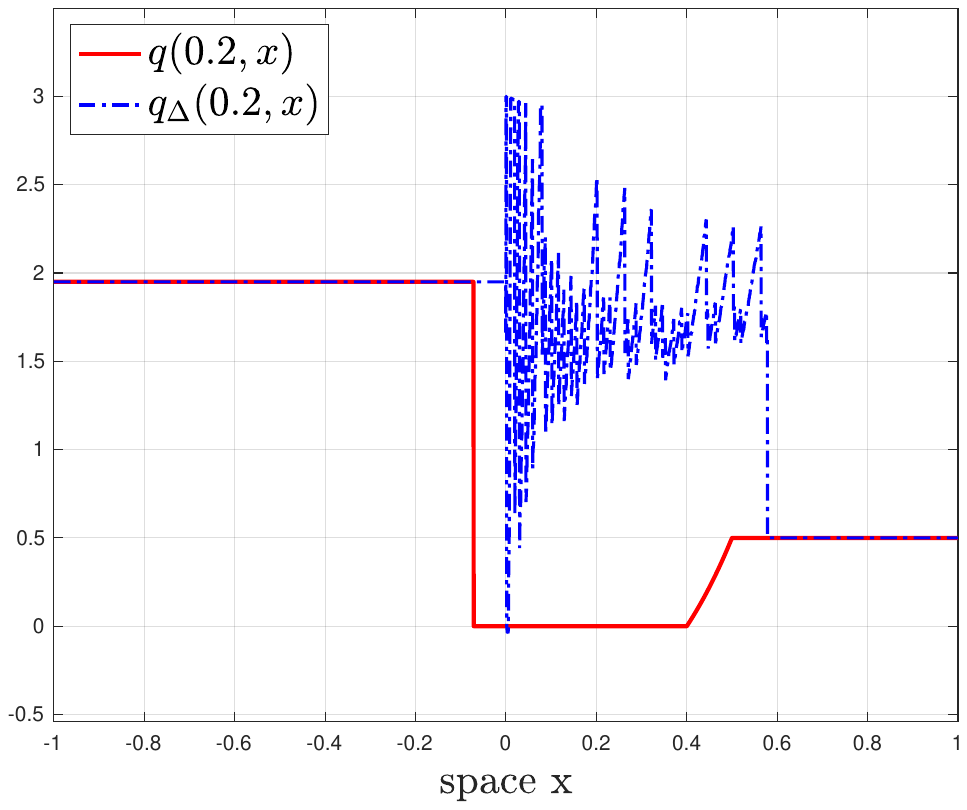}
\end{minipage}\hfill
\begin{minipage}{0.32\linewidth}
\centering
\includegraphics[trim=2.2cm 7.5cm 2.2cm 8.3cm, clip, width=\linewidth]{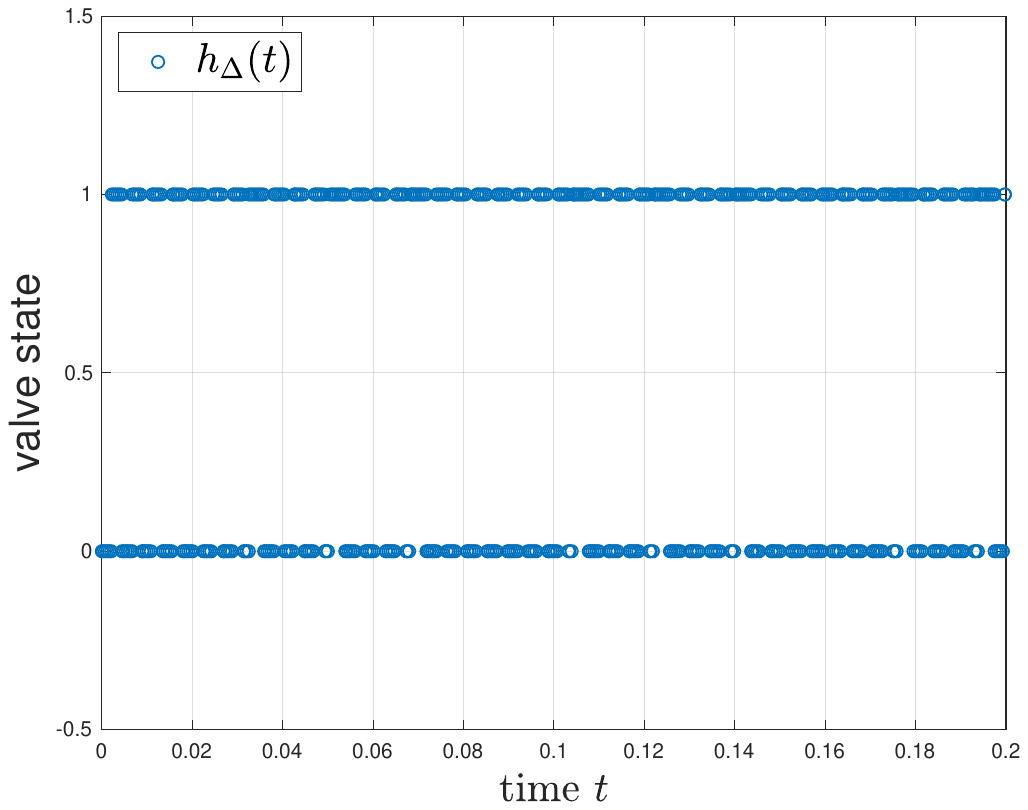}
\end{minipage}
\caption{Chattering at $T=0.2$ in the numerical solution corresponding to the Riemann datum in $\mathsf{CH}^{\scriptscriptstyle\complement}$ with $h_o=1$, as in Example~\ref{ex:CHc1}.}
\label{fig:chattering1}
\end{figure}

\begin{ex}
\label{ex:CHc2}
Consider the initial states $u_\ell = (0.75, 3.8)$, $u_r=(1,0.5)$, with $h_o=0.$ In this case the chattering effect in the numerical solution $u_\Delta$ is no longer present at time $T$, because $q_\ell > q_*$. However, we have $\bigl(u_\Delta(T,0^-),u_\Delta(T,0^+),h_{\Delta}(t)\bigr) \not = C_\Delta(u_\ell,u_r,0)$, as can be seen in \figurename~\ref{fig:chattering0}: in particular, one can verify that $\bigl(u_\Delta(T,0^-),u_\Delta(T,0^+),h_\Delta(T)\bigr)=C_\Delta(u_\ell,u_r,1)$. Moreover, during an initial time interval, as shown in \figurename~\ref{fig:chattering0} on the right, it holds $h_\Delta(t)=0$, and the numerical solutions for $\rho$ and $q$ approximate the first two components of $\rsc[u_\ell,u_r,0]$. Afterwards, we have $h_\Delta(t)=1$, and the numerical solutions for $\rho$ and $q$ capture, with some time delay, the first two components of $\rsc[u_\ell,u_r,1]$. Notice that, we have $q_\ell>q_*^-$ and $u_\ell \in \mathsf{CH}_{\ell,0}^{\scriptscriptstyle\complement}(q_*^+) \setminus \mathsf{CH}_{\ell,1}^{\scriptscriptstyle\complement}(q_*^-,q_*^+)$.

\end{ex}

\begin{figure}[!htb]
\centering
\begin{minipage}{0.32\linewidth}
\centering
\includegraphics[trim=2.2cm 7.5cm 2.2cm 8.3cm, clip, width=\linewidth]{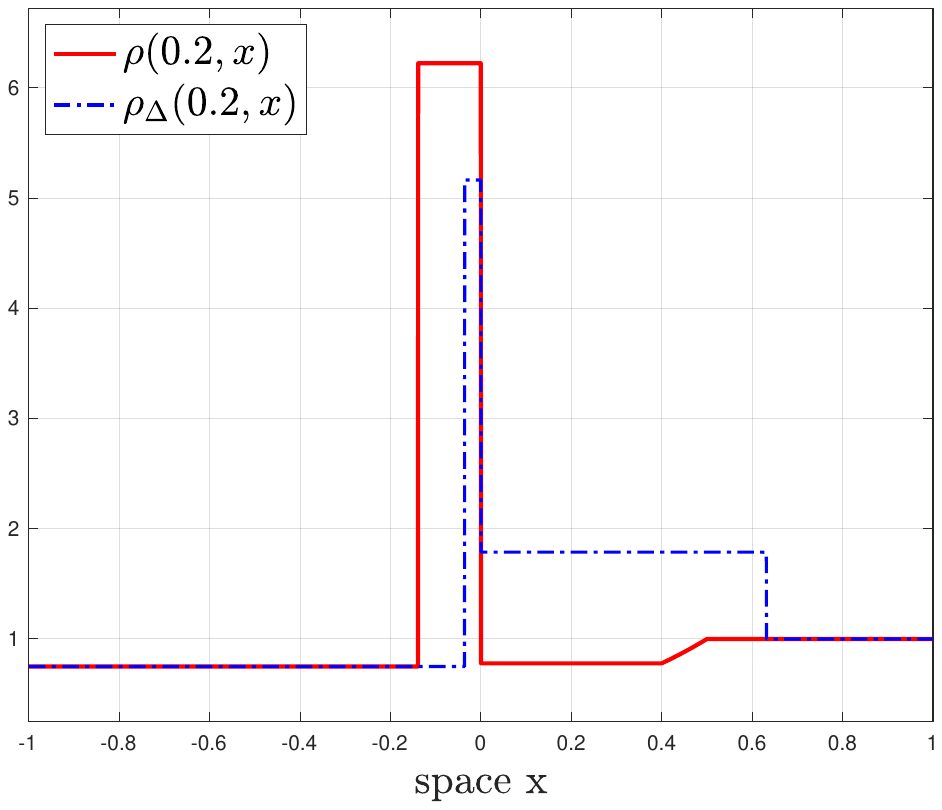}
\end{minipage}\hfill
\begin{minipage}{0.32\linewidth}
\centering
\includegraphics[trim=2.2cm 7.5cm 2.2cm 8.3cm, clip, width=\linewidth]{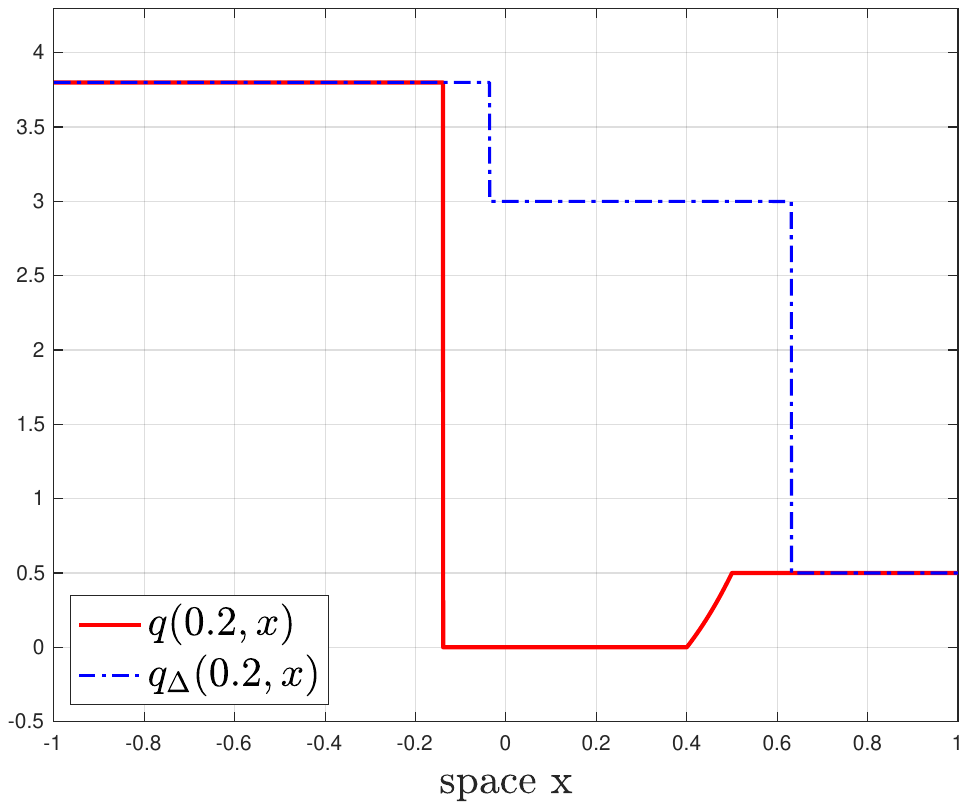}
\end{minipage}\hfill
\begin{minipage}{0.32\linewidth}
\centering
\includegraphics[trim=2.2cm 7.5cm 2.2cm 8.3cm, clip, width=\linewidth]{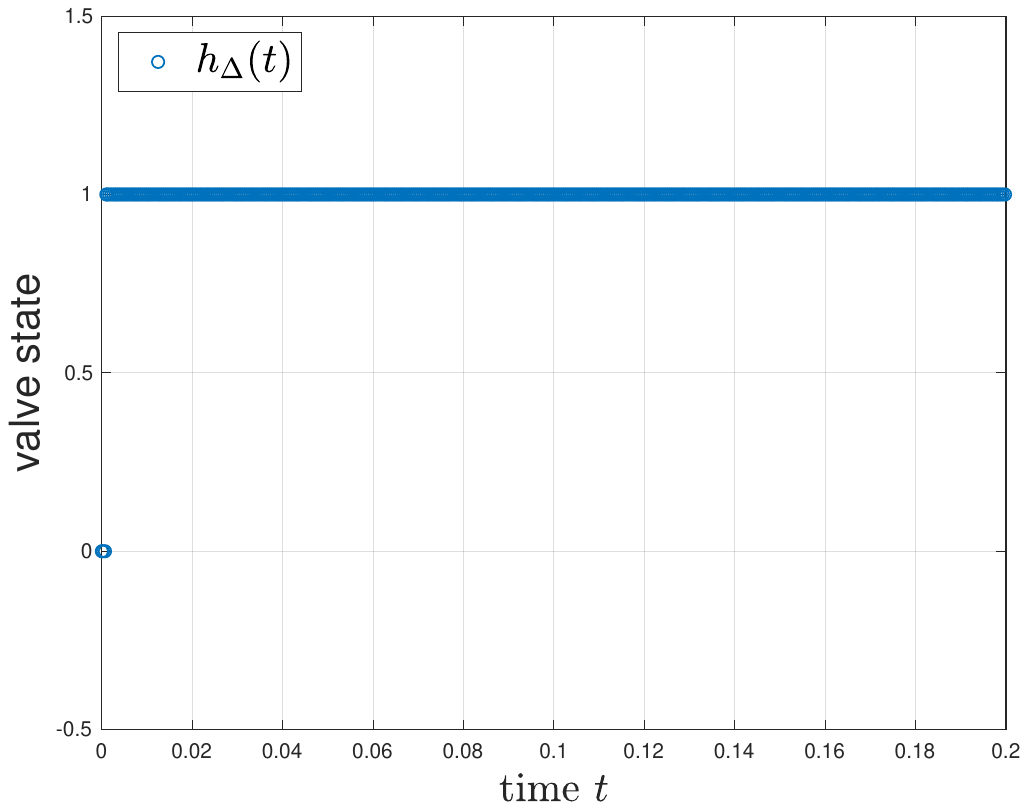}
\end{minipage}
\caption{No chattering at $T=0.2$ in the numerical solution corresponding to the Riemann datum in $\mathsf{CH}^{\scriptscriptstyle\complement}$ with $h_o=0$, as in Example~\ref{ex:CHc2}.}
\label{fig:chattering0}
\end{figure}

\begin{ex}
\label{ex:CHc3}
Consider the same initial states $u_\ell=(0.25,2.5)$ and $u_r = (2,3.5)$ as in Example~\ref{ex:3}, but with $h_o=0$.
Again, as in Example~\ref{ex:CHc2}, the numerical solution does not exhibit chattering at $T$, even if $\bigl(u_\Delta(T,0^-),u_\Delta(T,0^+),h_\Delta(T)\bigr)\neq C_\Delta(u_\ell,u_r,0)$, as shown in \figurename~\ref{fig:no_chattering}. Instead, we have $\bigl(u_\Delta(T,0^-),u_\Delta(T,0^+),h_\Delta(T)\bigr)=C_\Delta(u_\ell,u_r,1)$. Thus, during an initial time interval, the numerical solution approximates the first two components of $\rsc[u_\ell,u_r,h_o]$, and afterwards those of $\rsc[u_\ell,u_r,1]$, with some time delay. Moreover, $u_\ell \in \mathsf{CH}_{\ell,0}^{\scriptscriptstyle\complement}(q_*^+) \setminus \mathsf{CH}_{\ell,1}^{\scriptscriptstyle\complement}(q_*^-,q_*^+)$. 
This example and Example~\ref{ex:3} show that, {\em depending on the choice of $h_o$}, the numerical solution may or may not be able to capture the exact solution, see also Figure~\ref{fig:ex3}.
\end{ex}

\begin{figure}[!htb]
\centering
\begin{minipage}{0.32\linewidth}
\centering
\includegraphics[trim=2.2cm 7.5cm 2.2cm 8.3cm, clip, width=\linewidth]{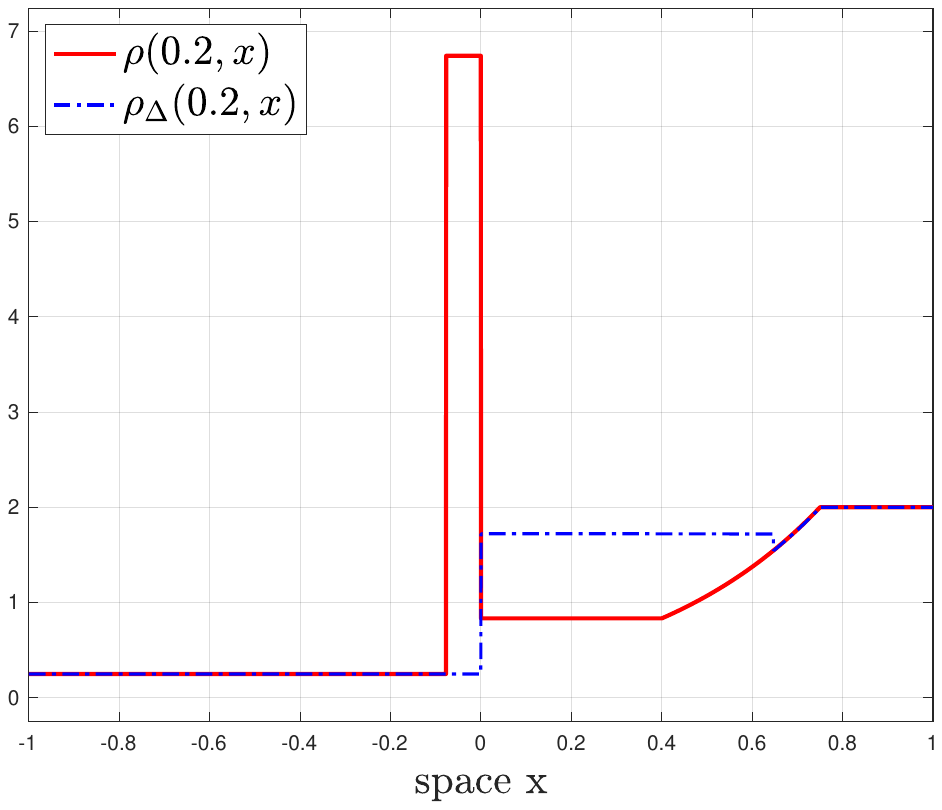}
\end{minipage}\hfill
\begin{minipage}{0.32\linewidth}
\centering
\includegraphics[trim=2.2cm 7.5cm 2.2cm 8.3cm, clip, width=\linewidth]{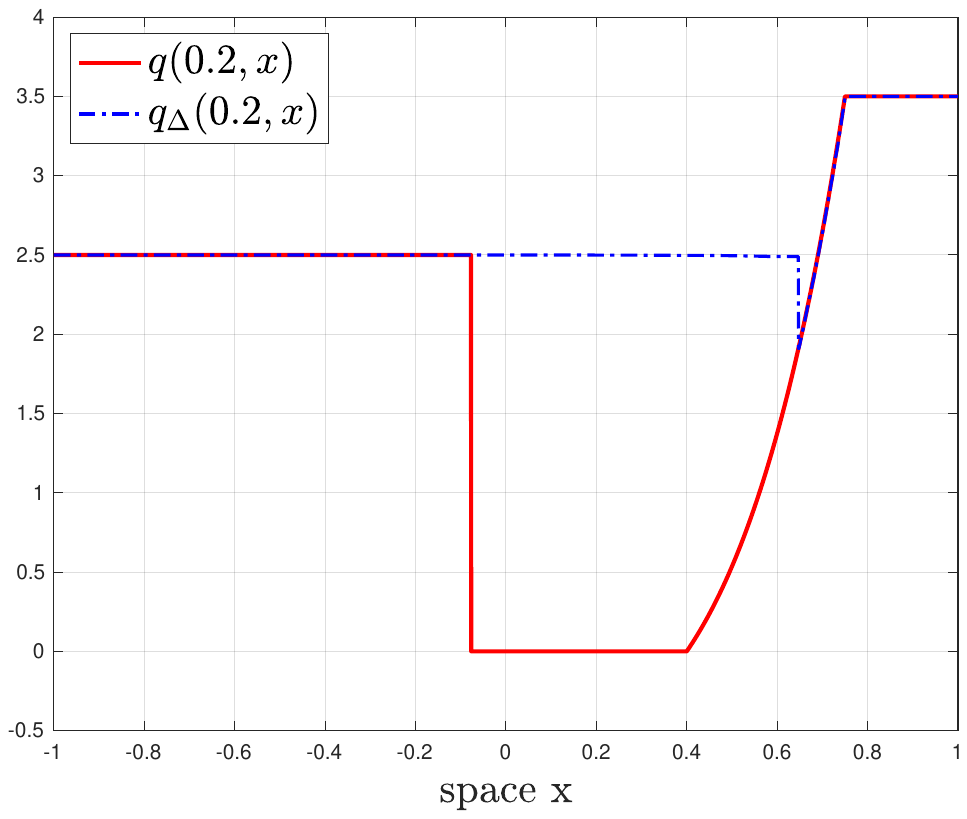}
\end{minipage}\hfill
\begin{minipage}{0.32\linewidth}
\centering
\includegraphics[trim=2.2cm 7.5cm 2.2cm 8.3cm, clip, width=\linewidth]{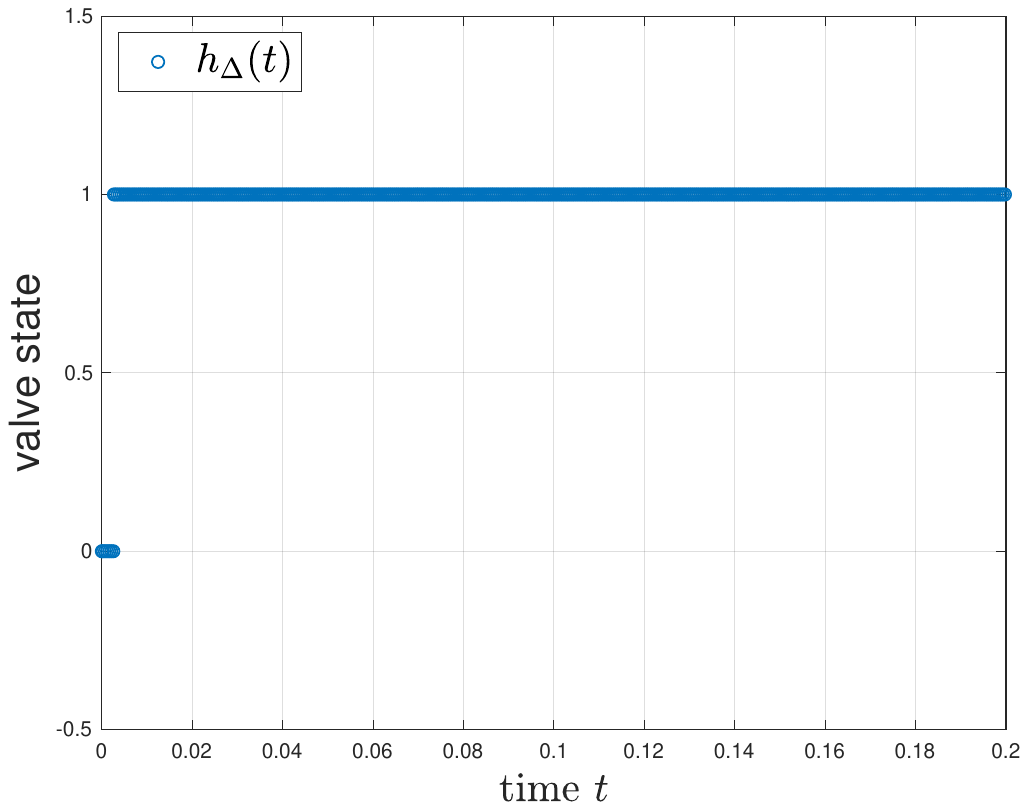}
\end{minipage}
\caption{No chattering at $T=0.2$ in the numerical solution corresponding to the Riemann datum in $\mathsf{CH}^{\scriptscriptstyle\complement}$ with $h_o=0$, as in Example~\ref{ex:CHc3}.}
\label{fig:no_chattering}
\end{figure}

As already observed, Example~\ref{ex:CHc1} suggests that, when $u_\ell \in \mathsf{CH}_{\ell,1}^{\scriptscriptstyle\complement}(q_*^-,q_*^+) \subset \mathsf{CH}_{\ell,0}^{\scriptscriptstyle\complement}(q_*^+)$, the chattering effect of the numerical solution persists and is amplified over time. On the other hand, Examples~\ref{ex:CHc2} and~\ref{ex:CHc3} show that, when $u_\ell \in \mathsf{CH}_{\ell,0}^{\scriptscriptstyle\complement}(q_*^+) \setminus \mathsf{CH}_{\ell,1}^{\scriptscriptstyle\complement}(q_*^-,q_*^+)$ and $h_o=0$, the instability is only present for short times and eventually disappears when $h_\Delta=1$. 
In these cases, the numerical solution seems to reproduce at first the solution associated with $\rsc[u_\ell,u_r,0]$. 
After some time, $h_\Delta$ switches from $0$ to $1$, and the numerical solution at the interface $x=0$ stabilizes.
More precisely, in Examples~\ref{ex:CHc2} and~\ref{ex:CHc3} the numerical solution follows the Riemann solver $\rsc[u_\ell,u_r,1]$, after some time delay.
In Example~\ref{ex:CHc1} the numerical mass is not conserved, as the approximate solution does not satisfy the first Rankine-Hugoniot condition \eqref{e:RH1} at $x=0$; on the contrary, it is conserved in Examples~\ref{ex:CHc2} and~\ref{ex:CHc3}.

\subsection{The regularization effect of hysteresis for strongly oscillatory inflow}
\label{s:num4}

We present some numerical simulations illustrating how different choices of the hysteresis parameters $q_*^-$ and $q_*^+$ affect the resulting solution. We consider an initial constant state $u_o$ at time $t=0$ given by $\rho_o \equiv 3$ and $q_o \equiv 0$, with the valve initially closed, i.e., $h_o=0$. This state is an equilibrium configuration, since $\rsc[u_o,u_o,h_o](\xi)\equiv(u_o,h_o)$.
We set $q_*=3$, $a=2$, the computational domain $I=[-1,1]$, and the final time $T=2$. For $t>0$, we impose a strongly oscillatory inflow boundary condition at the left boundary $x=-1$ by setting
$q(t,-1)=5\sin(200t)+1$ and $\rho(t,-1)=3$. The imposed oscillations propagate, on average, with positive speed and, due to the coupling between the variables, also induce oscillations in the density $\rho$. The resulting waves eventually interact with the valve located at $x=0$. Then, we consider several scenarios corresponding to different choices of $q_*^-$ and $q_*^+$, and analyze how the valve regularizes the oscillations. 

In particular, in Example~\ref{ex:nohis}, we choose $q_*^-=q_*=q_*^+$, corresponding to the case without hysteresis. In Example~\ref{ex:hissp}, we consider $q_*^-=q_*<q_*^+=3.5$, so that only the upper threshold is active. In Example~\ref{ex:hisspsm}, we choose $2=q_*^-<q_*<q_*^+=3.5$, corresponding to the full hysteresis case with both thresholds. Finally, in Example~\ref{ex:hissm}, we consider $2=q_*^-<q_*=q_*^+$, so that only the lower threshold is active.

For all simulations we set $\Delta x=5\times10^{-4}$, as in the previous subsections, and also verify that $(\rho_\Delta(t,x),q_\Delta(t,x),h_\Delta(t))$ is coherent for all $x\in I$ and $t\in[0,T]$, so that the numerical scheme provides a consistent approximation of the exact solution.

\begin{ex}[The case without hysteresis]
\label{ex:nohis}
Take $q_*^-=q_*=q_*^+ = 3$, corresponding to a valve without hysteresis. The oscillatory incoming waves force the valve to open and close frequently, producing strong oscillations in the space domain $[0,1]$ downstream of the valve, see Figure~\ref{fig:exnohis}.
\end{ex}

\begin{figure}[!htb]
\centering
\begin{minipage}{0.32\linewidth}
\centering
\includegraphics[trim=2.2cm 7.5cm 2.2cm 8.3cm, clip, width=\linewidth]{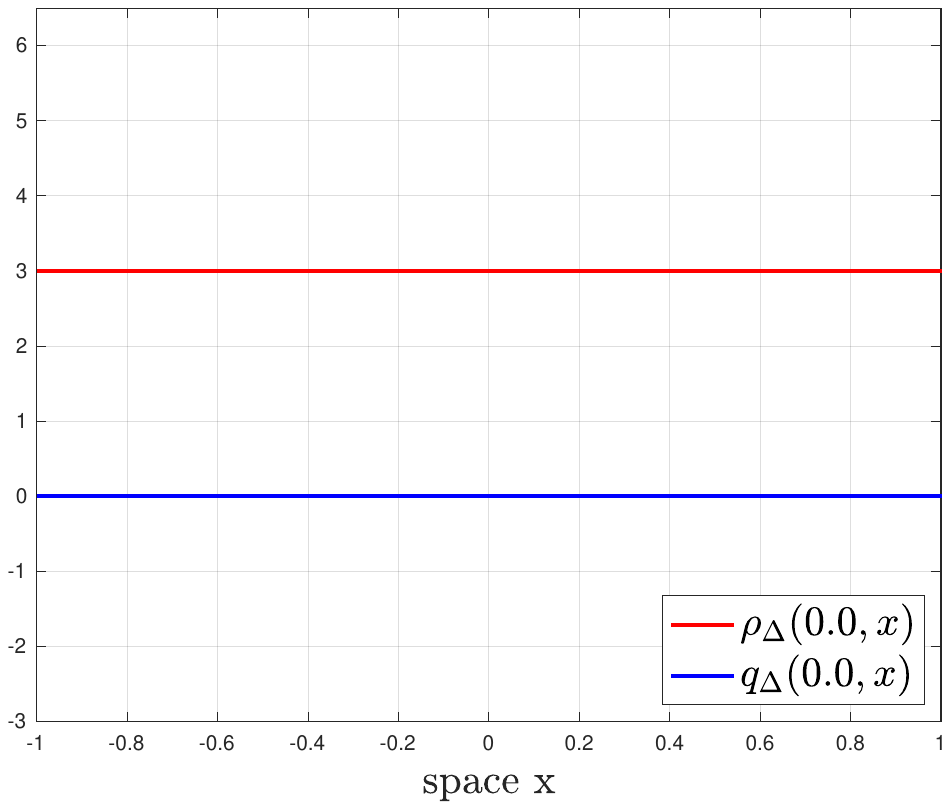}
\end{minipage}\hfill
\begin{minipage}{0.32\linewidth}
\centering
\includegraphics[trim=2.2cm 7.5cm 2.2cm 8.3cm, clip, width=\linewidth]{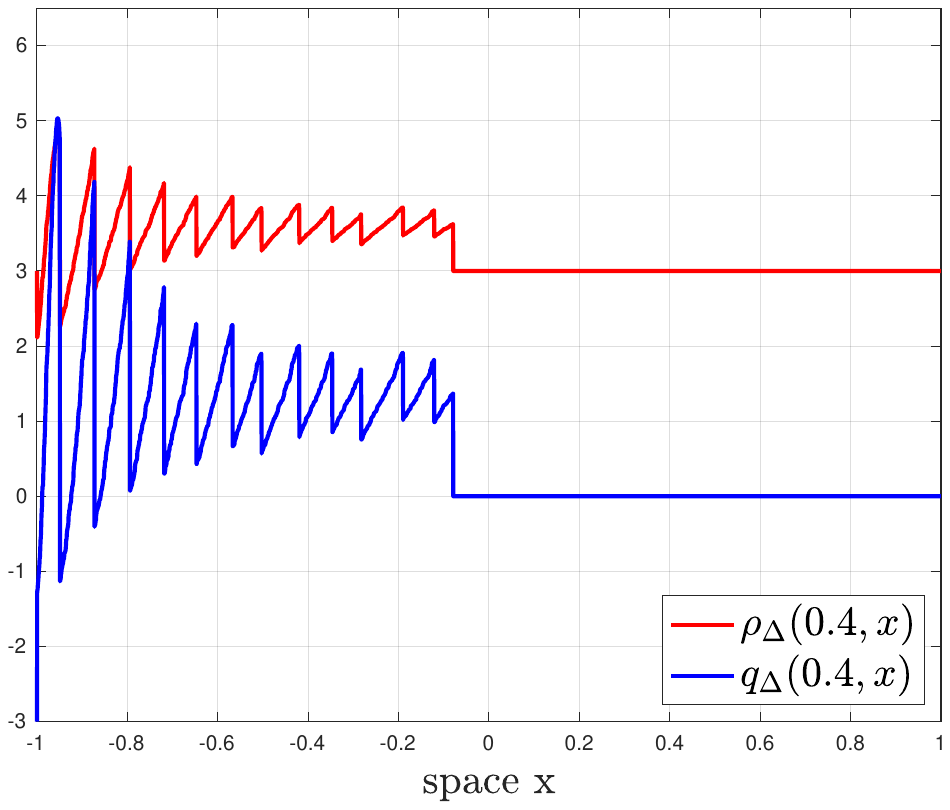}
\end{minipage}\hfill
\begin{minipage}{0.32\linewidth}
\centering
\includegraphics[trim=2.2cm 7.5cm 2.2cm 8.3cm, clip, width=\linewidth]{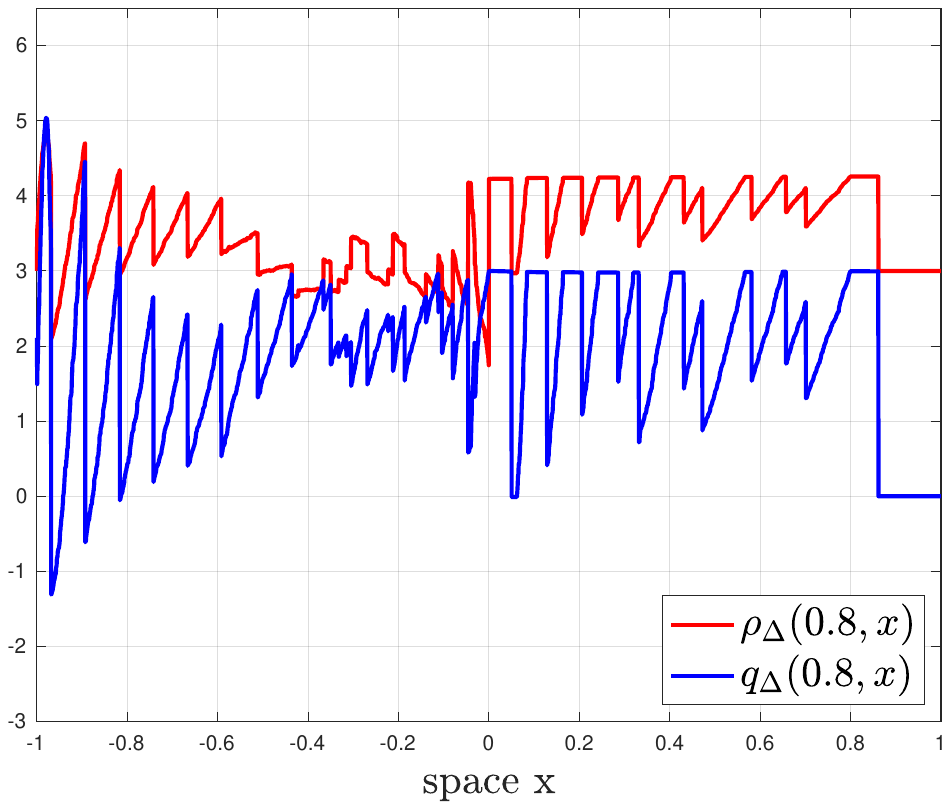}
\end{minipage}
\begin{minipage}{0.32\linewidth}
\centering
\includegraphics[trim=2.2cm 7.5cm 2.2cm 8.3cm, clip, width=\linewidth]{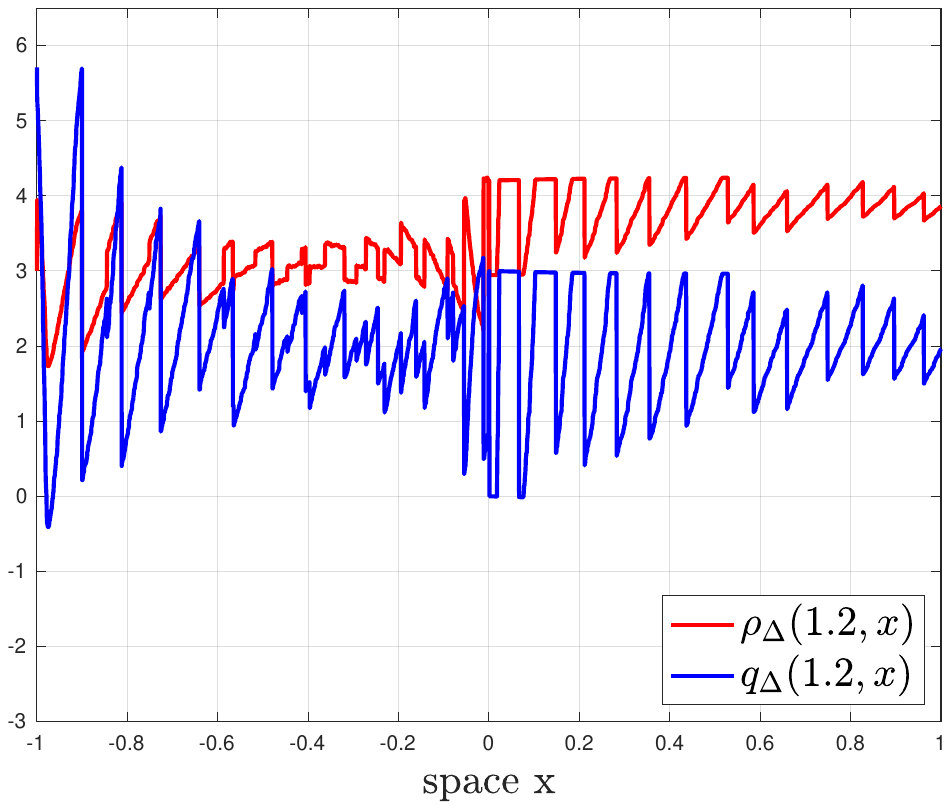}
\end{minipage}\hfill
\begin{minipage}{0.32\linewidth}
\centering
\includegraphics[trim=2.2cm 7.5cm 2.2cm 8.3cm, clip, width=\linewidth]{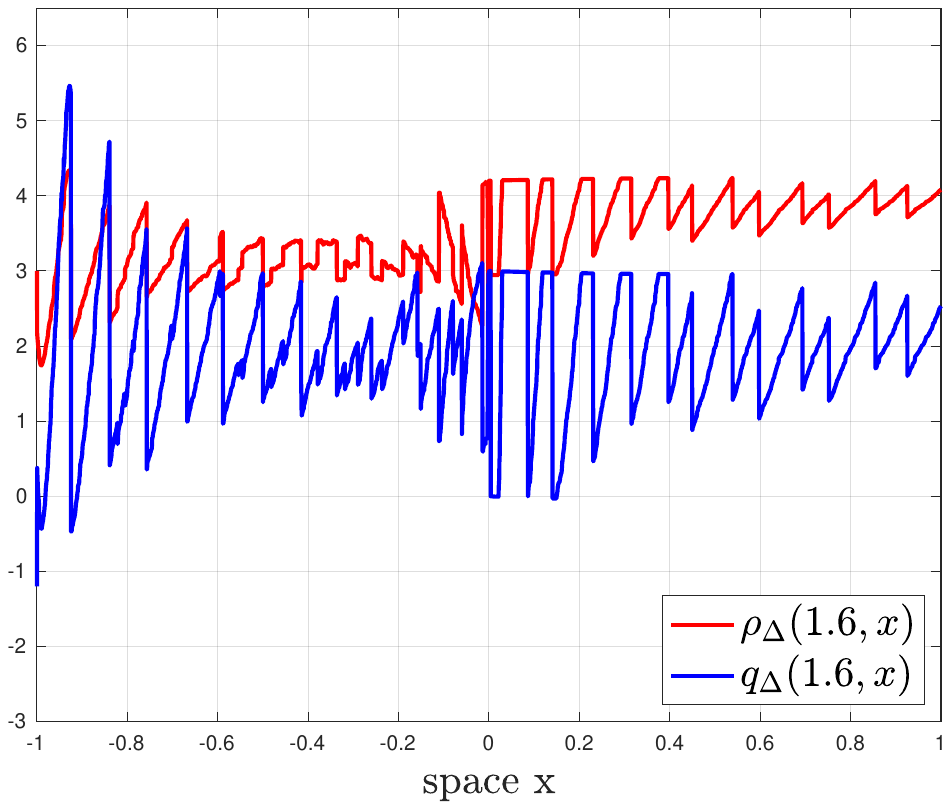}
\end{minipage}\hfill
\begin{minipage}{0.32\linewidth}
\centering
\includegraphics[trim=2.2cm 7.5cm 2.2cm 8.3cm, clip, width=\linewidth]{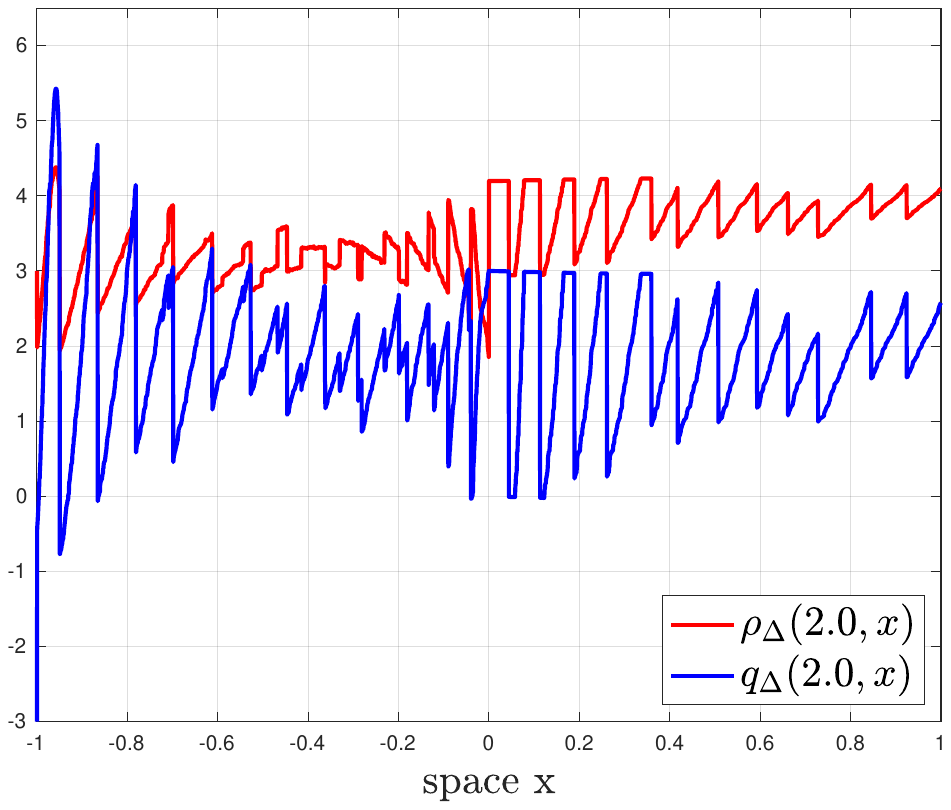}
\end{minipage}

\caption{Strong oscillatory inflow at $x=-1$ when $q_*^-=q_*=q_*^+$, see Example~\ref{ex:nohis}.}
\label{fig:exnohis}
\end{figure}

\begin{ex}[The case when the upper threshold is active]
\label{ex:hissp}
Take $q_*^-=q_*=3$ and $q_*^+=3.5$. Compared to Example~\ref{ex:nohis}, when the valve is closed then the set of states for which it opens is {\em reduced}; however, if the valve is open, then the states that lead to closure are the {\em same} as in Example~\ref{ex:nohis}. As a result, we observe some switches of the valve between the open and closed configurations, which persist with time, see Figure~\ref{fig:exhissp}. Nevertheless, the oscillatory behavior observed in Example~\ref{ex:nohis} is no longer present.
\end{ex}

\begin{figure}[!htb]
\centering
\begin{minipage}{0.32\linewidth}
\centering
\includegraphics[trim=2.2cm 7.5cm 2.2cm 8.3cm, clip, width=\linewidth]{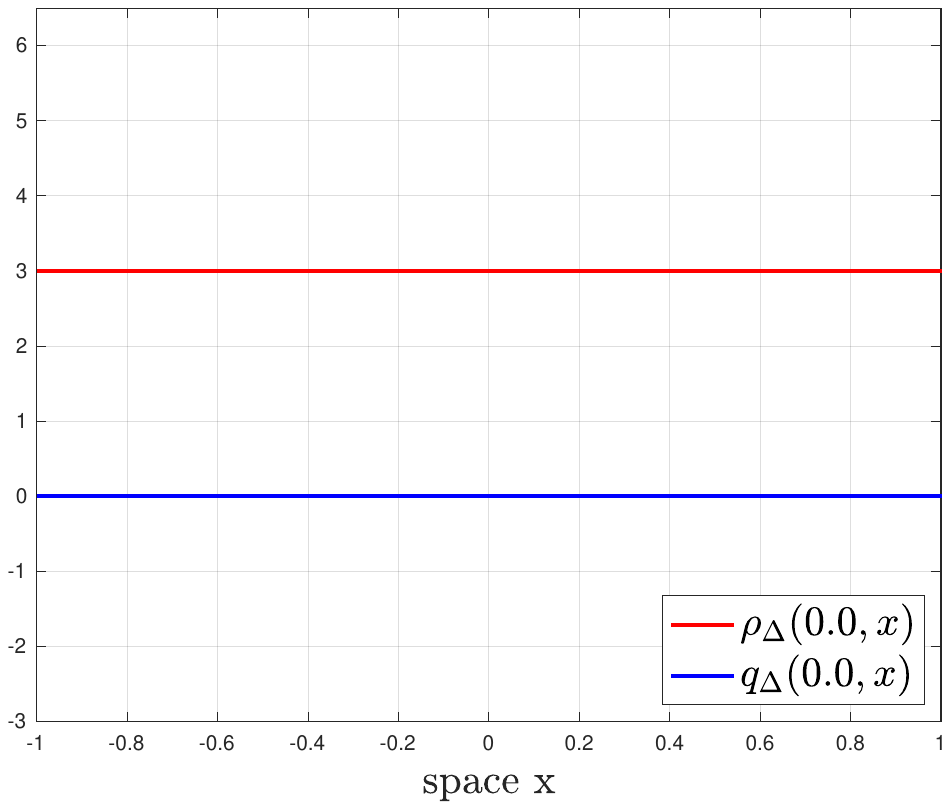}
\end{minipage}\hfill
\begin{minipage}{0.32\linewidth}
\centering
\includegraphics[trim=2.2cm 7.5cm 2.2cm 8.3cm, clip, width=\linewidth]{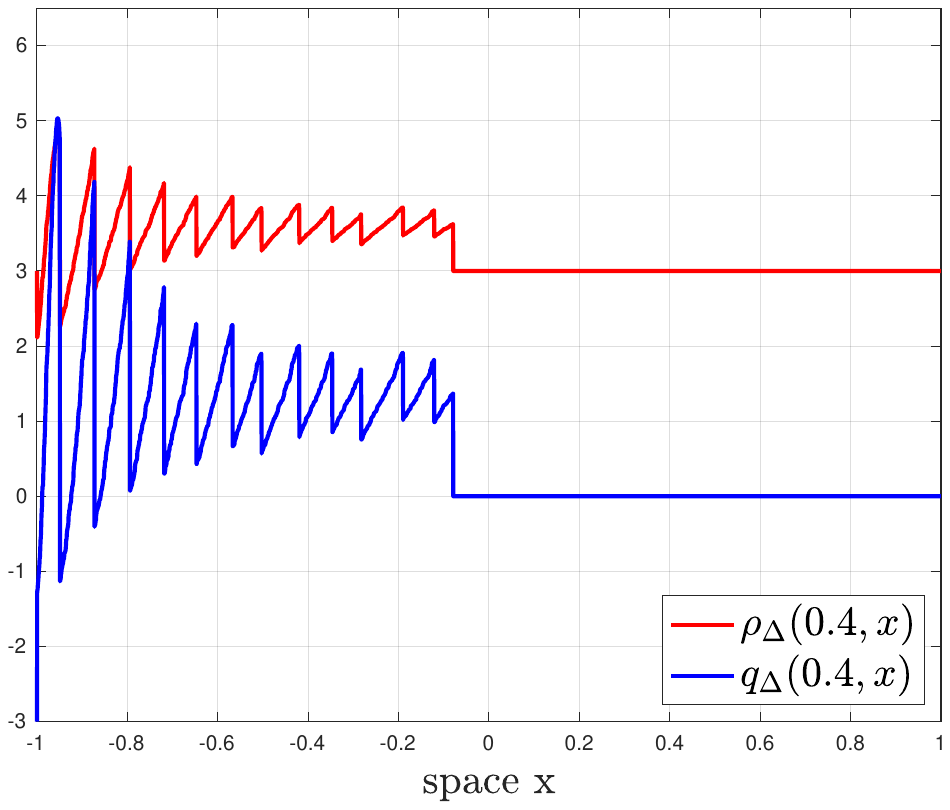}
\end{minipage}\hfill
\begin{minipage}{0.32\linewidth}
\centering
\includegraphics[trim=2.2cm 7.5cm 2.2cm 8.3cm, clip, width=\linewidth]{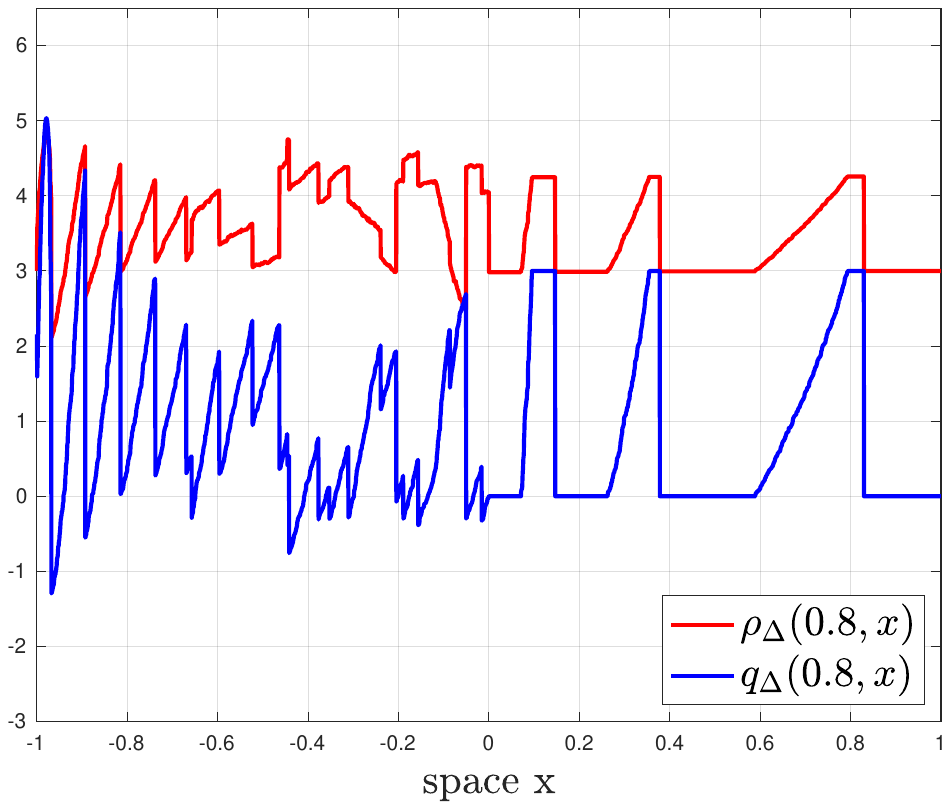}
\end{minipage}
\begin{minipage}{0.32\linewidth}
\centering
\includegraphics[trim=2.2cm 7.5cm 2.2cm 8.3cm, clip, width=\linewidth]{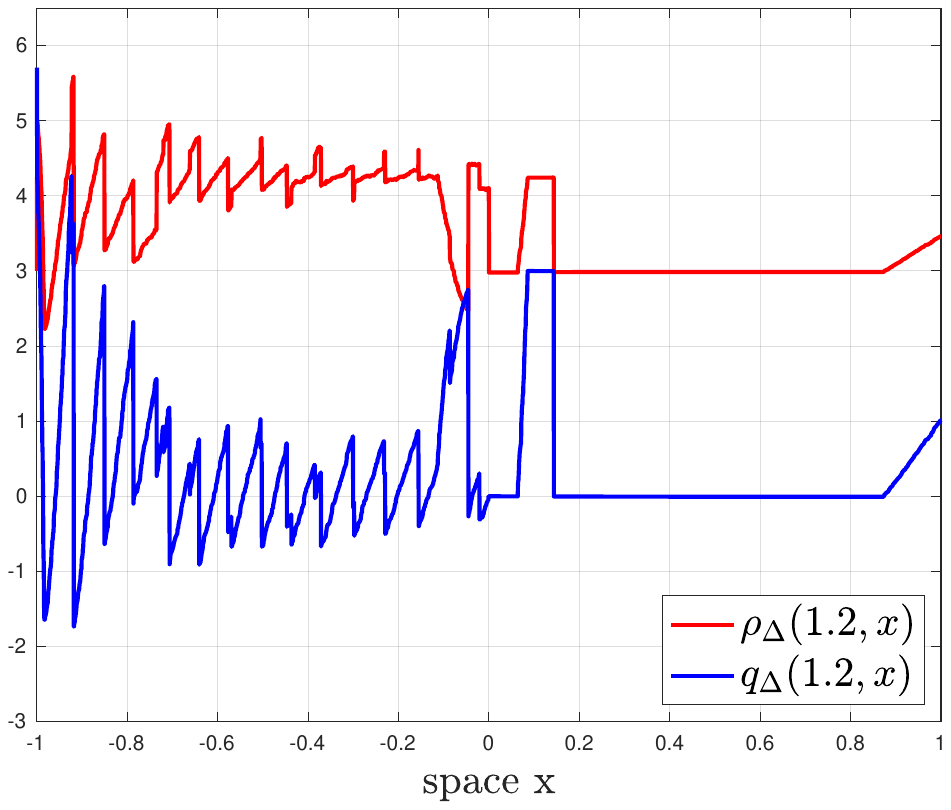}
\end{minipage}\hfill
\begin{minipage}{0.32\linewidth}
\centering
\includegraphics[trim=2.2cm 7.5cm 2.2cm 8.3cm, clip, width=\linewidth]{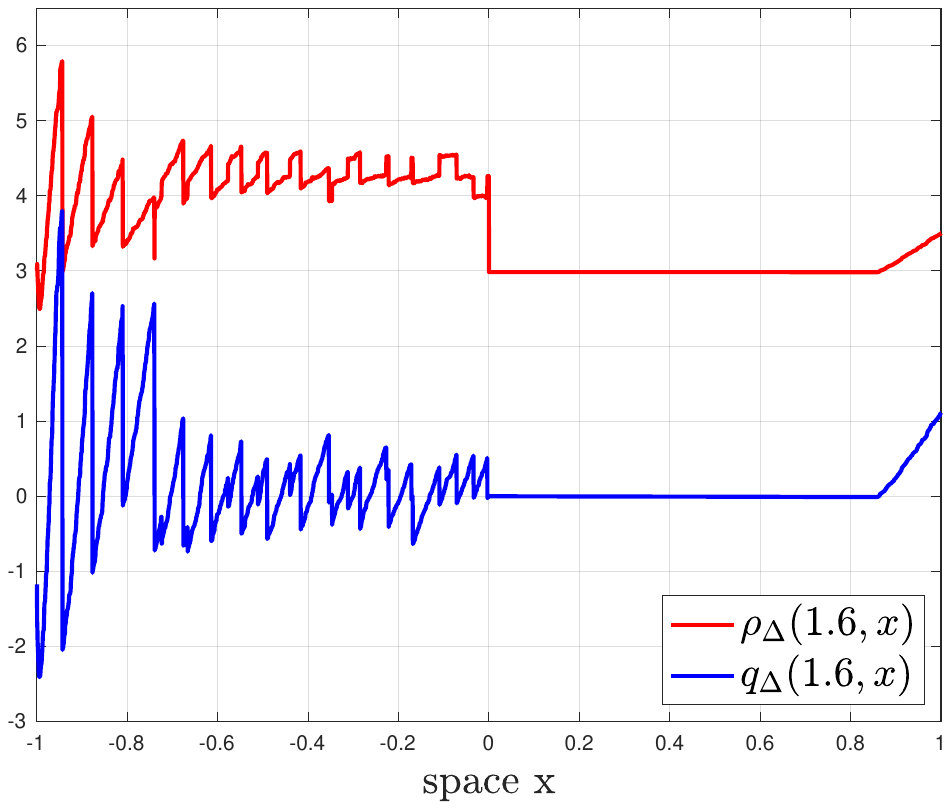}
\end{minipage}\hfill
\begin{minipage}{0.32\linewidth}
\centering
\includegraphics[trim=2.2cm 7.5cm 2.2cm 8.3cm, clip, width=\linewidth]{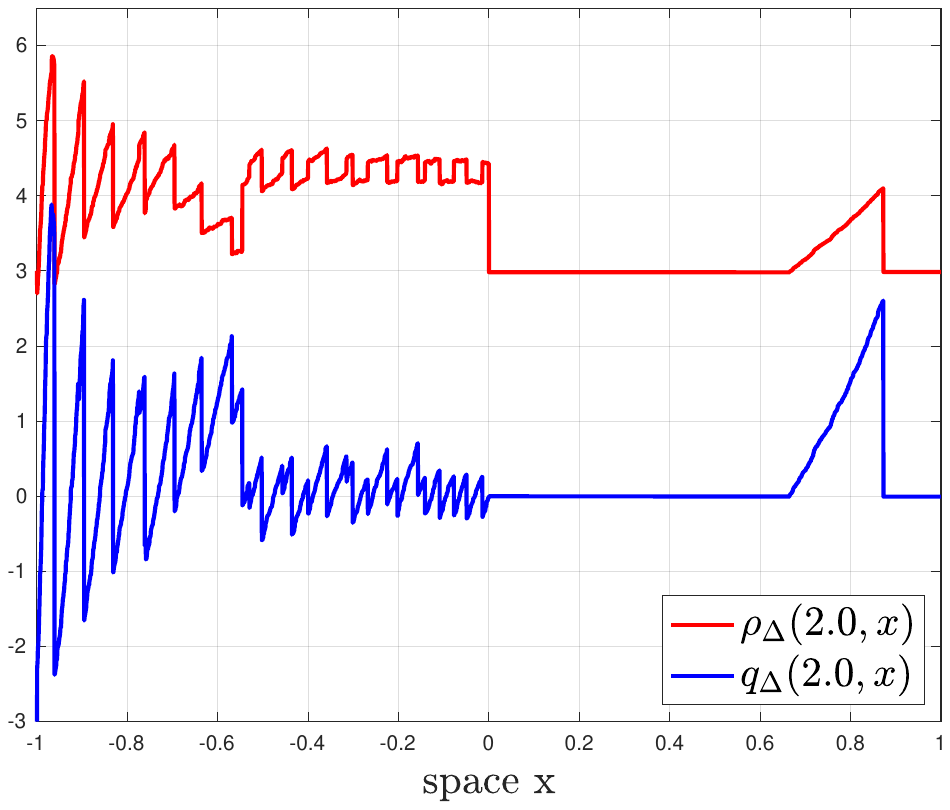}
\end{minipage}

\caption{Strong oscillatory inflow at $x=-1$ when $q_*^-=q_*<q_*^+$, see Example~\ref{ex:hissp}.}
\label{fig:exhissp}
\end{figure}

\begin{ex}[The full hysteresis case]
\label{ex:hisspsm}
Take $q_*^-=2$, $q_*=3$ and $q_*^+=3.5$. Compared to Examples~\ref{ex:nohis} and~\ref{ex:hissp}, both the opening and closing conditions are more restrictive. In particular, after some time $t>0$ the valve eventually opens and then no longer closes. We can still see an oscillatory behavior in $[0,1]$, with a low amplitude, as the trace of $q$ at $x=0^+$ is allowed to take values in the interval $[q_*^-, q_*]$, see Figure~\ref{fig:exhisspsm}.
\end{ex}

\begin{figure}[!htb]
\centering
\begin{minipage}{0.32\linewidth}
\centering
\includegraphics[trim=2.2cm 7.5cm 2.2cm 8.3cm, clip, width=\linewidth]{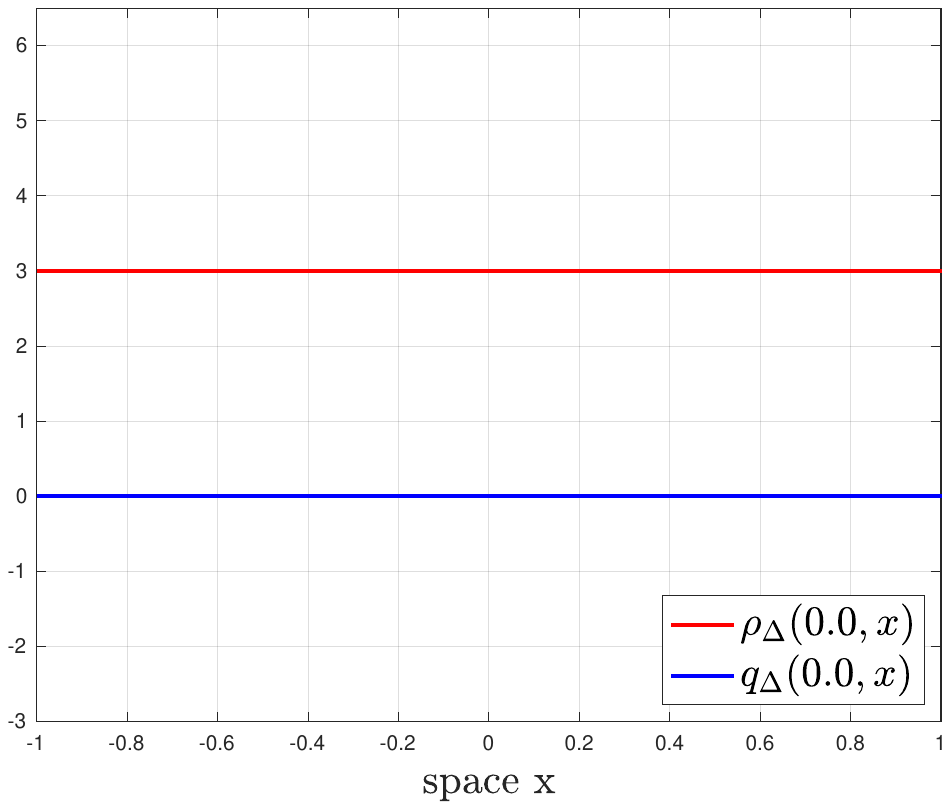}
\end{minipage}\hfill
\begin{minipage}{0.32\linewidth}
\centering
\includegraphics[trim=2.2cm 7.5cm 2.2cm 8.3cm, clip, width=\linewidth]{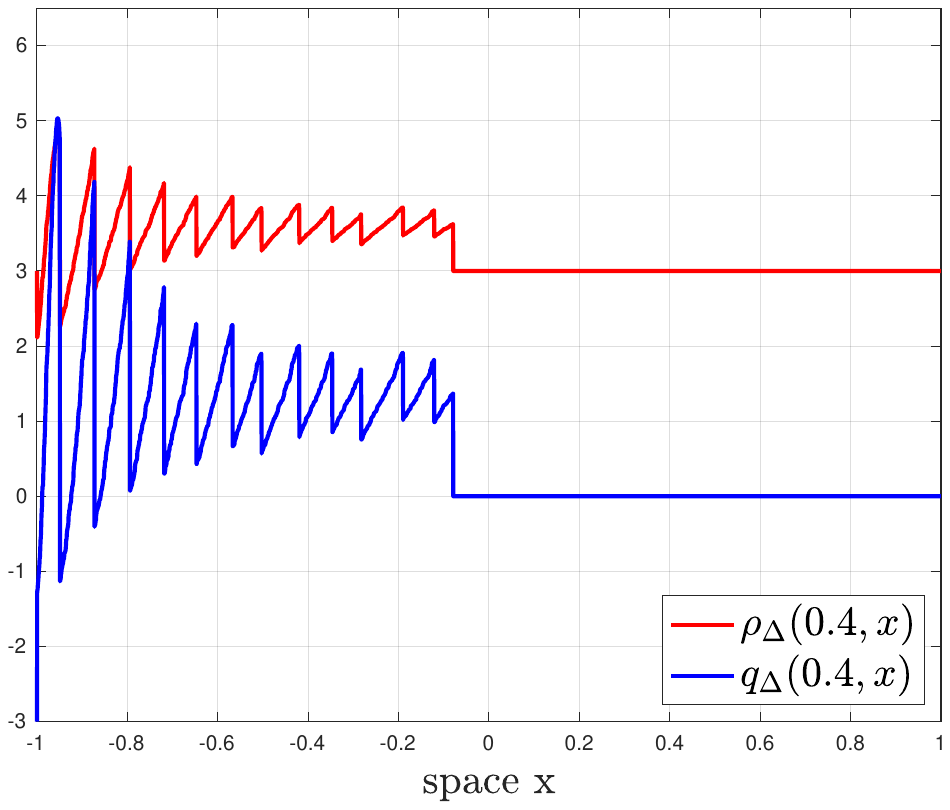}
\end{minipage}\hfill
\begin{minipage}{0.32\linewidth}
\centering
\includegraphics[trim=2.2cm 7.5cm 2.2cm 8.3cm, clip, width=\linewidth]{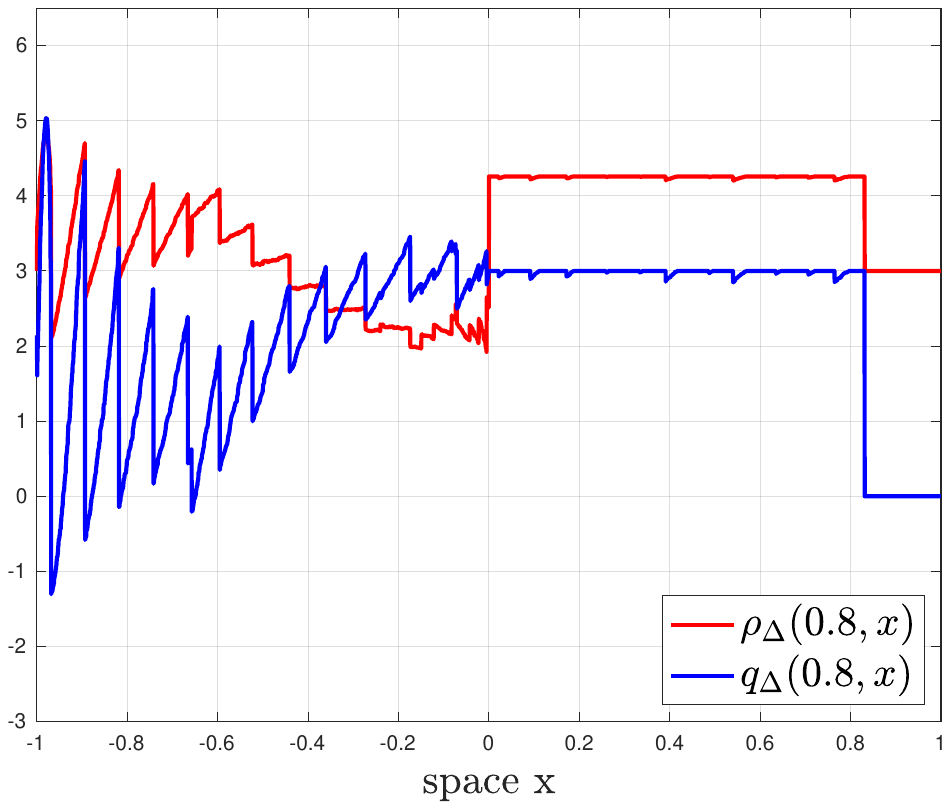}
\end{minipage}
\begin{minipage}{0.32\linewidth}
\centering
\includegraphics[trim=2.2cm 7.5cm 2.2cm 8.3cm, clip, width=\linewidth]{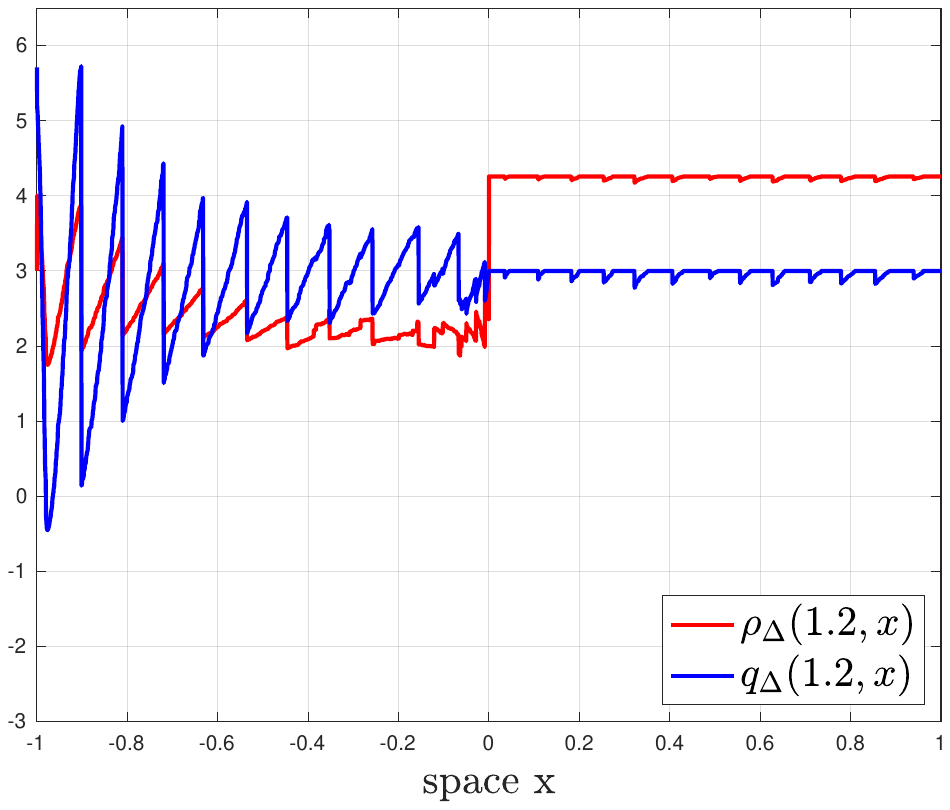}
\end{minipage}\hfill
\begin{minipage}{0.32\linewidth}
\centering
\includegraphics[trim=2.2cm 7.5cm 2.2cm 8.3cm, clip, width=\linewidth]{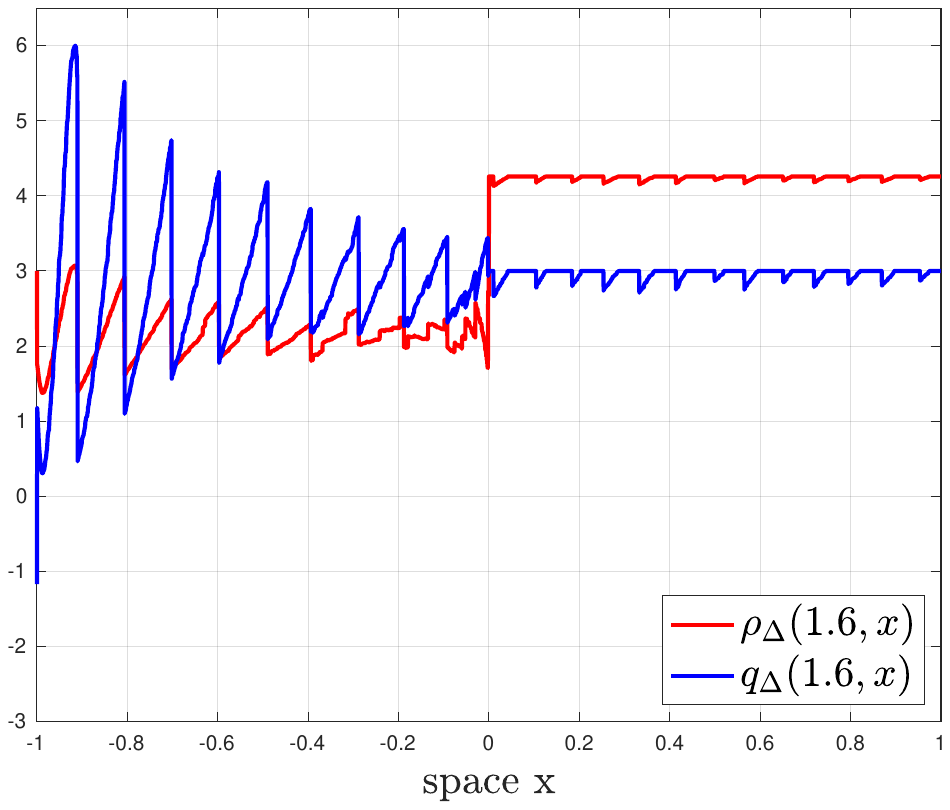}
\end{minipage}\hfill
\begin{minipage}{0.32\linewidth}
\centering
\includegraphics[trim=2.2cm 7.5cm 2.2cm 8.3cm, clip, width=\linewidth]{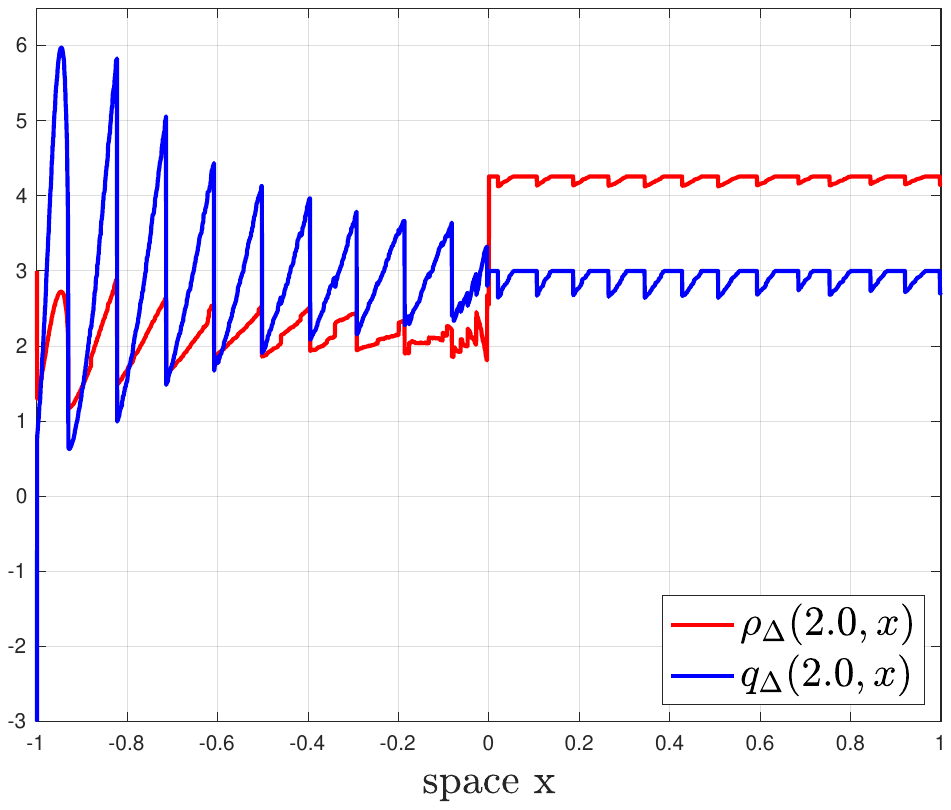}
\end{minipage}

\caption{Strong oscillatory inflow at $x=-1$ when $q_*^-<q_*<q_*^+$, see Example~\ref{ex:hisspsm}.}
\label{fig:exhisspsm}
\end{figure}

\begin{ex}[The case when the lower threshold is active]
\label{ex:hissm}
Take $q_*^-=2$ and $q_*^+=q_*=3$. This example is similar to Example~\ref{ex:hisspsm}. The main difference is that the valve opens earlier, since the threshold $q_*^+$ to pass from closed to open is lower. As before, once the valve opens, it remains open for all subsequent times, see Figure~\ref{fig:exhissm}.
\end{ex}

\begin{figure}[!htb]
\centering
\begin{minipage}{0.32\linewidth}
\centering
\includegraphics[trim=2.2cm 7.5cm 2.2cm 8.3cm, clip, width=\linewidth]{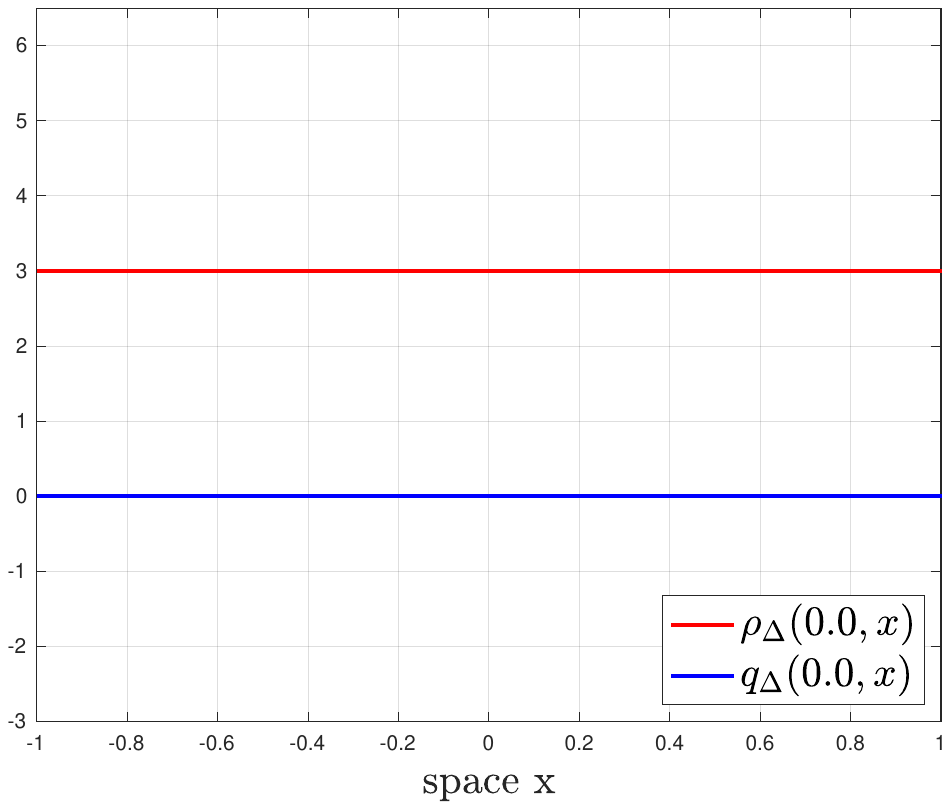}
\end{minipage}\hfill
\begin{minipage}{0.32\linewidth}
\centering
\includegraphics[trim=2.2cm 7.5cm 2.2cm 8.3cm, clip, width=\linewidth]{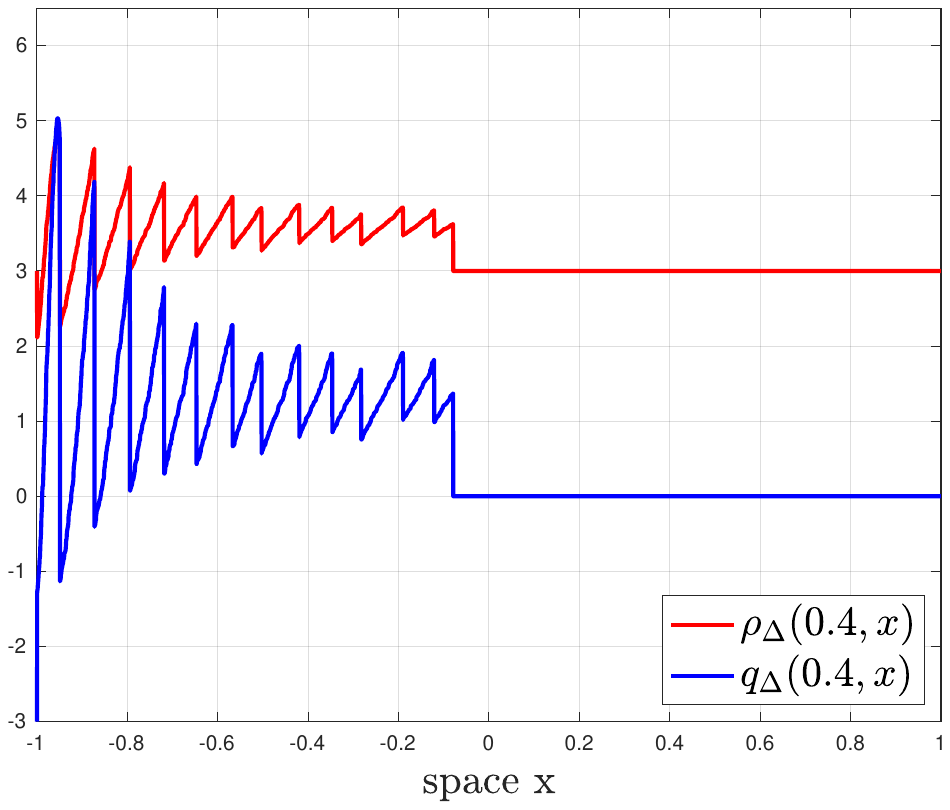}
\end{minipage}\hfill
\begin{minipage}{0.32\linewidth}
\centering
\includegraphics[trim=2.2cm 7.5cm 2.2cm 8.3cm, clip, width=\linewidth]{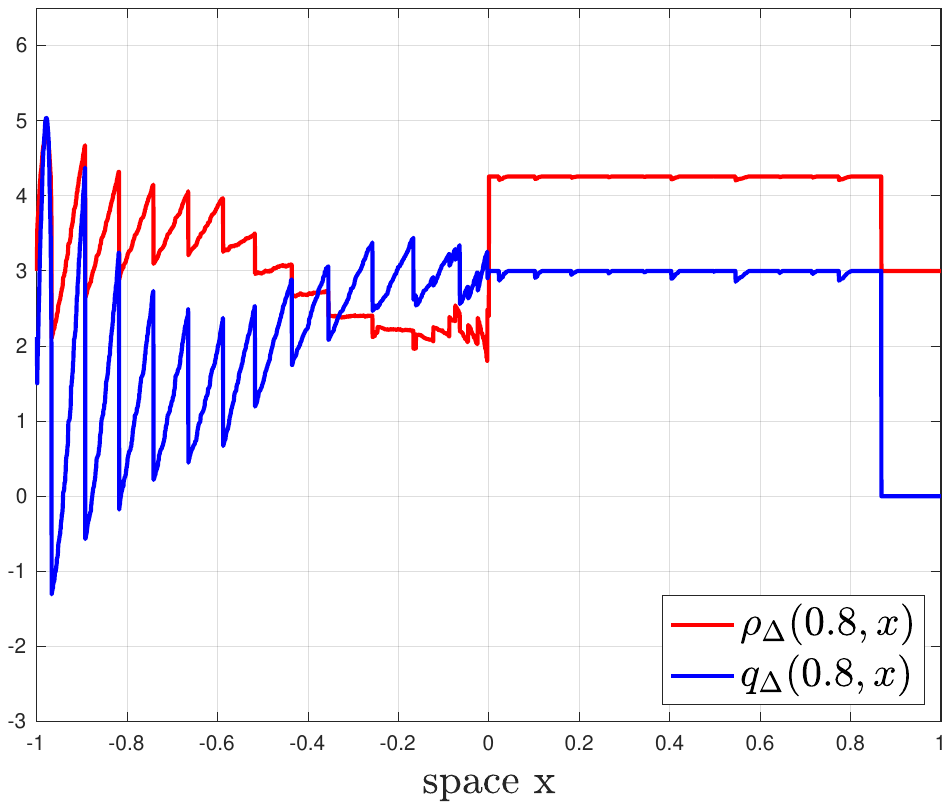}
\end{minipage}
\begin{minipage}{0.32\linewidth}
\centering
\includegraphics[trim=2.2cm 7.5cm 2.2cm 8.3cm, clip, width=\linewidth]{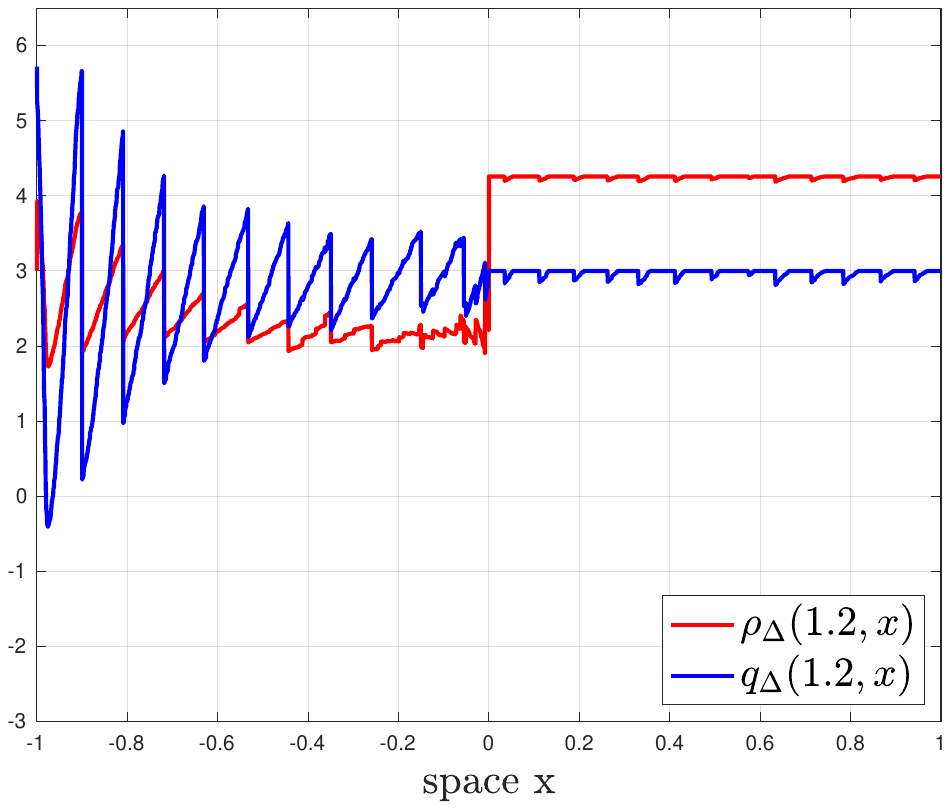}
\end{minipage}\hfill
\begin{minipage}{0.32\linewidth}
\centering
\includegraphics[trim=2.2cm 7.5cm 2.2cm 8.3cm, clip, width=\linewidth]{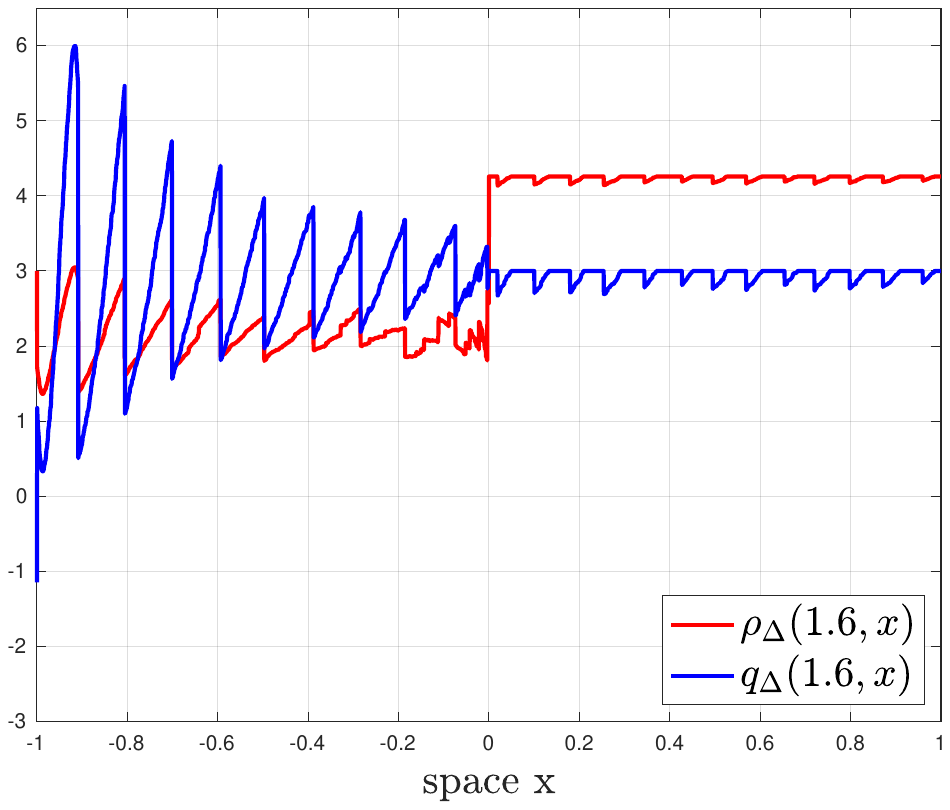}
\end{minipage}\hfill
\begin{minipage}{0.32\linewidth}
\centering
\includegraphics[trim=2.2cm 7.5cm 2.2cm 8.3cm, clip, width=\linewidth]{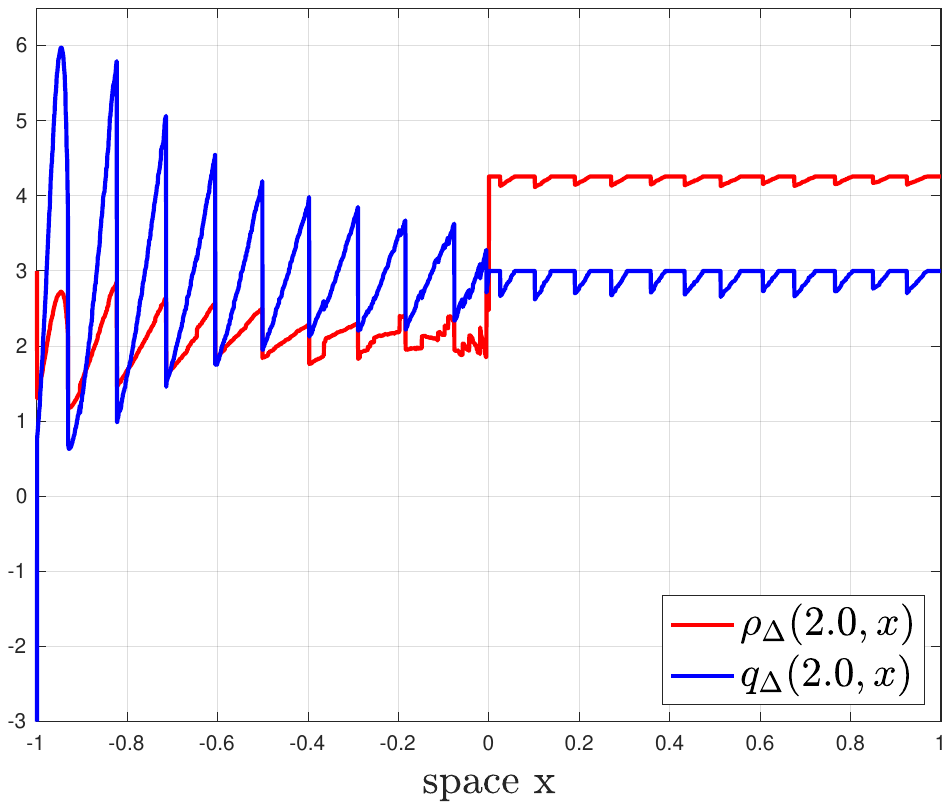}
\end{minipage}

\caption{Strong oscillatory inflow at $x=-1$ when $q_*^-<q_*=q_*^+$, see Example~\ref{ex:hissm}.}
\label{fig:exhissm}
\end{figure}

We showed in the previous plots that if we choose the parameters $q_*^-$ and $q_*^+$ different from $q_*$, then we have a regularizing effect on the outflow. This effect results both in a {\em smaller} spatial total variation of the variables $\rho$ and $q$ (both upstream and downstream of the valve), and in {\em fewer} switches between the open and closed states; they are measured by $\mathrm{TV}_{[0,t]}(h_\Delta)$. To further quantify the latter effect, we introduce the piecewise constant function $q_{\rm{c}\,\Delta}^0$, defined by $q_{\rm{c}\,\Delta}^0 \doteq q_{\rm{c}}^{0,n}$ on $[t^n,t^{n+1})$, where $q_{\rm{c}}^{0,n} \doteq \qc(u_{j_c}^n,h^n)$ is computed via \eqref{e:hc} and \eqref{e:qc}. We then evaluate $\mathrm{TV}_{[0,t]}(q_{\rm{c}\,\Delta}^0)$, which measures the oscillations of the flux imposed by the valve. We provide in Table~\ref{tab:example} the total variations at subsequent times of the states for the Examples~\ref{ex:nohis}-\ref{ex:hissm}.

\begin{table}[!htb]
\centering
\begin{tabular}{|l|c|| c c c c c|}
\hline
\multicolumn{2}{|l||}{\textbf{Example}}  &  $t=0.4$ & $t=0.8$ & $t=1.2$ & $t=1.6$ & $t=2.0$ \\
\hline

\multirow{3}{*}{\shortstack{Example~\ref{ex:nohis}:\\ \,$q_*^-=q_*=q_*^+$}} 
& $\mathrm{TV}_{[0,t]}(h_\Delta), \mathrm{TV}_{[0,t]}(q_{\rm{c}\,\Delta}^0)$ 
& 0, 0 & 23, 69 & 51, 153 & 77, 231 & 101, 303 \\

& $\mathrm{TV}_{[-1,0]}(\rho_\Delta(t,\cdot)), \mathrm{TV}_{[0,1]}(\rho_\Delta(t,\cdot))$
& 24.87, 0 & 32.68, 16.11 & 35.57, 21.10 & 35.35, 20.08 & 35.61, 19.22\\

& $\mathrm{TV}_{[-1,0]}(q_\Delta(t,\cdot)), \mathrm{TV}_{[0,1]}(q_\Delta(t,\cdot))$
& 59.28, 0 & 71.76, 39.85 & 79.86, 51.83 & 80.19, 49.18 & 83.27, 46.81  \\
\hline
\multirow{3}{*}{\shortstack{Example~\ref{ex:hissp}:\\ \,$q_*^-=q_*<q_*^+$}} 
& $\mathrm{TV}_{[0,t]}(h_\Delta), \mathrm{TV}_{[0,t]}(q_{\rm{c}\,\Delta}^0)$ 
& 0, 0 & 6, 18 & 8, 24 & 8, 24 & 10, 30 \\

& $\mathrm{TV}_{[-1,0]}(\rho_\Delta(t,\cdot)), \mathrm{TV}_{[0,1]}(\rho_\Delta(t,\cdot))$
& 24.87, 0 & 32.51, 7.56 & 34.95, 3.00 & 28.96, 0.52 & 30.84, 2.23\\

& $\mathrm{TV}_{[-1,0]}(q_\Delta(t,\cdot)), \mathrm{TV}_{[0,1]}(q_\Delta(t,\cdot))$
& 59.28, 0 & 74.86, 18.00 & 74.62, 7.02 & 65.80, 1.12 & 69.17, 5.22  \\
\hline

\multirow{3}{*}{\shortstack{Example~\ref{ex:hisspsm}:\\ \,$q_*^-<q_*<q_*^+$}} 
& $\mathrm{TV}_{[0,t]}(h_\Delta), \mathrm{TV}_{[0,t]}(q_{\rm{c}\,\Delta}^0)$ 
& 0, 0 & 1, 4.82 & 1, 9.19 & 1, 15.57 & 1, 25.01 \\

& $\mathrm{TV}_{[-1,0]}(\rho_\Delta(t,\cdot)), \mathrm{TV}_{[0,1]}(\rho_\Delta(t,\cdot))$
& 24.87, 0 & 26.50, 1.88 & 22.27, 1.43 & 21.33, 2.08 & 21.45, 2.99\\

& $\mathrm{TV}_{[-1,0]}(q_\Delta(t,\cdot)), \mathrm{TV}_{[0,1]}(q_\Delta(t,\cdot))$
& 59.28, 0 & 60.35, 4.69 & 56.98, 3.85 & 56.85, 5.59 & 59.77, 8.00 \\
\hline

\multirow{3}{*}{\shortstack{Example~\ref{ex:hissm}:\\ \,$q_*^-<q_*=q_*^+$}} 
& $\mathrm{TV}_{[0,t]}(h_\Delta), \mathrm{TV}_{[0,t]}(q_{\rm{c}\,\Delta}^0)$ 
& 0, 0 & 1, 4.76 & 1, 8.98 & 1, 16.76 & 1, 25.73 \\

& $\mathrm{TV}_{[-1,0]}(\rho_\Delta(t,\cdot)), \mathrm{TV}_{[0,1]}(\rho_\Delta(t,\cdot))$
& 24.87, 0 & 25.92, 1.83 & 21.71, 1.40 & 20.78, 2.36 & 21.19, 2.76\\

& $\mathrm{TV}_{[-1,0]}(q_\Delta(t,\cdot)), \mathrm{TV}_{[0,1]}(q_\Delta(t,\cdot))$
& 59.28, 0 & 58.76, 4.54 & 55.86, 3.77 & 56.26, 6.33 & 59.99, 7.39  \\
\hline

\end{tabular}
\caption{Total variations of the states in the Examples~\ref{ex:nohis}-\ref{ex:hissm}.}
\label{tab:example}
\end{table}

First, notice that we have $\mathrm{TV}_{[0,t]}(q_{\rm{c}\,\Delta}^0)=q_*\,\mathrm{TV}_{[0,t]}(h_\Delta)$ for Examples~\ref{ex:nohis} and \ref{ex:hissp}, since the flux either jumps between $0$ and $q_*$ or remains constant. The data show that Example~\ref{ex:nohis} exhibits the largest number of switches and the highest variation in the flux imposed by the valve, followed by Example~\ref{ex:hissp}. On the other hand, Examples~\ref{ex:hisspsm} and \ref{ex:hissm} exhibit only one switch, as already observed in Figures~\ref{fig:exhisspsm} and~\ref{fig:exhissm}. 
Although in these two cases the fluxes at the valve, when the valve is open, are allowed to vary in the interval $[q_*^-,q_*]$, nevertheless we obtain values of $\mathrm{TV}_{[0,t]}(q_{\rm{c}\,\Delta}^0)$ smaller than in the two previous examples, and in particular smaller than in Example~\ref{ex:hissp}. 
This is due to the smaller number of switches. 
Regarding the downstream total variation of the variables, namely $\mathrm{TV}_{[0,1]}(\rho_\Delta(t,\cdot))$ and $\mathrm{TV}_{[0,1]}(q_\Delta(t,\cdot))$, Examples~\ref{ex:hisspsm} and \ref{ex:hissm} perform better at $t=0.8$ and $t=1.2$, whereas Example~\ref{ex:hissp} performs better at $t=1.6$ and $t=2$. The latter occurs when the valve in that example is closed for a certain period of time. However, at these times the upstream total variation, namely $\mathrm{TV}_{[-1,0]}(\rho_\Delta(t,\cdot))$ and $\mathrm{TV}_{[-1,0]}(q_\Delta(t,\cdot))$, is larger in Example~\ref{ex:hissp} than in Examples~\ref{ex:hisspsm} and \ref{ex:hissm}, since the closed valve forces the formation of stronger waves upstream. Finally, there is no significant qualitative difference between Examples~\ref{ex:hisspsm} and \ref{ex:hissm}; however we point out that, as already mentioned, the valve opens earlier in Example~\ref{ex:hisspsm}.

As suggested by the previous examples, the choice of $q_*^-<q_*$ has the effect of reducing both the number of valve switches and the total variation of the flux imposed by the valve. However, this choice allows $q_{\rm{c}\,\Delta}^0$ to take values in $[q_*^-,q_*]\cup\{0\}$. From a practical point of view, this means that, when the valve is open, the flux is allowed to differ from the desired value $q_*$. To quantify this distance, we introduce the functional
\[
F(t, u_{\Delta})=\int_0^t h(s)\,|q_*-q_{\rm{c}\,\Delta}^0(s)|\,\d s = \sum_{n=1}^N \Delta t_{n+1} h^n \, | q_* - q_{\rm{c}}^{0,n}|,
\]
which measures the $\L1$-distance of the flux at the valve from the desired flux while the valve is open. Clearly, we have $F(t, u_\Delta)\equiv0$ for Examples~\ref{ex:nohis} and \ref{ex:hissp}, since $q_*^-=q_*$. The values of $F(t,u_\Delta)$ for the remaining two examples are reported in Table~\ref{tab:funzionale}. Once again, no significant difference can be observed between the two cases.

\begin{table}[!htb]
\centering
\begin{tabular}{|l|c|| c c c c c|}
\hline
\multicolumn{2}{|l||}{\textbf{Example}}  &  $t=0.4$ & $t=0.8$ & $t=1.2$ & $t=1.6$ & $t=2.0$ \\
\hline

{Example~\ref{ex:hisspsm}: $q_*^-<q_*<q_*^+$}
& $F(t, u_{\Delta})$ 
& 0 & 0.0031 & 0.0113 & 0.0261 & 0.0587 \\

\hline

Example~\ref{ex:hissm}: $q_*^-<q_*=q_*^+$
& $F(t, u_{\Delta})$ 
& 0 & 0.0029 & 0.0115 & 0.0330 &  0.0621 \\

\hline

\end{tabular}
\caption{$F(t, u_{\Delta})$ for Examples~\ref{ex:hisspsm} and \ref{ex:hissm}.}
\label{tab:funzionale}
\end{table}

Therefore, on the one hand, choosing $q_*^-<q_*$ reduces the number of switches of the valve while, on the other hand, it increases the deviation of the flux from its desired value. Investigating how to balance these competing effects by selecting optimal hysteresis parameters naturally leads to an interesting optimal control problem to be considered in a forthcoming paper.

\section{Final comments}
\label{s:final}
This final section provides some concluding remarks on the topic of this paper; then we present our main conclusions.

The input-output relation in \eqref{e:hrelay} is the \lq\lq closed version\rq\rq\ of the so-called {\em delayed relay} (compare with the definition of $h_{\rho}^*$ on page~102 of \cite{vis}), where the switch of $h$ occurs only when the input flow $\overline Q(u_\ell^0)$ bypasses the corresponding thresholds $q_*^\pm$. If the input reaches the corresponding threshold but then reverses its behavior, then no switch occurs. The \lq\lq open version\rq\rq\ of the delayed relay is obtained by replacing $\cal L$ given in \eqref{e:L} with
\[
{\cal L}^\circ=\left([0,q_*^+)\times\{0\}\right)\cup\left((q_*^-, +\infty)\times\{1\}\right)\subseteq\R^2,
\]
and $\mathcal{T}^-_t,\mathcal{T}_t^+$ given in \eqref{e:TT} respectively with
\begin{align*}
\left(\mathcal{T}_t^-\right)^\circ &= \left\{\tau\in[0,t] : \overline{Q}\left(u_\ell^0(\tau)\right) \le q_*^-\right\},&
\left(\mathcal{T}_t^+\right)^\circ &= \left\{\tau\in[0,t] : \overline{Q}\left(u_\ell^0(\tau)\right) \ge q_*^+\right\}.
\end{align*}
In this way, the switch occurs exactly when the input $\overline Q(u_\ell^0)$ reaches the corresponding threshold. This version of the problem can be treated analogously. 

Another way to close the delayed relay is to leave the output free to switch on or off when the input is on the corresponding threshold; moreover, if it switches, it cannot switch back unless the input reaches the other threshold, and it must also switch if the input bypasses the corresponding threshold. This version of the delayed relay gives multiplicity of the outputs but in many cases, related to convergence and approximation in differential equations, it is the suitable version of the delayed relay to be considered in the limit process. Actually, it may be seen as the closure of the graph of the delayed relay as a functional between suitable function spaces (especially with respect to the uniform convergence of the inputs), see definition of $\overline{h}_\rho$ at page 153 and Theorem 1.1 at page 154 of \cite{vis}. In particular, this version of multivalued delayed relay may be suitable to be considered when studying the convergence of the numerical scheme. 

\smallskip

Finally, we briefly draw our main conclusions; we refer to the Introduction for the statement of the problem.  

\begin{itemize}

\item From a theoretical point of view, we have proposed a model for the gas flow through a valve incorporating hysteresis effects. We completely and explicitly characterized the incoherent regions; these are regions where a kind of instability is hidden and are understood to be associated to the chattering of the valve. These regions strongly depend on the initial state of the valve (open or closed).

\item From a numerical point of view, we have introduced an essentially first-order scheme that captures solutions associated to coherent states. Incoherent states show up different time evolutions of the solution: chattering can persist and amplify over time or disappear. A theoretical explanation is provided for each case and how the numerical solution differs from the exact one.

\item Hysteresis plays a major role when strongly oscillatory flows are concerned. Numerical simulations show that the upper-threshold $q_*^+$ is responsible for killing the periodicity of the flow, keeping however sporadic and \lq\lq random\rq\rq\ switches of the valve. The lower-threshold $q_*^-$, on the contrary, preserves the periodicity of the flow but reduces its amplitude. The combined effect of the two thresholds reduces the total variation of the solution.  

\end{itemize}

\section*{Acknowledgments}
The authors are members of the {\em Gruppo Nazionale per l'Analisi Matematica, la Probabilit\`{a} e le loro Applicazioni} (GNAMPA) of the {\em Istituto Nazionale di Alta Matematica} (INdAM) and acknowledge financial support from this institution. A.C.\ and M.D.R.\ were supported by the PRIN 2022 project \emph{Modeling, Control and Games through Partial Differential Equations}, funded by the European Union - Next Generation EU, Mission 4, Component 1, CUP D53D23005620006. F.B., A.C. and S.M. were supported by the GNAMPA 2026 Project {\em Analisi e controllo per alcuni problemi di evoluzione (CUP E53C25002010001)}.

{\small
\bibliography{biblio.bib}

@book{LeVeque-book,
	address = {Basel},
	author = {LeVeque, Randall J.},
	isbn = {3-7643-2464-3},
	mrclass = {65Mxx (35L65 35L67 65P05)},
	mrnumber = {1077828 (91j:65142)},
	mrreviewer = {Carlos A. de Moura},
	optseries = {Lectures in Mathematics ETH Z\"urich},
	pages = {x+214},
	publisher = {Birkh\"auser Verlag},
	title = {Numerical methods for conservation laws},
	year = {1990}}

@article {AMP,
    AUTHOR = {Agrawal, Jyoti and Moudgalya, Kannan M. and Pani, Amiya K.},
     TITLE = {Sliding motion and stability of a class of discontinuous
              dynamical systems},
   JOURNAL = {Nonlinear Dynam.},
  FJOURNAL = {Nonlinear Dynamics. An International Journal of Nonlinear
              Dynamics and Chaos in Engineering Systems},
    VOLUME = {37},
      YEAR = {2004},
    NUMBER = {2},
     PAGES = {151--168},
      ISSN = {0924-090X,1573-269X},
   MRCLASS = {76T99 (34A36 34D20 76E99)},
  MRNUMBER = {2098828},
       DOI = {10.1023/B:NODY.0000042910.39609.f6},
       URL = {https://doi.org/10.1023/B:NODY.0000042910.39609.f6},
}

@article {bagmor,
    AUTHOR = {Bagagiolo, Fabio and Moreti, Stefan},
     TITLE = {Wave-front tracking for a quasi-linear scalar conservation law
              with hysteresis},
   JOURNAL = {J. Math. Anal. Appl.},
  FJOURNAL = {Journal of Mathematical Analysis and Applications},
    VOLUME = {543},
      YEAR = {2025},
    NUMBER = {2, part 1},
     PAGES = {Paper No. 128900, 25},
      ISSN = {0022-247X,1096-0813},
   MRCLASS = {35L65 (35L03)},
  MRNUMBER = {4803782},
MRREVIEWER = {Richard\ De la cruz},
       DOI = {10.1016/j.jmaa.2024.128900},
       URL = {https://doi.org/10.1016/j.jmaa.2024.128900},
}

@article {bagmor2,
    AUTHOR = {Bagagiolo, Fabio and Moreti, Stefan},
     TITLE = {Wave-front tracking for a quasi-linear scalar conservation law
              with hysteresis {II}: {T}he case of {P}reisach},
   JOURNAL = {J. Math. Anal. Appl.},
  FJOURNAL = {Journal of Mathematical Analysis and Applications},
    VOLUME = {558},
      YEAR = {2026},
    NUMBER = {1},
     PAGES = {29},
      ISSN = {0022-247X,1096-0813},
   MRCLASS = {35L65},
  MRNUMBER = {5005579},
       DOI = {10.1016/j.jmaa.2025.130345},
       URL = {https://doi.org/10.1016/j.jmaa.2025.130345},
}

@article {Bazso-Champneys-Hos,
    AUTHOR = {Bazs\'{o}, Csaba and Champneys, Alan R. and H\H{o}s, Csaba J.},
     TITLE = {Bifurcation analysis of a simplified model of a pressure
              relief valve attached to a pipe},
   JOURNAL = {SIAM J. Appl. Dyn. Syst.},
  FJOURNAL = {SIAM Journal on Applied Dynamical Systems},
    VOLUME = {13},
      YEAR = {2014},
    NUMBER = {2},
     PAGES = {704--721},
      ISSN = {1536-0040},
   MRCLASS = {37G15 (34C60 37M20 70B15)},
  MRNUMBER = {3199407},
MRREVIEWER = {Weinian\ Zhang},
       DOI = {10.1137/130922598},
       URL = {https://doi.org/10.1137/130922598},
}

@article {BHJ,
    AUTHOR = {Bi\'{a}k, Martin and Hanus, Tom\'{a}\v{s} and Janovsk\'{a},
              Drahoslava},
     TITLE = {Some applications of {F}ilippov's dynamical systems},
   JOURNAL = {J. Comput. Appl. Math.},
  FJOURNAL = {Journal of Computational and Applied Mathematics},
    VOLUME = {254},
      YEAR = {2013},
     PAGES = {132--143},
      ISSN = {0377-0427,1879-1778},
   MRCLASS = {34A36 (34C23 37M05 65P30)},
  MRNUMBER = {3061071},
MRREVIEWER = {Inmaculada\ Baldom\'{a}},
       DOI = {10.1016/j.cam.2013.03.034},
       URL = {https://doi.org/10.1016/j.cam.2013.03.034},
}

@book {Bressan-book,
    AUTHOR = {Bressan, Alberto},
     TITLE = {Hyperbolic systems of conservation laws},
    SERIES = {Oxford Lecture Series in Mathematics and its Applications},
    VOLUME = {20},
      NOTE = {The one-dimensional Cauchy problem},
 PUBLISHER = {Oxford University Press, Oxford},
      YEAR = {2000},
     PAGES = {xii+250},
      ISBN = {0-19-850700-3},
   MRCLASS = {35-02 (35B35 35L65)},
  MRNUMBER = {1816648},
MRREVIEWER = {Denis\ Serre},
}

@book {brospre,
    AUTHOR = {Brokate, Martin and Sprekels, J\"{u}rgen},
     TITLE = {Hysteresis and phase transitions},
    SERIES = {Applied Mathematical Sciences},
    VOLUME = {121},
 PUBLISHER = {Springer-Verlag, New York},
      YEAR = {1996},
     PAGES = {x+357},
      ISBN = {0-387-94763-9},
   MRCLASS = {35Qxx (00A69 47N20 49S05 73B05 73B30 80A22)},
  MRNUMBER = {1411908},
MRREVIEWER = {Pavel\ Krej\v{c}\'{\i}},
       DOI = {10.1007/978-1-4612-4048-8},
       URL = {https://doi.org/10.1007/978-1-4612-4048-8},
}

@article{Colombo-Goatin,
	author = {Colombo, Rinaldo M. and Goatin, Paola},
	coden = {JDEQAK},
	doi = {10.1016/j.jde.2006.10.014},
	fjournal = {Journal of Differential Equations},
	issn = {0022-0396},
	journal = {J. Differential Equations},
	mrclass = {35L65 (35B30 35L45 90B20)},
	mrnumber = {2300671 (2007k:35317)},
	mrreviewer = {Michael Herty},
	number = {2},
	pages = {654--675},
	title = {A well posed conservation law with a variable unilateral constraint},
	url = {http://dx.doi.org/10.1016/j.jde.2006.10.014},
	volume = {234},
	year = {2007}}

@article{Colombo-Garavello_notions,
	author = {Colombo, Rinaldo M. and Garavello, Mauro},
	fjournal = {Discrete and Continuous Dynamical Systems. Series A},
	isbn = {978-1-60133-011-6; 1-60133-011-1},
	issn = {1078-0947},
	journal = {Discrete Contin. Dyn. Syst.},
	mrclass = {35L65 (76N10)},
	mrnumber = {2641394},
	optnumber = {Dynamical systems, differential equations and applications. 7th AIMS Conference, suppl.},
	pages = {181--190},
	title = {Comparison among different notions of solution for the {$p$}-system at a junction},
	volume = {Supplement},
	year = {2009}}

@article {Corli-Fan_2019,
    AUTHOR = {Corli, Andrea and Fan, Haitao},
     TITLE = {Hysteresis and stop-and-go waves in traffic flows},
   JOURNAL = {Math. Models Methods Appl. Sci.},
  FJOURNAL = {Mathematical Models and Methods in Applied Sciences},
    VOLUME = {29},
      YEAR = {2019},
    NUMBER = {14},
     PAGES = {2637--2678},
      ISSN = {0218-2025},
   MRCLASS = {35L65 (35C07 76A30 90B20)},
  MRNUMBER = {4053240},
MRREVIEWER = {Jean-Fran\c{c}ois Coulombel},
       DOI = {10.1142/S0218202519500568},
       URL = {https://doi.org/10.1142/S0218202519500568},
}

@article {C1,
    AUTHOR = {Corli, Andrea and Razafison, Ulrich and Rosini, Massimiliano
              D.},
     TITLE = {Coherence and flow-maximization of a one-way valve},
   JOURNAL = {ESAIM Math. Model. Numer. Anal.},
  FJOURNAL = {ESAIM. Mathematical Modelling and Numerical Analysis},
    VOLUME = {56},
      YEAR = {2022},
    NUMBER = {5},
     PAGES = {1715--1739},
      ISSN = {2822-7840,2804-7214},
   MRCLASS = {76B75 (35L65 35L67)},
  MRNUMBER = {4454160},
MRREVIEWER = {Piotr\ Biler},
       DOI = {10.1051/m2an/2022053},
       URL = {https://doi.org/10.1051/m2an/2022053},
}

@article {C2,
    AUTHOR = {Corli, Andrea and Rosini, Massimiliano D.},
     TITLE = {Coherence and chattering of a one-way valve},
   JOURNAL = {ZAMM Z. Angew. Math. Mech.},
  FJOURNAL = {ZAMM. Zeitschrift f\"{u}r Angewandte Mathematik und Mechanik.
              Journal of Applied Mathematics and Mechanics},
    VOLUME = {99},
      YEAR = {2019},
    NUMBER = {6},
     PAGES = {e201800250, 25},
      ISSN = {0044-2267,1521-4001},
   MRCLASS = {35L65 (35L67 74F10 76N15)},
  MRNUMBER = {3973090},
       DOI = {10.1002/zamm.201800250},
       URL = {https://doi.org/10.1002/zamm.201800250},
}

@article {C3,
    AUTHOR = {Corli, Andrea and Figiel, Magdalena and Futa, Anna and Rosini,
              Massimiliano D.},
     TITLE = {Coupling conditions for isothermal gas flow and applications
              to valves},
   JOURNAL = {Nonlinear Anal. Real World Appl.},
  FJOURNAL = {Nonlinear Analysis. Real World Applications. An International
              Multidisciplinary Journal},
    VOLUME = {40},
      YEAR = {2018},
     PAGES = {403--427},
      ISSN = {1468-1218,1878-5719},
   MRCLASS = {76N15},
  MRNUMBER = {3718991},
MRREVIEWER = {Bernard\ Ducomet},
       DOI = {10.1016/j.nonrwa.2017.09.005},
       URL = {https://doi.org/10.1016/j.nonrwa.2017.09.005},
}

@book {Filippov,
    AUTHOR = {Filippov, A. F.},
     TITLE = {Differential equations with discontinuous righthand sides},
    SERIES = {Mathematics and its Applications (Soviet Series)},
    VOLUME = {18},
      NOTE = {Translated from the Russian},
 PUBLISHER = {Kluwer Academic Publishers Group, Dordrecht},
      YEAR = {1988},
     PAGES = {x+304},
      ISBN = {90-277-2699-X},
   MRCLASS = {34-02 (58F99)},
  MRNUMBER = {1028776},
       DOI = {10.1007/978-94-015-7793-9},
       URL = {https://doi.org/10.1007/978-94-015-7793-9},
}

@article{goamor,
       title   = {Well-posedness and {Godunov} approximation for nonlinear conservation laws with hysteresis},
      author={Paola Goatin and Stefan Moreti},
      year={2026},
       journal = {Nonlinear Differential Equations and Applications NoDEA},
       year    = {2026},
       volume  = {33},
       number  = {6},
       pages   = {131},
       doi     = {10.1007/s00030-026-01271-7},
       url     = {https://doi.org/10.1007/s00030-026-01271-7}
}

@article{Gugat-Herty,
	author = {Gugat, Martin and Herty, Micha\"{e}l},
	doi = {10.1051/cocv/2009035},
	fjournal = {ESAIM. Control, Optimisation and Calculus of Variations},
	issn = {1292-8119},
	journal = {ESAIM Control Optim. Calc. Var.},
	mrclass = {93C20 (35A09 35Q35 76N25 93D15)},
	mrnumber = {2775185},
	mrreviewer = {Farid Ammar Khodja},
	number = {1},
	pages = {28--51},
	title = {Existence of classical solutions and feedback stabilization for the flow in gas networks},
	url = {https://doi.org/10.1051/cocv/2009035},
	volume = {17},
	year = {2011}}

@article{Gugat-Herty-Muller,
	author = {Gugat, Martin and Herty, Michael and M\"{u}ller, Siegfried},
	doi = {10.3934/nhm.2017016},
	fjournal = {Networks and Heterogeneous Media},
	issn = {1556-1801},
	journal = {Netw. Heterog. Media},
	mrclass = {35L65 (35L60 35Q35 35R02 76J20)},
	mrnumber = {3714975},
	mrreviewer = {Alberto Valli},
	number = {3},
	pages = {371--380},
	title = {Coupling conditions for the transition from supersonic to subsonic fluid states},
	url = {https://doi.org/10.3934/nhm.2017016},
	volume = {12},
	year = {2017}}

@article{Herty_compressors,
	author = {Herty, Michael},
	doi = {10.3934/nhm.2007.2.81},
	fjournal = {Networks and Heterogeneous Media},
	issn = {1556-1801},
	journal = {Netw. Heterog. Media},
	mrclass = {76N15 (35L65 49N90 76N25)},
	mrnumber = {2291813},
	number = {1},
	pages = {81--97},
	title = {Modeling, simulation and optimization of gas networks with compressors},
	url = {https://doi.org/10.3934/nhm.2007.2.81},
	volume = {2},
	year = {2007}}

@article {Hos-Bazso-Champneys,
    AUTHOR = {H\H{o}s, Csaba and Bazs\'{o}, Csaba and Champneys, Alan},
     TITLE = {Model reduction of a direct spring-loaded pressure relief
              valve with upstream pipe},
   JOURNAL = {IMA J. Appl. Math.},
  FJOURNAL = {IMA Journal of Applied Mathematics},
    VOLUME = {80},
      YEAR = {2015},
    NUMBER = {4},
     PAGES = {1009--1024},
      ISSN = {0272-4960,1464-3634},
   MRCLASS = {70J35 (76M20 76N15)},
  MRNUMBER = {3384427},
       DOI = {10.1093/imamat/hxu034},
       URL = {https://doi.org/10.1093/imamat/hxu034},
}

@incollection {Krejci-SPP,
    AUTHOR = {Krej\v{c}\'{\i}, P.},
     TITLE = {Hysteresis in singularly perturbed problems},
 BOOKTITLE = {Singular perturbations and hysteresis},
     PAGES = {73--100},
 PUBLISHER = {SIAM, Philadelphia, PA},
      YEAR = {2005},
      ISBN = {0-89871-597-0},
   MRCLASS = {34C55 (34C20 34C60 34D15 47J40)},
  MRNUMBER = {2171578},
MRREVIEWER = {Chein-Shan\ Liu},
       DOI = {10.1137/1.9780898717860.ch3},
       URL = {https://doi.org/10.1137/1.9780898717860.ch3},
}

@article{Modesti-Pirozzoli,
	author = {Modesti, Davide and Pirozzoli, Sergio},
	journal = {Int. J. Heat Fluid Flow},
	pages = {100--112},
	title = {Direct numerical simulation of supersonic pipe flow at moderate {R}eynolds number},
	volume = {76},
	year = {2019}}

@article {Recupero,
    AUTHOR = {Recupero, Vincenzo},
     TITLE = {B{V}-extension of rate independent operators},
   JOURNAL = {Math. Nachr.},
  FJOURNAL = {Mathematische Nachrichten},
    VOLUME = {282},
      YEAR = {2009},
    NUMBER = {1},
     PAGES = {86--98},
      ISSN = {0025-584X,1522-2616},
   MRCLASS = {47H30 (34C55 47J30 47N20 74N30)},
  MRNUMBER = {2473132},
MRREVIEWER = {P.\ P.\ Zabre\u{\i}ko},
       DOI = {10.1002/mana.200610723},
       URL = {https://doi.org/10.1002/mana.200610723},
}

@article {TPN,
    AUTHOR = {Theodosiou, C. and Pournaras, A. and Natsiavas, S.},
     TITLE = {On periodic steady state response and stability of
              {F}ilippov-type mechanical models},
   JOURNAL = {Nonlinear Dynam.},
  FJOURNAL = {Nonlinear Dynamics. An International Journal of Nonlinear
              Dynamics and Chaos in Engineering Systems},
    VOLUME = {66},
      YEAR = {2011},
    NUMBER = {3},
     PAGES = {355--376},
      ISSN = {0924-090X,1573-269X},
   MRCLASS = {70E18 (34A36 34C25)},
  MRNUMBER = {2846603},
       DOI = {10.1007/s11071-011-0080-9},
       URL = {https://doi.org/10.1007/s11071-011-0080-9},
}

@book {vis,
    AUTHOR = {Visintin, Augusto},
     TITLE = {Differential models of hysteresis},
    SERIES = {Applied Mathematical Sciences},
    VOLUME = {111},
 PUBLISHER = {Springer-Verlag, Berlin},
      YEAR = {1994},
     PAGES = {xii+407},
      ISBN = {3-540-54793-2},
   MRCLASS = {47-02 (34A60 34G20 35R99 47H30 58-02 73E50)},
  MRNUMBER = {1329094},
MRREVIEWER = {Martin\ Brokate},
       DOI = {10.1007/978-3-662-11557-2},
       URL = {https://doi.org/10.1007/978-3-662-11557-2},
}

@article {CRR-MMAS-2025,
    AUTHOR = {Corli, Andrea and Razafison, Ulrich and Rosini, Massimiliano
              D.},
     TITLE = {Coherence of coupling conditions for the isothermal {E}uler
              system},
   JOURNAL = {Math. Methods Appl. Sci.},
  FJOURNAL = {Mathematical Methods in the Applied Sciences},
    VOLUME = {48},
      YEAR = {2025},
    NUMBER = {12},
     PAGES = {11565--11591},
      ISSN = {0170-4214,1099-1476},
   MRCLASS = {76N15 (35Q31)},
  MRNUMBER = {4930794},
       DOI = {10.1002/mma.10847},
       URL = {https://doi.org/10.1002/mma.10847},
}

@article{DGRR,
  author  = {Donadello, Carlotta and Gauthier, Alice and Razafison, Ulrich and Rosini, Massimiliano D.},
  title   = {Mathematical Modelling of One-Way Pressure-Activated Valves for Gas},
  journal = {ESAIM: Mathematical Modelling and Numerical Analysis},
  year    = {2026},
  note    = {To appear},
}
\bibliographystyle{abbrv2}
}

\end{document}